\newtheorem{definition}{Definition}[section]
\newtheorem{lemma}[definition]{Lemma}
\newtheorem{proposition}[definition]{Proposition}
\newtheorem{theorem}[definition]{Theorem}
\newtheorem{remark}[definition]{Remark}
\newtheorem{corollary}[definition]{Corollary}
\newtheorem{assumption}[definition]{Assumption}
\numberwithin{equation}{section}
\begin{document}
\title{An optimal upper bound for the dilute Fermi gas in three dimensions}

\author{Emanuela L. Giacomelli}
\affil{LMU Munich, Department of Mathematics, Theresienstr. 39, 80333 Munich, Germany}

\maketitle

\abstract{In a system of interacting fermions, the correlation energy is defined as the difference between the energy of the ground state and the one of the free Fermi gas. We consider $N$ interacting spin $1/2$ fermions in the dilute regime, i.e., $\rho\ll 1$ where $\rho$ is the total density of the system. We rigorously derive a first order upper bound for the correlation energy with an optimal error term of the order $\mathcal{O}(\rho^{7/3})$ in the thermodynamic limit. Moreover, we improve the lower bound estimate with respect to previous results getting an error $\mathcal{O}(\rho^{2+1/5})$.}

\tableofcontents
\section{Introduction}
The study of the collective behavior of highly correlated many-body systems has gained a wide attention from the  mathematical physics community in the past decades. The quantum mechanical description of such systems is very complicated and therefore their rigorous understanding is challenging. In this paper, we consider a system of $N$ interacting fermions with spin $\sigma = \{\uparrow, \downarrow\}$ in a box $\Lambda_L := [-L/2,L/2]^3$, with periodic boundary conditions. We suppose the two-body interaction potential to be positive and of short-range type (for more details see Assumption \ref{asu: potential V}). The Hamiltonian associated to the system is 
\begin{equation}
  H_N= -\sum_{i=1}^N\Delta_{x_i} + \sum_{i<j=1}^N V(x_i-x_j).
\end{equation}
 We denote by $N_\sigma$ the number of particles with spin $\sigma$ ($N= N_\uparrow + N_\downarrow$). Correspondingly, let $\rho_\sigma = N_\sigma / L^3$ be the density of particles with a given spin $\sigma$, we write $\rho = \rho_\uparrow + \rho_\downarrow$ to refer to the total density of the system. In particular, we are interested in the dilute regime, thus $\rho_\sigma \ll 1$. The space of allowed wave function is $\mathfrak{h}(N_\uparrow, N_\downarrow):= L^2_a(\Lambda_L^{N_\uparrow}) \otimes L^2_a(\Lambda_L^{N_\downarrow})$, where $L^2_a(\Lambda_L^{N_\sigma})$ is the antisymmetric tensor product of $N_\sigma$ copies of $L^2(\Lambda_L)$. 

We consider the \textit{correlation energy} for the the system just introduced, which is given by the difference between the  energy of the many body ground state and the one of the free Fermi gas. We derive an optimal formula for this quantity: to do that a detailed understanding of the many-body ground state is required. 

In this setting, it is well know \cite{LSS,FGHP} that,  in units such that $\hbar = 1$ and putting the masses of the particles equal to $1/2$, the ground state energy per unit volume can be approximated as
\begin{equation}\label{eq: energy density}
  e(\rho_\uparrow, \rho_\downarrow) = \frac{3}{5}(6\pi^2)^{\frac{2}{3}} (\rho_\uparrow^{\frac{5}{3}} + \rho_\downarrow^{\frac{5}{3}}) + 8\pi a\rho_\uparrow \rho_\downarrow + o(\rho^2)\qquad \mbox{as}\quad \rho\rightarrow 0.
\end{equation} 
The first term in the right hand side of \eqref{eq: energy density} is purely kinetic (kinetic energy of the free Fermi gas) and the fact that this contribution is proportional to $\rho^{5/3}$ is a consequence of the fermionic nature of the wave function. The effect of the interaction appears at the next order in \eqref{eq: energy density}, via the parameter $a$ which is the scattering length of the interaction potential. The role of the correlations is important to derive the asymptotics written in \eqref{eq: energy density} (see also Remark \ref{rem: role correl}). In general, an important aspect of the formula in \eqref{eq: energy density} is its universality, meaning that the ground state energy density is independent of specific details of the system under consideration and depends only on the densities $\rho_\sigma$ and on the scattering length $a$. An outstanding open problem concerning the higher-order corrections to the ground state energy in the low density setting is the proof of the validity of the Huang–Yang formula \cite{HY}, leading, in the case in which $\rho_\uparrow = \rho_\downarrow = \rho/2$, to the conjecture:
\begin{equation}\label{eq: HY}
  e(\rho) = \frac{3}{5}(3\pi^2)^{\frac{2}{3}}\rho^{\frac{5}{3}} + 2\pi a\rho^2 + c_0 \rho^{\frac{7}{3}} + o (\rho^{7/3}),
\end{equation}
with an explicit value of $c_0$.

To study the corrections to the ground state energy of the free Fermi gas, similarly as in \cite{FGHP}, we make use of the almost bosonic nature of the low energy excitations around the Fermi sea. We describe such excitations via Bogoliubov theory. This theory is in general widely used nowadays in the study of many-body systems, in particular for interacting Bose gases. This approach was proposed in 1947 by  Bogoliubov \cite{Bog} and later several works were devoted to its rigorous justification. We mention here some of them. In \cite{LS1comp, LS2comp, Sol} the ground state energy of a one and two components Bose gases is studied, we refer to \cite{YY,FS,BaCS} for the Lee Huang Yang formula for the ground state energy of the Bose gas in the thermodynamic limit (see also \cite{FGJMO} for the corresponding two dimensional case), i.e., 
\begin{equation}\label{eq: LHY}
  e(\rho) = 4\pi a^2 \left[ 1+ \frac{128}{15\sqrt{\pi}} (\rho a^3)^{1/2} + o(\rho a^3)^{1/2})\right].
\end{equation}
The validity of the first order asymptotics in \eqref{eq: LHY} was established in \cite{Dy} for the upper bound and in \cite{LY} for the lower bound.  Note that, differently than in \eqref{eq: energy density}, in bosonic systems the effect of the interaction (through the scattering length) is visible at leading order (see \eqref{eq: LHY}). This is consistent with the fact that fermions have to obey to Pauli's principle which implies that the interaction energy is subleading with respect to the kinetic one. In the fixed volume setting the mathematical results for the Bose gas have been very rich: the Bogoliubov excitation spectrum has been justified in the mean field regime \cite{S,GS, LNSS, DN, Piz, BSP}  as well as in the Gross-Pitaevskii (GP) regime \cite{BBCS2, NT, BSS, HST}. We also refer to \cite{ABS, BCS} for results on Bose gases in intermediate regimes between GP and the thermodynamic limit, and to \cite{DG} for the study of Bose gases in the Thomas-Fermi regime.

Despite the good understanding of the Bose gas, the mathematical progresses in the study of the collective behavior of fermionic systems is not so huge. However, recently many works were devoted to the study of how to go beyond the Hartree Fock approximation in the high density regime. More precisely, for the high density Fermi gas in \cite{BNPSS, CHN} the corrections to the Hartree–Fock energy (random phase approximation) due to non–trivial quantum correlations are studied (see also \cite{BNPSS0, BPSS, BNPSS1}).

As already mentioned, we focus on the low density Fermi gas. The first proof of \eqref{eq: energy density} was proposed in \cite{LSS}, there a large class of two-body interaction potentials was considered (including the hard-core case). Successively, in \cite{Sfermi} the result has been extended to the dilute Fermi gas at positive temperature, providing an expansion for the thermodynamic pressure. Moreover, an asymptotics analogous to the one in \eqref{eq: energy density} is valid for interacting fermions in a lattice, i.e., for the Hubbard model, see \cite{Giu} (upper bound) and \cite{SY} (lower bound). Recently the same asymptotics as the one in \eqref{eq: energy density} was re-derived in \cite{FGHP}. With respect to \cite{LSS}, in \cite{FGHP} more restrictions are put in the interaction potential (see Assumption \ref{asu: potential V}). However, in \cite{FGHP} better error estimates are obtained. More precisely, in \cite{LSS} the error in the asymptotic expansion is lower bounded by $\rho^{2+1/39}$ and upper bounded by $\rho^{2+2/27}$, instead, in \cite{FGHP}, we have $\rho^{2+1/9}$ and $\rho^{2+2/9}$ for the lower bound and the upper bound, respectively. The main novelty in \cite{FGHP} is the  use of a completely different approach in the derivation of \eqref{eq: energy density}. As already mentioned before, we describe the low energy correlations by pairs of fermions (particle-hole pairs) behaving approximately as bosonic particles. The fact that these emergent bosonic particles are responsible for the $8\pi a\rho_\uparrow\rho_\downarrow$ correction in the energy density expansion can be expected also comparing \eqref{eq: energy density} with \eqref{eq: LHY}. 

As already anticipated in \cite{FGHP, Gia} this approach provides a new point of view on the dilute Fermi gas and is useful in getting refined energy asymptotics than the one in \eqref{eq: energy density}. In this paper, by developing further the approach introduced in \cite{FGHP}, we derive the optimal upper bound for the correlation energy:
\begin{equation}\label{eq: energy density opt ub}
  e_L(\rho_\uparrow, \rho_\downarrow) \leq  \frac{3}{5}(6\pi^2)^{\frac{2}{3}} (\rho_\uparrow^{\frac{5}{3}} + \rho_\downarrow^{\frac{5}{3}}) + 8\pi a\rho_\uparrow \rho_\downarrow + \mathcal{O}(\rho^{\frac{7}{3}}), \qquad \mbox{as}\quad \rho\rightarrow 0.
\end{equation} 
The reason why we say that the upper bound above is optimal is that it is in agreement with the Huang-Yang's conjecture (see \eqref{eq: HY}). We expect that \eqref{eq: energy density opt ub} provides useful inputs to eventually reach the Huang-Yang asymptotics in \eqref{eq: HY} (with the correct $c_0\rho^{7/3}$ contribution). As a by product of our analysis, we improve the error estimate in the lower bound from \cite[Theorem 2.1]{FGHP}.\\

\noindent\textbf{Organization of the paper.} In Section \ref{sec: main result} we present the main results and the strategy of the proof. In Section \ref{sec: ferm Bog} we introduce the fermionic Fock space, we take into account the energy of the free Fermi gas and we recall the main properties of the well known particle-hole transformation; we conclude the section by isolating the correlation Hamiltonian and estimating some sub-leading contributions. In Section \ref{sec: correl struct} we define the almost bosonic Bogoliubov transformation which we use to study the correlation energy and we prove some general bounds which are useful in what follows. In Section \ref{sec: propagation estimates} we set up all the technical results that we need to rigorously use the almost-bosonic Bogoliubov transformation. In Section \ref{sec: up bd} and Section \ref{sec: lw bd} we conclude the proof of the optimal upper bound and of the improved lower bound, respectively. Finally, in Appendix \ref{app: scattering} we collect some properties of the solution of the scattering equation which we use in our approach and in Appendix \ref{app: cut-off} we discuss some technical estimates.
\\

\noindent\textbf{Acknowledgments.} The author deeply acknowledge Christian Hainzl, Phan Thành Nam and Marcello Porta for the continuous support and for helpful suggestions and discussions.

\section{Main results} \label{sec: main result} 

We now present the main results of the paper. First of all we recall that the Hamiltonian of the system is given by 
\begin{equation}
  H_N= -\sum_{i=1}^N\Delta_{x_i} + \sum_{i<j=1}^N V(x_i-x_j),
\end{equation}
where $\Delta_{x_i}$ denotes the Laplacian acting on the i-th particle and $V$ is the pair interaction potential which can be thought as the periodization on the box $\Lambda_L$ of a regular potential $V_\infty$ defined on $\mathbb{R}^3$. We write more precisely our assumptions below.  
\begin{assumption}\label{asu: potential V} Let $V$ be such that 
\begin{equation}\label{eq: period V}
  V(x-y) = \frac{1}{L^3}\sum_{p\in\frac{2\pi}{L}\mathbb{Z}^3}\hat{V}_\infty(p)e^{ip\cdot (x-y)}, \qquad \hat{V}_\infty(p) = \int_{\mathbb{R}^3}dx\, V_\infty(x) e^{-ip\cdot x}.
\end{equation}
We suppose $V_\infty$ to be non negative, radial, smooth and compactly supported with $\mathrm{supp} V \subset \{x\in\mathbb{R}^3\, \vert\, |x| < R_0\}\subset \mathbb{R}^3$, for some $R_0>0$.
\end{assumption}
Moreover, we suppose our system to be dilute, i.e., $\rho_\sigma \ll 1$. We are interested in the thermodynamic limit, meaning that $N_\sigma, L \rightarrow \infty$ keeping $\rho_\sigma$ fixed.

We now rigorously introduce the main quantities involved in our analysis. We denote the ground state energy by 
\begin{equation} 
  E_{L}(N_\uparrow, N_\downarrow) = \inf_{\Psi\in\mathfrak{h}(N_\uparrow, N_\downarrow)}\frac{\langle \Psi, H_N \Psi\rangle}{\langle \Psi, \Psi\rangle}.
\end{equation}
Being the Hamiltonian translation invariant, the energy is extensive in the system size. Thus, we can define the ground state energy density as 
\begin{equation}\label{eq: up bound main}
  e(\rho_\uparrow, \rho_\downarrow) = \frac{E_{L}(N_\uparrow, N_\downarrow)}{L^3}.
\end{equation}
\begin{theorem}[Optimal upper bound]\label{thm: optimal up bd} Let $V, V_\infty$ as in Assumption \ref{asu: potential V}. There exists $L_0> 0$ such that for $L \geq L_0$, it holds
\begin{equation} \label{eq: opt up bd}
  e_L(\rho_\uparrow, \rho_\downarrow) \leq \frac{3}{5}(6\pi^2)^{\frac{2}{3}} (\rho_\uparrow^{\frac{5}{3}} + \rho_\downarrow^{\frac{5}{3}}) + 8\pi a\rho_\uparrow\rho_\downarrow + \mathcal{O}(\rho^{7/3}),
  \end{equation}
  where $a$ is the scattering length of the interaction potential $V_\infty$.
\end{theorem}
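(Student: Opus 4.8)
\section{Proof proposal for Theorem \ref{thm: optimal up bd}}

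The plan is to construct an explicit trial state on the fermionic Fock space whose energy expectation matches the free Fermi gas energy plus the $8\pi a\rho_\uparrow\rho_\downarrow$ term up to errors of size $\mathcal{O}(\rho^{7/3})$ per unit volume. First I would pass to the second-quantized picture and apply the particle-hole transformation around the Fermi sea: conjugating $H_N$ by the unitary that maps the free Fermi ball to the Fock vacuum produces a correlation Hamiltonian $\mathcal{H}_{\mathrm{corr}}$ acting on excitations (particles above the Fermi surface, holes below it), with the free gas energy $\tfrac{3}{5}(6\pi^2)^{2/3}(\rho_\uparrow^{5/3}+\rho_\downarrow^{5/3})L^3$ appearing as an additive constant. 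The task then reduces to producing a trial excitation vector $\Psi_{\mathrm{tr}}$ with $\langle \Psi_{\mathrm{tr}}, \mathcal{H}_{\mathrm{corr}}\Psi_{\mathrm{tr}}\rangle \le 8\pi a\rho_\uparrow\rho_\downarrow L^3 + \mathcal{O}(\rho^{7/3}L^3)$. Because Pauli exclusion forces the interaction to be subleading, the only nontrivial cross-spin pairs ($\uparrow\downarrow$) contribute, and the bare coupling $\hat V_\infty(0)$ must be renormalized down to the scattering length $a$ via a correlation structure.

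Next I would introduce the correlation structure exactly as in the approach of \cite{FGHP}: define an almost-bosonic pair operator creating a particle-hole pair of opposite spins, with pairing function built from (a suitably cut-off and rescaled version of) the solution of the zero-energy scattering equation for $V_\infty$ — this is the object whose properties are collected in Appendix \ref{app: scattering}. The trial state is then $\Psi_{\mathrm{tr}} = T\,\Omega$ (or $e^{B}\Omega$ with $B$ quadratic in these pair operators), where $T$ is the almost-bosonic Bogoliubov transformation of Section \ref{sec: correl struct}. Computing $\langle \Omega, T^{*}\mathcal{H}_{\mathrm{corr}}T\,\Omega\rangle$ requires transporting each piece of $\mathcal{H}_{\mathrm{corr}}$ — kinetic term, quadratic (pairing) interaction term, and the genuinely quartic remainder — through $T$. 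The kinetic cost of creating pairs and the off-diagonal quadratic interaction combine, after optimizing the pairing function (which is precisely why the scattering equation enters), to give $8\pi a\rho_\uparrow\rho_\downarrow L^3$ at leading order; the improvement over \cite{FGHP}'s $\mathcal{O}(\rho^{2+2/9})$ down to $\mathcal{O}(\rho^{7/3})$ comes from a more careful choice of the cut-offs and a sharper bookkeeping of the sub-leading terms, using the propagation estimates of Section \ref{sec: propagation estimates} to control the action of $T$ on the number operator and on products of creation/annihilation operators.

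The main obstacle, I expect, is the rigorous control of the error terms at the sharp order $\rho^{7/3}$: the almost-bosonic transformation $T$ is not exactly Bogoliubov, so commuting operators through it generates a hierarchy of correction terms — fermionic ``exchange'' contributions violating the bosonic commutation relations, higher powers of the number of excitations, and boundary/cut-off effects from replacing the true scattering solution by a compactly supported surrogate on $\Lambda_L$. Each of these must be shown to be $o(\rho^{7/3})$ per unit volume uniformly as $L\to\infty$, which forces quantitative a priori bounds on $\langle \Psi_{\mathrm{tr}}, \mathcal{N}^{k}\Psi_{\mathrm{tr}}\rangle$ and on localization of the pairing kernel; getting the cut-off length scale right (balancing the scattering-length convergence rate against the entropy/number of available momenta near the two Fermi surfaces) is the delicate optimization. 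A secondary technical point is handling the momenta near the Fermi surfaces where the kinetic gap is small, so that the pairing function is not too spread out in momentum space; here one uses that only a thin shell contributes to leading order and that the remainder is controlled by the positivity of $V$. Once all these error terms are collected and bounded, dividing by $L^3$ and taking the thermodynamic limit yields \eqref{eq: opt up bd}; the improved lower bound of \cite[Theorem 2.1]{FGHP} then follows as a by-product by combining the same correlation structure with the lower-bound strategy of \cite{FGHP} and reusing the sharpened propagation estimates, as carried out in Section \ref{sec: lw bd}.
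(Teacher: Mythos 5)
Your proposal follows essentially the same route as the paper: particle-hole conjugation to isolate $\mathcal{H}_{\mathrm{corr}}$, trial state $RT\Omega$ with $T$ the almost-bosonic Bogoliubov transformation whose kernel is a periodized localization (at scale $\rho^{-1/3}$) of the zero-energy scattering solution, extraction of the constant $8\pi a\rho_\uparrow\rho_\downarrow$ via the scattering-equation cancellation, and control of all remaining error terms through propagation (Gr\"onwall-type) estimates on $\mathcal{N}$, $\mathbb{H}_0$, $\mathbb{Q}_1$ along the interpolating states $\xi_\lambda = T^*_\lambda\Omega$. The two non-obvious ingredients you correctly flag — the optimal cut-off scale $\rho^{-1/3}$ and the sharpened operator bounds on the pair operators $b_\sigma(\varphi_z)$ — are precisely what the paper uses (Lemma \ref{lem: bound b phi} and Remark \ref{rem: cut-off phi}) to push the error down to $\mathcal{O}(\rho^{7/3})$.
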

Following the same ideas of the proof of Theorem \ref{thm: optimal up bd}, we can also improve the error estimate for the lower bound with respect to the one in \cite[Theorem 2.1]{FGHP}.
\begin{theorem}[Improved lower bound]\label{thm: lower bound} Let $V, V_\infty$ be as in Assumption \ref{asu: potential V}. There exists $L_0 > 0$ such that for $L\geq L_0$, it holds
\begin{equation}\label{eq: impr lw bd}
  e_L(\rho_\uparrow, \rho_\downarrow) \geq \frac{3}{5}(6\pi^2)^{\frac{2}{3}} (\rho_\uparrow^{\frac{5}{3}} + \rho_\downarrow^{\frac{5}{3}}) + 8\pi a\rho_\uparrow\rho_\downarrow + \mathcal{O}(\rho^{2+ 1/5}),
\end{equation}
where $a$ is the scattering length of the interaction potential $V_\infty$.
\end{theorem}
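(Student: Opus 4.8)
The plan is to mirror the proof of Theorem~\ref{thm: optimal up bd}, with the roles reversed: instead of testing the energy on a trial state I must bound $\langle\Psi,H_N\Psi\rangle$ from below for an \emph{arbitrary} normalized $\Psi\in\mathfrak{h}(N_\uparrow,N_\downarrow)$, and it is enough to do this for states whose energy already lies within $\mathcal{O}(\rho^2)L^3$ of the free Fermi gas energy (otherwise there is nothing to prove). First I would lift the problem to the fermionic Fock space, conjugate $H_N$ by the particle--hole transformation around the two Fermi balls, and isolate the correlation Hamiltonian $\mathcal{H}_{\mathrm{corr}}$ acting on the excitation vector, exactly as in Section~\ref{sec: ferm Bog}; the constant term reproduces $\tfrac{3}{5}(6\pi^2)^{2/3}(\rho_\uparrow^{5/3}+\rho_\downarrow^{5/3})L^3$ up to finite-size corrections that vanish as $L\to\infty$, so the task reduces to proving $\mathcal{H}_{\mathrm{corr}}\geq 8\pi a\rho_\uparrow\rho_\downarrow\,L^3-\mathcal{O}(\rho^{2+1/5})L^3$ on the excitation sector.

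The genuinely new ingredient, compared with the upper bound, is a set of a priori bounds on low-energy states. Using the positivity of $V$ together with second-quantized Cauchy--Schwarz estimates of the type in \cite{FGHP}, I would show that for any $\Psi$ with energy below $E_L^{\mathrm{free}}+C\rho^2L^3$ the expected number of excitations and their expected kinetic energy are controlled by suitable powers of $\rho$ times $L^3$, together with a related bound on the number of particles outside a slightly enlarged Fermi ball that is sharper than what the $\rho^{2+1/9}$ error of \cite{FGHP} required. These estimates are what allow one to introduce the momentum cutoffs — a cutoff $k_H$ separating the regime where the bosonic approximation is accurate from the high-momentum tail, and a cutoff on the relative momentum in the interaction — and to discard, with quantitative error, the parts of $\mathcal{H}_{\mathrm{corr}}$ not captured by the almost-bosonic quadratic operator.

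Next I would apply the almost-bosonic Bogoliubov transformation $T$ built from the ground state of the scattering equation for $V_\infty$ (Section~\ref{sec: correl struct}), now to $\Psi$ itself, and use the propagation/Gr\"onwall-type estimates of Section~\ref{sec: propagation estimates} to lower bound $\langle T^\ast\Psi,\,(T^\ast\mathcal{H}_{\mathrm{corr}}T)\,T^\ast\Psi\rangle$. The scattering equation is exactly what converts the quadratic interaction term into $8\pi a\rho_\uparrow\rho_\downarrow L^3$ plus a negative renormalization compensated by the cost of $T$; what remains are commutator terms encoding the failure of exact bosonic statistics (Pauli corrections), together with cubic and quartic leftovers, each of which I would bound by the a priori quantities and positive powers of the cutoffs. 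All error contributions then take the schematic form $\rho^2L^3$ times positive powers of $k_F/k_H$ and $\rho^{1/3}k_H$, and optimizing $k_H\sim\rho^{-\alpha}$ so that the increasing and decreasing pieces balance produces the exponent $2+1/5$.

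The main obstacle will be twofold. First, the a priori estimates: on the lower-bound side there is no freedom in the state, so one must extract, from the mere smallness of the excess energy, enough control on the number and kinetic energy of excitations — this is precisely where the lower bound is harder than Theorem~\ref{thm: optimal up bd}. Second, pushing the error below $\rho^{2+1/9}$ requires a finer treatment of the non-bosonizable remainder of $T^\ast\mathcal{H}_{\mathrm{corr}}T$ than in \cite{FGHP}: one cannot afford crude bounds on the cubic terms and on the Pauli corrections, and making those estimates quantitative and uniform in the state is the technical heart of the argument.
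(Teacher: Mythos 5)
Your outline matches the architecture of the paper's proof: restrict to approximate ground states, conjugate by the particle-hole transformation, establish a priori bounds for $\mathcal{N}$, $\mathbb{H}_0$, $\mathbb{Q}_1$ on such states, then conjugate by the almost-bosonic Bogoliubov transformation $T$ and close the Gr\"onwall propagation estimates, with the scattering-equation cancellation producing $8\pi a\rho_\uparrow\rho_\downarrow$. You also correctly flag the cubic term $\widetilde{\mathbb{Q}}_3$ as needing a genuinely sharper treatment than a priori bounds allow; that is exactly what Section~\ref{sec: Q3} does.

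Where your intuition drifts is in identifying the cutoff whose optimization yields the exponent $2+1/5$. You describe ``a cutoff $k_H$ separating the regime where the bosonic approximation is accurate from the high-momentum tail,'' suggesting the binding error comes from an ultraviolet scale. In the paper the ultraviolet errors (controlled by $\rho^{-\beta}$ in $\hat{u}^r$) decay faster than any power of $\rho^\beta$ and are never the bottleneck. The obstruction is entirely at momenta of order $k_F\sim\rho^{1/3}$: the regularizations $\hat{u}^r,\hat{v}^r$ carve out the shell $k_F<|k|<2k_F$, and the error $\langle\xi_\lambda,(\mathbb{T}_2-\mathbb{T}_2^r)\xi_\lambda\rangle$ in Proposition~\ref{pro: reg} from reinstating this shell is what fails to be $\mathcal{O}(\rho^{7/3})$. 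The paper splits that shell at the scale $k_F+\rho^{1/3+\eta/3}$ via the operator $\widetilde{\mathcal{N}}$ of \eqref{eq: def tilde N}, estimates the inner piece by its small $L^2$ mass and the outer piece by comparison with $\mathbb{H}_0$, and balances the two contributions ($\rho^{2+2\eta/9+1/9}$ against $\rho^{7/3-\eta/3}$) at $\eta=2/5$. So the optimization you gesture at is over a scale \emph{inside} the near-Fermi-surface shell, not a UV/IR threshold; without this observation, a naive balancing of ``increasing and decreasing pieces'' would not single out $1/5$, and the bosonization picture (which fails near the Fermi surface, not at high momentum) points you to the right region. With that correction your plan agrees with the paper's.
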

\begin{remark}[The role of correlations]\label{rem: role correl} The correlations between particles play an important role in the derivation of \eqref{eq: opt up bd} and \eqref{eq: impr lw bd}. Indeed, the $\rho^2$ dependence in the energy per volume of the free Fermi gas explicitly reads as $\hat{V}(0)\rho_\uparrow\rho_\downarrow$. Being $\hat{V}(0) > 8\pi a$, we see that the most uncorrelated fermionic state, i.e., the free Fermi gas, does not reproduce the right asymptotics.  
\end{remark}
\begin{remark}[On the condensation estimate]\label{rem: cond est} As a consequence of Theorem \ref{thm: optimal up bd} and Theorem \ref{thm: lower bound}, we get (see \cite[Remark 6.3]{FGHP}) a bound for the difference between the reduced one-particle density matrix of any approximate ground state\footnote{By approximate ground state we refer to a state which is close enough to ground state of the system (for a rigorous definition, see Definition \ref{def: approx gs})} $\psi$  and the reduced one-particle density matrix of the free Fermi gas $\omega$, see, e.g., \cite{BPS}. In particular from the bounds in \eqref{eq: opt up bd} and \eqref{eq: impr lw bd}, it immediately follows that 
\begin{equation}
  \mathrm{tr} \gamma^{(1)}_\psi (1-\omega) \leq CL^3\rho^{\frac{19}{15}},
\end{equation}
which improves the analogous estimate in \cite[Eq. (6.59)]{FGHP}.
\end{remark}
\subsection{Strategy of the proofs}\label{sec: proof strategy}
As already mentioned above, in the approach we follow here, we develop further the ideas introduced in \cite{FGHP}. To start, we stress that, as in \cite{FGHP}, working in the thermodynamic limit we need some kind of localization arguments: we implement them introducing cut-off functions on the relevant quantities used in our analysis. On one hand this is convenient to work directly in the thermodynamic limit. Indeed, this allows us to gain a fast decay in configuration space of the quantities involved in the definition of the almost-bosonic Bogoliubov transformation. On the other hand, via the cut-off, we introduce by hand many error terms whose estimate is not trivial. The proofs of Theorem \ref{thm: optimal up bd} and Theorem \ref{thm: lower bound} are based on the same ideas. The reason why the lower bound is not as satisfactory as the upper bound estimate is that, we do not get optimal estimates for some of the error terms induced by the cut-off and supported for momenta $k$ of the order $\rho^{1/3}$ (for more details see Proposition \ref{pro: reg}). We emphasize that this is the only point of the proof in which we do not get optimal estimates $\mathcal{O}(\rho^{7/3})$ in the analysis of the lower bound.

In what follows we first recall the main ideas of the approach introduced in \cite{FGHP} and then we underline the improvements we achieve here. As we mentioned above, the general strategy is to describe the low energy excitations around the Fermi ball making use of almost bosonic operators. We denote the almost-bosonic creation (resp. annihilation) operators by $\hat{b}_{p,\sigma}^\ast$ (resp. $\hat{b}_{p,\sigma}$) for a given $p\in (2\pi/L)\mathbb{Z}^3$ and a given spin $\sigma$. Explicitly they are defined through pairs of fermionic creation (resp. annihilation) operators one of which has momentum inside the Fermi ball and the other outside (see Remark \ref{rem: bog in p space}). The reason why we refer to them as \textit{almost-bosonic operators} is that when acting on states with few particles they approximately behave as bosonic creation/annihilation operators (i.e., they approximately satisfy the canonical commutation relations). Note that these operators are the same ones used in the bosonization approach in the fermionic high density regime (see e.g., \cite{BPSS,CHN}). The next step is then to express the effective correlation Hamiltonian (see \eqref{eq: def H corr} for a concrete expression) in terms of $\hat{b}_{p,\sigma}, \hat{b}^\ast_{p,\sigma}$ and to extract the constant contribution $8\pi a \rho_\uparrow\rho_\downarrow$ through a (almost-bosonic) Bogoliubov transformation. The Bogoliubov transformation we use explicitly reads as 
\begin{equation}\label{eq: bog intro}
  T = \exp\left\{\frac{1}{L^3}\sum_{p\in\frac{2\pi}{L}\mathbb{Z}^3} \hat{\varphi}(p)\hat{b}_{p,\uparrow}\hat{b}_{-p,\downarrow} - \mathrm{h.c.}\right\}.
\end{equation}
Note that treating $\hat{b}_{p,\sigma}$, $\hat{b}^\ast_{p,\sigma}$ as true bosonic operators, the unitary operator $T$ implements a bosonic Bogoliubov transformation (see, e.g., \cite{BBCS2}). Moreover, as we are going to underline later, the choice of $\hat{\varphi}$ is responsible for getting the right dependence on the scattering length in the constant term we want to extract, i.e., $8\pi a \rho_\uparrow\rho_\downarrow$. 
For more details about the heuristic discussion above, we refer to\footnote{Note that in \cite[Section 4]{FGHP} a rescaled version of the almost-bosonic operators is used, i.e., $\hat{c}_{p,\sigma} = \rho_\sigma^{-1/2}\hat{b}_{p,\sigma}$ (resp. $\hat{c}_{p,\sigma}^\ast = \rho_\sigma^{-1/2}\hat{b}_{p,\sigma}^\ast$)} \cite[Section 4]{FGHP}. Here we only underline that, as already explained in \cite{FGHP}, even at the heuristic level, in order to justify this approach we need to introduce a cut-off in the momenta living in the support of $\hat{\varphi}(p)$ (compare the expression of $T$ above with the one in \eqref{eq: def T reg p space}). We achieve this constraint by regularizing $\varphi(x)$, so that is supported on a ball of radius proportional to $\rho^{-1/3}$.

We are now ready to explain in which sense we developed further the approach first proposed in \cite{FGHP}. To start, a difference here with respect to \cite{FGHP} is the definition of the cut-off functions to regularize $\hat{b}_{p,\sigma}$, $\hat{b}^\ast_{p,\sigma}$. In particular, differently than in \cite{FGHP}, in the present work we neglect all the momenta $k$ such that $k_F\leq |k| \leq 2k_F$, where $k_F$ denotes the Fermi momentum. This different way of regularizing $\hat{b}_{p,\sigma}$, $\hat{b}^\ast_{p,\sigma}$ allow us to simplify many proofs with respect to the ones proposed in \cite{FGHP}, getting better error estimates (see e.g., Remark \ref{rem: N_>}).

However, the main difference with respect to \cite{FGHP} is the choice of the kernel of the almost bosonic Bogoliubov transformation in \eqref{eq: bog intro}. In \cite{FGHP} we define $\varphi(x)$ through the solution of the scattering equation in a ball in $\mathbb{R}^3$ satisfying Neumann boundary conditions (see \cite[Eq. (4.28)]{FGHP}). Here, we take $\varphi(x)$ to be the periodization of a localized version of the solution of the zero energy scattering equation in $\mathbb{R}^3$. This has some advantages as we are going to explain later. To be more concrete, let $\varphi_0$ be the solution of the zero energy scattering equation in $\mathbb{R}^3$, i.e.,  
\begin{equation}\label{eq: scatter eq phi0}
 2\Delta\varphi_0 + V_\infty(1-\varphi_0) = 0 \qquad \mbox{in}\,\,\,\mathbb{R}^3, \quad \mbox{with}\quad \varphi_0(x)\rightarrow 0 \quad \mbox{as}\,\,\, |x| \rightarrow \infty,
\end{equation}
and let $\chi$ be a smooth cut-off function in $\mathbb{R}^3$, which is such that $0\leq \chi\leq 1$ and
\begin{equation}\label{eq: def chi}
  \chi(x) = \chi(|x|) = \begin{cases} 1, &\mbox{if}\,\,\, |x| \leq 1, \\ 0, &\mbox{if}\,\,\, |x| \geq 2. \end{cases}
\end{equation}
We then take into account a localization of $\varphi_0$ made via $\chi$, i.e., we define for all $x\in\mathbb{R}^3$,
\begin{equation}\label{eq: def phi infty}
  \varphi_\infty(x) := \varphi_0(x)\chi(x/\rho^{-1/3}).
\end{equation}
Looking at $\varphi_\infty$, we are localizing the scattering solution in a ball of radius $\sim \rho^{-\gamma}$, with $\gamma = 1/3$. This choice of $\gamma$ turns out to be the optimal one in order to get our main results (see also Remark \ref{rem: cut-off phi}). As a consequence of our localization, the equation satisfied by $\varphi_\infty$ differs from \eqref{eq: scatter eq phi0} only for $|x|$ large, i.e., for $|x|\sim \rho^{-1/3}$. In particular, $\varphi_\infty = \varphi_0$ in the support of the interaction potential $V_\infty$. Thus, $\varphi_\infty$ is such that 
\begin{equation}\label{eq: equation for tilde phi approx}
  2\Delta\varphi_\infty + V_\infty(1-\varphi_\infty) \sim -2a\left[\frac{2\rho^{\frac{1}{3}}}{|x|^2} + \frac{\rho^{\frac{2}{3}}}{|x|}\right] \chi(|x|\sim \rho^{-\frac{1}{3}}),
\end{equation}
where $a$ is the scattering length associated to the interaction potential and $\chi(|x|\sim \rho^{-\frac{1}{3}})$ means that the right hand side above is not vanishing only for $\rho^{-1/3}\leq |x| \leq 2\rho^{-1/3}$. Note that we are localizing $\varphi_0$ through a cut-off which is not constant far away from the compact support of the interaction potential. This  allows us to write the explicit expression of the renormalized potential appearing in the right side of \eqref{eq: equation for tilde phi approx}, simplifying the study of it. Moreover, the factors $\rho$ in the right hand side of \eqref{eq: equation for tilde phi approx} are related to the fact that the cut-off $\chi$ is not constant only in a region of order $\rho^{-1/3}$.  

After having introduced $\varphi_\infty$, in order to make our choice of $\varphi(x)$ compatible with the periodic boundary conditions, we introduce a periodization of $\varphi_\infty$ in the box $\Lambda_L$, which we denote by $\varphi$ and which explicitly reads as 
\begin{equation}\label{eq: def phi}
  \varphi(x) = \sum_{n\in\mathbb{Z}^3}\varphi_\infty(x + nL),
\end{equation}
or, equivalently,
\[
  \varphi(x) = \frac{1}{L^3}\sum_{p\in\frac{2\pi}{L}\mathbb{Z}^3}\hat{\varphi}_\infty(p)e^{ip\cdot x},\qquad \hat{\varphi}_\infty(p) = \int_{\mathbb{R}^3} dx\, \varphi_\infty(x) e^{-ip\cdot x}.
\]
As already mentioned, with respect to \cite{FGHP}, we are taking the periodization of a different scattering equation defined over $\mathbb{R}^3$. This new choice plays a crucial role in our analysis in order to improve the error estimates both for the upper bound and for the lower bound (see Remark \ref{rem: comparison kinetic energy}) and to, in general, improve our previous result.

To conclude, we underline that, for $L$ large enough, the $\varphi$ defined in \eqref{eq: def phi} is such that 
\[
  \hat{V}(0) - \int_{\Lambda_L}dx\, V(x)\varphi(x)= 8\pi a. 
\]  
This will play a role in our analysis. Indeed, acting through the unitary operator $T$ (almost-bosonic Bogoliubov transformation) introduced above, we extract the constant term $-\rho_\uparrow\rho_\downarrow\int_{\Lambda_L}dx V(x)\varphi(x)$ from the correlation energy. Summing it with the interaction energy (per unit volume) coming from the free Fermi gas, i.e., $\rho_\uparrow\rho_\downarrow\hat{V}(0)$, we reconstruct the first order correction to the correlation energy, i.e., $8\pi a \rho_\uparrow\rho_\downarrow $.

To conclude we also stress that many technical improvements are needed with respect to the approach followed in \cite{FGHP}. In order to simplify the discussion here, we will highlight them during the proofs. We only do few more comments. The bounds in Lemma \ref{lem: bound b phi}, which are an improved version of those in \cite[Lemma 5.3]{FGHP}, play a crucial role in getting refined estimates in both the upper bound and the lower bound. Moreover, differently than in \cite{FGHP}, to bound the ``\textit{cubic terms}'' (e.g., the terms proportional to $b^\ast a^\ast a$ or $a^\ast a b$) a more careful analysis is required (see Section \ref{sec: Q3}) with respect to the one done in \cite{FGHP}, where only a priori estimates were used.

In general, in what follows, we write the statements of all the intermediate results in full generality, in order to directly use them to conclude both the proof of Theorem \ref{thm: optimal up bd} (see Section \ref{sec: up bd}) and the one of Theorem \ref{thm: lower bound} (see Section \ref{sec: lw bd}).

\section{The Hartree Fock approximation}\label{sec: ferm Bog}
In this section we discuss the Hartree Fock approximation. In particular, we make use of a unitary transformation (the so called \textit{particle-hole transformation}) to compare the ground state energy with the free Fermi gas energy. This approach is quite standard and, for the setting we are interested in, is explained in details in \cite[Section 3]{FGHP}. Therefore, we only underline the main features.

\subsection{The fermionic Fock space} Being convenient to work in second quantization, we start by fixing the notation for the fermionic Fock space. We define   
\begin{equation} 
  \mathcal{F} = \bigoplus_{n\geq 0} \mathcal{F}^{(n)}, \qquad \mathcal{F}^{(n)} = L^2(\Lambda_L; \mathbb{C}^2)^{\wedge n}, \qquad \mathcal{F}^{(0)} := \mathbb{C}.
\end{equation}
Any $\psi\in\mathcal{F}$ is then of the form $\psi = (\psi^{(0)}, \psi^{(1)}, \cdots , \psi^{(n)}, \cdots)$ with $\psi^{(n)}\in \mathcal{F}^{(n)}$ and $\psi^{(n)} \equiv \psi^{(n)} ( (x_1, \sigma_1), \cdots, (x_n, \sigma_n))$ where $(x,\sigma)\in \Lambda_L \times \{\uparrow, \downarrow\}$. In the following, we will work with the vacuum vector $\Omega$, i.e., the zero particle state:
\[
  \Omega = (1, 0, 0, \cdots, 0, \cdots).
\]
Let now $f\in L^2(\Lambda_L; \mathbb{C}^2)$, $f= (f_\uparrow, f_\downarrow)$, the fermionic annihilation operator $a(f): \mathcal{F}^{(n)}\rightarrow \mathcal{F}^{(n-1)}$ is such that $a(f)\Omega = 0$ and it explicitly acts as 
\begin{multline}\label{eq: def annihil}
  (a(f)\psi)^{(n)} \left(x_1, \sigma_1), \cdots (x_n, \sigma_n)\right)
  = \sqrt{n+1} \sum_{\sigma} \int_{\Lambda_L} dx\, \overline{f_\sigma(x)}\psi^{(n+1)} ( (x,\sigma), (x_1, \sigma_1), \cdots (x_n, \sigma_n)).
\end{multline}
Correspondingly, the fermionic creation operator $a^\ast_\sigma(f): \mathcal{F}^{(n-1)}\rightarrow\mathcal{F}^{(n)}$ is defined as 
\begin{multline}\label{eq: def creat}
  (a^\ast(f)\psi)^{(n)}( (x_1,\sigma_1), \cdots, (x_n, \sigma_n)) 
  \\
  = \frac{1}{\sqrt{n}} \sum_{j=1}^n (-1)^{j+1} f_{\sigma_j}(x_j) \psi^{(n-1)} ( (x_1, \sigma_1), \cdots (x_{j-1}, \sigma_{j-1}), (x_{j+1}, \sigma_{j+1}), \cdots , (x_n,\sigma_n)).
\end{multline}
It is well known that for any $f,g\in L^2(\Lambda_L; \mathbb{C}^2) $, the creation and annihilation operators satisfy the canonical anticommutation relations (CAR), i.e, 
\[
  \{a(f), a(g)\} = \{a^\ast(f), a^\ast(g)\} = 0, \quad \{a(f), a^\ast(g)\} = \langle f,g \rangle_{L^2(\Lambda_L;\mathbb{C}^2)}.
\]
As a consequence one gets the following bounds for the operator norm:
\[
  \|a(f)\|, \|a^\ast(f)\| \leq \|f\|_{L^2(\Lambda_L;\mathbb{C}^2)}.
\]
Taking $f$ to be a normalized plane wave, i.e., $f(x) \equiv f_k(x) = e^{ik\cdot x}/L^{3/2}$, for $k\in (2\pi/L)\mathbb{Z}^3$, we can express the creation/annihilation operators in momentum space, i.e., 
\begin{equation}\label{eq: def creat/annihi momentum}
  \hat{a}_{k,\sigma} \equiv a_\sigma(f_k) = \frac{1}{L^{\frac{3}{2}}}\int_{\Lambda_L} dx\, a_{x,\sigma} e^{-ik\cdot x}, \qquad \hat{a}^\ast_{k, \sigma} = (a_\sigma(f_k))^\ast,
\end{equation}
where we explicitly used that, according to our definitions in \eqref{eq: def annihil} and \eqref{eq: def creat}, $a^\ast(f)$ is the adjoint of $a(f)$ for any $f\in L^2(\Lambda_L; \mathbb{C}^2)$. In \eqref{eq: def creat/annihi momentum} we denoted by $a_{x,\sigma}$, $a^\ast_{x,\sigma}$ the creation/annihilation operator-valued distributions which, formally, are such that $a_{x,\sigma} = a(\delta_{x,\sigma})$ and $a^\ast_{x,\sigma} = a^\ast (\delta_{x,\sigma})$, with 
\[
  \delta_{x,\sigma}(y,\sigma^\prime) = \delta_{\sigma,\sigma^\prime}\delta(x-y) = \frac{\delta_{\sigma,\sigma^\prime}} {L^3}\sum_{k\in\frac{2\pi}{L}\mathbb{Z}^3} e^{ik\cdot (x-y)},
\]
where $\delta_{\sigma,\sigma^\prime}$ denotes the Kronecker delta and $\delta(x-y)$ is the Dirac delta distribution periodic over $\Lambda_L$.
We now fix the notation for the number operator:
\begin{equation} 
  \mathcal{N} = \sum_{\sigma = \uparrow, \downarrow} \mathcal{N}_\sigma  \equiv \sum_{\sigma = \uparrow, \downarrow} \int_{\Lambda_L}\, dx\, a^{\ast}_{x,\sigma} a_{x,\sigma} \equiv \sum_{\sigma = \uparrow, \downarrow} \sum_{k\in \frac{2\pi}{L}\mathbb{Z}^3} \hat{a}^\ast_{k,\sigma}\hat{a}_{k,\sigma}.
\end{equation}
In what follows we may use the notation $\mathcal{N}_\sigma$ to refer to the number operator with a fixed spin, i.e., 
\begin{equation}
  \mathcal{N}_\sigma = \sum_k \hat{a}_{k,\sigma}\hat{a}_{k,\sigma}.
\end{equation}
Finally, the second-quantized Hamiltonian reads as
\[
  \mathcal{H} = \sum_\sigma \int_{\Lambda_L} dx\, \nabla_x a^\ast_{x,\sigma} \nabla_x a_{x,\sigma} + \frac{1}{2}\sum_{\sigma,\sigma^\prime} \int_{\Lambda_L\times \Lambda_L} dxdy\, V(x-y) a^\ast_{x,\sigma} a^\ast_{y,\sigma^\prime}a_{y,\sigma^\prime}a_{x,\sigma}.
\]
It then follows that $(\mathcal{H}\psi)^{(n)} = H_n \psi^{(n)}$ for $n\geq 1$. Note that the $n$-th particle Hamiltonian $H_n$ is independent of the spin. Thus, if we then denote by $\mathcal{F}(N_\uparrow, N_\downarrow)\subset\mathcal{F}$ the subset of $\mathcal{F}$ given by $N$-particle states with $N = N_\uparrow + N_\downarrow$ particles, we can rewrite the ground state energy as 
\[
  E_L(N_\uparrow, N_\downarrow) = \inf_{\psi\in\mathcal{F}^{(N_\uparrow, N_\downarrow)}}\frac{\langle \psi, \mathcal{H}\psi\rangle}{\langle \psi, \psi\rangle}.
\]  

\subsection{The free Fermi gas}
We now write explicitly the state representing the free Fermi gas and given by the anti-symmetric product of plane waves, i.e., 
\begin{equation}
  \psi_{\mathrm{FFG}} (\{x_i, \uparrow\}_{i=1}^{N_\uparrow}, \{y_j,\downarrow\}_{j=1}^{N_\downarrow}) = \frac{1}{\sqrt{N}_\uparrow}\frac{1}{\sqrt{N}_\downarrow}(\mathrm{det}f_{k_\ell}^\uparrow(x_i))_{1\leq i,\ell\leq N_\uparrow}(\mathrm{det}f^\downarrow_{k_m}(y_j))_{1 \leq m,j\leq N_\downarrow},
\end{equation}
here $f^\sigma_k(x) \equiv f_k(x) = e^{ik\cdot x}/L^{3/2}$ with $k\in\mathcal{B}_F^\sigma$ where $\mathcal{B}_F^\sigma$ is the Fermi ball, i.e.,
\begin{equation}
  \mathcal{B}_F^\sigma = \big\{ k\in (2\pi/L)\mathbb{Z}^3\, \vert\, |k| \leq k_F^\sigma\big\}.
\end{equation} 
Here and in the following, $k_F$ denotes the Fermi momentum which, for fixed densities and in the limit $L\rightarrow \infty$, can be written as
\begin{equation}
  k_F^\sigma = (6\pi^2)^{\frac{1}{3}} \rho_\sigma^{\frac{1}{3}} + o(1). 
\end{equation}
Note that for a fixed value of the spin, all the momenta involved in the definition of $\psi_{\mathrm{FFG}}$ are different, the state then respect to the Pauli principle.
\begin{remark}[Filled Fermi sea] We suppose the Fermi ball to be completely filled, i.e., we suppose to only take into account the values $N_\sigma$ such that $N_\sigma = |\mathcal{B}_F^\sigma|$. This is not a loss of generality since the corresponding densities $\rho_\sigma$ form a dense subset of $\mathbb{R}_+$ as $L\rightarrow \infty$ (see \cite[Section 3]{FGHP} for more details).
\end{remark}
\begin{remark}[Fully antisymmetrized version of $\psi_{\mathrm{FFG}}$] Being the Hamiltonian $H_N$ spin-independent, it is convenient to introduce the fully antisymmetrized version of $\psi_{\mathrm{FFG}}$ which we denote by $\Psi_{\mathrm{FFG}}$ and which explicitly reads as
\begin{equation}\label{eq: fully ant ffg}
  \Psi_{\mathrm{FFG}}(x_1, \cdots, x_N) = \frac{1}{\sqrt{N!}} \mathrm{det}(f^{\sigma_i}_i(x_j))_{1\leq i,j\leq N}, \qquad \sigma_i = \uparrow, \downarrow. 
\end{equation}
As explained in \cite[Section 3]{FGHP}, it holds
\begin{equation}
  \langle \psi_{\mathrm{FFG}}, H_N \psi_{\mathrm{FFG}}\rangle = \langle \Psi_{\mathrm{FFG}}, H_N \Psi_{\mathrm{FFG}}\rangle.
\end{equation} 
\end{remark}
We denote by $\omega:L^2(\Lambda_L;\mathbb{C}^2) \rightarrow L^2(\Lambda_L;\mathbb{C}^2)$  the reduced one particle density matrix associated to $\Psi_{\mathrm{FFG}}$, i.e., the operator with integral kernel given by 
\begin{equation}\label{eq: 1particle density matrix w}
  \omega_{\sigma,\sigma^\prime}(x;y) = \frac{\delta_{\sigma,\sigma^\prime}}{L^3}\sum_{k\in\mathcal{B}_F^\sigma} e^{ik\cdot (x-y)}.
\end{equation}
One can then prove, proceeding similarly as in \cite[Section 3.2.1]{FGHP}, that
\begin{equation}
  \langle \psi_{\mathrm{FFG}}, H_N \psi_{\mathrm{FFG}}\rangle = \langle \Psi_{\mathrm{FFG}}, H_N \Psi_{\mathrm{FFG}}\rangle = E_{\mathrm{HF}}(\omega),
\end{equation}
where $E_{\mathrm{HF}}(\omega)$ is the Hartree-Fock energy functional, i.e., 
\begin{equation}\label{eq: HF funcitional}
  E_{\mathrm{HF}}(\omega) = -\mathrm{tr}\Delta\omega + \frac{1}{2}\sum_{\sigma,\sigma^\prime}\int_{\Lambda_L\times \Lambda_L}dxdy\, V(x-y) \big(\omega_{\sigma,\sigma}(x;x)\omega_{\sigma^\prime, \sigma^\prime}(y;y) - |\omega_{\sigma, \sigma^\prime}(x;y)|^2\big).
\end{equation}
We now quantify $\mathcal{E}_{\mathrm{HF}}(\omega)$. Taking into account the exchange term:
\begin{eqnarray}
  -\frac{1}{2}\sum_{\sigma,\sigma^\prime}\int_{\Lambda_L \times \Lambda_L} dxdy\, V(x-y)|\omega_{\sigma,\sigma^\prime}(x;y)|^2 &=& -\frac{1}{2}\sum_{\sigma,\sigma^\prime} \frac{\delta_{\sigma,\sigma^\prime}}{L^3}\sum_{k,k^\prime\in \mathcal{B}_F^\sigma}\hat{V}(k-k^\prime)
  \\
  &=& -\frac{1}{2}\sum_\sigma L^3\hat{V}(0)\rho_\sigma^2 + \mathcal{O}(L^3\rho^{\frac{8}{3}}),\nonumber
\end{eqnarray}
where the last estimate holds for $L$ large. In particular, we used the fact that $k,k^\prime \in \mathcal{B}_F^\sigma$, which implies $|k-k^\prime| \leq C\rho^{1/3}$. As a consequence, one gets $\hat{V}(k - k^\prime) = \hat{V}(0) + \mathcal{O}(\rho^{2/3})$, using the fact that the interaction potential is radial.
To calculate the remaining contributions in $E_{\mathrm{HF}}(\omega)$ we refer to \cite[Section 3]{FGHP}. In the end, taking $L$ large enough, we get that 
\begin{equation}\label{eq: HF unit volume}
  \frac{E_{\mathrm{HF}}(\omega)}{L^3} = \frac{3}{5}(6\pi^2)^{\frac{2}{3}}(\rho_\uparrow^{\frac{5}{3}} + \rho_\downarrow^{\frac{5}{3}}) + \hat{V}(0)\rho_\uparrow\rho_\downarrow + \mathcal{O}(\rho^{\frac{8}{3}}).
\end{equation}
\subsection{The fermionic Bogoliubov transformation (\textit{particle-hole transformation})}
Given the reduced one-particle density matrix of the free Fermi gas, $\omega_{\sigma,\sigma^\prime} = \delta_{\sigma,\sigma^\prime}\sum_{k \in \mathcal{B}_F^\sigma}|f_k\rangle \langle f_k|$, we define $u, v :L^2(\Lambda_L; \mathbb{C}^2) \rightarrow L^2(\Lambda_L; \mathbb{C}^2)$ to be two operators with integral kernel given by 
\begin{equation} \label{eq: def u,v}
  v_{\sigma,\sigma^\prime} (x;y)  =  \delta_{\sigma,\sigma^\prime} \sum_{k\in\mathcal{B}_{F}^\sigma} |\overline{f_k}\rangle \langle f_k|, \qquad u_{\sigma, \sigma^\prime}(x;y) = \delta_{\sigma,\sigma^\prime}\left(\delta(x-y) -\omega_{\sigma,\sigma^\prime}(x;y)\right),
\end{equation}
where $\delta(x-y)$ denotes the periodic Dirac delta distribution. Note that $\omega = \overline{v}v$ and that $u\overline{v} = 0$. Thorough $u,v$ we now define the fermionic Bogoliubov transformation. In the following, to simplify the notation we write $u_{\sigma}$ and $v_\sigma$ in place of $u_{\sigma,\sigma}$ and $v_{\sigma,\sigma}$, respectively. Moreover, often we simply write $\sum_\sigma$ for $\sum_{\sigma= \uparrow, \downarrow}$ and $\int dx$ for $\int_{\Lambda_L} dx$. Similarly, often we shall use the notation $\|\cdot\|_p$ for $\|\cdot\|_{L^p(\Lambda_L)}$.
\begin{definition}[Particle-hole transformation]\label{def: ferm bog}
Let $u$,$v: L^2(\Lambda_L;\mathbb{C}^2)\rightarrow L^2(\Lambda_L;\mathbb{C}^2)$ be the operators with integral kernel defined in \eqref{eq: def u,v}. The particle-hole transformation is a unitary operator $R: \mathcal{F}\rightarrow\mathcal{F}$ such that the following properties hold: 
\begin{itemize}
  \item[(i)] The state $R\Omega$ is such that $(R\Omega)^{(n)} = 0$ whenever $n\neq N$ and $(R\Omega)^{(N)} = \Psi_{\mathrm{FFG}}$.
  \item[(ii)] It holds
  \begin{equation}\label{eq: prop II R}
    R^\ast a_{x,\sigma} R = a_\sigma(u_x) + a^\ast_\sigma(\overline{v}_x),
  \end{equation}
  where
  \[
    a_\sigma(u_x) = \int_{\Lambda_L} dy\, \overline{u_\sigma(y;x)}a_{y,\sigma}, \qquad a^\ast_\sigma(\overline{v}_x) = \int_{\Lambda_L} dy\,\overline{v_\sigma(y;x)} a^\ast_{y,\sigma}.
  \] 
\end{itemize}
\end{definition}

The existence and the well-posedness of the unitary $R$ is a consequence of Shale-Stinespring theorem, see \cite{Sol1}.
\begin{remark}[The Hartree Fock energy via $R$] From point (i) in Definition \ref{def: ferm bog}, it holds that 
\begin{equation}
  \langle R\Omega, H_N R\Omega\rangle = E_{\mathrm{HF}}(\omega).
\end{equation}

\end{remark}
\begin{remark}[Action of $R$ in momentum space] From \eqref{eq: prop II R} and using that $\hat{a}_{k,\sigma} = a_\sigma(f_k)$ it follows that
\begin{equation}
  R^\ast \hat{a}_{k,\sigma} R = \begin{cases} \hat{a}_{k,\sigma} &\mbox{if}\,\,\, k\notin\mathcal{B}_F^\sigma, \\ \hat{a}^\ast_{k,\sigma} &\mbox{if}\,\,\, k\in \mathcal{B}_F^\sigma, \end{cases}
\end{equation} 
which justifies the name \textit{particle-hole transformation}.
\end{remark}
The next proposition is useful to compare the ground state energy with the Hartree Fock energy via conjugation under the particle-hole transformation $R$. 
\begin{proposition}[Conjugation under $R$]\label{pro: fermionic transf}
  Let $V$ be as in Assumption \ref{asu: potential V}. Let $\psi\in \mathcal{F}$ be a normalized state, such that $\langle \psi, \mathcal{N}_\sigma\psi\rangle = N_\sigma$ and $N = N_\uparrow + N_\downarrow$. Then 
  \begin{equation} 
    \langle \psi, \mathcal{H}\psi\rangle = E_{\mathrm{HF}}(\omega) + \langle R^\ast\psi, \mathbb{H}_0 R^\ast\psi\rangle + \langle R^\ast\psi, \mathbb{X}R^\ast\psi\rangle + \langle R^{\ast} \psi, \mathbb{Q}R^\ast\psi\rangle,
  \end{equation}
  where $E_{\mathrm{HF}}(\omega)$ is the Hartree-Fock energy functional introduced in \eqref{eq: HF funcitional}. The operators $\mathbb{H}_0$ and $\mathbb{X}$ are given by 
\begin{eqnarray}
  \mathbb{H}_0 &=& \sum_\sigma\sum_k ||k|^2 -\mu_\sigma| \hat{a}_{k,\sigma}^\ast \hat{a}_{k,\sigma}, \qquad \mu_\sigma = (k_F^\sigma)^2,
  \\
  \mathbb{X} &=& \sum_\sigma \int dxdy\, V(x-y)\omega_\sigma(x-y)\left(a^\ast_\sigma(u_x)a_\sigma(u_y) -a^\ast_\sigma(\overline{v}_y)a_\sigma(\overline{v}_x)\right).
\end{eqnarray}
The operator $\mathbb{Q}$ can be written as $\mathbb{Q} = \sum_{i=1}^4\mathbb{Q}_i$, with 
\begin{eqnarray}
  \mathbb{Q}_1 &=& \frac{1}{2}\sum_{\sigma,\sigma^\prime}\int dxdy\, V(x-y) \left( a^\ast_\sigma(u_x)a^\ast_{\sigma^\prime}(u_y)a_{\sigma^\prime}(u_y)a_\sigma(u_x)\right)\nonumber,
  \\ 
  \mathbb{Q}_2 &=& \frac{1}{2}\sum_{\sigma,\sigma^\prime}\int dxdy\, V(x-y) a^\ast_\sigma(u_x)a^\ast_{\sigma^\prime}(\overline{v}_x)a_{\sigma^\prime}(\overline{v}_y)a_{\sigma^\prime}(u_y)\nonumber
  \\
  && + \frac{1}{2}\sum_{\sigma,\sigma^\prime}\int dxdy\, V(x-y) \left(a^\ast_\sigma(\overline{v}_x)a^\ast_{\sigma^\prime}(\overline{v}_y)a_{\sigma^\prime}(\overline{v}_y)a_\sigma(\overline{v}_x) -2a^\ast_\sigma(u_x)a^\ast_{\sigma^\prime}(\overline{v}_y)a_{\sigma^\prime}(\overline{v}_y)a_\sigma(u_x)\right)\nonumber,
  \\
  \mathbb{Q}_3 &=& -\sum_{\sigma,\sigma^\prime} \int dxdy\, V(x-y)\left(a^\ast_\sigma(u_x) a^\ast_{\sigma^\prime}(u_y) a^\ast_{\sigma^\prime}(\overline{v}_x)a_{\sigma^\prime}(u_y) - a^\ast_\sigma(u_x) a^\ast_{\sigma^\prime}(\overline{v}_y) a^\ast_{\sigma}(\overline{v}_x)a_{\sigma^\prime}(\overline{v}_y)\right) + \mathrm{h.c.}\nonumber,
  \\
  \mathbb{Q}_4 &=& \frac{1}{2}\sum_{\sigma,\sigma^\prime} \int dxdy\, V(x-y) a^\ast_\sigma(u_x)a^\ast_\sigma(u_y)a^\ast_{\sigma^\prime}(\overline{v}_y)a^\ast_{\sigma^\prime}(\overline{v}_x) + \mathrm{h.c.}\nonumber
\end{eqnarray}
Moreover, the following inequality holds true: 
\begin{equation}
  \langle \psi, \mathcal{H}\psi\rangle \geq E_{\mathrm{HF}}(\omega) + \langle R^\ast\psi, \mathbb{H}_0 R^\ast\psi\rangle + \langle R^\ast \psi, \mathbb{X}R^\ast\psi\rangle + \rangle R^\ast \psi, \widetilde{\mathbb{Q}} R^\ast \psi\rangle, 
\end{equation}
where $\widetilde{\mathbb{Q}} = \sum_{i=1}^4 \widetilde{\mathbb{Q}}_i$ and each operator $\widetilde{\mathbb{Q}}_i$ is defined as the corresponding $\mathbb{Q}_i$ with the sums over $\sigma, \sigma^\prime$ replaced by the sums over $\sigma\neq \sigma^\prime$. 
\end{proposition}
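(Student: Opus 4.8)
The plan is to perform the fermionic (particle--hole) Bogoliubov conjugation, exactly as in \cite[Section 3]{FGHP}; I only describe the organizing principle and the two places where one must be careful. Write $\mathcal{H}=\mathbb{T}+\mathbb{V}$, where $\mathbb{T}=\sum_\sigma\int dx\,\nabla_x a^\ast_{x,\sigma}\nabla_x a_{x,\sigma}$ is the kinetic part and $\mathbb{V}$ the two--body interaction. Since $R$ is unitary, $\langle\psi,\mathcal{H}\psi\rangle=\langle R^\ast\psi,\,R^\ast\mathcal{H}R\,R^\ast\psi\rangle$, so everything reduces to evaluating $R^\ast\mathcal{H}R$ by inserting $R^\ast a_{x,\sigma}R=a_\sigma(u_x)+a^\ast_\sigma(\overline{v}_x)$ (together with its adjoint) and normal--ordering via the CAR.

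For the kinetic part it is cleanest to work in momentum space. There $R^\ast\hat{a}_{k,\sigma}R=\hat{a}_{k,\sigma}$ for $k\notin\mathcal{B}_F^\sigma$ and $R^\ast\hat{a}_{k,\sigma}R=\hat{a}^\ast_{k,\sigma}$ for $k\in\mathcal{B}_F^\sigma$, hence
\[
  R^\ast\mathbb{T}R=\mathrm{tr}(-\Delta\omega)+\sum_\sigma\sum_{k\notin\mathcal{B}_F^\sigma}|k|^2\hat{a}^\ast_{k,\sigma}\hat{a}_{k,\sigma}-\sum_\sigma\sum_{k\in\mathcal{B}_F^\sigma}|k|^2\hat{a}^\ast_{k,\sigma}\hat{a}_{k,\sigma}.
\]
Splitting $|k|^2=\mu_\sigma+\mathrm{sgn}(|k|^2-\mu_\sigma)\,\big||k|^2-\mu_\sigma\big|$ and using that the sign is $+1$ on the complement of $\mathcal{B}_F^\sigma$ and $-1$ inside it, the $\big||k|^2-\mu_\sigma\big|$ pieces assemble into $\mathbb{H}_0$, while the $\mu_\sigma$ pieces combine to $\sum_\sigma\mu_\sigma\big(R^\ast\mathcal{N}_\sigma R-N_\sigma\big)$, whose expectation in $R^\ast\psi$ vanishes by the hypothesis $\langle\psi,\mathcal{N}_\sigma\psi\rangle=N_\sigma$. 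The remaining constant $\mathrm{tr}(-\Delta\omega)$ is the kinetic part of $E_{\mathrm{HF}}(\omega)$.

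For the interaction I would use $a^\ast_{x,\sigma}a^\ast_{y,\sigma'}a_{y,\sigma'}a_{x,\sigma}=\big(a_{y,\sigma'}a_{x,\sigma}\big)^\ast\big(a_{y,\sigma'}a_{x,\sigma}\big)$, so that
\[
  R^\ast\mathbb{V}R=\frac12\sum_{\sigma,\sigma'}\int dx\,dy\,V(x-y)\,\big(R^\ast(a_{y,\sigma'}a_{x,\sigma})R\big)^\ast\big(R^\ast(a_{y,\sigma'}a_{x,\sigma})R\big).
\]
Expanding $R^\ast(a_{y,\sigma'}a_{x,\sigma})R$ into the four products coming from $a\mapsto a(u)+a^\ast(\overline{v})$, and using $u\overline{v}=0$ (which makes the mixed anticommutators $\{a_\sigma(u_x),a^\ast_{\sigma'}(\overline{v}_y)\}$ vanish), puts this operator into normal order; squaring it and Wick--reducing, the fully contracted c--number part reproduces exactly the potential part of $E_{\mathrm{HF}}(\omega)$ (direct minus exchange), the $\omega$--weighted single contractions give $\mathbb{X}$, the remaining quadratic contractions are proportional to $\rho_\sigma\hat{V}(0)\big(R^\ast\mathcal{N}_\sigma R-N_\sigma\big)$ and again drop in expectation in $R^\ast\psi$, and the surviving normal--ordered quartic and cubic terms, sorted by the number of $\overline{v}$--factors, are precisely $\mathbb{Q}_1,\dots,\mathbb{Q}_4$. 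Collecting all constants yields $E_{\mathrm{HF}}(\omega)$, which gives the stated identity.

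For the lower bound it remains to show $\langle R^\ast\psi,(\mathbb{Q}-\widetilde{\mathbb{Q}})R^\ast\psi\rangle\geq 0$, where $\mathbb{Q}-\widetilde{\mathbb{Q}}=\sum_i(\mathbb{Q}_i-\widetilde{\mathbb{Q}}_i)$ collects the spin--diagonal ($\sigma=\sigma'$) quartic and cubic terms. The reason these can be discarded is rooted in the fact that the spin--diagonal part of the interaction,
\[
  \frac12\sum_\sigma\int dx\,dy\,V(x-y)\,(a_{y,\sigma}a_{x,\sigma})^\ast(a_{y,\sigma}a_{x,\sigma}),
\]
is a nonnegative operator because $V\geq 0$; running the expansion above with $\sigma=\sigma'$ identifies its $R$--conjugate with the spin--diagonal pieces of $E_{\mathrm{HF}}(\omega)$, $\mathbb{X}$ and $\mathbb{Q}$ (modulo the number--operator terms, which vanish in $R^\ast\psi$). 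The point I would then have to verify is that the spin--diagonal quartic and cubic remainder $\mathbb{Q}-\widetilde{\mathbb{Q}}$ is itself a nonnegative operator, which I would try to see by writing it, from the explicit formulas for the $\mathbb{Q}_i$ restricted to $\sigma=\sigma'$, as $\tfrac12\sum_\sigma\int V(x-y)$ times a (normal--ordered) fermionic bilinear paired with its adjoint. I expect this recombination --- i.e.\ checking that the positive structure of the same--spin interaction survives the normal ordering that separates off $E_{\mathrm{HF}}$ and $\mathbb{X}$ --- together with the upstream accounting of which quadratic contributions land in $\mathbb{H}_0$ versus $\mathbb{X}$ modulo the particle--number constraint, to be the only genuinely delicate part of the proof; the rest is routine bookkeeping which I would import from \cite[Section 3]{FGHP}.
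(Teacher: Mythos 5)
Your treatment of the \emph{equality} is the same route as the paper's: conjugate by $R$ using $R^\ast a_{x,\sigma}R=a_\sigma(u_x)+a^\ast_\sigma(\overline{v}_x)$, normal order, and split $|k|^2=\mu_\sigma+\mathrm{sgn}(|k|^2-\mu_\sigma)\big||k|^2-\mu_\sigma\big|$ so that the $\mu_\sigma$ pieces collapse to $\mu_\sigma(R^\ast\mathcal{N}_\sigma R-N_\sigma)$, which has vanishing expectation under the constraint $\langle\psi,\mathcal{N}_\sigma\psi\rangle=N_\sigma$. The bookkeeping of the quartic Wick expansion (the $16$ products, with $u\overline{v}=0$ killing the mixed contractions) is likewise what \cite[Section 3]{FGHP} does. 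No objection there.

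The gap is in the \emph{lower bound}, precisely at the point you flag as ``genuinely delicate.'' You propose to show that $\mathbb{Q}-\widetilde{\mathbb{Q}}$ (the $\sigma=\sigma'$ part of $\mathbb{Q}$) is itself a nonnegative operator, by rewriting it as $\tfrac12\sum_\sigma\int V(x-y)$ times a normal--ordered bilinear paired with its adjoint. That plan cannot work as stated: the normal--ordered part of a square is \emph{not} a square, and normal ordering does not preserve positivity. A two--mode counterexample makes this concrete: for $\{a_1,a_2^\ast\}=0$, the already normal--ordered operator $D=a_1a_2^\ast=-a_2^\ast a_1$ has $D^\ast D\geq 0$, yet its fully normal--ordered part is $:\!D^\ast D\!:=-a_1^\ast a_2^\ast a_2 a_1\leq 0$. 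In the present setting $\mathbb{Q}-\widetilde{\mathbb{Q}}$ is precisely the fully normal--ordered quartic part of $\tfrac12\sum_\sigma\int V(x-y)\,D_{xy}^\ast D_{xy}$ with $D_{xy}=R^\ast(a_{y,\sigma}a_{x,\sigma})R$; the contractions you strip off in normal ordering are the constant $E_V^{\mathrm{diag}}$, the exchange quadratic $\mathbb{X}$, and the $\hat V(0)\rho_\sigma(R^\ast\mathcal{N}_\sigma R-N_\sigma)$ number--operator pieces. So the manifestly positive object is the \emph{un}--normal--ordered same--spin interaction after conjugation,
\[
  R^\ast\Big(\tfrac12\sum_\sigma\!\int V(x-y)\,a^\ast_{x,\sigma}a^\ast_{y,\sigma}a_{y,\sigma}a_{x,\sigma}\Big)R\;=\;\tfrac12\sum_\sigma\!\int V(x-y)\,D_{xy}^\ast D_{xy}\;\geq\;0,
\]
not its quartic remainder. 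The argument the paper imports from \cite[Proposition 3.1]{FGHP} exploits this positivity \emph{before} normal ordering, together with the careful accounting of which constant and quadratic pieces are separated off (and with the number--operator pieces vanishing on $R^\ast\psi$ under the constraint). You should replace the ``recombination into normal--ordered squares'' step by this argument: instead of aiming at an operator inequality $\mathbb{Q}-\widetilde{\mathbb{Q}}\geq 0$, keep $\tfrac12\sum_\sigma\int V\,D^\ast D\geq 0$ intact and track the subtraction of its constant and quadratic Wick contractions explicitly.
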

For the proof of the Proposition \ref{pro: fermionic transf} we refer to \cite{BPS, BJPSS, HPR, FGHP}. The idea is to conjugate the Hamiltonian under $R$, to use \eqref{eq: prop II R} and to put the operators in normal order. The resulting Hamiltonian is useful to describe the correlations between particles:
\begin{equation}\label{eq: def H corr}
  \mathcal{H}_{\mathrm{corr}} := R\mathcal{H}R^\ast - E_{\mathrm{HF}}(\omega) =  \mathbb{H}_0 + \mathbb{X} + \sum_{i=1}^4\mathbb{Q}_i .
\end{equation}
%
Later we show that the effective Hamiltonian responsible for the correct asymptotics of the correlation energy to leading order is given by 
\begin{equation}
  \mathcal{H}_{\mathrm{eff}} = \mathbb{H}_0 + \widetilde{\mathbb{Q}}_1 + \widetilde{\mathbb{Q}}_4.
\end{equation}
In the next proposition we then bound the expectations of some sub-leading operators.
\begin{proposition}[Bounds for $\mathbb{X}$, $\mathbb{Q}_2$]\label{pro: X,Q2} Let $V$ be as in Assumption \ref{asu: potential V}. Let $\psi\in \mathcal{F}$ be a normalized state, such that $\langle \psi, \mathcal{N}_\sigma\psi\rangle = N_\sigma$ and $N = N_\uparrow + N_\downarrow$. Then, the following holds
  \begin{equation} 
    |\langle \psi, \mathbb{X}\psi\rangle| \leq C\rho\langle \psi, \mathcal{N}\psi\rangle,\qquad |\langle \psi, \mathbb{Q}_2\psi\rangle| \leq C\rho\langle \psi, \mathcal{N}\psi\rangle.
  \end{equation}
\end{proposition}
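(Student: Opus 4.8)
The plan is to estimate $\mathbb{X}$ and $\mathbb{Q}_2$ by writing them in momentum space, bounding the relevant momentum sums by the kinetic gain available from the potential being short-range, and using the operator bounds on $a_\sigma(u_x), a_\sigma(\overline{v}_x)$ together with the fact that $\|v_\sigma\|_{\mathrm{op}} \leq 1$ and $\mathrm{tr}\, v_\sigma^\ast v_\sigma = \mathrm{tr}\, \omega_\sigma = N_\sigma \sim \rho_\sigma L^3$. First I would recall that $a_\sigma(u_x)$ is essentially a smeared annihilation operator; pairing two of these on a state $\psi$ gives, after a Cauchy--Schwarz, a factor controlled by $\langle \psi, \mathcal{N}_\sigma \psi \rangle$, while pairing two $a^\ast_\sigma(\overline{v}_x)$-type operators brings in $\|v_\sigma\|$ factors. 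The potential prefactor $V(x-y)\omega_\sigma(x-y)$ supplies an $L^1$ (in the difference variable) weight of size $\mathrm{O}(\rho^{2/3})$ coming from $\|\omega_\sigma\|_\infty \sim \rho_\sigma$ localized on the support of $V$ of fixed size; combining these gives the claimed $C\rho$.

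Concretely, for $\mathbb{X}$ I would pass to momentum space: $\mathbb{X} = \sum_\sigma \sum_{k,p} \widehat{(V\omega_\sigma)}(p)\, (\hat{a}^\ast_{k+p,\sigma}\hat{a}_{k,\sigma}\, [\text{on }u\text{-sectors}] - [\text{on }v\text{-sectors}])$, where $\widehat{(V\omega_\sigma)}$ is the Fourier transform of the product $V\omega_\sigma$. Since $\|V\omega_\sigma\|_1 \leq \|V\|_1 \|\omega_\sigma\|_\infty \leq C\rho_\sigma$, we have $\|\widehat{(V\omega_\sigma)}\|_\infty \leq C\rho_\sigma \leq C\rho$. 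Then, using $u$-sectors project onto momenta outside $\mathcal{B}_F^\sigma$ and $v$-sectors onto momenta inside, Cauchy--Schwarz in $k$ and $p$ together with $\sum_k \hat{a}^\ast_{k,\sigma}\hat{a}_{k,\sigma} = \mathcal{N}_\sigma$ yields $|\langle \psi, \mathbb{X}\psi\rangle| \leq C\rho\, \langle \psi, \mathcal{N}\psi\rangle$ (the $v$-sector contribution is bounded similarly, noting $\sum_{k\in\mathcal{B}_F^\sigma}\|\overline{v}_\sigma(\cdot; x)\|$-type quantities are again controlled by $\mathcal{N}_\sigma$ on the relevant state after conjugation, or by $N_\sigma$ directly as a constant shift which is absorbed since on any state with $\langle\mathcal{N}_\sigma\rangle = N_\sigma$ one has $N_\sigma \leq \langle\mathcal{N}\rangle$). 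For $\mathbb{Q}_2$, which is quartic, I would group the four operators into two pairs — one annihilation-type pair acting to reduce the particle number, one creation-type pair — and use that one pair is bounded in operator norm by $\|v_\sigma\|^2 \|v_{\sigma'}\| \leq C$ while the $V(x-y)$ factor together with the remaining $u$-pair, treated by Cauchy--Schwarz and the CAR bound $\|a_\sigma(u_y)\psi\| \leq \|\mathcal{N}_\sigma^{1/2}\psi\|$, gives the $\rho$: here the $\rho$ comes from $\|V\|_1 \|\omega_\sigma\|_\infty$ being replaced by the estimate that the $\overline{v}_\sigma(\cdot;x)$-kernels, when integrated against $V(x-y)$, produce a weight $\int dy\, V(x-y)|\omega_\sigma(x-y)| \leq C\rho$ by the same short-range argument, uniformly in $x$.

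The main obstacle I anticipate is the careful bookkeeping for the terms containing two or more $\overline{v}$-operators: naively bounding each $a^\ast_\sigma(\overline{v}_x)$ by its operator norm $\|v_\sigma\|_{\mathrm{op}} \leq 1$ loses the spatial localization and does not produce the $\rho$ factor, so one must instead keep one $\overline{v}$-kernel together with the $V$-weight and integrate in configuration space to extract $\int dx\, dy\, V(x-y)|\omega_\sigma(x-y)| \cdot (\text{something}) \lesssim \rho \times (\text{something})$, where the "something" is then estimated by $\langle\mathcal{N}\rangle$ using CAR bounds on the remaining operators. For $\mathbb{Q}_2$ one also has to deal with the exchange-type term $-2 a^\ast_\sigma(u_x)a^\ast_{\sigma'}(\overline{v}_y)a_{\sigma'}(\overline{v}_y)a_\sigma(u_x)$, which is already in a convenient form: $a^\ast_{\sigma'}(\overline{v}_y)a_{\sigma'}(\overline{v}_y)$ is a positive operator bounded by $\|v_{\sigma'}(\cdot;y)\|^2$ pointwise, so after integrating $\int dy\, V(x-y)\|v_{\sigma'}(\cdot;y)\|^2$ one gets $\lesssim \rho$ uniformly in $x$ (again by the short-range argument, since $\|v_{\sigma'}(\cdot;y)\|^2 = \omega_{\sigma'}(y;y) = \rho_{\sigma'}$ is constant), leaving $\int dx\, a^\ast_\sigma(u_x)a_\sigma(u_x) = \sum_{k\notin\mathcal{B}_F^\sigma}\hat a^\ast_{k,\sigma}\hat a_{k,\sigma} \leq \mathcal{N}_\sigma$. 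Assembling the pieces and summing over $\sigma,\sigma'$ (finitely many terms) gives both claimed bounds. The whole argument is a set of routine but slightly delicate CAR/Cauchy--Schwarz estimates; no new ideas beyond exploiting $\|V\|_1 < \infty$ and $\|\omega_\sigma\|_\infty = \rho_\sigma$ are needed.
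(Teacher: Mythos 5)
Your argument is correct in substance, and it is essentially the standard CAR/Cauchy--Schwarz estimate one expects here; the paper itself does not re-prove this proposition but simply cites [FGHP, Proposition 3.3], whose proof proceeds along the same lines. The two key ingredients you identify are the right ones: $\|V\omega_\sigma\|_{L^1} \leq \|V\|_1\|\omega_\sigma\|_\infty \leq C\rho$ (and the analogous pointwise bounds $\|a_\sigma(\overline{v}_x)\| \leq \|\overline{v}_{x,\sigma}\|_2 \leq C\rho^{1/2}$, $\|a^\ast_\sigma(\overline{v}_y)a_\sigma(\overline{v}_y)\| \leq \rho_\sigma$ for the $\mathbb{Q}_2$ pieces) to produce the $\rho$, and the identities $\int dx\,\|a_\sigma(u_x)\psi\|^2 \leq \langle\psi,\mathcal{N}_\sigma\psi\rangle$ and $\int dx\,\|a_\sigma(\overline{v}_x)\psi\|^2 = \sum_{k\in\mathcal{B}_F^\sigma}\|\hat a_{k,\sigma}\psi\|^2 \leq \langle\psi,\mathcal{N}_\sigma\psi\rangle$ to produce the $\mathcal{N}$. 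Two small imprecisions that do not affect the outcome: (i) by translation invariance $\mathbb{X}$ is actually \emph{diagonal} in momentum, $\mathbb{X} = \sum_\sigma\sum_k \widehat{V\omega_\sigma}(k)\,(\hat u_\sigma(k)^2 - \hat v_\sigma(k)^2)\hat a^\ast_{k,\sigma}\hat a_{k,\sigma}$, so only $p=0$ survives in your indicated off-diagonal expansion and the bound follows immediately from $\|\widehat{V\omega_\sigma}\|_\infty \leq C\rho$ without any extra Cauchy--Schwarz in $p$; (ii) the ``constant shift by $N_\sigma$'' worry for the $v$-sector is unnecessary, since the $v$-part is bounded by $\langle\psi,\mathcal{N}_\sigma\psi\rangle$ in exactly the same way as the $u$-part via the identity just stated, with no additive constant appearing.
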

For the proof of Proposition \ref{pro: X,Q2} we refer to \cite[Proposition 3.3]{FGHP}.

Later, in our analysis, we will need to estimate the expectation value of $\mathcal{N}$, $\mathbb{H}_0$, $\mathbb{Q}_1$ and $\widetilde{\mathbb{Q}}_1$ over states which are close enough to ground state of the system. In order to quantify this, similarly as in \cite{FGHP}, we introduce here the the notion of being an \textit{approximate ground state}.
\begin{definition}[Approximate ground state]\label{def: approx gs} Let $\psi\in\mathcal{F}$ be a normalized state such that $\langle \psi, \mathcal{N}_\sigma\psi\rangle = N_\sigma$ and $N= N_\uparrow + N_\downarrow$. We say that $\psi$ is an approximate ground state of $\mathcal{H}$ if it holds 
\begin{equation}
	\left|\langle \psi, \mathcal{H}\psi\rangle - \sum_{\sigma= \uparrow, \downarrow}\sum_{k\in\mathcal{B}_F^\sigma} |k|^2\right| \leq CL^3\rho^2.
\end{equation}
\end{definition}
Following \cite{FGHP}, we can prove the next lemma.
\begin{lemma}[A priori estimates for $\mathbb{H}_0$, $\mathcal{N}$, $\mathbb{Q}_1$]\label{lem: a priori} Let $\psi$ be an approximate ground state. Under the same assumptions of Theorem \ref{thm: optimal up bd} and Theorem \ref{thm: lower bound}, we have
\begin{equation}
  \langle R^\ast\psi, \mathbb{H}_0 R^\ast\psi\rangle \leq CL^3\rho^2, \qquad \langle R^\ast \psi, \mathcal{N} R^\ast \psi\rangle \leq CL^3\rho^{\frac{7}{6}}
  \end{equation}
  \begin{equation} \langle R^\ast \psi,\mathbb{Q}_1 R^\ast \psi\rangle \leq CL^3\rho^2, \qquad \langle R^\ast \psi,\widetilde{\mathbb{Q}}_1 R^\ast \psi\rangle \leq CL^3\rho^2. 
\end{equation}
\end{lemma}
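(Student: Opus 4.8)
The plan is to follow the strategy of \cite{FGHP} and prove all four bounds in Lemma \ref{lem: a priori} starting from the single hypothesis that $\psi$ is an approximate ground state, i.e., $\langle\psi,\mathcal H\psi\rangle - \sum_\sigma\sum_{k\in\mathcal B_F^\sigma}|k|^2 \le CL^3\rho^2$. First I would recall that by \eqref{eq: HF unit volume} the Hartree--Fock energy satisfies $E_{\mathrm{HF}}(\omega) = \sum_\sigma\sum_{k\in\mathcal B_F^\sigma}|k|^2 + \hat V(0)\rho_\uparrow\rho_\downarrow L^3 + \mathcal O(L^3\rho^{8/3})$, so the approximate ground state condition is equivalent to $\langle R^\ast\psi,\mathcal H_{\mathrm{corr}}R^\ast\psi\rangle \le CL^3\rho^2$. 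Writing $\phi := R^\ast\psi$ and using the decomposition \eqref{eq: def H corr}, this reads $\langle\phi,(\mathbb H_0 + \mathbb X + \sum_i\mathbb Q_i)\phi\rangle \le CL^3\rho^2$. Now $\mathbb H_0\ge 0$ and $\mathbb Q_1\ge 0$ (it is manifestly a positive operator, being of the form $\tfrac12\sum V(x-y)A^\ast_{x}A^\ast_yA_yA_x$ with the right ordering), and $\mathbb Q_4$ together with part of $\mathbb Q_3$ can be controlled; the key point is to show the remaining terms are small compared to $\mathbb H_0$ and $\mathbb Q_1$.

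The main technical input is a lower bound of the form $\mathcal H_{\mathrm{corr}} \ge c\,\mathbb H_0 + c\,\mathbb Q_1 - C\rho\,\mathcal N - CL^3\rho^{8/3}$ on states with $\langle\phi,\mathcal N_\sigma\phi\rangle = N_\sigma$. To get this I would estimate the ``dangerous'' pieces $\mathbb X$, $\mathbb Q_2$, $\mathbb Q_3$, $\mathbb Q_4$. By Proposition \ref{pro: X,Q2} we already have $|\langle\phi,\mathbb X\phi\rangle|, |\langle\phi,\mathbb Q_2\phi\rangle| \le C\rho\langle\phi,\mathcal N\phi\rangle$. For $\mathbb Q_4$ (the term $\sim a^\ast a^\ast a^\ast a^\ast$ plus h.c.) one uses Cauchy--Schwarz with a weight to bound it by $\epsilon\,\mathbb Q_1 + \epsilon^{-1}(\text{something controlled by }\mathbb H_0\text{ or }\mathcal N)$, exploiting that $v$ has range bounded by $C\rho^{1/3}$ in momentum and that $\|v_\sigma\|_{\mathrm{HS}}^2 = N_\sigma$. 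Similarly $\mathbb Q_3$ (the cubic terms $a^\ast a^\ast a^\ast a$) is split using Cauchy--Schwarz into a piece absorbable into $\mathbb Q_1$ and a piece absorbable into $\mathbb H_0$ plus $C\rho\,\mathcal N$; here one uses that the kernel $V$ is bounded and that the $u$-factors can be traded against $\mathbb H_0$ via the gap $||k|^2-\mu_\sigma|$ (this is where the relative boundedness of the potential with respect to $\mathbb H_0$, in the form $\langle\phi, \mathbb{Q}_1^{1/2}\cdots\rangle$, enters). Combining, one gets for small $\rho$
\begin{equation}\label{eq: coercive bound plan}
  \tfrac12\mathbb H_0 + \tfrac12\mathbb Q_1 \le \mathcal H_{\mathrm{corr}} + C\rho\,\mathcal N + CL^3\rho^{8/3}.
\end{equation}

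With \eqref{eq: coercive bound plan} in hand, the proof closes by a bootstrap on $\mathcal N$. The crucial bound $\langle\phi,\mathcal N\phi\rangle \le CL^3\rho^{7/6}$ would follow by noting that $\mathbb H_0 \ge \sum_\sigma\sum_{k\notin\mathcal B_F^\sigma}(|k|^2-\mu_\sigma)\,\hat a^\ast_{k,\sigma}\hat a_{k,\sigma}$ controls the number of particles outside the Fermi ball with a gap, but the particles \emph{inside} the Fermi ball (the ``holes'') are also counted by $\mathcal N$ after conjugation; one uses the constraint $\langle\phi,\mathcal N_\sigma\phi\rangle = N_\sigma$ together with a commutator argument (as in \cite[Section 3]{FGHP}) to see that $\mathcal N$ on $\phi$ is essentially twice the number of excitations, so $\langle\phi,\mathcal N\phi\rangle \lesssim$ (number of excitations). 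A rough a priori bound $\langle\phi,\mathcal N\phi\rangle\le CL^3\rho$ (Pauli/trivial) fed back into \eqref{eq: coercive bound plan} together with $\langle\phi,\mathcal H_{\mathrm{corr}}\phi\rangle\le CL^3\rho^2$ gives $\mathbb H_0, \mathbb Q_1 \lesssim L^3\rho^2$; and an interpolation/number estimate relating $\mathcal N$ to $\mathbb H_0^{1/2}$ (of the schematic form $\mathcal N \le \rho^{-1/6}\mathbb H_0^{1/2}(L^3\rho^{?})^{1/2}+\dots$, using the phase-space volume $\sim L^3\rho^{2/3}$ of the shell where excitations live and the gap $\sim\rho^{2/3}$) upgrades this to $\langle\phi,\mathcal N\phi\rangle\le CL^3\rho^{7/6}$. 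The main obstacle is exactly this last interpolation step: closing the self-improving estimate for $\mathcal N$ requires carefully tracking the phase-space geometry near the Fermi surface and handling the fact that $\mathbb Q_1$ only controls same-point pair densities, not $\mathcal N$ directly; this is the delicate part, and it is where I would lean most heavily on the estimates and commutator identities established in \cite[Section 3]{FGHP}.
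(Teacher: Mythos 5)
Your approach is essentially the strategy of \cite[Section 3]{FGHP}, which is exactly what the paper cites for this lemma, and the overall architecture is the right one: rewrite the approximate ground state condition as a bound on $\langle\phi,\mathcal H_{\mathrm{corr}}\phi\rangle$ for $\phi:=R^\ast\psi$; absorb $\mathbb X,\mathbb Q_2,\mathbb Q_3,\mathbb Q_4$ into $\epsilon(\mathbb H_0+\mathbb Q_1)+C\rho\,\mathcal N+CL^3\rho^2$ via Cauchy--Schwarz together with $\|\overline v_x\|_2\le C\rho^{1/2}$ and the fact that $\int V\|a(u_y)a(u_x)\phi\|^2 = \langle\phi,\mathbb Q_1\phi\rangle$; close with the trivial particle--hole balance bound $\langle\phi,\mathcal N\phi\rangle\le CL^3\rho$; and finally upgrade $\mathcal N$ by a kinetic interpolation.

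The one step that, as written, would fail is precisely the interpolation producing the exponent $7/6$. You state ``the phase-space volume $\sim L^3\rho^{2/3}$ of the shell where excitations live and the gap $\sim\rho^{2/3}$.'' Taking a gap cutoff $\Lambda$ on $||k|^2-\mu_\sigma|$, the complementary region is a shell of radial width $\sim\Lambda/k_F$ around $|k|=k_F$, hence contains $\sim L^3 k_F^2\cdot(\Lambda/k_F)\sim L^3\rho^{1/3}\Lambda$ modes; the correct splitting is therefore
\[
\mathcal N \le CL^3\rho^{1/3}\Lambda + C\Lambda^{-1}\mathbb H_0,
\]
and with $\langle\phi,\mathbb H_0\phi\rangle\le CL^3\rho^2$ the optimizer is $\Lambda\sim\rho^{5/6}$ (not $\rho^{2/3}$), both terms becoming $L^3\rho^{7/6}$. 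With your stated $\Lambda\sim\rho^{2/3}$ the shell already contains $\sim L^3\rho$ modes and you only recover the trivial bound. A smaller inaccuracy: the constraint $\langle\psi,\mathcal N_\sigma\psi\rangle=N_\sigma$ is on $\psi$, not on $\phi$; conjugating through $R$ it says that the expected number of particle-type excitations equals the expected number of hole-type excitations of $\phi$, and it is this identity, combined with the Pauli bound on the $N_\sigma$ hole modes, that yields $\langle\phi,\mathcal N\phi\rangle\le 2N\le CL^3\rho$.
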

For the proof of Lemma \ref{lem: a priori}, we refer to \cite[Lemma 3.5, Lemma 3.5, Corollary 3.7, Lemma 3.9]{FGHP}. 

 \section{The correlation structure}\label{sec: correl struct}

In this section we rigorously introduce the almost-bosonic Bogoliubov transformation $T$ which we use to extract the first order correction to the correlation energy. 

Before proceeding further, we fix some notations. We denote by $|\cdot|$ the Euclidean distance on $\mathbb{R}^3$ a{}nd by $|\cdot|_L$ the distance on the torus, i.e., $|x-y|_L = \min_{n\in\mathbb{Z}^3}|x-y + nL|$.

As already explained in Section \ref{sec: proof strategy}, we need to introduce some cut-off on the relevant quantities involved in our analysis. We then introduce a regularization for the operators $u$ and $v$ defined in Section \ref{sec: ferm Bog}. We define
\begin{equation}\label{eq: def ur vr}
  v^r_{\sigma, \sigma^\prime} = \frac{\delta_{\sigma, \sigma^\prime}}{L^3}\sum_k \hat{v}^r_\sigma(k) \overline{|f_k\rangle}\langle f_k|,\qquad u^{r}_{\sigma,\sigma^\prime} = \frac{\delta_{\sigma,\sigma^\prime}}{L^3} \sum_k u^r_\sigma(k) |f_k\rangle \langle f_k|,
\end{equation}
where we recall that $f_k(x) = L^{-3/2} e^{ik\cdot x}$ and where $0\leq \hat{v}^r(k), \hat{u}^r(k) \leq 1$ are smooth radial functions such that
\begin{equation}
  \hat{v}^{r}_{\sigma}(k) = \begin{cases} 1 &\mbox{for }\quad |k| \leq k_F^\sigma -\rho^{\alpha}_\sigma, \\ 
    0 &\mbox{for}\quad |k| \geq k_F^\sigma, 
    \end{cases}
    \qquad \hat{u}^{r}_{\sigma}(k) = \begin{cases} 0 &\mbox{for}\quad |k| \leq 2k_F^\sigma, \\ 
    1 &\mbox{for}\quad  3k_F^\sigma \leq |k| \leq \rho^{-\beta}_\sigma,
    \\
    0  &\mbox{for}\quad |k|\geq 2\rho_\sigma^{-\beta},\end{cases}
\end{equation}
with $\alpha = \frac{1}{3} + \frac{\epsilon}{3}$, for some $\epsilon >0$ to be chosen later and $\beta >0$. We denote, the regularized version of the one particle reduced density matrix is given by $\omega^r = \overline{v}^r v^r$. Note that under our choices for the regularization, $u^r$ and $v^r$ are still orthogonal, i.e., $u^r\overline{v}^r = 0$. In the following, we use the notation $u_{x,\sigma}(\cdot) := u_\sigma(\cdot\,;x)$, $v_{x,\sigma}(\cdot):= v_\sigma(\cdot\,;x)$ (and analogously for similar quantities). 

\begin{proposition}\label{pro: L1, L2 v}
   For $L$ large enough, one has
  \begin{equation}\label{eq: bounds u,v}
    \|v^r_{x, \sigma}\|_{2} \leq C\rho^{\frac{1}{2}},\qquad \|u^r_{x,\sigma}\|_{2} \leq C\rho^{-\frac{3}{2}\beta}, \qquad \|\omega^r_{x,\sigma}\|_{1} \leq C \rho_\sigma^{-\frac{\epsilon}{3}},\qquad \|u^r_{x,\sigma}\|_{1} \leq C.
  \end{equation}
   Moreover, for any $n\in\mathbb{N}$ it holds 
  \begin{equation}\label{eq: est infty norm omega}
    \|D^\alpha\omega^r_{x,\sigma}\|_\infty \leq C\rho^{1+\frac{n}{3}}, \qquad \mbox{with}\,\,\, \alpha = (\alpha_1, \alpha_2, \alpha_3) \in \mathbb{N}^3_0\,\,\,\mbox{and}\,\,\, |\alpha|= n.
  \end{equation}
\end{proposition}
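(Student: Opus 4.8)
The plan is to compute each norm directly from the momentum-space representations in \eqref{eq: def ur vr}, using Plancherel and simple counting of lattice points in $(2\pi/L)\mathbb{Z}^3$. For $\|v^r_{x,\sigma}\|_2$: since $v^r_{x,\sigma}(y) = L^{-3}\sum_k \hat v^r_\sigma(k)\overline{f_k(y)} f_k(x)$ has Fourier coefficients (in $y$) proportional to $\hat v^r_\sigma(k)\overline{f_k(x)}$, Parseval gives $\|v^r_{x,\sigma}\|_2^2 = L^{-3}\sum_k |\hat v^r_\sigma(k)|^2$; since $0\le \hat v^r_\sigma\le 1$ and the support is contained in $\{|k|\le k_F^\sigma\}$, for $L$ large this sum is $\sim (k_F^\sigma)^3 \sim \rho_\sigma$, giving $\|v^r_{x,\sigma}\|_2 \le C\rho^{1/2}$. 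The same argument applied to $u^r$ yields $\|u^r_{x,\sigma}\|_2^2 = L^{-3}\sum_k |\hat u^r_\sigma(k)|^2 \lesssim (\rho_\sigma^{-\beta})^3 = \rho_\sigma^{-3\beta}$, hence $\|u^r_{x,\sigma}\|_2 \le C\rho^{-3\beta/2}$. For the $L^1$ bounds I would instead exploit regularity of the kernels in $x$: because $\hat v^r_\sigma$, $\hat u^r_\sigma$ are smooth with the indicated supports, the functions $\omega^r_{x,\sigma}(\cdot)$ and $u^r_{x,\sigma}(\cdot)$ have fast polynomial decay in $|x-y|_L$ at scales set by the width of the momentum cut-off. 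Concretely $\|\omega^r_{x,\sigma}\|_1 = \|\omega^r_{0,\sigma}\|_1$ by translation invariance, and one estimates this by splitting the integral over $|y|_L \le \lambda$ and $|y|_L > \lambda$ for a suitable length scale $\lambda$, bounding the near part by $\lambda^3 \|\omega^r_{0,\sigma}\|_\infty$ and the far part using $|\omega^r_{0,\sigma}(y)|\le C_n |y|_L^{-n}\,(\text{width})^{3-n}$ obtained from integrating by parts $n$ times against $e^{ik\cdot y}$; optimizing over $\lambda$ gives the claimed $\rho_\sigma^{-\epsilon/3}$ (the width of the $v^r$-cut-off being $\rho_\sigma^{\alpha}= \rho_\sigma^{1/3+\epsilon/3}$, so that $\|\omega^r_{0,\sigma}\|_\infty \sim \rho_\sigma$ and the effective decay length is $\rho_\sigma^{-1/3-\epsilon/3}$, whence $\lambda \sim \rho_\sigma^{-1/3-\epsilon/3}$... wait, I need to recheck: one wants $\lambda^3 \rho_\sigma \sim \rho_\sigma^{-\epsilon/3}$, i.e. $\lambda \sim \rho_\sigma^{-1/3-\epsilon/9}$, which is consistent with the decay scale up to the exponent bookkeeping). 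Similarly $\|u^r_{x,\sigma}\|_1 \le C$ follows because the relevant volume $(\rho_\sigma^{-\beta})^3$ times $\|u^r_{0,\sigma}\|_\infty \sim \rho_\sigma^{3\beta}\cdot(\text{something})$... I would track constants carefully so the powers cancel to give an $O(1)$ bound.

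For the derivative estimate \eqref{eq: est infty norm omega}, I would write $D^\alpha\omega^r_{x,\sigma}(y) = L^{-3}\sum_k (ik)^\alpha \hat v^r_\sigma(k)^2 e^{ik\cdot(x-y)}$ (using $\omega^r = \overline v^r v^r$ so that $\widehat{\omega^r_\sigma}(k) = |\hat v^r_\sigma(k)|^2$), and bound the sup over $y$ by $L^{-3}\sum_k |k|^{|\alpha|}|\hat v^r_\sigma(k)|^2 \le L^{-3}\sum_{|k|\le k_F^\sigma}|k|^n \le C (k_F^\sigma)^{3+n} \sim \rho_\sigma^{1+n/3}$, for $L$ large enough that the Riemann sum controls the integral. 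This is the cleanest of the four.

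The main obstacle I anticipate is not any single estimate in isolation but the $L^1$ bounds on $\omega^r_{x,\sigma}$ and $u^r_{x,\sigma}$: one has to exploit the smoothness of the cut-off profiles to convert the momentum-space support information into configuration-space decay, and to keep exact track of which length scale ($\rho_\sigma^{-\alpha}$ for $v^r$, $\rho_\sigma^{-\beta}$ for $u^r$) and which power of $\rho_\sigma$ enters, so that the optimization over the splitting radius reproduces precisely $\rho_\sigma^{-\epsilon/3}$ and $O(1)$ respectively rather than a weaker power. The number of integrations by parts $n$ is taken large (it can be chosen freely since $\hat v^r,\hat u^r \in C^\infty$), which is what makes the far-field contribution negligible; I would state the decay estimate $|D^\beta\omega^r_{0,\sigma}(y)| \le C_{n}\, \rho_\sigma^{1+|\beta|/3}\,(1+\rho_\sigma^{1/3+\epsilon/3}|y|_L)^{-n}$ as an intermediate lemma (analogous bounds appear in \cite{FGHP}) and then all four assertions follow by elementary integration.
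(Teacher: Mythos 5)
Your Plancherel/lattice-counting arguments for the two $L^2$ bounds and the bound on $\|D^\alpha\omega^r_{x,\sigma}\|_\infty$ are correct and match what the paper intends (the paper says the latter ``directly follows from the definition of $\omega^r$,'' and for the four bounds in \eqref{eq: bounds u,v} simply cites \cite[Proposition 4.2]{FGHP} with the remark that the new $\hat u^r$ profile changes nothing). So any detailed rederivation is in that sense a different route from the paper's one-line citation, and you correctly identify the key mechanism: smoothness of the momentum cut-off profiles converts into configuration-space decay.

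The genuine gap is in the intermediate decay lemma you propose. The estimate $|\omega^r_{0,\sigma}(y)|\leq C_n\,\rho\,(1+\rho^{1/3+\epsilon/3}|y|_L)^{-n}$ is true but too weak: integrating it gives $\int \rho(1+\rho^{\alpha}|y|)^{-n}\,dy \sim \rho^{1-3\alpha}=\rho^{-\epsilon}$ (with $\alpha=\tfrac13+\tfrac{\epsilon}{3}$, so $3\alpha=1+\epsilon$), not the claimed $\rho^{-\epsilon/3}$, and no choice of splitting radius $\lambda$ in your near/far decomposition can improve this, since the bound gives no decay at all on the whole range $k_F^{-1}\lesssim|y|\lesssim\rho^{-\alpha}$. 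The missing ingredient is the oscillatory intermediate decay: $\hat\omega^r=|\hat v^r|^2$ equals $1$ throughout the bulk of the Fermi ball and transitions to $0$ only on a thin shell of width $\rho^{\alpha}$ at radius $k_F$, so a single radial integration by parts already yields $|\omega^r(y)|\lesssim k_F/|y|^2$ for $|y|\gtrsim k_F^{-1}$ (exactly as for the sharp Fermi projector), and each subsequent integration by parts hits the shell and gains only a factor $(\rho^{\alpha}|y|)^{-1}$. The correct decay is therefore $|\omega^r(y)|\lesssim \min\{\rho,\,k_F/|y|^2\}\cdot\min\{1,(\rho^{\alpha}|y|)^{-m}\}$, which has two decay scales, $k_F^{-1}$ and $\rho^{-\alpha}$; your stated lemma collapses them into one and thereby loses a factor $\rho^{2\epsilon/3}$. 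With the two-scale estimate the near field $|y|\lesssim k_F^{-1}$ gives $O(1)$, the annulus $k_F^{-1}\lesssim|y|\lesssim\rho^{-\alpha}$ gives $\sim k_F\rho^{-\alpha}\sim\rho^{-\epsilon/3}$, and the far field is also $O(\rho^{-\epsilon/3})$, as required. (You flag your own uncertainty about the exponent bookkeeping, and indeed this is exactly where it goes wrong. For $\|u^r_{x,\sigma}\|_1\lesssim 1$ the clean argument is the scaling-invariance of the $L^1$ norm under dilations: write $\hat u^r=\hat\delta^r-\hat\nu^r$ as in \eqref{eq: def delta nu}, each a fixed smooth bump rescaled by $\rho^{-\beta}$ resp.\ $k_F$, so $\|\delta^r\|_1,\|\nu^r\|_1$ are $O(1)$ uniformly in $\rho$.)
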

\begin{proof}
For the proof of the bounds in \eqref{eq: bounds u,v}, we refer to \cite[Proposition 4.2]{FGHP}. Note that in \cite{FGHP} a different regularization for $u$ was used. However, anything changes in the proof, since also in our regularization $\hat{u}^r$ interpolates smoothly between $0$ and $1$ on the scales $\rho^{1/3}$ and $\rho^{-\beta}$. The estimate in \eqref{eq: est infty norm omega} directly follows from the definition of $\omega^r$.
\end{proof}
\begin{remark}\label{rem: u2 and tilde omega}
Note that in some of the estimates in what follows, we will use also $\widetilde{w}^r = \overline{v}^r v$ and 
\begin{equation}
(u^r)^2_{\sigma,\sigma^\prime} = \frac{\delta_{\sigma,\sigma^\prime}}{L^3}\sum_{k\in\frac{2\pi}{L}\mathbb{Z}^3}(\hat{u}^r_\sigma(k))^2 |f_k\rangle \langle f_k|.
\end{equation}
Both $\widetilde{\omega}^r_{x,\sigma}(\cdot)$ and $(u^r)^2_{x,\sigma}(\cdot)$ satisfy the same bounds in Proposition \ref{pro: L1, L2 v} as $\omega^r_{x,\sigma}(\cdot)$ and $u^{r}_{x,\sigma}(\cdot)$, respectively.
\end{remark}
In what follows, we introduce the almost bosonic Bogoliubov transformation. In particular, we recall that we denote by $\varphi_0$ the solution of the zero energy scattering equation in $\mathbb{R}^3$, i.e., the solution of the equation in \eqref{eq: scatter eq phi0}, which reads as
\begin{equation}\label{eq: def phi not app}
 2\Delta\varphi_0 + V_\infty(1-\varphi_0) = 0 \qquad \mbox{in}\,\,\,\mathbb{R}^3, \quad \mbox{with}\quad \varphi_0(x)\rightarrow 0 \quad \mbox{as}\,\,\, |x| \rightarrow \infty.
\end{equation}
We also recall that we denote by $\varphi_\infty$ a localized version of $\varphi_0$, i.e.,
\[
  \varphi_\infty(x) = \varphi(x)\chi(x/\rho^{-1/3}),
\]
where $\chi$ is the smooth cut-off introduced in \eqref{eq: def chi}. From now on we use the following notation $\chi_{\sqrt[3]{\rho}}(x):= \chi (x/ \rho^{-\frac{1}{3}})$.
As we anticipated in \eqref{eq: equation for tilde phi approx}, $\varphi_\infty$ satisfies a different equation than the one for $\varphi_0$. We now write it rigorously. It is  
\begin{equation}\label{eq: equation for tilde phi}
  2\Delta\varphi_\infty + V_\infty(1-\varphi_\infty) = \mathcal{E}_{\{\varphi_0, \chi_{\sqrt[3]{\rho}}\}}^\infty,
\end{equation}
with $\mathcal{E}_{\{\varphi_0, \chi_{\sqrt[3]{\rho}}\}}^\infty(x) = -4\nabla \varphi_0(x) \nabla \chi_{\sqrt[3]{\rho}}(x) - 2\varphi_0(x) \Delta\chi_{\sqrt[3]{\rho}}(x)$. Note that, thanks to the compact support of the interaction potential, we have an explicit formula for the renormalized interaction $\mathcal{E}_{\{\varphi_0, \chi_{\sqrt[3]{\rho}}\}}^\infty$: this is convenient in our analysis. More precisely, due to the fact $\chi_{\sqrt[3]\rho}$ is not constant only in a region far from the support of the interaction potential, we have that $\varphi_0(x) = a/|x|$ in the support of $\mathcal{E}^\infty_{\varphi, \chi_{\sqrt[3]\rho}}$.

As already explained, in our analysis we need to use a periodization of $\varphi_\infty$ in the box $\Lambda_L$, which we denote by $\varphi$ and which explicitly reads as (see \eqref{eq: def phi})
\begin{equation}
  \varphi(x) = \sum_{n\in\mathbb{Z}^3}\varphi_\infty(x + nL),
\end{equation}
or, equivalently,
\[
  \varphi(x) = \frac{1}{L^3}\sum_{p\in\frac{2\pi}{L}\mathbb{Z}^3}\hat{\varphi}_\infty(p)e^{ip\cdot x},\qquad \hat{\varphi}_\infty(p) = \int_{\mathbb{R}^3} dx\, \varphi_\infty(x) e^{-ip\cdot x}.
\]
From now on we take $L$ large enough such that $\mathrm{supp}\varphi_\infty(x) \subset \Lambda_L$.
Note that this is approach is similar to the one followed in \cite{FGHP}, also there we used a scattering equation defined over $\mathbb{R}^3$ and we took into account the periodization of it. The difference, as already underlined, is the scattering equation we use, which is crucial to get a final error proportional to $\rho^{7/3}$.

\subsection{The almost bosonic Bogoliubov transformation}
Making us of the $\varphi$ introduced above, we can now define the almost bosonic Bogoliubov transformation.
\begin{definition}[The transformation $T$]\label{def: transf T}
  Let $\lambda\in [0,1]$. We define the unitary operator $T_\lambda: \mathcal{F}\rightarrow\mathcal{F}$ as 
  \begin{equation}
    T_\lambda := e^{\lambda(B-B^\ast)}, \quad B:= \int_{\Lambda_L\times \Lambda_L} dzdz^\prime\, \varphi(z-z^\prime)\, a_\uparrow(u^r_z)a_\uparrow(\overline{v}^r_z)a_\downarrow(u^r_{z^\prime})a_\downarrow(\overline{v}^r_{z^\prime}), 
  \end{equation}
where $\varphi$ is as in \eqref{eq: def phi} . Moreover, in the sequel, we set $T\equiv T_1$.
\end{definition}
Note that despite the presence of two volume integrations, the operator $B$ is bounded proportionally to $L^3$. To see this, we notice that 
\[
  \|B\|\leq \int_{\Lambda_L\times \Lambda_L} dzdz^\prime\, |\varphi(z-z^\prime)| \|a_\uparrow(u^r_z)\|\|a_\uparrow(\overline{v}^r_z)\|\|a_\downarrow(u^r_{z^\prime})\|\|a_\downarrow(\overline{v}^r_{z^\prime})\|.
\]
Using then the regularization in $\hat{u}^r$, we can bound $\|a_\sigma(u^r_z)\|\leq \|u^r_z\|_2 \leq C\rho^{-3\beta/2}$. Similarly, using the support properties of $\hat{v}^r$, we have $\|a_\sigma(v^r_{z})\|\leq \|v^r_z\|_2 \leq C\rho^{1/2}$. Moreover, from the definition of $\varphi$, taking $L$ large enough, we have that $\|\varphi\|_{L^1(\Lambda_L)}\leq C\rho^{-2/3}$ (see Lemma \ref{lem: bound phi}). Using then the periodicity of $\varphi$, we can conclude that 
\[
  \|B\|\leq CL^3\rho^{1-3\beta - \frac{2}{3}}.
\]

\begin{remark}[Definition of $T$ in momentum space]\label{rem: bog in p space} In Definition \ref{def: transf T} we introduced the almost-bosonic Bogoliubov transformation written in configuration space since this is the expression we are going to use in what follows. To underline the almost-bosonic structure, we rewrite it in momentum space. We define\footnote{For a complete understanding, we underline that the almost bosonic operators $\hat{b}_{p,\sigma}, \hat{b}^\ast_{p,\sigma}$ mentioned in Section \ref{sec: proof strategy} correspond to \[\hat{b}_{p,\sigma} = \sum_{k\in\frac{2\pi}{L}\mathbb{Z}^3}\hat{u}_\sigma(k+p) \hat{v}_\sigma(k)\hat{a}_{k+p, \sigma}\hat{a}_{k,\sigma} \equiv \sum_{\substack{k\in\frac{2\pi}{L}\mathbb{Z}^3 \\ k \in \mathcal{B}_F^\sigma, \, \,k+p \notin \mathcal{B}_F^\sigma}}\hat{a}_{k+p, \sigma}\hat{a}_{k,\sigma},\]
and analogously for $\hat{b}^\ast_{p,\sigma}$.}
\begin{equation}
  \hat{b}_{p,\sigma}^r = \sum_{k\in\frac{2\pi}{L}\mathbb{Z}^3}\hat{u}^r_\sigma(k+p) \hat{v}^r_\sigma(k)\hat{a}_{k+p, \sigma}\hat{a}_{k,\sigma},\qquad (\hat{b}_{p,\sigma}^r)^\ast = \sum_{k\in\frac{2\pi}{L}\mathbb{Z}^3}\hat{u}^r_\sigma(k+p) \hat{v}^r_\sigma(k)\hat{a}_{k, \sigma}^\ast\hat{a}_{k+p,\sigma}^\ast. 
\end{equation}
As a consequence, we can rewrite the almost-bosonic Bogoliubov transformation $T$ as 
\begin{equation}\label{eq: def T reg p space}
T_\lambda = \exp\left\{\frac{\lambda}{L^3}\sum_{p\in\frac{2\pi}{L}\mathbb{Z}^3} \hat{\varphi}(p)\hat{b}_{p,\uparrow}^r\hat{b}_{-p,\downarrow}^r - \mathrm{h.c.}\right\},
\end{equation}
where $\hat{\varphi}(p)$, $p\in(2\pi/L)\mathbb{Z}^3$ are the Fourier coefficients of $\varphi(x)$, i.e., $\hat{\varphi}(p) = \int_{\Lambda_L}dx\, \varphi(x) e^{-ip\cdot x}$.
\end{remark}
\begin{remark}[Cut-off induced on $\varphi$]\label{rem: cut-off phi} We notice that the regularizations over $\hat{u}$ and $\hat{v}$ (i.e., $\hat{u}^r$, $\hat{v}^r$) induce a cut-off on $\varphi$ implying that $\hat{\varphi}(p)$ is supported for $k_F \leq |p| \leq 3\rho^{-\beta}$.  Note that the localization in configuration space for $\varphi$ is consistent with the condition $|p| \geq k_F \sim \rho^{1/3}$: we are indeed localizing $\varphi(x)$ in a ball of radius $\sim \rho^{-1/3}$. We also underline that the localization at order $|x|\sim \rho^{-1/3}$ is the optimal one to get a final error estimate (in the upper bound) of the order $\mathcal{O}(\rho^{7/3})$.
\end{remark}
In the lemmas below, we collect useful properties of $\varphi$ which we use often in what follows.
\begin{lemma}[Bounds for $\varphi$]\label{lem: bound phi} Let $V$ as in Assumption \ref{asu: potential V}. Let $\varphi$ as in \eqref{eq: def phi}. Taking $L$ large enough, the following holds
\begin{itemize}
  \item[(i)] For all $x\in \Lambda_L$, there exists $C>0$ such that
  \begin{equation}\label{eq: uniform norm phi >}
   \qquad|\varphi(x)| \leq C.
  \end{equation}
  \item[(ii)] It holds
\begin{equation}
  \|\varphi\|_{L^2(\Lambda_L)} \leq C\rho^{-\frac{1}{6}},\qquad \|\nabla\varphi\|_{L^2(\Lambda_L)}\leq C,
  \end{equation}
 and
  \begin{equation}  \|\varphi\|_{L^1(\Lambda_L)} \leq C\rho^{-\frac{2}{3}}, \qquad \|\nabla\varphi\|_{L^1(\Lambda_L)} \leq C\rho^{-\frac{1}{3}}, \qquad \|\Delta\varphi\|_{L^1(\Lambda_L)} \leq C. 
  \end{equation}
  Moreover, 
  \begin{equation}
  \|D^2\varphi\|_{L^1(\Lambda)} \leq C|\log\rho|, \qquad \|D^3\varphi\|_{L^1(\Lambda)} \leq C.
  \end{equation}
\end{itemize}
\end{lemma}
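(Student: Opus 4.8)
The plan is to derive everything from the explicit description of $\varphi_\infty = \varphi_0\,\chi_{\sqrt[3]{\rho}}$ together with the classical fact that the zero-energy scattering solution satisfies $\varphi_0(x) = a/|x|$ outside $\mathrm{supp}\,V_\infty$, is smooth and bounded everywhere, and decays like $|x|^{-1}$ at infinity (with $\nabla\varphi_0 \sim |x|^{-2}$, $D^2\varphi_0 \sim |x|^{-3}$, etc.). These asymptotics are collected in Appendix \ref{app: scattering}, so I would simply invoke them. The strategy is then: first prove all bounds for $\varphi_\infty$ on $\mathbb{R}^3$ directly by splitting the region of integration into $\{|x| \le R_0\}$ (where $\varphi_\infty = \varphi_0$ is smooth, contributing an $O(1)$ amount to every norm), an intermediate annulus $\{R_0 \le |x| \le \rho^{-1/3}\}$ (where $\varphi_\infty = \varphi_0 = a/|x|$ and the cut-off is inactive), and the outer annulus $\{\rho^{-1/3} \le |x| \le 2\rho^{-1/3}\}$ (where the derivatives of $\chi_{\sqrt[3]{\rho}}$ contribute). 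Then transfer each bound from $\varphi_\infty$ on $\mathbb{R}^3$ to $\varphi$ on $\Lambda_L$: since $\mathrm{supp}\,\varphi_\infty \subset B_{2\rho^{-1/3}} \subset \Lambda_L$ for $L$ large, the periodization $\varphi(x) = \sum_n \varphi_\infty(x+nL)$ restricted to $\Lambda_L$ coincides with $\varphi_\infty$ there, so $\|D^\beta \varphi\|_{L^p(\Lambda_L)} = \|D^\beta\varphi_\infty\|_{L^p(\mathbb{R}^3)}$ for every multi-index $\beta$ and every $p$.

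For part (i), $|\varphi_\infty(x)| \le |\varphi_0(x)| \le C$ uniformly because $\varphi_0$ is continuous, bounded near the origin, and $\le a/|x| \le a/R_0$ away from it. For the $L^2$ and $L^1$ bounds in part (ii): since $|\varphi_\infty| \lesssim |x|^{-1}\mathbf{1}_{|x|\le 2\rho^{-1/3}}$ (plus a bounded piece near $0$), one computes $\int_{|x|\le 2\rho^{-1/3}} |x|^{-2}\,dx \sim \rho^{-1/3}$ giving $\|\varphi\|_2 \lesssim \rho^{-1/6}$, and $\int_{|x|\le 2\rho^{-1/3}}|x|^{-1}\,dx \sim \rho^{-2/3}$ giving $\|\varphi\|_1 \lesssim \rho^{-2/3}$. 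For the gradient I would use $\nabla\varphi_\infty = (\nabla\varphi_0)\chi_{\sqrt[3]{\rho}} + \varphi_0\,\nabla\chi_{\sqrt[3]{\rho}}$; the first term is $\lesssim |x|^{-2}$ on $B_{2\rho^{-1/3}}$, so its $L^2$ norm squared is $\int |x|^{-4} \sim \rho^{1/3}$, i.e.\ $O(1)$, and its $L^1$ norm is $\int|x|^{-2} \sim \rho^{-1/3}$; the second term is supported on the annulus $|x|\sim\rho^{-1/3}$ where $|\nabla\chi_{\sqrt[3]{\rho}}| \lesssim \rho^{1/3}$ and $|\varphi_0|\lesssim \rho^{1/3}$, so it is $\lesssim \rho^{2/3}$ on a set of volume $\sim \rho^{-1}$, contributing $O(\rho^{1/3})$ in $L^2$-squared and $O(\rho^{-1/3})$ in $L^1$ — consistent. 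For $\Delta\varphi_\infty$ and $D^2\varphi_\infty$: the interior piece gives $|D^2\varphi_0| \lesssim |x|^{-3}$, so $\int_{R_0 \le |x|\le 2\rho^{-1/3}}|x|^{-3}\,dx \sim |\log\rho|$, which is exactly the source of the $|\log\rho|$ in $\|D^2\varphi\|_{L^1}$; but for $\Delta\varphi_\infty$ specifically one uses that $\Delta\varphi_0 = 0$ outside $\mathrm{supp}\,V_\infty$ (by \eqref{eq: def phi not app}), so only the bounded region near $0$ and the cut-off annulus (where $\Delta\chi_{\sqrt[3]{\rho}} \lesssim \rho^{2/3}$ and $\nabla\varphi_0\cdot\nabla\chi_{\sqrt[3]{\rho}}\lesssim \rho^{-2/3}\cdot\rho^{1/3}$, each times volume $\rho^{-1}$) contribute, giving $\|\Delta\varphi\|_{L^1} \le C$. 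Similarly $D^3\varphi_\infty$: the interior term $|D^3\varphi_0|\lesssim|x|^{-4}$ integrates to $O(1)$, and the cut-off terms are likewise $O(1)$, yielding $\|D^3\varphi\|_{L^1}\le C$.

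The only genuinely delicate point is bookkeeping which terms in the Leibniz expansion of $D^\beta(\varphi_0\chi_{\sqrt[3]{\rho}})$ actually dominate, and making sure the powers of $\rho$ from $\|D^j\chi_{\sqrt[3]{\rho}}\|_\infty \lesssim \rho^{j/3}$ exactly cancel the volume factor $\rho^{-1}$ of the annulus and the decay factor $\rho^{k/3}$ of $D^{|\beta|-j}\varphi_0$ evaluated at $|x|\sim\rho^{-1/3}$ — in every case the cut-off contributions turn out to be no worse than the ``no cut-off'' term, so the scattering-solution asymptotics control the answer. I expect this region-by-region estimation, together with correctly extracting the logarithm in the $D^2$ bound (and verifying $\Delta\varphi$ does \emph{not} produce a logarithm, thanks to $\Delta\varphi_0$ vanishing off $\mathrm{supp}\,V_\infty$), to be the main — though entirely routine — obstacle. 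A short remark that the estimates for $\widetilde\omega^r$ and $(u^r)^2$ claimed in Remark \ref{rem: u2 and tilde omega} follow identically to Proposition \ref{pro: L1, L2 v} since their symbols obey the same support and smoothness properties would complete the picture, but is not part of this lemma.
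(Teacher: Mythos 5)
Your proof takes essentially the same route as the paper's: reduce to $\varphi_\infty$ on $\mathbb{R}^3$ (legitimate since $\mathrm{supp}\,\varphi_\infty\subset\Lambda_L$ for $L$ large), split into the interior region where $\varphi_0$ is smooth and the annulus $|x|\sim\rho^{-1/3}$ where the cut-off acts, and use the decay of $\varphi_0$ and its derivatives together with $\|\nabla^j\chi_{\sqrt[3]{\rho}}\|_\infty\lesssim\rho^{j/3}$; the scattering equation handles $\Delta\varphi$ and the logarithm in $\|D^2\varphi\|_{L^1}$ comes from $\int_{R_0\le|x|\le2\rho^{-1/3}}|x|^{-3}\,dx$, exactly as in the paper. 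One small sign slip in your $\Delta\varphi$ bookkeeping: on the annulus $|\nabla\varphi_0|\lesssim|x|^{-2}\sim\rho^{+2/3}$ (not $\rho^{-2/3}$) and the term $\varphi_0\,\Delta\chi_{\sqrt[3]{\rho}}$ carries the additional factor $|\varphi_0|\lesssim\rho^{1/3}$, so each piece of the renormalized scattering error is $\lesssim\rho$ over a set of volume $\sim\rho^{-1}$, which is what actually yields $\|\Delta\varphi\|_{L^1}\le C$.
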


In the next result we state some decay properties for the Fourier coefficients of $\varphi$, i.e., $\hat{\varphi}(p) = \int_{\Lambda_L}dx\, \varphi(x) e^{-ip\cdot x}$, $p\in (2\pi/L)\mathbb{Z}^3$. These estimates are going to be useful when dealing with the cut-off we introduced for momenta of the order $|k| \sim \rho^{-\beta}$.
\begin{lemma}[Decay estimates for $\varphi_>$]\label{lem: decay phi} Let $V$ as in Assumption \ref{asu: potential V}. Let $\varphi$ as in \eqref{eq: def phi}. It holds
\begin{equation}\label{eq: est decay phi>}
  |\hat{\varphi}(p)|\leq C_n\frac{\rho^{-\frac{2}{3}}}{(1+\rho^{-\frac{2}{3}}|p|^n)}, \qquad \forall n\in \mathbb{N}, \,\, n\geq 2,
\end{equation}
for some $C_n>0$.
\end{lemma}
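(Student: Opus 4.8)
The plan is to obtain the decay in $|p|$ by integrating by parts against $e^{-ip\cdot x}$, trading each factor of $|p|$ for a derivative falling on $\varphi$, and then to control the resulting $L^1$-norms of derivatives of $\varphi$ by the bounds already collected in Lemma \ref{lem: bound phi}. Concretely, since $\hat{\varphi}(p) = \hat{\varphi}_\infty(p) = \int_{\mathbb{R}^3} \varphi_\infty(x)\, e^{-ip\cdot x}\,dx$ (using that $\mathrm{supp}\,\varphi_\infty \subset \Lambda_L$ for $L$ large, and the periodization identity), one writes for a multi-index $\alpha$ with $|\alpha| = n$,
\begin{equation}
  p^\alpha\, \hat{\varphi}_\infty(p) = (-1)^{|\alpha|}\,\mathrm{i}^{|\alpha|}\!\int_{\mathbb{R}^3} (D^\alpha\varphi_\infty)(x)\, e^{-ip\cdot x}\,dx,
\end{equation}
so that $|p|^n\,|\hat{\varphi}(p)| \leq C\sum_{|\alpha| = n}\|D^\alpha\varphi_\infty\|_{L^1}$. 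Combined with the trivial bound $|\hat{\varphi}(p)| \leq \|\varphi\|_{L^1(\Lambda_L)} \leq C\rho^{-2/3}$ from Lemma \ref{lem: bound phi}(ii), this yields a bound of the form $|\hat{\varphi}(p)| \leq C_n \min\{\rho^{-2/3},\, |p|^{-n}\,\|D^n\varphi\|_{L^1}\}$, which one then repackages into the single stated inequality $|\hat{\varphi}(p)| \leq C_n\,\rho^{-2/3}/(1 + \rho^{-2/3}|p|^n)$ by splitting into the regimes $|p|^n \lesssim \rho^{2/3}$ and $|p|^n \gtrsim \rho^{2/3}$.

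The subtle point — and the main obstacle — is that the $L^1$-norms of the higher derivatives of $\varphi$ are \emph{not} uniformly bounded: from Lemma \ref{lem: bound phi} one has $\|\nabla\varphi\|_{L^1} \leq C\rho^{-1/3}$, $\|D^2\varphi\|_{L^1} \leq C|\log\rho|$, $\|D^3\varphi\|_{L^1} \leq C$, and for $n \geq 3$ one expects $\|D^n\varphi\|_{L^1}$ to remain $O(1)$ (with possibly a logarithm at isolated orders). So a naive integration by parts $n$ times does not immediately give the clean factor $\rho^{-2/3}$ in the numerator. The fix is to be careful about which derivative estimate one actually needs: the claimed bound has the scale $\rho^{-2/3}$ built into it, so for $|p| \gtrsim \rho^{1/3}$ (which is the only relevant range, since $\hat{\varphi}(p)$ is supported on $|p| \geq k_F \sim \rho^{1/3}$ by Remark \ref{rem: cut-off phi}) one has $\rho^{-2/3}|p|^n \gtrsim \rho^{-2/3}\rho^{n/3} = \rho^{(n-2)/3}$, and one checks that $\|D^n\varphi\|_{L^1} \cdot |p|^{-n} \lesssim \rho^{-2/3}/(\rho^{-2/3}|p|^n)$ follows from $\|D^n\varphi\|_{L^1} \lesssim \rho^{-2/3}|p|^n \cdot |p|^{-n} \cdot \rho^{2/3} \cdot (\ldots)$ — more precisely, since $\|D^n\varphi\|_{L^1} = O(1)$ for $n \geq 3$ while $\rho^{-2/3} \to \infty$, the bound is comfortably satisfied in that regime, and the transitional orders $n = 1, 2$ need only a direct check against the stated form.

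Alternatively, and perhaps more transparently, one can work directly with $\varphi_\infty = \varphi_0\,\chi_{\sqrt[3]{\rho}}$ and use the explicit structure $\varphi_0(x) \sim a/|x|$ together with the scaling of $\chi_{\sqrt[3]{\rho}}$: each derivative hitting the cut-off produces a factor $\rho^{1/3}$ but is supported only on the shell $\rho^{-1/3} \leq |x| \leq 2\rho^{-1/3}$ of volume $O(\rho^{-1})$, while derivatives hitting $\varphi_0 = a/|x|$ on that shell gain factors $\rho^{n/3}$; keeping track of these competing powers reproduces the $L^1$ bounds of Lemma \ref{lem: bound phi} and makes the source of the $\rho^{-2/3}$ prefactor manifest. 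Either way the argument is elementary once the derivative bounds of Lemma \ref{lem: bound phi} are in hand; the only thing requiring genuine care is bookkeeping the $\rho$-powers so that the final bound is stated uniformly in $p$ and $n$ with the single clean expression in \eqref{eq: est decay phi>}.
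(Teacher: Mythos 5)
The overall strategy you propose --- integrate by parts to trade powers of $|p|$ for derivatives of $\varphi$, bound by $L^1$-norms of those derivatives, and combine with the trivial $|\hat\varphi(p)|\leq\|\varphi\|_{L^1}\leq C\rho^{-2/3}$ bound --- is exactly the paper's approach, and for $n\geq 3$ it goes through as you say since $\|D^n\varphi\|_{L^1}\leq C_n$. But there is a genuine gap at $n=2$. You correctly flag that $\|D^2\varphi\|_{L^1}\leq C|\log\rho|$ (Lemma \ref{lem: bound phi}), so the naive multi-index bound gives $|p|^2|\hat\varphi(p)|\leq C|\log\rho|$, not the $O(1)$ needed to match \eqref{eq: est decay phi>}. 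You then dismiss this as needing ``only a direct check against the stated form,'' but no such check is supplied, and none can be via generic $D^2$ control: the logarithm is real, as it comes from $\int_{R_0}^{2\rho^{-1/3}}|D^2(a/|x|)|\,|x|^2\,d|x|\sim|\log\rho|$. The resolution the paper uses is structural: for $n=2$ it does not sum over multi-indices but observes $|p|^2\hat\varphi(p)=-\widehat{\Delta\varphi}(p)$ and invokes $\|\Delta\varphi\|_{L^1}\leq C$ with no logarithm. That better bound is not an accident --- it comes from $\Delta(1/|x|)=0$ off the origin (equivalently, from the localized scattering equation $2\Delta\varphi_\infty=-V_\infty(1-\varphi_\infty)+\mathcal{E}^\infty_{\{\varphi_0,\chi_{\sqrt[3]\rho}\}}$, whose right-hand side is $O(1)$ in $L^1$). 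Your proposal never identifies this cancellation, so the $n=2$ case remains unproved.

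A secondary point: your statement that ``$\hat\varphi(p)$ is supported on $|p|\geq k_F$ by Remark \ref{rem: cut-off phi}'' misreads the remark. Since $\varphi$ is compactly supported in configuration space, $\hat\varphi$ cannot be compactly supported in momentum; what the remark says is that the regularizations on $\hat u^r,\hat v^r$ effectively restrict the momenta entering the bosonic operators $\hat b^r_{p,\sigma}$, not that $\hat\varphi$ itself vanishes for $|p|<k_F$. This error happens not to be fatal here (the regime $|p|\lesssim\rho^{1/3}$ is already covered by the trivial $L^1$ bound, since there $\rho^{-2/3}|p|^n\lesssim 1$), but the decay estimate must be, and is, proved for all $p\in\tfrac{2\pi}{L}\mathbb{Z}^3$, not merely on a presumed support.
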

For the proof of Lemma \ref{lem: bound phi} and Lemma \ref{lem: decay phi}, we refer the reader to Appendix \ref{app: scattering}. 
\subsection{Useful bounds}
In this section we prove some bounds which are going to be useful later. In particular, the bounds in the lemma below are an improved version of those in \cite[Lemma 5.3]{FGHP}.
\begin{lemma}\label{lem: bound b phi}Let $\varphi$ as in \eqref{eq: def phi}.
  Let $b_{z^\prime,\sigma} := a_\sigma(u^r_{z^\prime})a_\sigma(\overline{v}^r_{z^\prime})$ and let
  \begin{equation}\label{eq: def b phi}
      b_\sigma(\varphi_{z}) := \int_{\Lambda_L} dz^\prime\, \varphi(z-z^\prime)\, b_{z^\prime, \sigma}, \qquad b_\sigma(\partial_j\varphi_{z}) := \int_{\Lambda_L} dz^\prime\,\partial_j\varphi(z-z^\prime)\,b_{z^\prime,\sigma},
  \end{equation}
  for each fixed $j=1,2,3$. It holds
  Then 
  \begin{equation} \label{eq: op bound b phi}
      \| b_\sigma(\varphi_z)\| \leq C \rho^{\frac{1}{3}}, \qquad \| b_\sigma(\partial_j \varphi_{z})\| \leq C \rho^{\frac{1}{2}},
  \end{equation}
  and the same holds for the adjoint operators.
Furthermore, let $\widetilde{b}_{z^\prime,\sigma}:= a_\sigma(u^r_{z^\prime})a_\sigma(\partial_\ell\overline{v}^r_{z^\prime})$ for some $\ell=1,2,3$ and let 
\[
  \widetilde{b}_\sigma(\varphi_{z}) := \int_{\Lambda_L}dz^\prime\, \varphi(z-z^\prime)\widetilde{b}_{z^\prime,\sigma}, \qquad \widetilde{b}_\sigma(\partial_j\varphi_{z}) := \int_{\Lambda_L}dz^\prime\, \partial_j\varphi(z-z^\prime)\widetilde{b}_{z^\prime,\sigma},
\]
for some $j=1,2,3$.
We have 
\begin{equation}
  \| \widetilde{b}_\sigma(\varphi_z)\| \leq C \rho^{\frac{2}{3}}, \qquad \|\widetilde{b}_\sigma(\partial_j\varphi_{z})\|\leq C\rho^{\frac{5}{6}},
\end{equation}  
and the same is true for the adjoint operators.
\end{lemma}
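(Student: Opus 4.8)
The plan is to bound each of the operators $b_\sigma(\varphi_z)$, $b_\sigma(\partial_j\varphi_z)$, $\widetilde{b}_\sigma(\varphi_z)$, $\widetilde{b}_\sigma(\partial_j\varphi_z)$ directly from their defining integrals, exploiting that each $b_{z',\sigma}$ (resp. $\widetilde b_{z',\sigma}$) is a product of two fermionic annihilation operators whose operator norms are controlled by the $L^2$-norms of the corresponding kernels $u^r_{z'}$, $v^r_{z'}$ (resp. $\partial_\ell \overline v^r_{z'}$). The naive estimate $\|b_\sigma(\varphi_z)\|\le \int dz'\,|\varphi(z-z')|\,\|u^r_{z'}\|_2\|v^r_{z'}\|_2 \le C\rho^{-3\beta/2}\rho^{1/2}\|\varphi\|_1$ is too lossy (it even contains the unwanted $\rho^{-3\beta/2}$). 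The improvement — which is the point of the lemma over \cite[Lemma 5.3]{FGHP} — must come from \emph{not} peeling off the two factors independently but from pairing $\varphi$ with one of them and using a Cauchy–Schwarz / Fefferman–de la Llave type splitting so that the $u^r$-kernel is integrated against $|\varphi|$ (or $|\partial_j\varphi|$) and only the $v^r$-kernel (or $\partial_\ell v^r$) contributes its $L^2$-norm, which carries the small factor $\rho^{1/2}$.

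Concretely, I would write, for any state $\psi$,
\[
  \|b_\sigma(\varphi_z)\psi\|
  = \Bigl\| \int dz'\, \varphi(z-z')\, a_\sigma(u^r_{z'})a_\sigma(\overline v^r_{z'})\psi \Bigr\|,
\]
bound the integrand's contribution by $\|a_\sigma(\overline v^r_{z'})\psi\|$ times $\|u^r_{z'}\|_2$, and then apply Cauchy–Schwarz in $z'$ with weight $|\varphi(z-z')|$: one gets a factor $\bigl(\int dz'\,|\varphi(z-z')|\,\|u^r_{z'}\|_2^2\bigr)^{1/2}$ times $\bigl(\int dz'\,|\varphi(z-z')|\,\|a_\sigma(\overline v^r_{z'})\psi\|^2\bigr)^{1/2}$. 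The second factor is $\le \|\varphi\|_\infty^{1/2}\,\langle\psi,\mathcal N_\sigma\psi\rangle^{1/2}$ — but since we want an \emph{operator} bound we instead keep $\|\varphi\|_\infty$ and use that $\int dz'\,\|a_\sigma(\overline v^r_{z'})\psi\|^2 = \langle\psi,\int dz'\, a_\sigma^\ast(\overline v^r_{z'})a_\sigma(\overline v^r_{z'})\psi\rangle$, which equals $\mathrm{tr}(\omega^r)$-type quantity acting diagonally, hence $\le \|v^r\|^2_{\mathrm{HS}}\sim L^3\rho$ … this still has an $L^3$. The correct route for a \emph{norm} bound: observe that $\int dz'\, a_\sigma^\ast(\overline v^r_{z'}) a_\sigma(\overline v^r_{z'})$ is a one-body operator equal to $\mathrm{tr}_1$ against the kernel, whose norm is $\|v^r_{x,\sigma}\|_2^2 \le C\rho$ by Proposition \ref{pro: L1, L2 v} (this is a genuinely uniform-in-$\psi$ bound since the relevant operator is $\le$ a multiple of the identity times $\sup_z\|v^r_z\|_2^2$). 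Then $\|u^r_{z'}\|_2 \le C\rho^{-3\beta/2}$ is \emph{not} used; instead I use $\|u^r_{z'}\|_1 \le C$ together with $\|\varphi\|_\infty\le C$ from Lemma \ref{lem: bound phi}(i): $\int dz'\,|\varphi(z-z')|\,|u^r_{z'}(w)|\le C\|\varphi\|_\infty \|u^r_{w}\|_1 \le C$. Collecting, $\|b_\sigma(\varphi_z)\| \le C\,(\rho^{1/2})\cdot\rho^{-1/6} = C\rho^{1/3}$, where the extra $\rho^{-1/6}$ comes from $\|\varphi\|_2\le C\rho^{-1/6}$ appearing when one is careful to pair $\varphi$ with $v^r$ via Cauchy–Schwarz rather than with $u^r$. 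For $b_\sigma(\partial_j\varphi_z)$ the same scheme applies with $\|\partial_j\varphi\|_2 \le C$ (Lemma \ref{lem: bound phi}(ii)) replacing $\|\varphi\|_2\le C\rho^{-1/6}$, improving the bound by $\rho^{1/6}$ to $C\rho^{1/2}$. For $\widetilde b_\sigma$ the annihilation operator $a_\sigma(\partial_\ell\overline v^r_{z'})$ replaces $a_\sigma(\overline v^r_{z'})$; the relevant one-body operator now has norm $\sup_z\|\partial_\ell v^r_{z,\sigma}\|_2^2$. Since $\hat v^r$ is supported in $|k|\le k_F\sim\rho^{1/3}$, each derivative costs a factor $\rho^{1/3}$ in the $L^2$-norm of the kernel relative to $v^r$ itself, i.e. $\|\partial_\ell v^r_{z}\|_2 \le C\rho^{1/2}\cdot\rho^{1/3} = C\rho^{5/6}$; carrying this through gives the stated $C\rho^{2/3}$ and $C\rho^{5/6}$ for $\widetilde b_\sigma(\varphi_z)$ and $\widetilde b_\sigma(\partial_j\varphi_z)$ respectively.

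The main obstacle is extracting a bound that is \emph{uniform in the state $\psi$} (a true operator-norm bound) rather than one proportional to $\langle\psi,\mathcal N\psi\rangle$ or to $L^3$. This forces the argument to go through the observation that operators of the form $\int dz'\, a^\ast_\sigma(g_{z'})a_\sigma(g_{z'})$, for a kernel $g$ depending on a spatial label $z'$, are bounded in norm by $\sup_{z'}\|g_{z'}\|_2^2$ up to combinatorial factors — more precisely one should interpolate: writing $B=\int dz'\,\varphi(z-z')a_\sigma(u^r_{z'})a_\sigma(\overline v^r_{z'})$, estimate $\|B\psi\|^2 = \langle\psi, B^\ast B\psi\rangle$ by moving all annihilation operators to the right using the CAR, so that $B^\ast B$ becomes a sum of a quartic term (controlled by the product of two $L^2$-norms after pairing, i.e. $\|\varphi\|_2^2 \cdot (\sup\|u^r\|_1)^2 \cdot \mathcal N$-type, which vanishes faster) and a quadratic "contraction" term which is the one-body operator giving $\rho^{2/3}$; the quartic piece, being smaller, is absorbed. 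Keeping track of which norm of $\varphi$ (namely $\|\varphi\|_2, \|\nabla\varphi\|_2$, or $\|\varphi\|_\infty, \|u^r\|_1$) to use in each of the two pairings, so as not to lose the factor $\rho^{-3\beta/2}$ that a crude bound would produce, is the delicate bookkeeping; the decay estimate Lemma \ref{lem: decay phi} can be invoked if a momentum-space version of the contraction term is preferred. Once the uniform-in-$\psi$ control of the quadratic contraction term is in place, the four claimed estimates follow by substituting the appropriate norms of $\varphi$, $\partial_j\varphi$, $v^r$, $\partial_\ell v^r$ from Proposition \ref{pro: L1, L2 v} and Lemma \ref{lem: bound phi}.
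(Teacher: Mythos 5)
Your proposal does not recover the paper's argument, and the alternative route you sketch has a gap that I do not think you can close.

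The paper's proof rests on a single algebraic trick that you do not find: in momentum space write
\[
  \hat u^r(k+p)\hat v^r(k) \;=\; \frac{\hat u^r(k+p)}{|k+p|^2}\bigl(|p|^2 + 2p\cdot k + |k|^2\bigr)\hat v^r(k),
\]
then transfer $|p|^2$, $p\cdot k$, $|k|^2$ to derivatives acting on $\varphi$ and $\overline v^r$ and define $\widetilde u^r$ with Fourier coefficients $\hat u^r(p)/|p|^2$. The entire point is that $\|\widetilde u^r_{z'}\|_2\le C\rho^{-1/6}$ is uniform in $\beta$ — the problematic $\rho^{-3\beta/2}$ from $\|u^r_{z'}\|_2$ disappears. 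After this integration-by-parts the operator norm is bounded crudely: $\|b(\varphi_z)\|\le\|\Delta\varphi\|_1\,\|\widetilde u^r\|_2\,\|v^r\|_2 + \|\nabla\varphi\|_1\,\|\widetilde u^r\|_2\,\|\nabla v^r\|_2+\|\varphi\|_1\,\|\widetilde u^r\|_2\,\|\Delta v^r\|_2$, each term giving $\rho^{1/3}$. No Cauchy–Schwarz in $z'$ and no normal ordering of $B^\ast B$ is needed.

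The route you propose has a specific error. You claim $\int dz'\,a^\ast_\sigma(\overline v^r_{z'})a_\sigma(\overline v^r_{z'})$ has operator norm $\sup_z\|v^r_z\|_2^2\le C\rho$. This is false. Writing this operator in momentum space one finds $\sum_k(\hat v^r(k))^2\,\hat a^\ast_{k,\sigma}\hat a_{k,\sigma}$, which is the second quantization $d\Gamma(K)$ of a one-body operator $K$ whose Fourier multiplier is $(\hat v^r(k))^2$. Thus $\|K\|\le 1$ (not $\rho$), and on the $n$-particle sector $d\Gamma(K)\le n\,\|K\|$ while on all of Fock space $d\Gamma(K)\le\mathrm{tr}\,K\sim L^3\rho$; in neither picture is the norm of $d\Gamma(K)$ bounded by a constant times $\rho$. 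Since this is the step you use to turn your Cauchy--Schwarz estimate into a uniform-in-$\psi$ operator bound, the argument breaks here. Your fallback — normal ordering $B^\ast B$ and "absorbing" the quartic part — is not carried out, and there is no reason the normal-ordered quartic piece should be bounded uniformly in the state: generically it grows with $\mathcal N$. You correctly identify the final exponents and correctly observe that the naive $\|u^r\|_2\|v^r\|_2\|\varphi\|_1$ bound is too lossy, but the missing device is precisely the $\widetilde u^r$ substitution (a Laplacian kernel smoothing of $u^r$), without which I do not see how to remove both the $L^3$ volume factor and the $\rho^{-3\beta/2}$ ultraviolet factor simultaneously.
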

\begin{proof}[Proof of Lemma \ref{lem: bound b phi}] 
   We now estimate the operator norm of $b_\sigma(\varphi_z)$, the proof for the adjoint operator works the same. To start, we omit the spin dependence which does not play any role and we rewrite the operator $b(\varphi_z)$ in momentum space. We recall that following the same ideas as in \cite[Lemma 5.3]{FGHP}, we have that $\|b(\varphi_z)\|\leq C\rho^{-\frac{1}{6} - \frac{3}{2}\beta}$. Here we write  
\begin{equation}
  b(\varphi_z) = \frac{1}{L^3}\sum_{p,k}\hat{\varphi}(p) e^{ip\cdot z} \hat{u}^r(k+p)\hat{v}^r(k)\hat{a}_{k+p}\hat{a}_{k}.
\end{equation}
We want now to show that the operator $b(\varphi_z)$ can be rewritten as 
\begin{eqnarray}\label{eq: decomposition b phi}
  b(\varphi_z) &=& -\int dz^\prime \,\Delta \varphi(z-z^\prime)a(\widetilde{u}^r_{z^\prime})a(\overline{v}^r_{z^\prime}) + 2 \sum_{j=1}^3\int dz^\prime\, \partial_j \varphi(z-z^\prime)a(\widetilde{u}^r_{z^\prime})a(\partial_j\overline{v}^r_{z^\prime})\nonumber\\
  && - \int dz^\prime\, \varphi(z-z^\prime)a(\widetilde{u}^r_{z^\prime})a(\Delta\overline{v}^r_{z^\prime}),
\end{eqnarray}
where $\widetilde{u}^r_x(\cdot):= \widetilde{u}^r(\cdot\,;x)$ (we will use analogous notations also later in the proof), with
\[
  \widetilde{u}^r(t;x) = \frac{1}{L^3}\sum_{p\in\frac{2\pi}{L}\mathbb{Z}^3}\frac{\hat{u}^r(p)}{|p|^2} e^{ip\cdot (t-x)}.
\]
and where (omitting the spin dependence)
\[
  a(\partial_j\overline{v}^r_{z^\prime}) = \int_{\Lambda_L}dt\, \partial_jv^r(t;z^\prime) a_{t} =    \int_{\Lambda_L}dt\, \frac{1}{L^3}\sum_{k\in\frac{2\pi}{L}\mathbb{Z}^3} (-ik_j)\hat{v}^r(k) e^{-ik\cdot (t+z^\prime)} a_{t},
\]
\[
  \partial_j \varphi(z-z^\prime) = \frac{1}{L^3}\sum_{k\in\frac{2\pi}{L}\mathbb{Z}^3} (ik_j)\hat{\varphi}(k)e^{ik\cdot (z-z^\prime)}.
\]
To prove that \eqref{eq: decomposition b phi} holds, one can write each of the term in the right hand side above in momentum space. In particular,
\[
  -\int dz^\prime \,\Delta \varphi(z-z^\prime)a(\widetilde{u}^r_{z^\prime})a(\overline{v}^r_{z^\prime}) = \frac{1}{L^3}\sum_{p,k} |p|^2\hat{\varphi}(p) e^{ip\cdot z}\frac{\hat{u}^r(k+p)}{|k+p|^2}\hat{v}^r(k)\hat{a}_{k+p}\hat{a}_{k},
\]
\[
  2\sum_{j=1}^3\int dz^\prime\, \partial_j \varphi(z-z^\prime)a(\widetilde{u}^r_{z^\prime})a(\partial_j\overline{v}^r_{z^\prime}) = \frac{2}{L^3}\sum_{p,k} (p\, \hat{\varphi}(p))\cdot(k\hat{v}^r(k)) e^{ip\cdot z} \frac{\hat{u}^r(k+p)}{|k+p|^2}\hat{a}_{k+p}\hat{a}_{k},
\]
\[
  - \int dz^\prime\, \varphi(z-z^\prime)a(\widetilde{u}^r_{z^\prime})a(\Delta\overline{v}^r_{z^\prime}) =  \frac{1}{L^3}\sum_{p,k} (|k|^2\hat{v}^r(k))\hat{\varphi}(p) e^{ip\cdot z} \frac{\hat{u}^r(k+p)}{|k+p|^2}\hat{a}_{k+p}\hat{a}_{k},
\]
which yields
\begin{multline}
  -\int dz^\prime \,\Delta \varphi(z-z^\prime)a(\widetilde{u}^r_{z^\prime})a(\overline{v}^r_{z^\prime}) +2\sum_{j=1}^3 \int dz^\prime\, \partial_j \varphi(z-z^\prime)a(\widetilde{u}^r_{z^\prime})a(\partial_j\overline{v}^r_{z^\prime}) - \int dz^\prime\, \varphi(z-z^\prime)a(\widetilde{u}^r_{z^\prime})a(\Delta\overline{v}^r_{z^\prime})
  \\
  = \frac{1}{L^3}\sum_{p,k}\hat{\varphi}(p) e^{ip\cdot z} \hat{u}^r(k+p)\hat{v}^r(k)\hat{a}_{k+p}\hat{a}_{k}.
\end{multline}
It then follows that 
\begin{eqnarray}
\|b(\varphi_z)\| &\leq& \int\, dz^\prime |\Delta\varphi(z-z^\prime)|\|a(\widetilde{u}^r_{z^\prime})\|\|a(\overline{v}^r_{z^\prime})\|\| +  2\sum_{j=1}^3\int\, dz^\prime |\partial_j\varphi(z-z^\prime)|\|a(\widetilde{u}^r_{z^\prime})\|\|a(\partial_j\overline{v}^r_{z^\prime})\|\|\nonumber
\\
&& + \int\, dz^\prime |\varphi(z-z^\prime)|\|a(\widetilde{u}^r_{z^\prime})\|\|a(\Delta\overline{v}^r_{z^\prime})\|.
\end{eqnarray}
Now, using that (see Lemma \ref{lem: bound phi}):
\begin{equation}\label{eq: L1 norms phi}
  \|\Delta\varphi\|_{L^1(\Lambda_L)} \leq C, \qquad \|\nabla\varphi\|_{L^1(\Lambda_L)}\leq
   C\rho^{-\frac{1}{3}},\qquad \|\varphi\|_{L^1(\Lambda_L)}\leq C\rho^{-\frac{2}{3}},
\end{equation}
and that 
\begin{equation}\label{eq: L2 norm lemma b phi}
\|a(\partial^n_j\overline{v}^r_{z^\prime})\| \leq \|\partial_j^n\overline{v}^r_{z^\prime}\|_{L^2(\Lambda_L)} \leq C\rho^{\frac{1}{2} +\frac{n}{3}},\qquad \|a(\widetilde{u}^r_{z^\prime})\| \leq \|\widetilde{u}^r_{z^\prime}\|_{L^2(\Lambda_L)} \leq C\rho^{-\frac{1}{6}},
\end{equation}
we can estimate 
\begin{equation}
  \|b(\varphi_z)\| \leq C\rho^{\frac{1}{2} -\frac{1}{6}}\|\Delta\varphi\|_{L^1(\Lambda_L)} + C\rho^{\frac{1}{2} +\frac{1}{3} -\frac{1}{6}}\|\nabla\varphi\|_{L^1(\Lambda_L)} + C\rho^{\frac{1}{2} +\frac{2}{3} -\frac{1}{6}}\|\varphi\|_{L^1(\Lambda_L)} \leq C\rho^{\frac{1}{3}}.
\end{equation}
The proof for $\widetilde{b}_\sigma(\varphi_z)$ can be done similarly as above, we omit the details. We now take into account the operator $b(\partial_j\varphi_{z})$ for a fixed $j=1,2,3$. To do not lose powers of $\rho$ depending on $\beta$ in the final estimate, it is convenient to first split the operator $b(\partial_j\varphi_z)$ as follows 
\begin{equation}
  b(\partial_j\varphi_z) = \int dz^\prime\, \partial_j\varphi(z-z^\prime)a(u^<_{z^\prime})a(\overline{v}^r_{z^\prime}) + \int dz^\prime\, \partial_j\varphi(z-z^\prime)a(u^>_{z^\prime})a(\overline{v}^r_{z^\prime}) =: b^<( \partial_j\varphi_z) + b^>( \partial_j\varphi_z),
\end{equation}
where $\hat{u}^<(k)$ and $\hat{u}^>(k)$ are defined as $\hat{u}^<(k) := \hat{u}^r(k)\hat{\chi}(k)$ and $\hat{u}^>(k) := \hat{u}^r(k)\hat{\eta}(k)$, respectively. Here $\hat{\chi}$ and $\hat{\eta}$ are radial smooth functions such that $0\leq \hat{\chi}, \hat{\eta}\leq 1$, $\hat{\chi}+ \hat{\eta} = 1$ and such that 
\[
  \hat{\chi}(k) := \begin{cases} 1, &\mbox{if}\,\,\, |k| \leq 1, \\ 0, &\mbox{if}\,\,\, |k| >2, \end{cases}\qquad \hat{\eta}(k) = \begin{cases} 0, &\mbox{if}\,\,\, |k| \leq 1 \\ 1, &\mbox{if}\,\,\, |k| >2. \end{cases}
\]
We now take into account $b^<(\partial_j\varphi_z)$, we can rewrite it as 
\begin{equation}\label{eq: splitting b<}
  b^<( \partial_j\varphi_z) = -\int dz^\prime \varphi(z-z^\prime)a (\partial_j u^<_{z^\prime})a(\overline{v}^r_{z^\prime}) + \int dz^\prime \varphi(z-z^\prime)a (u^<_{z^\prime})a(\partial_j\overline{v}^r_{z^\prime}),
\end{equation}
where, similarly as above (omitting the spin)
\[
  a(\partial_j u^<_{z^\prime}) = \int_{\Lambda_L} \overline{\partial_j u^<(t;z^\prime)}a_{t} = \int_{\Lambda_L}\frac{1}{L^3}\sum_{k\in\frac{2\pi}{L}\mathbb{Z}^3}(-ik_j)\hat{u}^<(k) e^{-ik\cdot (t-z^\prime)} a_{t}
\]
We can now proceed similarly as above to bound both the operators in the right hand side of \eqref{eq: splitting b<}. More precisely, if we define 
\[
  \breve{u}^{<,j}(z;x) := \frac{1}{L^3}\sum_{p\in\frac{2\pi}{L}\mathbb{Z}^3} (ip_j)\frac{\hat{u}^<(p)}{|p|^2}e^{ip\cdot (z-x)}, \qquad \widetilde{u}^<(z;x) := \frac{1}{L^3}\sum_{p\in\frac{2\pi}{L}\mathbb{Z}^3} \frac{\hat{u}^<(p)}{|p|^2}e^{ip\cdot (z-x)},
\] 
we can write
\begin{eqnarray}
  \left\|\int dz^\prime\varphi(z-z^\prime) a(\partial_j u^<_{z^\prime})a(\overline{v}^r_{z^\prime})\right\| &\leq& \int dz^\prime \,|\Delta \varphi(z-z^\prime)|\|a(\breve{u}^{<,j}_{z^\prime})\|\|a(\overline{v}^r_{z^\prime})\|\nonumber
  \\
  && +  \int dz^\prime\, |\varphi(z-z^\prime)|\|a(\breve{u}^{<,j}_{z^\prime})\|\|a(\Delta\overline{v}^r_{z^\prime})\| \nonumber\\
  && +2\sum_{m=1}^3 \int dz^\prime\, |\partial_m \varphi(z-z^\prime)|\|a(\breve{u}^{<,j}_{z^\prime})\|\|a(\partial_m\overline{v}^r_{z^\prime})\|,
\end{eqnarray}
and, similarly,
\begin{eqnarray}
  \left\|\int dz^\prime\varphi(z-z^\prime) a(u^<_{z^\prime})a(\partial_j\overline{{v}}^r_{z^\prime})\right\| &\leq& \int dz^\prime \,|\Delta \varphi(z-z^\prime)| \|a(\widetilde{u}^<_{z^\prime})\| \|a(\partial_j\overline{{v}}^r_{z^\prime})\|\nonumber \\
  && +\sum_{m=1}^3\int dz^\prime\, |\varphi(z-z^\prime)|\|a(\widetilde{u}^<_{z^\prime})\| \|a(\partial^2_m \partial_j\overline{{v}}^r_{z^\prime})\|\nonumber \\
  && +2\sum_{m=1}^3 \int dz^\prime\, |\partial_m \varphi(z-z^\prime)|\|a(\widetilde{u}^<_{z^\prime})\|\|a(\partial_m\partial_j\overline{{v}}^r_{z^\prime})\| .
\end{eqnarray}
Using then the following estimates,  
\[
  \|\breve{u}^{<,j}_{z^\prime}\|_2 \leq C, \qquad \|\widetilde{u}^<_{z^\prime}\|_2 \leq C\rho^{-\frac{1}{6}}, \qquad\| D^\alpha\overline{v}^r_{z^\prime}\|_2 \leq C\rho^{\frac{1}{2} + \frac{n}{3}},\quad\mbox{with}\,\,\, \alpha = (\alpha_1, \alpha_2,\alpha_3)\in \mathbb{N}^3_0, \,\,\,|\alpha| = n,
\]
together with the bounds for $\|\Delta\varphi\|_1, \|\nabla\varphi\|_1, \|\varphi\|_1$ from Lemma \ref{lem: bound phi}, we get 
\begin{equation}\label{eq: bound b<}
  \|b^<(\partial_j\varphi_z)\| \leq C\rho^{\frac{1}{2}}.
\end{equation}
To study $b^>( \partial_j\varphi_z)$, we can proceed as for the bound of $b(\varphi_z)$. More precisely, if we define
\[
  \widetilde{u}^>(z;x) = \frac{1}{L^3}\sum_{p\in\frac{2\pi}{L}\mathbb{Z}^3} \frac{\hat{u}^>(k)}{|k|^2}e^{ik\cdot (z-x)},
\]
then we can bound
\begin{eqnarray}
   \|b^>( \partial_j\varphi_z)\| &\leq& \sum_{j=1}^3\int\, dz^\prime\, |\partial^2_m\partial_j\varphi(z-z^\prime)|\|a(\widetilde{u}^>_{z^\prime})\|\|a(\overline{v}^r_{z^\prime})\| \nonumber
   \\
   &&+ 2\sum_{m=1}^3\int\,dz^\prime |\partial_m\partial_j\varphi(z-z^\prime)|\|a(\widetilde{u}^>_{z^\prime})\|\|a(\partial_m\overline{v}^r_{z^\prime})\|\nonumber
   \\
   && + \int\, dz^\prime |\partial_j\varphi(z-z^\prime)|\|a(\widetilde{u}^>_{z^\prime})\|\|a(\Delta\overline{v}^r_{z^\prime})\|,
\end{eqnarray}

Using now the fact that $\|\widetilde{u}^>\|_2 \leq C$, the bounds in \eqref{eq: L2 norm lemma b phi} together with the ones in Lemma \ref{lem: bound phi}, we get that 
\begin{equation}\label{eq: bound b>}
  \|b^>(\partial_j\varphi_z)\| \leq C\rho^{\frac{1}{2}}\|D^3\varphi\|_1 + C\rho^{\frac{1}{2} +\frac{1}{3}}\|D^2\varphi\|_1 + C\rho^{\frac{7}{6}}\|\nabla\varphi\|_1 \leq C\rho^{\frac{1}{2}}.
\end{equation}
Combining then $\eqref{eq: bound b<}$ and \eqref{eq: bound b>}, we get $\|b(\partial_j\varphi_{z})\|\leq C\rho^{1/2}$. The estimate for $\widetilde{b}(\partial_j\varphi_{z})$ can be done in the same way, we omit the details.
\end{proof}
Recall now that $\varphi_\infty$ does not satisfies the zero energy scattering equation exactly in $\mathbb{R}^3$: there is a correction due to the localization. It explicitly reads as $\mathcal{E}_{\{\varphi_0, \chi_{\sqrt[3]{\rho}}\}}^\infty(x) = -4\nabla \varphi_0(x) \nabla \chi_{\sqrt[3]{\rho}}(x) - 2\varphi_0(x) \Delta\chi_{\sqrt[3]{\rho}}(x)$. In what follows we denote by $\mathcal{E}_{\{\varphi_0, \chi_{\sqrt[3]{\rho}}\}}(x)$ the corresponding periodization in the box $\Lambda_L$, i.e., 
\begin{equation}\label{eq: period error scattering}
  \mathcal{E}_{\{\varphi_0, \chi_{\sqrt[3]{\rho}}\}}(x) = \sum_{n\in\mathbb{Z}^3}\mathcal{E}_{\{\varphi_0, \chi_{\sqrt[3]{\rho}}\}}^\infty(x + nL).
\end{equation}
\begin{lemma}\label{lem: b error scatt}
Let $\mathcal{E}_{\{\varphi_0, \chi_{\sqrt[3]{\rho}}\}}$ be as in \eqref{eq: period error scattering}. Let
\[
  b_\sigma((\mathcal{E}_{\{\varphi_0, \chi_{\sqrt[3]{\rho}}\}})_z) := \int\, dz^\prime\, \mathcal{E}_{\{\varphi_0, \chi_{\sqrt[3]{\rho}}\}}(z-z^\prime)b_{z^\prime,\sigma}, \qquad b_{z^\prime, \sigma} = a_\sigma(u^r_{z^\prime})a_\sigma(\overline{v}^r_{z^\prime})
\]
It holds
\begin{equation}\label{eq: bound b err scatt}
  \|b_\sigma((\mathcal{E}_{\{\varphi_0, \chi_{\sqrt[3]{\rho}}\}})_z)\| \leq C\rho,
\end{equation}
and the same holds for the adjoint operator $b_\sigma^\ast((\mathcal{E}_{\{\varphi_0, \chi_{\sqrt[3]{\rho}}\}})_z)$.
\end{lemma}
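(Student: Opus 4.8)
The plan is to mimic the proof of the operator bound for $b_\sigma(\varphi_z)$ in Lemma \ref{lem: bound b phi}, but now tracking the extra powers of $\rho$ carried by the localization error $\mathcal{E}_{\{\varphi_0,\chi_{\sqrt[3]{\rho}}\}}$. Recall that, by construction, $\mathcal{E}^\infty_{\{\varphi_0,\chi_{\sqrt[3]{\rho}}\}}(x) = -4\nabla\varphi_0(x)\cdot\nabla\chi_{\sqrt[3]{\rho}}(x) - 2\varphi_0(x)\Delta\chi_{\sqrt[3]{\rho}}(x)$ is supported in the annulus $\rho^{-1/3}\leq |x|\leq 2\rho^{-1/3}$, where $\varphi_0(x) = a/|x|$ exactly. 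Since $\nabla\chi_{\sqrt[3]{\rho}}$ scales like $\rho^{1/3}$ and $\Delta\chi_{\sqrt[3]{\rho}}$ like $\rho^{2/3}$, on that annulus one has pointwise $|\mathcal{E}^\infty_{\{\varphi_0,\chi_{\sqrt[3]{\rho}}\}}(x)|\leq C(\rho^{1/3}|x|^{-2} + \rho^{2/3}|x|^{-1})\leq C\rho$, and the annulus has volume $\sim\rho^{-1}$, so $\|\mathcal{E}_{\{\varphi_0,\chi_{\sqrt[3]{\rho}}\}}\|_{L^1(\Lambda_L)}\leq C$ for $L$ large (the periodization does not cause overlaps once $\mathrm{supp}\,\varphi_\infty\subset\Lambda_L$). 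Similarly $\|\nabla\mathcal{E}_{\{\varphi_0,\chi_{\sqrt[3]{\rho}}\}}\|_{L^1}\leq C\rho^{1/3}$ and $\|\Delta\mathcal{E}_{\{\varphi_0,\chi_{\sqrt[3]{\rho}}\}}\|_{L^1}\leq C\rho^{2/3}$, since each derivative either lands on the $|x|^{-1}$-type factor (gaining $|x|^{-1}\sim\rho^{1/3}$) or on a cut-off derivative (gaining $\rho^{1/3}$), so the pointwise bound gains a factor $\rho^{1/3}$ per derivative while the support is unchanged.

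The first step is to establish these $L^1$ estimates on $\mathcal{E}_{\{\varphi_0,\chi_{\sqrt[3]{\rho}}\}}$ and its derivatives — this is the content that replaces the use of Lemma \ref{lem: bound phi} in the earlier argument. The second step is to reproduce the integration-by-parts decomposition \eqref{eq: decomposition b phi} with $\varphi$ replaced by $\mathcal{E}_{\{\varphi_0,\chi_{\sqrt[3]{\rho}}\}}$: writing $\hat{a}_{k+p} = |k+p|^2\,\hat{u}^r(k+p)/|k+p|^2\cdot\hat{u}^r(k+p)^{-1}\cdots$ as before, one gets
\begin{equation}
  b_\sigma((\mathcal{E}_{\{\varphi_0,\chi_{\sqrt[3]{\rho}}\}})_z) = -\!\int\! dz'\,\Delta\mathcal{E}(z-z')\,a(\widetilde u^r_{z'})a(\overline v^r_{z'}) + 2\sum_j\!\int\! dz'\,\partial_j\mathcal{E}(z-z')\,a(\widetilde u^r_{z'})a(\partial_j\overline v^r_{z'}) - \!\int\! dz'\,\mathcal{E}(z-z')\,a(\widetilde u^r_{z'})a(\Delta\overline v^r_{z'}),
\end{equation}
abbreviating $\mathcal{E} \equiv \mathcal{E}_{\{\varphi_0,\chi_{\sqrt[3]{\rho}}\}}$. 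Then, using $\|a(\widetilde u^r_{z'})\|\leq\|\widetilde u^r_{z'}\|_2\leq C\rho^{-1/6}$ and $\|a(D^\alpha\overline v^r_{z'})\|\leq C\rho^{1/2 + |\alpha|/3}$ from \eqref{eq: L2 norm lemma b phi}, together with the three $L^1$ bounds above, one estimates
\begin{equation}
  \|b_\sigma((\mathcal{E}_{\{\varphi_0,\chi_{\sqrt[3]{\rho}}\}})_z)\| \leq C\rho^{1/2-1/6}\cdot\rho^{2/3} + C\rho^{1/2+1/3-1/6}\cdot\rho^{1/3} + C\rho^{1/2+2/3-1/6}\cdot 1 \leq C\rho,
\end{equation}
and the bound for the adjoint is identical. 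One small subtlety: the kernel $\widetilde u^r$ has no infrared issue because $\hat u^r$ vanishes near the origin, so dividing by $|p|^2$ is harmless and the $L^2$ bounds on $\widetilde u^r$ are exactly as used in Lemma \ref{lem: bound b phi}.

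I expect the main obstacle to be the careful bookkeeping of the $L^1$ norms of the derivatives of the periodized error $\mathcal{E}_{\{\varphi_0,\chi_{\sqrt[3]{\rho}}\}}$: one must check that every spatial derivative genuinely produces the advertised factor $\rho^{1/3}$ uniformly on the annulus (the dangerous term being $\rho^{1/3}|x|^{-2}$, whose Laplacian naively looks like $\rho^{1/3}|x|^{-4}\sim\rho^{4/3}|x|^{-2}$ pointwise but whose $L^1$ norm over the annulus of volume $\rho^{-1}$ is indeed $O(\rho^{1/3})$) and that the periodization is a genuine disjoint sum for $L$ large so no extra constants appear. Everything else is a direct transcription of the argument already carried out for $b_\sigma(\varphi_z)$.
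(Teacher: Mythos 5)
Your proposal is correct and follows precisely the route the paper takes: reuse the integration-by-parts decomposition from Lemma \ref{lem: bound b phi} to bound $\|b_\sigma((\mathcal{E}_{\{\varphi_0,\chi_{\sqrt[3]{\rho}}\}})_z)\|$ by $C\rho^{1/3}\|\Delta\mathcal{E}\|_1 + C\rho^{2/3}\|\nabla\mathcal{E}\|_1 + C\rho\|\mathcal{E}\|_1$, and then establish $\|\mathcal{E}\|_1\leq C$, $\|\nabla\mathcal{E}\|_1\leq C\rho^{1/3}$, $\|\Delta\mathcal{E}\|_1\leq C\rho^{2/3}$ from the explicit form $\varphi_0=a/|x|$ on the annulus $\rho^{-1/3}\leq|x|\leq 2\rho^{-1/3}$ together with $\|\nabla^n\chi_{\sqrt[3]{\rho}}\|_\infty\leq C\rho^{n/3}$. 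Your extra commentary on how each derivative gains $\rho^{1/3}$ over the annulus of volume $\sim\rho^{-1}$ is a more detailed justification of exactly the $L^1$ estimates the paper invokes in \eqref{eq: L1 norm err scatt}--\eqref{eq: L1 norm der err scatt}.
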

\begin{remark}
Note that the analogous proof of Lemma \ref{lem: b error scatt} gives that $\|b_\sigma (V_z)\|\leq C\rho^{1/3}$ (where $b_\sigma(V)= \int dz^\prime\, V(z-z^\prime) a_\sigma(u^r_{z^\prime})a_{\sigma}(\overline{v}^r_{z^\prime})$). The fact that the bound for $b_\sigma((\mathcal{E}_{\{\varphi_0, \chi_{\sqrt[3]{\rho}}\}})_z)$ is better (see \eqref{eq: bound b err scatt}) shows that the renormalized interaction $\mathcal{E}_{\{\varphi_0, \chi_{\sqrt[3]\rho}\}}$ is much smaller than the original one. 
\end{remark} 
\begin{proof}
The proof is as the one of Lemma \ref{lem: bound b phi}. Following the same calculation we can prove that, taking $L$  large enough,
\begin{equation}\label{eq: oper est err}
  \|b_\sigma((\mathcal{E}_{\{\varphi_0, \chi_{\sqrt[3]{\rho}}\}})_z)\| \leq C\rho^{\frac{1}{3}}\|\Delta\mathcal{E}_{\{\varphi_0, \chi_{\sqrt[3]{\rho}}\}}\|_{L^1(\Lambda_L)} + C\rho^{\frac{2}{3}}\|\nabla\mathcal{E}_{\{\varphi_0, \chi_{\sqrt[3]{\rho}}\}}\|_{L^1(\Lambda_L)} + C\rho \|\mathcal{E}_{\{\varphi_0, \chi_{\sqrt[3]{\rho}}\}}\|_{L^1(\Lambda_L)}.
\end{equation}
Moreover, using the estimates for $\varphi_0$ stated in Lemma \ref{lem: bound phi0}, the properties of $\chi_{\sqrt[3]\rho}$ and recalling \eqref{eq: period error scattering}, we have 
\begin{eqnarray}\label{eq: L1 norm err scatt}
  \|\mathcal{E}_{\{\varphi_0, \chi_{\sqrt[3]{\rho}}\}}\|_{L^1(\Lambda_L)} &\leq&   C\int_{\mathbb{R}^3}dx\, \nabla\varphi_0(x)\nabla \chi_{\sqrt[3]\rho}(x) + C\int_{\mathbb{R}^3}\varphi_0(x)\Delta \chi_{\sqrt[3]\rho}(x) 
  \\
  &\leq& C\rho^{\frac{1}{3}}\int_{\rho^{-\frac{1}{3}}\leq |x| \leq 2\rho^{-\frac{1}{3}}}dx\, \frac{1}{|x|^2} + C\rho^{\frac{2}{3}}\int_{\rho^{-\frac{1}{3}}\leq |x| \leq 2\rho^{-\frac{1}{3}}}dx\, \frac{1}{|x|} \leq C,\nonumber
\end{eqnarray}
where we used $\nabla^n\chi_{\sqrt[3]\rho}$ ($n\geq 1$) is non-vanishing only for $\rho^{-1/3}\leq |x| \leq 2\rho^{-1/3}$, that $\|\nabla^n\chi_{\sqrt[3]\rho}\|_\infty\leq C\rho^{n/3}$ and that in the support of $\nabla^n\chi_{\sqrt[3]\rho}$ the scattering energy solution is such that $\varphi_0(x) = a/|x|$ due to the compact support of the interaction potential (see Appendix \ref{app: scattering}). Proceeding similarly and using the bound proved in Lemma \ref{lem: bound phi}), we can also prove that 
\begin{equation}\label{eq: L1 norm der err scatt}
  \|\nabla \mathcal{E}_{\{\varphi_0, \chi_{\sqrt[3]{\rho}}\}}\|_{L^1(\Lambda_L)} \leq C\rho^{\frac{1}{3}}, \qquad \|\Delta\mathcal{E}_{\{\varphi_0, \chi_{\sqrt[3]{\rho}}\}}\|_{L^1(\Lambda_L)} \leq C\rho^{\frac{2}{3}}.
\end{equation}
Inserting then the bounds \eqref{eq: L1 norm err scatt} and \eqref{eq: L1 norm der err scatt} in \eqref{eq: oper est err}, the result holds.
\end{proof}
We now use Lemma \ref{lem: bound b phi} to prove a bound for the number operator, as shown in the next proposition. From now one we will work over states 
\begin{equation}\label{eq: def xi lambda}
  \xi_\lambda:= T^\ast_\lambda R^\ast \psi,
\end{equation}
with $\psi$ being an approximate ground state in the sense of Definition \ref{def: approx gs}.
\begin{proposition}[Improved estimate for $\mathcal{N}$]\label{pro: impr est N} Let $\lambda\in [0,1]$ and let $\xi_\lambda$ be a state as in \eqref{eq: def xi lambda}. Then 
\begin{equation}\label{eq: est N}
\langle\xi_\lambda, \mathcal{N}\xi_\lambda\rangle \leq C\langle\xi_1, \mathcal{N}\xi_1 \rangle + CL^{3}\rho^{\frac{5}{3}}.
\end{equation}
\end{proposition}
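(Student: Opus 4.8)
The plan is to control $\langle \xi_\lambda, \mathcal{N}\xi_\lambda\rangle$ by a Grönwall-type argument in the parameter $\lambda$. Write $f(\lambda) := \langle \xi_\lambda, \mathcal{N}\xi_\lambda\rangle$ where $\xi_\lambda = T^\ast_\lambda R^\ast\psi$, and differentiate: since $\partial_\lambda \xi_\lambda = -(B-B^\ast)\xi_\lambda$ at $\lambda$ (more precisely $\partial_\lambda \xi_\lambda = T^\ast_\lambda\,(B^\ast - B)\,\cdot$, after writing $\mathcal{N}$ in terms of $T_\lambda\mathcal{N}T_\lambda^\ast$, one gets the cleaner identity $\partial_\lambda \langle \xi_\lambda, \mathcal{N}\xi_\lambda\rangle = \langle \xi_\lambda, [\mathcal{N}, B-B^\ast]\xi_\lambda\rangle$ up to sign). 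The commutator $[\mathcal{N}, B]$ is again of the form $B$ but with a factor coming from the fact that $B$ changes the particle number by $4$: explicitly $[\mathcal{N}, B] = -4B$, so one should not expect a direct cancellation and must instead estimate $\langle \xi_\lambda, (B - B^\ast)\xi_\lambda\rangle$ directly. The point is to bound this quadratic form by $\epsilon\, f(\lambda) + C_\epsilon L^3\rho^{5/3}$ uniformly in $\lambda$, and then integrate from $\lambda$ to $1$ (or $0$) to obtain \eqref{eq: est N}.

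The key estimate is therefore a bound of the form
\[
  |\langle \xi, (B - B^\ast)\xi\rangle| \le \tfrac{1}{2}\langle \xi, \mathcal{N}\xi\rangle + CL^3\rho^{\frac{5}{3}},
\]
valid for any normalized $\xi$. To get it, I would rewrite
\[
  B = \sum_\sigma \int_{\Lambda_L} dz\, a_\uparrow(u^r_z)a_\uparrow(\overline{v}^r_z)\, b_\downarrow(\varphi_z)
\]
by carrying out the $z'$-integration to produce the smeared operator $b_\downarrow(\varphi_z)$ from Lemma \ref{lem: bound b phi}, which is bounded in operator norm by $C\rho^{1/3}$. Then, using $\|a_\uparrow(u^r_z)\| \le \|u^r_z\|_2$ and the Cauchy–Schwarz inequality in the $z$-integral together with the $L^2$-bound $\|v^r_z\|_2 \le C\rho^{1/2}$, one writes $\|a_\uparrow(\overline v^r_z)\xi\|$ and pairs it against $\mathcal{N}^{1/2}$ via the standard estimate $\int dz\,\|a_\uparrow(\overline v^r_z)\xi\|^2 \lesssim \|v^r\|_\infty^{\,?}\cdots$ — more precisely $\int dz\, a^\ast_\sigma(\overline v^r_z)a_\sigma(\overline v^r_z) \le \|\omega^r\|_{\mathrm{op}}\,\mathcal{N}_\sigma \le C\mathcal{N}$, since the kernel of $\overline v^r$ has operator norm $\le 1$. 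Combining: one factor $a(u^r_z)$ is placed in operator norm giving $\rho^{-3\beta/2}$, the smeared $b_\downarrow(\varphi_z)$ gives $\rho^{1/3}$, and $a(\overline v^r_z)$ is paired with $\mathcal{N}^{1/2}$ over $\Lambda_L$; the volume factor $L^3$ times $\rho^{1/2}$ (from $\|v^r\|_2$) times $\langle\mathcal{N}\rangle^{1/2}$ produces, after an arithmetic–geometric split, exactly $\epsilon\langle\mathcal{N}\rangle + C_\epsilon L^3\rho^{\text{(power)}}$; one must check the power works out to $5/3$ with the choice $\alpha = 1/3 + \epsilon/3$ and $\beta$ small.

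The main obstacle is keeping the powers of $\rho$ (and the spurious $\rho^{-\beta}$ from $\|u^r_z\|_2$) under control while simultaneously generating only the modest error $L^3\rho^{5/3}$, rather than the cruder $L^3\rho^{7/6}$ one would get from the a priori bound of Lemma \ref{lem: a priori}. The trick is precisely to never bound \emph{both} $u^r$ and $v^r$ factors by their $L^2$-norms, but to use the \emph{improved} operator bound $\|b_\sigma(\varphi_z)\| \le C\rho^{1/3}$ from Lemma \ref{lem: bound b phi} for the $b$-pair in the $\downarrow$-sector, so that the remaining $\uparrow$-sector contributes only $\|u^r_z\|_2\|v^r_z\|_2$ integrated against $L^3$. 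One should double-check that the $\rho^{-3\beta/2}$ factor is harmless by choosing $\beta$ sufficiently small depending on $\epsilon$, so that $\rho^{1/3 - 3\beta/2 + 1/2 + 1/2} = \rho^{4/3 - 3\beta/2}$ beats $\rho^{5/3}$ after the arithmetic-geometric inequality is applied with the weight chosen to absorb $\langle\mathcal{N}\rangle$ — this balancing of exponents is the only delicate point, and it is where one uses that $\varphi$ is localized on scale $\rho^{-1/3}$ (so that $\|\varphi\|_1 \le C\rho^{-2/3}$ rather than being of order $L^3$).
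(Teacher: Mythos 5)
Your overall strategy is exactly the one the paper uses: differentiate $f(\lambda)=\langle\xi_\lambda,\mathcal N\xi_\lambda\rangle$, observe that the commutator $[\mathcal N,B]$ is again of the form $B$ (since $B$ changes the particle number by $4$), rewrite the resulting quadratic form using the smeared operator $b_\downarrow(\varphi_z)$, invoke the improved bound $\|b_\sigma(\varphi_z)\|\le C\rho^{1/3}$ from Lemma \ref{lem: bound b phi}, and close with Cauchy--Schwarz and Gr\"onwall. So the plan is correct.

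However, there is a genuine error in the final bookkeeping that would prevent the estimate from closing. You propose to place $a_\uparrow(u^r_z)$ in operator norm, paying $\|u^r_z\|_2\le C\rho^{-3\beta/2}$, and to pair $a_\uparrow(\overline v^r_z)$ against $\mathcal N^{1/2}$ via $\int dz\,a^\ast(\overline v^r_z)a(\overline v^r_z)\le C\mathcal N$. This pairing gives no power of $\rho$ at all (it only gives $L^{3/2}\|\mathcal N^{1/2}\xi_\lambda\|$ after Cauchy--Schwarz in $z$), and you cannot also extract a $\rho^{1/2}$ from $\|v^r_z\|_2$ on top of that, as your last paragraph suggests; the two moves are mutually exclusive for the same operator. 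Carried out consistently, your split yields $\rho^{1/3-3\beta/2}L^{3/2}\|\mathcal N^{1/2}\xi_\lambda\|$ and hence, after the arithmetic--geometric step, an error of order $L^3\rho^{2/3-3\beta}$, nowhere near $L^3\rho^{5/3}$ regardless of how small $\beta$ is.

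The correct split is the opposite one: bound $a_\uparrow(\overline v^r_z)$ in operator norm, $\|a_\uparrow(\overline v^r_z)\|\le\|\overline v^r_z\|_2\le C\rho^{1/2}$, and pair $a_\uparrow(u^r_z)$ against $\mathcal N^{1/2}$. The latter is harmless precisely because $0\le\hat u^r\le1$, so $\int dz\,a^\ast(u^r_z)a(u^r_z)\le\mathcal N$; the $\rho^{-3\beta/2}$ factor never appears. This yields
\begin{equation*}
\Bigl|\int dz\,\langle\xi_\lambda,\,b_\downarrow(\varphi_z)\,a_\uparrow(\overline v^r_z)\,a_\uparrow(u^r_z)\,\xi_\lambda\rangle\Bigr|
\le C\rho^{\frac13}\rho^{\frac12}\!\int dz\,\|a_\uparrow(u^r_z)\xi_\lambda\|
\le C\rho^{\frac56}L^{\frac32}\|\mathcal N^{\frac12}\xi_\lambda\|
\le \langle\xi_\lambda,\mathcal N\xi_\lambda\rangle + CL^3\rho^{\frac53},
\end{equation*}
which is exactly the differential inequality needed for Gr\"onwall. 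With this correction your argument coincides with the paper's proof.
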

\begin{remark}[Simpler proof than in \cite{FGHP}] The proof of Proposition \ref{pro: impr est N}  which we propose here is simpler than the analogous one in \cite[Proposition 5.1]{FGHP}. This is due to the fact that in the present paper we are able to show an improved bound for the operator $b(\varphi_z)$ (see Lemma \ref{lem: bound b phi}). 
\end{remark}
\begin{proof} The proof of \eqref{eq: est N} is a direct consequence of Gr\"onwall's Lemma. We start by computing 
\begin{eqnarray}
  \partial_\lambda \langle \xi_\lambda, \mathcal{N}\xi_\lambda\rangle  &=&  -\langle\xi_\lambda, [\mathcal{N}, (B - B^\ast)]\xi_\lambda\rangle 
  \\
  &=& -4\int dzdz^\prime\, \varphi(z-z^\prime)\langle \xi_\lambda, a_\uparrow(u^r_z)a_\uparrow(\overline{v}^r_z) a_\downarrow(u^r_{z^\prime})a_\downarrow(\overline{v}^r_{z^\prime})\xi_\lambda\rangle + \mathrm{c.c.}\nonumber
  \\
  &=& 4\int dz \,\langle \xi_\lambda, b_\downarrow(\varphi_z)a_\uparrow(\overline{v}^r_z)a_\uparrow(u^r_{z})\xi_\lambda\rangle + \mathrm{c.c.},\nonumber
\end{eqnarray}
where we are using the notations introduced in \eqref{eq: def b phi}. From Lemma \ref{lem: bound b phi} and using the bound $\|a_\uparrow(\overline{v}^r_z)\|\leq \|\overline{v}^r_z\|_2 \leq C\rho^{1/2}$ (see Proposition \ref{pro: L1, L2 v}), we get, by Cauchy-Schwarz 
\begin{equation*}
   |\partial_\lambda \langle \xi_\lambda, \mathcal{N}\xi_\lambda\rangle| \leq C\rho^{\frac{1}{2} + \frac{1}{3}} \int dz\, \|a(u^r_z)\xi_\lambda\|\leq CL^{\frac{3}{2}}\rho^{\frac{5}{6}}\|\mathcal{N}^{\frac{1}{2}}\xi_\lambda\| \leq \langle \xi_\lambda, \mathcal{N} \xi_\lambda\rangle + CL^3 \rho^{\frac{5}{3}} .
\end{equation*}
Then we have
\begin{equation}\label{eq: est der N}
  \partial_\lambda\langle \xi_\lambda, \mathcal{N}\xi_\lambda\rangle \leq C\langle \xi_\lambda, \mathcal{N}\xi_\lambda\rangle + CL^3\rho^{\frac{5}{3}}.
\end{equation}
Thus, using Gr\"onwall's Lemma, we get
\begin{equation}\label{eq: 1}
  \langle \xi_\lambda, \mathcal{N}\xi_\lambda\rangle \leq C\langle\xi_1, \mathcal{N}\xi_1\rangle + CL^3 \rho^{\frac{5}{3}}. 
\end{equation}
\end{proof}
\begin{remark}[Non optimal bound for $\mathcal{N}$] From \eqref{eq: est der N}, by Gronwall's Lemma, we get
\begin{equation}\label{eq: non optimal bound N}
   \langle \xi_\lambda, \mathcal{N}\xi_\lambda\rangle \leq \langle\xi_0, \mathcal{N}\xi_0\rangle + CL^3 \rho^{\frac{5}{3}} =  \langle R^\ast\psi, \mathcal{N} R^\ast \psi\rangle + CL^3 \rho^{\frac{5}{3}} \leq CL^3\rho^{\frac{7}{6}}, 
\end{equation}
where in the last inequality we used Lemma \ref{lem: a priori}.
\end{remark}
To estimate some of the error terms, we will often use the number operator counting the particles having momentum $k$ such that $|k| \geq 2k_F$, i.e.
\begin{equation}
  \mathcal{N}_> := \sum_\sigma\sum_{|k|\geq 2k_F} \hat{a}^\ast_{k,\sigma}\hat{a}_{k,\sigma},
\end{equation}
\begin{remark}[On the operator $\mathcal{N}_>$]\label{rem: N_>} Note that the new way of regularizing $\hat{u}^r$ proposed in this paper implies that $\hat{u}^r(k) = 0$ for $|k| \leq 2k_F$. This allows us to estimate many error terms with respect to the operator $\mathcal{N}_>$. Indeed one can easily see that 
\begin{equation}
\int_{\Lambda_L}dx\, \|a_\sigma(u_x^r)\psi\|^2 \leq C\sum_{|k|\geq 2k_F}\langle \psi, \hat{a}_{k,\sigma}^\ast \hat{a}_{k,\sigma}\psi\rangle.
\end{equation}
The advantage of that is that we can bound the expectation of $\mathcal{N}_>$ with respect to the kinetic operator $\mathbb{H}_0$ uniformly in the volume (see \eqref{eq: N> wrt H0}). In this way we improve many estimates when deriving the lower bound.
\end{remark}
\begin{proposition}[Bound for $\mathcal{N}_>$]\label{lem: bound N>} Let $\lambda\in [0,1]$ and let $\xi_\lambda$ be a state as in \eqref{eq: def xi lambda}. It holds true that 
\begin{equation}\label{eq: N> lambda}
  \langle\xi_\lambda, \mathcal{N}_> \xi_\lambda\rangle \leq C\rho^{-\frac{2}{3}}\langle \xi_\lambda, \mathbb{H}_0 \xi_\lambda\rangle.
\end{equation}
Moreover, we have
\begin{equation}\label{eq: est N>}
  \langle\xi_\lambda, \mathcal{N}_>\xi_\lambda\rangle \leq C\rho^{-\frac{2}{3}}\langle \xi_1,\mathbb{H}_0 \xi_1\rangle + CL^3\rho^{\frac{5}{3}}.
\end{equation}
\end{proposition}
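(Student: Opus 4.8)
The plan is to prove the two estimates in order, deriving \eqref{eq: est N>} as a corollary of \eqref{eq: N> lambda} combined with a Grönwall argument, and then upgrading via Proposition \ref{pro: impr est N} and Lemma \ref{lem: a priori}.

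\textbf{Step 1: the pointwise-in-$\lambda$ bound \eqref{eq: N> lambda}.} The key observation, recorded in Remark \ref{rem: N_>}, is that $\hat{u}^r_\sigma(k)$ is supported in $|k| \geq 2k_F$, so that $\int_{\Lambda_L} dx\, \|a_\sigma(u^r_x)\xi\|^2 \le C \sum_{|k|\geq 2k_F}\langle \xi, \hat{a}^\ast_{k,\sigma}\hat{a}_{k,\sigma}\xi\rangle = C\langle \xi, \mathcal{N}_>\xi\rangle$ after summing over $\sigma$. Conversely I want to compare $\mathcal{N}_>$ directly to $\mathbb{H}_0 = \sum_\sigma\sum_k ||k|^2-\mu_\sigma|\, \hat{a}^\ast_{k,\sigma}\hat{a}_{k,\sigma}$: for $|k|\geq 2k_F^\sigma$ one has $||k|^2 - \mu_\sigma| = |k|^2 - (k_F^\sigma)^2 \geq \tfrac{3}{4}|k|^2 \geq 3(k_F^\sigma)^2 \geq c\rho^{2/3}$, hence $\mathcal{N}_> \leq C\rho^{-2/3}\,\mathbb{H}_0$ as an operator inequality on the relevant sector. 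Taking the expectation in the state $\xi_\lambda$ gives \eqref{eq: N> lambda} immediately. The only subtlety is making sure the constant in $\mu_\sigma = (k_F^\sigma)^2 \sim c\rho_\sigma^{2/3}$ is uniform in $\sigma$ and that $\rho_\sigma \sim \rho$ up to constants, which is standard in this dilute setting.

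\textbf{Step 2: the Grönwall bound \eqref{eq: est N>}.} Here I need to control $\langle \xi_1, \mathbb{H}_0\xi_1\rangle$ in place of $\langle \xi_\lambda, \mathbb{H}_0\xi_\lambda\rangle$. The natural route is to differentiate $\langle \xi_\lambda, \mathbb{H}_0\xi_\lambda\rangle$ in $\lambda$, obtaining $-\langle \xi_\lambda, [\mathbb{H}_0, B-B^\ast]\xi_\lambda\rangle$, and to show this commutator is controlled by $\langle \xi_\lambda, \mathbb{H}_0\xi_\lambda\rangle + \langle \xi_\lambda,\mathcal{N}\xi_\lambda\rangle + L^3\rho^{7/3}$ or similar, using the bounds of Lemma \ref{lem: bound b phi} and Proposition \ref{pro: L1, L2 v}. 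Then Grönwall plus the a priori estimate $\langle R^\ast\psi,\mathbb{H}_0 R^\ast\psi\rangle \le CL^3\rho^2$ from Lemma \ref{lem: a priori} and the $\mathcal{N}$ bound from Proposition \ref{pro: impr est N} yields $\langle \xi_\lambda,\mathbb{H}_0\xi_\lambda\rangle \le C\langle \xi_1,\mathbb{H}_0\xi_1\rangle + CL^3\rho^{7/3}$, and inserting this into \eqref{eq: N> lambda} gives \eqref{eq: est N>} with the stated error $L^3\rho^{5/3}$ (since $\rho^{-2/3}\cdot\rho^{7/3} = \rho^{5/3}$). Alternatively, and more cheaply, one may simply note \eqref{eq: N> lambda} already holds at $\lambda=1$, giving $\langle \xi_\lambda,\mathcal{N}_>\xi_\lambda\rangle$ bounded by a $\lambda$-dependent quantity, and then run a Grönwall estimate directly on $\langle \xi_\lambda, \mathcal{N}_>\xi_\lambda\rangle$ itself, mimicking the proof of Proposition \ref{pro: impr est N}: differentiating in $\lambda$ produces terms of the form $\int dz\, \langle \xi_\lambda, b_\sigma(\varphi_z) a_\sigma(\overline{v}^r_z) a_\sigma(u^r_z)\xi_\lambda\rangle$, which by $\|b_\sigma(\varphi_z)\|\le C\rho^{1/3}$, $\|a_\sigma(\overline v^r_z)\|\le C\rho^{1/2}$ and Cauchy–Schwarz against $\int dz\,\|a_\sigma(u^r_z)\xi_\lambda\| \le CL^{3/2}\langle\xi_\lambda,\mathcal N_>\xi_\lambda\rangle^{1/2}$, are bounded by $\langle\xi_\lambda,\mathcal N_>\xi_\lambda\rangle + CL^3\rho^{5/3}$, so Grönwall closes.

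\textbf{Main obstacle.} The delicate point is Step 2: one must check that commuting $B-B^\ast$ (a four-fermion operator built from $u^r, v^r$ and $\varphi$) past $\mathbb{H}_0$ does not generate terms that are larger than $\langle \xi_\lambda,\mathbb{H}_0\xi_\lambda\rangle$ plus acceptable volume errors. The commutator redistributes the kinetic weight $||k|^2-\mu_\sigma|$ onto the momenta appearing in $b$, and one needs the improved operator bounds of Lemma \ref{lem: bound b phi} (in particular the $\rho^{1/2}$ bound for $b_\sigma(\partial_j\varphi_z)$, which is exactly the gradient-weighted object produced by $[\mathbb{H}_0, \cdot]$) together with the decay estimates of Lemma \ref{lem: decay phi} to absorb the high-momentum tail $|k|\sim\rho^{-\beta}$. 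Provided these are in hand, the rest is a routine Cauchy–Schwarz and Grönwall exercise; the bookkeeping of spin indices and the two integrations $dz\,dz'$ is tedious but causes no real difficulty.
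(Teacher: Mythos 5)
Your proposal is correct, and the essential content matches the paper's proof. Step~1 is exactly the paper's argument: on the set $|k|\geq 2k_F$ one has $||k|^2-\mu_\sigma|\geq C\rho^{2/3}$, giving $\mathcal{N}_>\leq C\rho^{-2/3}\mathbb{H}_0$ as an operator inequality (this is displayed as \eqref{eq: N> wrt H0} in the paper). For Step~2, the route you label ``alternatively, and more cheaply'' is precisely what the paper does: differentiate $\langle\xi_\lambda,\mathcal N_>\xi_\lambda\rangle$ in $\lambda$, bound the resulting commutator exactly as in Proposition~\ref{pro: impr est N} via $\|b_\sigma(\varphi_z)\|\leq C\rho^{1/3}$, $\|a_\sigma(\overline v^r_z)\|\leq C\rho^{1/2}$, and Cauchy--Schwarz, close with Gr\"onwall to get $\langle\xi_\lambda,\mathcal N_>\xi_\lambda\rangle\leq C\langle\xi_1,\mathcal N_>\xi_1\rangle + CL^3\rho^{5/3}$, and finish by applying \eqref{eq: N> lambda} at $\lambda=1$.

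Two small remarks. First, the opening version of Step~2 (a Gr\"onwall on $\mathbb{H}_0$) and the entire ``main obstacle'' paragraph are not needed and do not reflect the actual difficulty: in the route the paper takes, one commutes $B-B^\ast$ against $\mathcal N_>$, not against $\mathbb H_0$, and since $a_\sigma(\overline v^r_z)$ lives entirely inside the Fermi ball while $a_\sigma(u^r_z)$ lives entirely in $|k|\geq 2k_F$, the commutator $[\mathcal N_>,B-B^\ast]$ has the same structure as $[\mathcal N,B-B^\ast]$ (with a different numerical prefactor) and generates no gradient-weighted operators $b_\sigma(\partial_j\varphi_z)$; the subtle bound $\|b_\sigma(\partial_j\varphi_z)\|\leq C\rho^{1/2}$ you invoke is irrelevant here. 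Second, a cosmetic point: the commutator actually produces cross-spin terms of the form $\int dz\,\langle\xi_\lambda, b_\downarrow(\varphi_z)a_\uparrow(\overline v^r_z)a_\uparrow(u^r_z)\xi_\lambda\rangle$, not same-spin terms as you wrote; this does not affect the estimate.
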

\begin{proof}The proof of \eqref{eq: N> lambda} is a consequence of the fact that for each $k$ such that $|k| \geq 2k_F$, it holds $||k|^2 - k_F^2| \geq C\rho^{\frac{2}{3}}$. Thus,
\begin{equation}\label{eq: N> wrt H0}
  \mathcal{N}_> = \sum_\sigma\sum_{|k| \geq 2k_F} \hat{a}^\ast_{k,\sigma}\hat{a}_{k,\sigma}\leq C\rho^{-\frac{2}{3}}\sum_\sigma\sum_{k}||k|^2 - k_F^2|\, \hat{a}^\ast_{k,\sigma}\hat{a}_{k,\sigma} = C\rho^{-\frac{2}{3}}\mathbb{H}_0.
\end{equation}
Moreover, repeating similar calculations as in Proposition \ref{pro: impr est N}, one gets 
\begin{equation}
   \partial_\lambda\langle\xi_\lambda, \mathcal{N}_>\xi_\lambda\rangle \leq C\langle \xi_\lambda, \mathcal{N}_>\xi_\lambda\rangle + CL^3\rho^{\frac{5}{3}} .
\end{equation}
It then follows, by Gronwall's Lemma combined with \eqref{eq: N> lambda} that 
\begin{equation}\label{eq: est N_>}
  \langle \xi_\lambda, \mathcal{N}_>\xi_\lambda \rangle \leq C\langle \xi_1, \mathcal{N}_>\xi_1\rangle + CL^{3}\rho^{\frac{5}{3}} \leq C\rho^{-\frac{2}{3}}\langle\xi_1, \mathbb{H}_0\xi_1\rangle + CL^3\rho^{\frac{5}{3}}.
\end{equation}
\end{proof}
Proceeding similarly as in Proposition \ref{lem: bound N>}, one gets the result below.
\begin{corollary}\label{cor: N tilde}
Let $\lambda\in[0,1]$. Let $\alpha > 1/3$. 
It holds true that 
\[
  \sum_\sigma\sum_{|k| \geq k_F + \rho^{\alpha}}\langle \xi_\lambda, \hat{a}^\ast_{k,\sigma}\hat{a}_{k,\sigma}\xi_\lambda\rangle \leq  \sum_\sigma\sum_{|k| \geq k_F + \rho^{\alpha}}\langle \xi_1, \hat{a}^\ast_{k,\sigma}\hat{a}_{k,\sigma}\xi_1\rangle + CL^3\rho^{\frac{5}{3}}.
\]
\end{corollary}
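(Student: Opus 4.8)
The plan is to run the Grönwall argument of Proposition~\ref{lem: bound N>} with $\mathcal{N}_>$ replaced by
\[
  \mathcal{M} := \sum_\sigma\sum_{|k| \geq k_F^\sigma + \rho^\alpha}\hat{a}^\ast_{k,\sigma}\hat{a}_{k,\sigma},
\]
reading the threshold in the statement, as is customary in the paper (cf.\ Remark~\ref{rem: N_>}), with $k_F^\sigma$ in place of $k_F$. First I would record the only piece of support bookkeeping where the hypothesis $\alpha > 1/3$ is used: since $\alpha > 1/3$ we have $\rho^\alpha \ll \rho^{1/3} \sim k_F^\sigma$ for $\rho$ small, so $k_F^\sigma + \rho^\alpha < 2k_F^\sigma$; hence the kernel $\hat u^r_\sigma$, supported in $\{|k| \geq 2k_F^\sigma\}$, is supported inside $\{|k| \geq k_F^\sigma + \rho^\alpha\}$, while $\hat v^r_\sigma$, supported in $\{|k| < k_F^\sigma\}$, has support disjoint from $\{|k| \geq k_F^\sigma + \rho^\alpha\}$. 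Consequently, repeating the computation in Remark~\ref{rem: N_>} verbatim, $\int dz\, \|a_\sigma(u^r_z)\xi\|^2 \leq C\langle\xi, \mathcal{M}\xi\rangle$, and moreover $[\mathcal{M}, a_\sigma(u^r_z)] = -a_\sigma(u^r_z)$ while $[\mathcal{M}, a_\sigma(\overline{v}^r_z)] = 0$, so that $[\mathcal{M}, B] = -2B$ (only the two $u^r$-factors of $B$ are seen by $\mathcal{M}$).

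With this in hand I would differentiate exactly as in the proof of Proposition~\ref{pro: impr est N}:
\[
  \partial_\lambda\langle\xi_\lambda, \mathcal{M}\xi_\lambda\rangle = -\langle\xi_\lambda, [\mathcal{M}, B - B^\ast]\xi_\lambda\rangle = 2\langle\xi_\lambda, (B + B^\ast)\xi_\lambda\rangle = -2\int dz\, \langle\xi_\lambda, b_\downarrow(\varphi_z)\, a_\uparrow(\overline{v}^r_z)\, a_\uparrow(u^r_z)\,\xi_\lambda\rangle + \mathrm{c.c.},
\]
where $b_\downarrow(\varphi_z)$ is as in \eqref{eq: def b phi} and one has reordered the spin-$\uparrow$ and spin-$\downarrow$ factors of $B$. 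Then, using $\|b^\ast_\downarrow(\varphi_z)\| \leq C\rho^{1/3}$ from Lemma~\ref{lem: bound b phi}, $\|a_\uparrow(\overline{v}^r_z)\| \leq \|\overline{v}^r_z\|_2 \leq C\rho^{1/2}$ from Proposition~\ref{pro: L1, L2 v}, Cauchy--Schwarz, and the estimate $\int dz\, \|a_\uparrow(u^r_z)\xi_\lambda\|^2 \leq C\langle\xi_\lambda, \mathcal{M}\xi_\lambda\rangle$ from the first step, I would arrive at
\[
  |\partial_\lambda\langle\xi_\lambda, \mathcal{M}\xi_\lambda\rangle| \leq C\rho^{5/6} L^{3/2}\, \langle\xi_\lambda, \mathcal{M}\xi_\lambda\rangle^{1/2} \leq \langle\xi_\lambda, \mathcal{M}\xi_\lambda\rangle + CL^3\rho^{5/3},
\]
the last step by Young's inequality. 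Finally, Grönwall's Lemma (integrating from $\lambda$ to $1$, as in Proposition~\ref{lem: bound N>}) gives $\langle\xi_\lambda, \mathcal{M}\xi_\lambda\rangle \leq C\langle\xi_1, \mathcal{M}\xi_1\rangle + CL^3\rho^{5/3}$, which is the claim.

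I do not expect a genuine obstacle: the computation is essentially a line-by-line copy of Proposition~\ref{lem: bound N>}. The only point requiring care is the support check in the first step, which is exactly where $\alpha > 1/3$ enters; it simultaneously yields the commutator identity $[\mathcal{M}, B] = -2B$ and the $L^2$-bound on $a_\sigma(u^r_z)$ in terms of $\mathcal{M}$, so that the Cauchy--Schwarz/Grönwall machinery already used for $\mathcal{N}$ and $\mathcal{N}_>$ carries over unchanged. (If one insisted on reading the threshold with a spin-independent $k_F \leq k_F^\sigma$, the identity $[\mathcal{M}, a_\sigma(\overline v^r_z)]=0$ could fail on a thin momentum shell near the Fermi surface; but then $[\mathcal{M},B]$ would only acquire finitely many additional terms of the same structure as $B$, with an $\overline v^r$-factor replaced by its restriction to that shell, which obeys the same $L^2$- and derivative bounds, so the estimates above and the conclusion are unaffected.)
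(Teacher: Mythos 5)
Your proof is correct and follows the same approach the paper takes: compute $\partial_\lambda\langle\xi_\lambda,\mathcal{M}\xi_\lambda\rangle$ via the commutator $[\mathcal{M},B-B^\ast]$, control it through Lemma~\ref{lem: bound b phi} and the $L^2$-bound on $v^r$, and close with Gr\"onwall, using that for $\alpha>1/3$ and $\rho$ small the support $\{|k|\geq 2k_F^\sigma\}$ of $\hat u^r_\sigma$ lies inside $\{|k|\geq k_F^\sigma+\rho^\alpha\}$. The only cosmetic difference is that the paper passes the derivative bound through $\mathcal{N}_>$ and then uses $\mathcal{N}_>\leq\mathcal{M}$, whereas you bound $\int dz\,\|a_\sigma(u^r_z)\xi_\lambda\|^2$ directly by $\langle\xi_\lambda,\mathcal{M}\xi_\lambda\rangle$ — logically equivalent and using the same support observation.
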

\begin{proof}
This is a consequence of Gronwall's Lemma. Indeed, proceeding as in Proposition \ref{lem: bound N>}, we have 
\[
  \sum_\sigma\sum_{|k| \geq k_F + \rho^{\alpha}}\partial_\lambda\langle \xi_\lambda, \hat{a}^\ast_{k,\sigma}\hat{a}_{k,\sigma}\xi_\lambda\rangle \leq C\langle \xi_\lambda, \mathcal{N}_> \xi_\lambda\rangle + CL^3\rho^{\frac{5}{3}}\leq \sum_\sigma\sum_{|k| \geq k_F + \rho^{\alpha}}\langle \xi_\lambda, \hat{a}^\ast_{k,\sigma}\hat{a}_{k,\sigma}\xi_\lambda\rangle + CL^3\rho^{\frac{5}{3}},
\]
using Gronwall's Lemma the result follows.
\end{proof}


\section{Propagation estimates}\label{sec: propagation estimates}
To prove Theorem \ref{thm: optimal up bd} and Theorem \ref{thm: lower bound}, we need to propagate the bounds in Lemma \ref{lem: a priori} via a Gr\"onwall-type argument. To do that, the next propositions are needed. 
\subsection{Bounds for $\mathbb{H}_0$ on interpolating states}
\begin{proposition}[Propagation estimate for ${}\mathbb{H}_0$ - Part I]\label{pro: H0} Let $\lambda\in [0,1]$ and let $\xi_\lambda$ be a state as in \eqref{eq: def xi lambda}. Under the same assumptions of Theorem \ref{thm: optimal up bd} and Theorem \ref{thm: lower bound}, it holds  
\begin{equation} 
  \frac{d}{d\lambda}\langle \xi_\lambda, \mathbb{H}_0 \xi_\lambda\rangle = -\langle \xi_\lambda, \mathbb{T}_1 \xi_\lambda\rangle + \mathcal{E}_{\mathbb{H}_0}(\xi_\lambda),
\end{equation}
where 
\begin{equation}\label{eq: def T1}
  \mathbb{T}_1 = 2\int dxdy\,  \Delta\varphi(x-y) a_\uparrow(u^r_x)a_\uparrow(\overline{v}^r_x)a_\downarrow(u^r_y) a_\downarrow(\overline{v}^r_y) + \mathrm{h.c.}
\end{equation}
and
\begin{equation}\label{eq: error kin energy}
  \mathcal{E}_{\mathbb{H}_0}(\xi_\lambda) = 2\sum_{j=1}^3\int dxdy\, \partial_j \varphi(x-y) \left[a_\uparrow(u^r_x)a_\uparrow(\partial_j\overline{v}^r_x)a_\downarrow(u^r_y) a_\downarrow(\overline{v}^r_y) - a_\uparrow(u^r_x)a_\downarrow(\overline{v}^r_x)a_\downarrow(u^r_y) a_\downarrow(\partial_j\overline{v}^r_y)\right] + \mathrm{h.c.}
\end{equation}
\end{proposition}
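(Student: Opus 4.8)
The plan is to prove the identity directly: differentiate in $\lambda$ and evaluate the resulting commutator, since no estimates enter. As $\xi_\lambda = T_\lambda^\ast R^\ast\psi = e^{\lambda(B^\ast-B)}R^\ast\psi$, we have $\partial_\lambda\xi_\lambda = (B^\ast-B)\xi_\lambda$, hence
\[
  \frac{d}{d\lambda}\langle\xi_\lambda,\mathbb{H}_0\xi_\lambda\rangle = -\langle\xi_\lambda,[\mathbb{H}_0,B-B^\ast]\xi_\lambda\rangle = -\langle\xi_\lambda,\left([\mathbb{H}_0,B]+\mathrm{h.c.}\right)\xi_\lambda\rangle,
\]
using $[\mathbb{H}_0,B^\ast]=-[\mathbb{H}_0,B]^\ast$ (as $\mathbb{H}_0=\mathbb{H}_0^\ast$). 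So it suffices to compute $[\mathbb{H}_0,B]$ and exhibit it as the sum of a ``$\Delta\varphi$-term'', equal to the non-conjugate half of $\mathbb{T}_1$, and a ``$\nabla\varphi$-term'', equal to minus the non-conjugate half of the operator defining $\mathcal{E}_{\mathbb{H}_0}$.

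First I would compute $[\mathbb{H}_0,b_{z,\sigma}]$ with $b_{z,\sigma}:=a_\sigma(u^r_z)a_\sigma(\overline{v}^r_z)$, so that $B=\int dz\,dz'\,\varphi(z-z')\,b_{z,\uparrow}b_{z',\downarrow}$. From $[\mathbb{H}_0,\hat a_{k,\sigma}]=-||k|^2-\mu_\sigma|\,\hat a_{k,\sigma}$ and the support properties $\hat u^r_\sigma(k)\neq 0\Rightarrow |k|\geq 2k_F^\sigma$, $\hat v^r_\sigma(k)\neq 0\Rightarrow |k|<k_F^\sigma$, the symbol $||k|^2-\mu_\sigma|$ reduces to $|k|^2-\mu_\sigma$ on the $u^r$-leg and to $\mu_\sigma-|k|^2$ on the $\overline{v}^r$-leg, so that $[\mathbb{H}_0,a_\sigma(u^r_z)]=a_\sigma(\Delta u^r_z)+\mu_\sigma a_\sigma(u^r_z)$ and $[\mathbb{H}_0,a_\sigma(\overline{v}^r_z)]=-a_\sigma(\Delta\overline{v}^r_z)-\mu_\sigma a_\sigma(\overline{v}^r_z)$. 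By the Leibniz rule the $\mu_\sigma$ terms cancel and
\[
  [\mathbb{H}_0,b_{z,\sigma}]=a_\sigma(\Delta u^r_z)a_\sigma(\overline{v}^r_z)-a_\sigma(u^r_z)a_\sigma(\Delta\overline{v}^r_z).
\]
It is precisely here that the new regularization of this paper (which discards the momenta $k_F\leq|k|\leq 2k_F$) is convenient: it keeps $\mathbb{H}_0$ affine in $|k|^2$ on both supports, so no contribution from the corner of the absolute value survives.

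Next I would insert this into $[\mathbb{H}_0,B]=\int dz\,dz'\,\varphi(z-z')\left([\mathbb{H}_0,b_{z,\uparrow}]b_{z',\downarrow}+b_{z,\uparrow}[\mathbb{H}_0,b_{z',\downarrow}]\right)$ and pass to momentum space as in Remark \ref{rem: bog in p space}. On an $\uparrow$-pair with pair momentum $p$ (the variable conjugate to $\varphi$) and hole momentum $l$, the commutator multiplies the corresponding coefficient by $|l|^2-|l+p|^2=-|p|^2-2p\cdot l$, and similarly on the $\downarrow$-pair, whose pair momentum is forced to $-p$ by the $z,z'$-integrations. The $-|p|^2$ contribution depends on $p$ only, so from the two legs it multiplies $\hat\varphi(p)$ by $-2|p|^2$; using $-|p|^2\hat\varphi(p)=\widehat{\Delta\varphi}(p)$ and the evenness of $\varphi$, it transforms back to $2\int dx\,dy\,\Delta\varphi(x-y)\,a_\uparrow(u^r_x)a_\uparrow(\overline{v}^r_x)a_\downarrow(u^r_y)a_\downarrow(\overline{v}^r_y)$, i.e. the non-conjugate half of $\mathbb{T}_1$; adding the h.c.\ from $[\mathbb{H}_0,B^\ast]$ produces the term $-\langle\xi_\lambda,\mathbb{T}_1\xi_\lambda\rangle$. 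The $-2p\cdot l$ contribution, after writing $p_j\hat\varphi(p)=-i\widehat{\partial_j\varphi}(p)$ and identifying the substitution $\hat v^r_\sigma(l)\mapsto l_j\hat v^r_\sigma(l)$ with the leg $a_\sigma(\partial_j\overline{v}^r)$ (up to a factor $i$, so that the two $i$'s cancel), transforms back to $-2\sum_j\int dx\,dy\,\partial_j\varphi(x-y)\left[a_\uparrow(u^r_x)a_\uparrow(\partial_j\overline{v}^r_x)a_\downarrow(u^r_y)a_\downarrow(\overline{v}^r_y)-a_\uparrow(u^r_x)a_\uparrow(\overline{v}^r_x)a_\downarrow(u^r_y)a_\downarrow(\partial_j\overline{v}^r_y)\right]$, the relative sign of the two summands coming from the $\downarrow$-pair carrying momentum $-p$. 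Inserting this (together with its h.c.) into the first display yields exactly $+\mathcal{E}_{\mathbb{H}_0}(\xi_\lambda)$.

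The statement is an exact operator identity, so the difficulty is bookkeeping rather than analytic. The two points needing genuine care are: (i) that the $\mu_\sigma$ terms really cancel, which relies on the disjointness of $\mathrm{supp}\,\hat u^r_\sigma$ and $\mathrm{supp}\,\hat v^r_\sigma$ from each other and from the set $|k|^2=\mu_\sigma$ where $\mathbb{H}_0$ is non-smooth; and (ii) the precise Fourier conventions (the factors of $i$ and the signs), which must be tracked so that $-2p\cdot l$ reproduces exactly the two summands of $\mathcal{E}_{\mathbb{H}_0}$ with the correct relative sign. I would carry out the whole computation in momentum space, converting back only at the very end, to minimise the risk of sign errors.
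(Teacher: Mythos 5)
Your proposal is correct and takes essentially the same route as the paper: both compute $[\mathbb{H}_0,B]$ in momentum space, use $\varepsilon(s+k)+\varepsilon(s)+\varepsilon(s'-k)+\varepsilon(s')=2|k|^2+2k\cdot(s-s')$ (your ``$-(|p|^2+2p\cdot l)$ per leg''), and Fourier-transform the $|k|^2$ part into $\Delta\varphi$ and the $k\cdot(s-s')$ part into $\partial_j\varphi$ with a $\partial_j\overline{v}^r$ leg, and you correctly note that the second bracket should read $a_\uparrow(\overline{v}^r_x)$ (the $a_\downarrow$ in the paper's displayed \eqref{eq: error kin energy} is a typo, fixed at the end of its proof). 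One small caveat: the cancellation of $\mu_\sigma$ and the affineness of $||k|^2-\mu_\sigma|$ on each leg rely only on $\operatorname{supp}\hat u^r_\sigma\subset\{|k|>k_F^\sigma\}$ and $\operatorname{supp}\hat v^r_\sigma\subset\{|k|<k_F^\sigma\}$, which already held for the regularization in \cite{FGHP}; the paper's stronger choice (discarding $k_F\le|k|\le 2k_F$) matters for the $\mathcal{N}_>$ estimates elsewhere, not for this identity, so your remark slightly overstates its role here.
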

\begin{proof} 
  We compute 
  \begin{equation*} 
    \frac{d}{d\lambda}\langle \xi_\lambda, \mathbb{H}_0\xi_\lambda\rangle = - \langle \xi_\lambda, [\mathbb{H}_0,B]\xi_\lambda \rangle + \mathrm{c.c.}
  \end{equation*}
  Recall that 
  \begin{equation*}
    \mathbb{H}_0 = \sum_\sigma\sum_k \varepsilon(k) \hat{a}_{k,\sigma}^\ast\hat{a}_{k,\sigma}, \qquad \varepsilon(k) := ||k|^2 - (k_F^\sigma)^2|
.  \end{equation*}
  It is then convenient to write $B$ in momentum space, we have 
  \begin{equation*} 
    B = \frac{1}{L^3}\sum_k\sum_{s,s^\prime}\hat{\varphi}(k)\hat{u}^{r}(s+k)\hat{u}^r(s^\prime - k)\hat{v}^{r}(s)\hat{v}^r (s^\prime)\,\hat{a}_{k+s,\uparrow}\hat{a}_{s,\uparrow}\hat{a}_{s^\prime -k, \downarrow}\hat{a}_{s^\prime,\downarrow} - \mathrm{h.c.}
  \end{equation*}
We calculate
\[
  [\mathbb{H}_0, B] = \frac{1}{L^3}\sum_\sigma\sum_{k,q}\sum_{s,s^\prime}\varepsilon(k)\hat{\varphi}(q) \hat{u}^{r}(s+q)\hat{u}^r(s^\prime - q)\hat{v}^{r}(s)\hat{v}^r (s^\prime)[\hat{a}^\ast_{k,\sigma} , \hat{a}_{q+s,\uparrow}\hat{a}_{s,\uparrow}\hat{a}_{s^\prime -q, \downarrow}\hat{a}_{s^\prime,\downarrow} ]\hat{a}_{k,\sigma} + \mathrm{h.c.}
\]
We have that 
\begin{equation*}
  [\hat{a}_{k,\sigma}^\ast, \hat{a}_{q+s,\uparrow}\hat{a}_{s,\uparrow}\hat{a}_{s^\prime -q, \downarrow}\hat{a}_{s^\prime,\downarrow} ] = [\hat{a}^\ast_{k,\sigma}, \hat{a}_{q+s,\uparrow}\hat{a}_{s,\uparrow}]\hat{a}_{s^\prime -q, \downarrow}\hat{a}_{s^\prime,\downarrow} +\hat{a}_{q+s,\uparrow}\hat{a}_{s,\uparrow} [\hat{a}^\ast_{k,\sigma}, \hat{a}_{s^\prime -q, \downarrow}\hat{a}_{s^\prime,\downarrow}],
\end{equation*}
and being 
\begin{equation*}\label{eq: comm [a*,b]}
 \sum_{s}  [\hat{a}^\ast_{k,\sigma}, \hat{a}_{s+q,\sigma^\prime}\hat{a}_{s,\sigma^\prime}] = \sum_{s}\delta_{\sigma,\sigma^\prime}\delta_{k,s+q} \hat{a}_{s,\sigma^\prime} - \sum_{s}\delta_{\sigma,\sigma^\prime}\delta_{k,s}\hat{a}_{s+q,\sigma^\prime},
\end{equation*}
we get
\begin{multline*}
[\mathbb{H}_0, B] = -\frac{1}{L^3}\sum_k\sum_{s,s^\prime} (\varepsilon(s+k) +\varepsilon (s) + \varepsilon (s^\prime - k) +\varepsilon(s^\prime))\hat{\varphi}(k)\, (\hat{u}^r(s+k) \hat{u}^r(s^\prime -k)\hat{v}^r(s)\hat{v}^r(s^\prime))\cdot
\\
\cdot\hat{a}_{s+k,\uparrow} \hat{a}_{s,\uparrow} \hat{a}_{s^\prime-k,\downarrow} \hat{a}_{s^\prime,\downarrow} + \mathrm{h.c.}
\end{multline*}
Using that $(s+k),(s^\prime-k) \notin\mathcal{B}_F$ and that $s,s^\prime\in \mathcal{B}_F$, we have
\begin{equation} 
  \varepsilon(s+k) +\varepsilon (s) + \varepsilon (s^\prime - k) +\varepsilon(s^\prime) = 2|k|^2 + 2k\cdot(s-s^\prime).
\end{equation}
We then define
\begin{equation} 
  \mathbb{T}_1 := -\frac{2}{L^3}\sum_k\sum_{s,s^\prime} |k|^2 \hat{\varphi}(k)   (\hat{u}^r(s+k) \hat{u}^r(s^\prime -k)\hat{v}^r(s)\hat{v}^r(s^\prime))\,\hat{a}_{s+k,\uparrow} \hat{a}_{s,\uparrow} \hat{a}_{s^\prime-k,\downarrow} \hat{a}_{s^\prime,\downarrow} + \mathrm{h.c.},
\end{equation}
which yields
\begin{equation*}
  \mathbb{T}_1 = 2\int dxdy\, \Delta\varphi(x-y) a_\uparrow(u^r_x)a_\uparrow(\overline{v}^r_x)a_\downarrow(u^r_y) a_\downarrow(\overline{v}^r_y) + \mathrm{h.c.}
\end{equation*}
We can then write 
\begin{equation*} 
  \frac{d}{d\lambda} \langle\xi_\lambda, \mathbb{H}_0\xi_\lambda\rangle = -\langle\xi_\lambda, \mathbb{T}_1\xi_\lambda\rangle + \mathcal{E}_{\mathbb{H}_0}(\xi_\lambda) ,
\end{equation*}
where
\begin{multline*}
  \mathcal{E}_{\mathbb{H}_0}(\xi_\lambda) :=  -\frac{2}{L^3}\sum_k\sum_{s,s^\prime} (k\hat{\varphi}(k)\cdot (s-s^\prime)\hat{v}^r(s)\hat{v}^r(s^\prime))   \hat{u}^r(s+k) \hat{u}^r(s^\prime -k)\cdot
  \\
  \cdot\langle \xi_\lambda,\hat{a}_{s+k,\uparrow} \hat{a}_{s,\uparrow} \hat{a}_{s^\prime-k,\downarrow} \hat{a}_{s^\prime,\downarrow}\xi_\lambda\rangle + \mathrm{c.c.},
\end{multline*}
which can also be written as
\begin{multline*}
  \mathcal{E}_{\mathbb{H}_0}(\xi_\lambda) = 2\sum_{j=1}^3\int dxdy\, \partial_j\varphi(x-y) \cdot 
  \\
  \cdot \langle \xi_\lambda, (a_\uparrow(u^r_x)a_\uparrow(\partial_j\overline{v}^r_x)a_\downarrow(u^r_y) a_\downarrow(\overline{v}^r_y) - a_\uparrow(u^r_x)a_\uparrow(\overline{v}^r_x)a_\downarrow(u^r_y) a_\downarrow(\partial_j\overline{v}^r_y))\xi_\lambda\rangle + \mathrm{c.c.}\nonumber
\end{multline*}
\end{proof}
In the next lemma we bound the error term $\mathcal{E}_{\mathbb{H}_0}(\xi_\lambda)$ in \eqref{eq: error kin energy}.
\begin{lemma}[Estimate of $\mathcal{E}_{\mathbb{H}_0}(\xi_\lambda)$]\label{lem: error kinetic energy} Let $\lambda\in [0,1]$ and let $\xi_\lambda$ be a state as in \eqref{eq: def xi lambda}. Let $\mathcal{E}_{\mathbb{H}_0}(\xi_\lambda)$ be as in \eqref{eq: error kin energy}. Let $0<\epsilon\leq 1$. Under the same assumptions of Theorem \ref{thm: optimal up bd} and Theorem \ref{thm: lower bound}, for $0 <\delta <1$, it holds 
\begin{eqnarray}\label{eq: est err kin}
  |\mathcal{E}_{\mathbb{H}_0}(\xi_\lambda)| &\leq& \delta\langle\xi_1, \mathbb{H}_0\xi_1\rangle + CL^3\rho^{\frac{7}{3}}  + C\rho^{\frac{7}{6} -\frac{\epsilon}{3}}\langle \xi_\lambda, \mathcal{N}_> \xi_\lambda\rangle + C\rho^{\frac{2}{3}-\frac{\epsilon}{3}}\langle \xi_\lambda, \mathbb{H}_0 \xi_\lambda\rangle  + C\rho\langle \xi_\lambda, \mathcal{N}\xi_\lambda\rangle\nonumber
  \\
  && + C\rho^{\frac{5}{6} -\frac{\epsilon}{3}}\|\mathcal{N}_>^{\frac{1}{2}}\xi_\lambda\|\big(\|\widetilde{\mathbb{Q}}_1^{\frac{1}{2}}\xi_\lambda\| + \|\mathbb{H}_0^{\frac{1}{2}}\xi_\lambda\|\big).
\end{eqnarray}
\end{lemma}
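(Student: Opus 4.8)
The plan is to estimate the two families of terms in $\mathcal{E}_{\mathbb{H}_0}(\xi_\lambda)$ separately, exploiting the operator bounds from Lemma~\ref{lem: bound b phi} together with the a priori estimates. Recall from \eqref{eq: error kin energy} that
\[
  \mathcal{E}_{\mathbb{H}_0}(\xi_\lambda) = 2\sum_{j=1}^3 \int dxdy\, \partial_j\varphi(x-y)\big[ a_\uparrow(u^r_x)a_\uparrow(\partial_j\overline{v}^r_x)a_\downarrow(u^r_y)a_\downarrow(\overline{v}^r_y) - a_\uparrow(u^r_x)a_\uparrow(\overline{v}^r_x)a_\downarrow(u^r_y)a_\downarrow(\partial_j\overline{v}^r_y)\big] + \mathrm{h.c.}
\]
The first step is to perform the $y$-integration to form the smeared operators $\widetilde b_\downarrow(\partial_j\varphi_x) = \int dy\, \partial_j\varphi(x-y)\, a_\downarrow(u^r_y)a_\downarrow(\partial_j\overline{v}^r_y)$ in one term and $b_\downarrow(\partial_j\varphi_x)$ (after moving $a_\downarrow(\partial_j\overline{v}^r_y)$ to the right of $a_\downarrow(u^r_y)$, up to a negligible commutator or simply by keeping the same ordering used in Lemma~\ref{lem: bound b phi}) in the other. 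By Lemma~\ref{lem: bound b phi} these carry norms $\|\widetilde b_\downarrow(\partial_j\varphi_x)\| \leq C\rho^{5/6}$ and $\|b_\downarrow(\partial_j\varphi_x)\|\leq C\rho^{1/2}$. The remaining two fermionic factors in $x$ are $a_\uparrow(u^r_x)$ and $a_\uparrow(\partial_j\overline{v}^r_x)$ (resp. $a_\uparrow(\overline{v}^r_x)$), so after a Cauchy--Schwarz in $x$ we are left with $\int dx\, \|a_\uparrow(u^r_x)\xi_\lambda\|\,\|a_\uparrow(\partial_j\overline{v}^r_x)\xi_\lambda\|$ times the relevant operator norm, and similarly with $\overline{v}^r_x$ in place of $\partial_j\overline{v}^r_x$.

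The second step is to control these integrated quantities. For $\int dx\,\|a_\uparrow(u^r_x)\xi_\lambda\|^2$ I would use Remark~\ref{rem: N_>} to bound it by $C\langle\xi_\lambda,\mathcal{N}_>\xi_\lambda\rangle$. For $\int dx\,\|a_\uparrow(\partial_j\overline{v}^r_x)\xi_\lambda\|^2$, the key point is that $\partial_j\hat v^r$ is supported in the thin shell $k_F^\sigma - \rho^\alpha_\sigma \leq |k| \leq k_F^\sigma$ with $\|\partial_j\hat v^r\|_\infty \lesssim \rho^{-\alpha+1/3}$ there (coming from the derivative of a cutoff that interpolates on scale $\rho^\alpha$), so one estimates $\int dx\,\|a_\uparrow(\partial_j\overline{v}^r_x)\xi_\lambda\|^2 \leq C\rho^{-2\alpha+2/3}\sum_{k_F-\rho^\alpha\leq|k|\leq k_F}\langle \xi_\lambda, \hat a^\ast_{k,\uparrow}\hat a_{k,\uparrow}\xi_\lambda\rangle$; on that shell $\big||k|^2-\mu_\sigma\big| \leq C\rho^{\alpha+1/3}$, which is \emph{too small} to directly absorb into $\mathbb{H}_0$, so instead one bounds the occupation-number sum crudely by a combination of $\langle\xi_1,\mathbb{H}_0\xi_1\rangle$ (via $\mathcal{N}_>$-type propagation, Proposition~\ref{lem: bound N>} and Corollary~\ref{cor: N tilde}) and a volume term $L^3\rho^{5/3}$, and by $\langle\xi_\lambda,\mathbb{H}_0\xi_\lambda\rangle$ itself with a small prefactor. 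The term with $a_\uparrow(\overline{v}^r_x)$ instead uses $\|a_\uparrow(\overline{v}^r_x)\|\leq \|v^r_x\|_2 \leq C\rho^{1/2}$ from Proposition~\ref{pro: L1, L2 v} and hence reduces to $\int dx\,\|a_\uparrow(u^r_x)\xi_\lambda\|\leq CL^{3/2}\rho^{1/2}\langle\xi_\lambda,\mathcal{N}_>\xi_\lambda\rangle^{1/2}$.

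The third step is the bookkeeping: combining the norm $\rho^{5/6}$ (resp. $\rho^{1/2}$) of the $b$-operator, the factor $\rho^{1/2}$ from $v^r_x$, and the shell-counting bounds, one obtains terms of the schematic shape $\rho^{1/2}\|\mathcal{N}_>^{1/2}\xi_\lambda\|\cdot\rho^{5/6-\epsilon/3}\big(\|\widetilde{\mathbb{Q}}_1^{1/2}\xi_\lambda\|+\|\mathbb{H}_0^{1/2}\xi_\lambda\|\big)$ (using $\alpha = 1/3+\epsilon/3$, so $-2\alpha+2/3 = -2\epsilon/3$ and the half-power is $-\epsilon/3$), plus $\rho^{2/3-\epsilon/3}\langle\xi_\lambda,\mathbb{H}_0\xi_\lambda\rangle$, plus $\rho^{7/6-\epsilon/3}\langle\xi_\lambda,\mathcal{N}_>\xi_\lambda\rangle$, plus $\rho\langle\xi_\lambda,\mathcal{N}\xi_\lambda\rangle$ from lower-order pieces; finally, whenever a $\langle\xi_1,\mathbb{H}_0\xi_1\rangle$ appears through the propagation of $\mathcal{N}_>$ (using $\langle\xi_\lambda,\mathcal{N}_>\xi_\lambda\rangle \leq C\rho^{-2/3}\langle\xi_1,\mathbb{H}_0\xi_1\rangle + CL^3\rho^{5/3}$) one collects it with a small constant $\delta$ via Young's inequality, and the leftover volume contribution is of size $L^3\rho^{7/3}$ after optimizing powers. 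The main obstacle I expect is precisely this last balancing for the $\partial_j\overline{v}^r$ term: on the shell of width $\rho^\alpha$ around the Fermi surface the kinetic gap $||k|^2-\mu_\sigma|$ is only $\sim\rho^{\alpha+1/3}$, which does not let $\mathbb{H}_0$ absorb the $\rho^{-2\alpha+2/3}$ loss coming from $\|\partial_j\hat v^r\|_\infty^2$; one must instead route that occupation number through the crude bound $\sum_{\text{shell}}\hat a^\ast\hat a \leq C(\rho^{-2/3}\langle\xi_1,\mathbb{H}_0\xi_1\rangle + L^3\rho^{5/3})$ and check that the $\epsilon$-dependent powers still close, forcing the $\langle\xi_1,\mathbb{H}_0\xi_1\rangle$ (rather than $\langle\xi_\lambda,\cdot\rangle$) structure of the right-hand side of \eqref{eq: est err kin} and the appearance of $\|\widetilde{\mathbb{Q}}_1^{1/2}\xi_\lambda\|$ as a partner of $\|\mathbb{H}_0^{1/2}\xi_\lambda\|$.
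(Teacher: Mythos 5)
Your proposal takes a fundamentally different, and unfortunately insufficient, route from the paper's proof, and it also contains a key misreading of notation.

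First, the notation: throughout this paper $a_\sigma(\partial_j\overline{v}^r_x)$ is built from the \emph{spatial} derivative of the kernel $v^r_\sigma(\cdot\,;x)$, i.e., in Fourier it is multiplication by $-ik_j\hat v^r_\sigma(k)$, which is supported on the whole Fermi ball $|k|\leq k_F^\sigma$ (see the explicit formula for $a(\partial_j\overline v^r_{z'})$ in the proof of Lemma~\ref{lem: bound b phi}, and the bound $\|a(\partial_j^n\overline v^r_{z'})\|\leq C\rho^{1/2+n/3}$ in \eqref{eq: L2 norm lemma b phi}). It is \emph{not} the derivative $\partial_j\hat v^r(k)$ of the momentum-space cutoff, so the ``thin shell $k_F-\rho^\alpha\leq|k|\leq k_F$'' structure you invoke does not exist here. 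With the correct reading, $\|a_\uparrow(\partial_j\overline v^r_x)\|\leq C\rho^{5/6}$ uniformly, and there is no $\rho^\alpha$-shell counting problem and no $\rho^{-2\alpha+2/3}$ loss.

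Second, and more importantly, the direct Cauchy--Schwarz approach you describe --- pair $a_\downarrow(u^r_y)a_\downarrow(\overline v^r_y)$ with $\partial_j\varphi$ to form $b_\downarrow(\partial_j\varphi_x)$, use $\|b_\downarrow(\partial_j\varphi_x)\|\leq C\rho^{1/2}$ and $\|a_\uparrow(\partial_j\overline v^r_x)\|\leq C\rho^{5/6}$, then integrate in $x$ against $\|a_\uparrow(u^r_x)\xi_\lambda\|$ --- produces only the ``non-optimal bound'' stated explicitly in \eqref{eq: err kin}, namely $|\mathcal{E}_{\mathbb{H}_0}(\xi_\lambda)|\leq CL^{3/2}\rho^{4/3}\|\mathcal{N}_>^{1/2}\xi_\lambda\|\leq CL^3\rho^2 + C\langle\xi_\lambda,\mathbb{H}_0\xi_\lambda\rangle$. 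This is off by a full factor $\rho^{1/3}$ from the target $CL^3\rho^{7/3}$, and it cannot be repaired merely by routing $\mathcal{N}_>$ through Proposition~\ref{lem: bound N>}: even with $\langle\xi_\lambda,\mathcal{N}_>\xi_\lambda\rangle\leq C\rho^{-2/3}\langle\xi_1,\mathbb{H}_0\xi_1\rangle + CL^3\rho^{5/3}$, you end up with $\rho L^{3/2}\|\mathbb{H}_0^{1/2}\xi_1\| + L^3\rho^{13/6}$, and Young's inequality then still leaves an $L^3\rho^2$ remainder. What is missing is the mechanism that upgrades the exponent. In the paper this is done via a Duhamel expansion: write $\langle\xi_\lambda,\mathbb{A}\xi_\lambda\rangle = \langle\xi_1,\mathbb{A}\xi_1\rangle - \int_1^\lambda\langle\xi_{\lambda'},[\mathbb{A},B]\xi_{\lambda'}\rangle\,d\lambda'$. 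In the boundary term an integration by parts moves the derivative off $\varphi$ onto $\overline v^r$ and $u^r$, producing $\|\Delta\varphi\|_1=\mathcal{O}(1)$ and the operator $b_\downarrow(\varphi_x)$ with $\|b_\downarrow(\varphi_x)\|\leq C\rho^{1/3}$, which yields $CL^3\rho^{7/3}+\delta\langle\xi_1,\mathbb{H}_0\xi_1\rangle$. In the commutator term the normal ordering produces kernels like $\Delta\varphi\cdot\omega^r$ where the scattering equation $2\Delta\varphi_\infty + V_\infty(1-\varphi_\infty) = \mathcal{E}^\infty_{\{\varphi_0,\chi\}}$ is used to trade $\Delta\varphi$ for the much smaller renormalized potential (controlled via Lemma~\ref{lem: b error scatt}). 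Both the Duhamel step and the use of the scattering equation are essential, and neither appears in your proposal.
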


\begin{remark}[Comparison with \cite{FGHP}]\label{rem: comparison kinetic energy} In Lemma \ref{lem: error kinetic energy} we get refined estimates with respect to the ones in \cite[Proposition 5.5]{FGHP}. More precisely, as a consequence of Lemma \ref{lem: error kinetic energy}, we will prove that, up to a positive contribution,\footnote{Note that in the upper bound we do not have this term thanks to the choice of the trial state.} i.e., $\langle \xi_1, \mathbb{H}_0 \xi_1\rangle$, $\mathcal{E}_{\mathbb{H}_0}(\xi_\lambda) = \mathcal{O}(L^3\rho^{7/3})$. Note that in the proof of Lemma \ref{lem: error kinetic energy} we explicitly use that $\varphi_\infty$ satisfies $2\Delta\varphi_\infty + V(1-\varphi_\infty) = \mathcal{E}_\infty(\varphi_0, \chi_{\sqrt[3]\rho})$ in $\mathbb{R}^3$ (see \eqref{eq: equation for tilde phi}): this plays a crucial role. For the sake of completeness, we recall that in \cite{FGHP}, $\mathcal{E}_{\mathbb{H}_0}(\xi_\lambda)$ was bounded as
\[
  -CL^3 \rho^{2+\frac{1}{9}}\leq \mathcal{E}_{\mathbb{H}_0}(\xi_\lambda)\leq CL^3\rho^{2+ \frac{2}{9}}.
\]
\end{remark}
\begin{proof}[Proof of Lemma \ref{lem: error kinetic energy}] Recall that 
\begin{multline*}
  \mathcal{E}_{\mathbb{H}_0}(\xi_\lambda) = 2\sum_{j=1}^3\int dxdy\, \partial_j \varphi(x-y) \cdot 
  \\
  \cdot \langle \xi_\lambda, (a_\uparrow(u^r_x)a_\uparrow(\partial_j\overline{v}^r_x)a_\downarrow(u^r_y) a_\downarrow(\overline{v}^r_y) - a_\uparrow(u^r_x)a_\uparrow(\overline{v}^r_x)a_\downarrow(u^r_y) a_\downarrow(\partial_j\overline{v}^r_y))\xi_\lambda\rangle + \mathrm{c.c.}\nonumber
\end{multline*}
Both the two contributions  above can be estimated in the same way; we focus on the first one.
We first define 
\begin{equation}\label{eq: def A}
\mathbb{A}:=  2\sum_{j=1}^3\int\, dxdy\, \partial_j \varphi(x-y)\, a_\uparrow^\ast(u^r_x)a_\uparrow^\ast(\partial_j\overline{v}^r_x)a_\downarrow^\ast(u^r_y) a_\downarrow^\ast(\overline{v}^r_y).
\end{equation}
It is convenient to write:
\begin{equation}\label{eq: prop A}
  \langle \xi_\lambda, \mathbb{A} \xi_\lambda \rangle  = \langle \xi_1, \mathbb{A}\xi_1\rangle  + \int_1^{\lambda} d\lambda^\prime \frac{d}{d\lambda}\langle\xi_{\lambda^\prime}, \mathbb{A}\xi_{\lambda^\prime}\rangle = \langle \xi_1, \mathbb{A}\xi_1\rangle  - \int_1^{\lambda} d\lambda^\prime\langle\xi_{\lambda^\prime}, [\mathbb{A}, B]\xi_{\lambda^\prime}\rangle.
\end{equation}
Before calculating the commutator, we estimate $\langle\xi_1, \mathbb{A}\xi_1\rangle$. 
We can rewrite $\mathbb{A}$ as follows: 
\begin{eqnarray}\label{eq: 1st int parts err H0}
 \mathbb{A} &=& 2\sum_{j=1}^3\int dxdy\, \partial_j \varphi(x-y)\, a_\uparrow^\ast(u^r_x)a_\uparrow^\ast(\partial_j\overline{v}^r_x)a_\downarrow^\ast(u^r_y) a_\downarrow^\ast(\overline{v}^r_y)\nonumber
 \\
 &=& -2\int dxdy\, \varphi(x-y)\, a_\uparrow^\ast (u^r_x) a_\uparrow^\ast(\Delta\overline{v}^r_x)a_\downarrow^\ast(u^r_y) a_\downarrow^\ast(\overline{v}^r_y)\nonumber \\
  && + 2\sum_{j=1}^3\int dxdy\, \varphi(x-y) \, a_\uparrow^\ast(\partial_j u^r_x) a_\uparrow^\ast(\partial_j\overline{v}^r_x)a_\downarrow^\ast(u^r_y) a_\downarrow^\ast(\overline{v}^r_y) \nonumber
  \\
  &\equiv&\mathbb{A}_1 + \mathbb{A}_2 .
\end{eqnarray}
We then do Cauchy-Schwarz to get, for some\footnote{Here and in the following we write $\delta$ to denote a positive constant strictly smaller than $1$. This choice will be convenient later.} $0<\delta<1$,
\begin{eqnarray}\label{eq: IA}
  |\langle\xi_1, \mathbb{A}_1\xi_1\rangle| &\leq& 2\int dx\, \|b_\downarrow(\varphi_x)a_\uparrow(\Delta\overline{v}^r_x)a_\uparrow(u^r_x)\xi_1\| \leq CL^{\frac{3}{2}}\rho^{\frac{1}{3}+ \frac{1}{2} + \frac{2}{3}} \|\mathcal{N}^{\frac{1}{2}}_>\xi_1\| 
  \\
  &\leq&  CL^{\frac{3}{2}}\rho^{\frac{3}{2} -\frac{1}{3}}\|\mathbb{H}_0^{\frac{1}{2}}\xi_1\| \leq  CL^3\rho^{\frac{7}{3}} + \delta\langle \xi_1, \mathbb{H}_0\xi_1, \rangle\nonumber,
\end{eqnarray}
where we used Lemma \ref{lem: bound b phi} and \eqref{eq: N> lambda}. We now estimate $\langle\xi_1, \mathbb{A}_2 \xi_1\rangle$. Again from Lemma \ref{lem: bound b phi}, we have 
\begin{equation*}
  |\langle \xi_1, \mathbb{A}_2\xi_1\rangle | \leq C\sum_{j=1}^3\int dx\, \|b_\downarrow(\varphi_x) a_\uparrow(\partial_j\overline{v}^r_x)a_\uparrow(\partial_j u^r_x)\xi_1\| \leq C\rho^{\frac{2}{3} + \frac{1}{2}}\sum_{j=1}^3\int \, dx\, \|a_\uparrow(\partial_j u^r_x)\xi_1\|,
\end{equation*}
where we also used that $\|a_\uparrow(\partial_j\overline{v}^r_x)\|\leq C\rho^{1/2 + 1/3}$.
Now we can proceed as follows
\begin{equation}\label{eq: est grad u kin}
  \int dx \, \|a_\uparrow(\partial_j u^{r}_x)\xi_1\|^2 \leq \sum_{|k| \geq 2k_F}|k|^2\| \hat{a}_{k,\uparrow}\xi_1 \|^2 \leq \langle \xi_1, \mathbb{H}_0 \xi_1 \rangle + C\rho^{\frac{2}{3}}\langle \xi_1 \mathcal{N}_> \xi_1\rangle\leq C\langle\xi_1,\mathbb{H}_0 \xi_1\rangle,
\end{equation}
where we used again \eqref{eq: N> lambda}. Then, by Cauchy-Schwarz, we get
\begin{equation}\label{eq: A2 xi1}
  |\langle \xi_1, \mathbb{A}_2\xi_1\rangle| \leq CL^{\frac{3}{2}}\rho^{\frac{2}{3} + \frac{1}{2}}\|\mathbb{H}_0^{\frac{1}{2}}\xi_1\| \leq C L^3 \rho^{\frac{7}{3}} + \delta\langle \xi_1,\mathbb{H}_0\xi_1\rangle,
\end{equation}
for some $0<\delta <1$. We now bound the second term in the right side of \eqref{eq: prop A}. We then have to calculate the following commutator
\begin{multline*}
   [\mathbb{A}, B - B^\ast] = \sum_{j=1}^3\int dxdydzdz^\prime \partial_j\varphi(x-y)\varphi(z-z^\prime) \, 
   \\
   \cdot [a_\uparrow^\ast( u^r_x) a_\downarrow^\ast(u^r_y)a_\downarrow^\ast(\overline{v}^r_y)a_\uparrow^\ast(\partial_j\overline{v}^r_x) ,a_\uparrow(u^r_z)a_\uparrow(\overline{v}^r_z)a_\downarrow(u^r_{z^\prime})a_\downarrow(\overline{v}^r_{z^\prime})] + \mathrm{h.c.}
\end{multline*}
We rewerite the commutator above as
\begin{eqnarray}\label{eq: comm B A}
  &&[a_\uparrow^\ast(u^r_x) a_\downarrow^\ast(u^r_y)a_\downarrow^\ast(\overline{v}^r_y)a_\uparrow^\ast(\partial_j\overline{v}^r_x) ,a_\uparrow(u^r_z)a_\uparrow(\overline{v}^r_z)a_\downarrow(u^r_{z^\prime})a_\downarrow(\overline{v}^r_{z^\prime})] 
  \\
  &&\qquad=  -a_\uparrow^\ast( u^r_x) a_\downarrow^\ast(u^r_y)[a_\downarrow^\ast(\overline{v}^r_y)a_\uparrow^\ast(\partial_j\overline{v}^r_x),a_\uparrow(\overline{v}^r_z)a_\downarrow(\overline{v}^r_{z^\prime})]a_\uparrow(u^r_z)a_\downarrow(u^r_{z^\prime})\nonumber 
  \\
  && \qquad \quad- a_\uparrow(\overline{v}^r_z)a_\downarrow(\overline{v}^r_{z^\prime})[a_\uparrow^\ast(u^r_x) a_\downarrow^\ast(u^r_y), a_\uparrow(u^r_z)a_\downarrow(u^r_{z^\prime})]a^\ast_\downarrow(\overline{v}^r_y)a^\ast_\uparrow(\partial_j\overline{v}^r_x)\nonumber.
\end{eqnarray}
In what follows, we estimate differently the error terms. We start by the first line in the right side of \eqref{eq: comm B A}. More precisely, we compute
\begin{eqnarray}
  [a_\downarrow^\ast(\overline{v}^r_y)a_\uparrow^\ast(\partial_j\overline{v}^r_x),a_\uparrow(\overline{v}^r_z)a_\downarrow(\overline{v}^r_{z^\prime})] &=& \partial_j\omega^r_\uparrow(x;z)a^\ast_\downarrow(\overline{v}^r_y)  a_\downarrow (\overline{v}^r_{z^\prime}) - a_\uparrow(\overline{v}^r_z)\omega^r_\downarrow(y;z^\prime)a^\ast_\uparrow(\partial_j\overline{v}^r_x)
  \\
  &=& -  \partial_j\omega^r_\uparrow(x;z)\omega^r_\downarrow (y;z^\prime) + \partial_j\omega^r_\uparrow(x;z)a^\ast_\downarrow(\overline{v}^r_y) a_\downarrow (\overline{v}^r_{z^\prime})\nonumber 
  \\
  && + \omega^r_\downarrow(y;z^\prime) a^\ast_\uparrow(\partial_j\overline{v}^r_x) a_\uparrow(\overline{v}^r_z)\nonumber.
\end{eqnarray}
Correspondingly we have to estimate three different error terms:
\begin{equation}
  \mathrm{I}_a :=  \sum_{j=1}^3\int dxdydzdz^\prime\, \partial_j\varphi(x-y)\varphi(z-z^\prime)\partial_j\omega^r_\uparrow(x;z) \omega^r_\downarrow (y;z^\prime) \langle\xi_\lambda, a^\ast_\uparrow(u^r_x)a^\ast_\downarrow(u^r_y)  a_\uparrow(u^r_z)a_\downarrow(u^r_{z^\prime})\xi_\lambda\rangle,
\end{equation}
\begin{equation}\label{eq: Ib}
  \mathrm{I}_b := -\sum_{j=1}^3\int dxdydzdz^\prime\, \partial_j \varphi(x-y)\varphi(z-z^\prime)\partial_j\omega^r_\uparrow(x;z) \langle\xi_\lambda, a^\ast_\uparrow( u^r_x)a^\ast_\downarrow(u^r_y) a_\downarrow^\ast(\overline{v}^r_y)a_\downarrow(\overline{v}^r_{z^\prime}) a_\uparrow(u^r_z)a_\downarrow(u^r_{z^\prime})\xi_\lambda\rangle.
\end{equation}
\begin{equation}\label{eq: Ic}
  \mathrm{I}_c := -\sum_{j=1}^3\int dxdydzdz^\prime\, \partial_j \varphi(x-y)\varphi(z-z^\prime)\omega^r_\downarrow(y;z^\prime) \langle\xi_\lambda, a^\ast_\uparrow(u^r_x)a^\ast_\downarrow(u^r_y)a^\ast_\uparrow(\partial_j\overline{v}^r_x) a_\uparrow(\overline{v}^r_z) a_\uparrow(u^r_z)a_\downarrow(u^r_{z^\prime})\xi_\lambda\rangle.
\end{equation}
We start by estimating $\mathrm{I}_a$. To do that, it is convenient to use the following bound for all $j=1,2,3$:
\begin{equation}\label{eq: est oper err H0}
  \left\|\int dxdz\, \partial_j\varphi (x-y)\varphi(z-z^\prime) \partial_j\omega^r_\uparrow(z;x) a^\ast_\uparrow(u^r_x) a_\uparrow(u^r_{z})\right\| \leq C\rho^{\frac{7}{6}}.
\end{equation}
To get the estimate \eqref{eq: est oper err H0}, one can proceed similarly as in Lemma \ref{lem: bound b phi}. In particular, we can rewrite 
\[
  \partial_j\omega^r_\uparrow(x;z) = -\{a^\ast_\uparrow(\overline{v}^r_x), a_\uparrow(\partial_j \overline{v}^r_z)\} = -a^\ast_\uparrow(\overline{v}^r_x) a_\uparrow(\partial_j \overline{v}^r_z) - a_\uparrow(\partial_j \overline{v}^r_z)a^\ast_\uparrow(\overline{v}^r_x),
\]
which implies that 
\begin{multline}
   \int dxdz \,\partial_j\varphi(x-y)\varphi(z-z^\prime) \partial_j\omega^r_\uparrow(x;z) a^\ast_\uparrow(u^r_x) a_\uparrow(u^r_{z}) 
   \\
   = b^\ast_\uparrow(\partial_j\varphi_y)\widetilde{b}_\uparrow(\varphi_{z^\prime}) - \widetilde{b}_\uparrow(\varphi_{z^\prime}) b^\ast_\uparrow(\partial_j\varphi_y) -   \int dxdz\, \partial_j\varphi(x-y)\varphi(z-z^\prime)(u^r)^2_\uparrow(z;x) a_\uparrow(\partial_j\overline{v}^r_z)a_\uparrow^\ast(\overline{v}^r_x),
\end{multline}
where $(u^r)^2_\uparrow(z;x)$ is defined as in Remark \ref{rem: u2 and tilde omega}.
Using then Lemma \ref{lem: bound b phi} (to bound $\|\widetilde{b}_\uparrow(\varphi_{z^\prime})\|\leq C\rho^{2/3}$ and $\|b^\ast_\uparrow(\partial_j\varphi_y)\|\leq C\rho^{1/2}$), together with Lemma \ref{lem: bound phi} and Proposition \ref{pro: L1, L2 v}, we find 
\begin{multline}
  \left\|\int dxdz\, \partial_j\varphi(x-y)\varphi(z-z^\prime) \partial_j\omega^r_\uparrow(x;z) a^\ast_\uparrow(u^r_x) a_\uparrow(u^r_{z}) \right\| 
  \\
  \leq 2\|\widetilde{b}_\uparrow(\varphi_{z^\prime})\|\|b^\ast_\uparrow(\partial_j\varphi_y)\| + \int dxdz\, |\partial_j\varphi(x-y)||\varphi(z-z^\prime)| |(u^r)^2_\uparrow(z;x)|\|a(\partial_j \overline{v}^r_x)\|\|a^\ast_\uparrow (\overline{v}^r_x)\|
  \\ \leq C\rho^{\frac{7}{6}}+ C\rho^{1+\frac{1}{3}}\left(\int dxdz\, |\nabla\varphi(x-y)|^2 |(u^r)^2_\uparrow(z;x)|\right)^{\frac{1}{2}}\left(\int dxdz\, |\varphi(z-z^\prime)|^2 |(u^r)^2_\uparrow(z;x)|\right)^{\frac{1}{2}}
  \\
  \leq C\rho^{\frac{7}{6}} + C\rho^{\frac{4}{3}}\|\nabla\varphi\|_2 \|\varphi\|_2 \leq C\rho^{\frac{7}{6}},
\end{multline}
where we also used that $\|a_\uparrow(\partial_j \overline{v}^r_z)\|\leq \|\partial_j \overline{v}^r_z\|_2 \leq C\rho^{1/2 + 1/3}$, $\|a_\uparrow^\ast(\overline{v}^r_x)\|\leq \|\overline{v}^r_x\|_2 \leq \rho^{1/2}$ and that $\|(u^r)^2_{x,\sigma}\|_1\leq C$ (see Remark \ref{rem: u2 and tilde omega}).
Then we have, 
\begin{equation}
  |\mathrm{I}_a | \leq C\rho^{\frac{7}{6}} \int dydz^\prime\, |\omega^r_\downarrow(y;z^\prime)| \|a_\downarrow(u^r_y)\xi_\lambda\|\|a_\downarrow (u^r_{z^\prime})\xi_\lambda\| \leq C\rho^{\frac{7}{6} -\frac{\epsilon}{3}}\langle\xi_\lambda, \mathcal{N}_>\xi_\lambda\rangle\nonumber
\end{equation}
where we also used \eqref{eq: est N>} to compare $\mathcal{N}_>$ with $\mathbb{H}_0$ and Proposition \ref{pro: L1, L2 v} to bound $\|\omega^r_{x,\sigma}\|_1\leq C\rho^{-\epsilon/3}$. We now estimate $\mathrm{I}_b$. We can rewrite $\mathrm{I}_b$ defined in \eqref{eq: Ib} as
\begin{eqnarray}
  \mathrm{I}_b \hspace{-0.2cm}&=&\hspace{-0.2cm} \int dxdydzdz^\prime\, \Delta\varphi(x-y)\varphi(z-z^\prime)\omega^r_\uparrow(x;z)\langle\xi_\lambda, a^\ast_\uparrow( u^r_x)a^\ast_\downarrow(u^r_y) a_\downarrow^\ast(\overline{v}^r_y)a_\downarrow(\overline{v}^r_{z^\prime}) a_\uparrow(u^r_z)a_\downarrow(u^r_{z^\prime})\xi_\lambda\rangle\nonumber
  \\
  &&\hspace{-0.3cm} - \sum_{j=1}^3\int dxdydzdz^\prime\, \partial_j\varphi(x-y)\varphi(z-z^\prime)\omega^r_\uparrow(x;z)\langle\xi_\lambda, a^\ast_\uparrow( \partial_j u^r_x)a^\ast_\downarrow(u^r_y) a_\downarrow^\ast(\overline{v}^r_y)a_\downarrow(\overline{v}^r_{z^\prime}) a_\uparrow(u^r_z)a_\downarrow(u^r_{z^\prime})\xi_\lambda\rangle \nonumber
  \\
  &\equiv& \mathrm{I}_{b;1} + \mathrm{I}_{b;2}.
\end{eqnarray}
Note that to bound $\mathrm{I}_{b;1}$ is important to use the scattering equation in \eqref{eq: equation for tilde phi}.
We first take into account the error term $\mathrm{I}_{b;1}$. To do that, we can use the periodicity of all the quantities we are integrating. Using then that for each $x\in \Lambda_L$, for $L$ large enough, $\Delta\varphi(x)= \Delta\varphi_\infty(x)$, $V(x) = V_\infty(x)$, $\varphi(x) = \varphi_\infty(x)$ and $\mathcal{E}_{\varphi_0, \chi_{\sqrt[3]\rho}}(x) = \mathcal{E}^\infty_{\varphi_0, \chi_{\sqrt[3]\rho}}(x)$, we get 
\begin{multline*}
  \mathrm{I}_{b;1} = -\int dxdydz\, \Delta\varphi(x-y) \omega^r_\uparrow(x;z) \langle a_\downarrow (\overline{v}^r_y) a_\uparrow(u^r_x)a_\downarrow(u^r_y)\xi_\lambda, b_\downarrow(\varphi_z)a_\uparrow(u^r_z)\xi_\lambda\rangle
  \\
  = -\frac{1}{2}\int dxdydz\, (\mathcal{E}_{\{\varphi_0,\chi_{\sqrt[3]\rho}\}} - V(1-\varphi))(x-y) \omega^r_\uparrow(x;z)\langle a_\downarrow (\overline{v}^r_y) a_\uparrow(u^r_x)a_\downarrow(u^r_y)\xi_\lambda, b_\downarrow(\varphi_z)a_\uparrow(u^r_z)\xi_\lambda\rangle 
\\
=: \mathrm{I}_{b;1;1} + \mathrm{I}_{b;1;2}.
\end{multline*}
We now bound $\mathrm{I}_{b;1;1}$. We have 
\begin{equation}
  \mathrm{I}_{b;1;1} = -\frac{1}{2}\int dxdz \,\omega^r_\uparrow(x;z)\langle b_\downarrow((\mathcal{E}_{\varphi_0, \chi_{\sqrt[3]\rho}})_x)a_\uparrow(u^r_x)\xi_\lambda, b_\downarrow(\varphi_z)a_\uparrow(u^r_z)\xi_\lambda\rangle.
\end{equation}
Thus, from Lemma \ref{lem: bound b phi} and Lemma \ref{lem: b error scatt}, we get 
\begin{eqnarray}
  |\mathrm{I}_{b;1;1}| &\leq& C\int dxdz |\omega^r_\uparrow(x;z)| \|b_\downarrow((\mathcal{E}_{\varphi_0, \chi_{\sqrt[3]\rho}})_x)\|\|b_\downarrow(\varphi_z)\| \|a_\uparrow(u^r_x)\xi_\lambda\|\|a_\uparrow(u^r_z)\xi_\lambda\|
\\
  &\leq& C\rho^{1+ \frac{1}{3}-\frac{\epsilon}{3}} \langle \xi_\lambda, \mathcal{N}_> \xi_\lambda\rangle \leq C\rho^{\frac{2}{3} -\frac{\epsilon}{3}}\langle \xi_\lambda, \mathbb{H}_0 \xi_\lambda\rangle,
\end{eqnarray}
where we used that $\mathcal{N}_> \leq C\rho^{-2/3}\mathbb{H}_0$ and Proposition \ref{pro: L1, L2 v}. We now estimate the contribution coming form $\mathrm{I}_{b;1;2}$.
Before proceeding further, we replace in the term $\mathrm{I}_{b;1;2}$ the operators $a_\uparrow(u^r_x)$, $a_\downarrow(u^r_y)$ with $a_\uparrow (u_x)$, $a_\downarrow(u_y)$. More precisely, we write 
\begin{equation}\label{eq: a+b cut-off}
  \mathrm{I}_{b;1;2} =  \frac{1}{2}\int dxdydz\, V(1-\varphi)(x-y) \omega^r_\uparrow(x;z) \langle a_\downarrow (\overline{v}^r_y) a_\uparrow(u_x)a_\downarrow(u_y)\xi_\lambda, b_\downarrow(\varphi_z)a_\uparrow(u^r_z)\xi_\lambda\rangle + \mathrm{B} \equiv \mathrm{A} + \mathrm{B},
\end{equation}
where $\mathrm{B}$ is an error term due to the replacement. We start by estimating $\mathrm{A}$. Using that $\varphi$ is bounded (see Lemma \ref{lem: bound phi}) we have that
\begin{equation}\label{eq: Ib11}
  |\mathrm{A}| \leq C\rho^{\frac{1}{2} + \frac{1}{3}}\int dxdydz V(x-y)|\omega^r_\uparrow(x;z)|\|a_\uparrow(u_x)a_\downarrow(u_y)\xi_\lambda\|\|a_\uparrow(u^r_z)\|\leq C\rho^{\frac{1}{2} + \frac{1}{3}-\frac{\epsilon}{3}}\|\widetilde{\mathbb{Q}}_{1}^{\frac{1}{2}}\xi_\lambda\|\|\mathcal{N}_>^{\frac{1}{2}}\xi_\lambda\| 
\end{equation}
where we also used Proposition \ref{pro: L1, L2 v}, Lemma \ref{lem: bound b phi} and \eqref{eq: est N>}. We now take into account the error term $\mathrm{B}$. The estimate of this contribution is similar to the one of some error terms in \cite{FGHP} (see, e.g., \cite[Appendix C.2]{FGHP}). To begin, we write $\hat{u}_\sigma(k) = \hat{\alpha}_\sigma(k) + \hat{u}^r_\sigma(k) + \hat{\delta}^>_\sigma(k)$, where $\hat{\alpha}_\sigma$ is supported for $k\notin \mathcal{B}_F^\sigma$ and $|k| \leq 3k_F^\sigma$ and $\hat{\delta}^>_\sigma$ is supported for $|k| \geq \rho^{-\beta}$.
 The case in which at  least one between $a_\uparrow (u^r_x)$ and $a_\downarrow(u^r_y)$ is replaced by $a_\uparrow(\delta^>_x)$ or $a_\downarrow(\delta^>_y)$ is discussed in Appendix \ref{app: cut-off}. In this case, all the errors can be made $o(\rho^{7/3})$ and thus are negligible. We now take into account all the other types of error terms in $\mathrm{B}$. We have to study the case in which at least one between $a_\uparrow (u^r_x)$ and $a_\downarrow(u^r_y)$ is replaced by $a_\uparrow(\alpha_x)$ or $a_\downarrow(\alpha_y)$. We bound then, for instance,
\begin{equation}
  \widetilde{\mathrm{B}}_{1} = \int dxdydz\, V(1-\varphi)(x-y) \omega^r_\uparrow(x;z) \langle a_\uparrow (\overline{v}^r_y) a_\uparrow(\alpha_x)a_\downarrow(u_y)\xi_\lambda, b_\downarrow(\varphi_z)a_\uparrow(u^r_z)\xi_\lambda\rangle.
\end{equation}
We then have 
\begin{eqnarray}
  |\widetilde{\mathrm{B}}_{1}| &\leq& C\int dxdydz\, V(x-y) |\omega^r_\uparrow(x;z)| \|a_\uparrow (\overline{v}^r_y)\|\|a_\uparrow(\alpha_x)\|\|b_\downarrow(\varphi_z)\|\|a_\downarrow(u_y)\xi_\lambda\|\|a_\uparrow(u^r_z)\xi_\lambda\|\nonumber
  \\
  &\leq& C\rho^{1+\frac{1}{3}-\frac{\epsilon}{3}}\langle \xi_\lambda, \mathcal{N}\xi_\lambda\rangle,
\end{eqnarray}
where we used that $\|V(1-\varphi)\|_1 \leq C\|\varphi\|_\infty\|V\|_1 \leq C$ (see Lemma \ref{lem: bound phi}), $\|a_\uparrow(\alpha_x)\|\leq \|\alpha_{x,\uparrow}\|\leq C\rho^{1/2}$ together with Proposition \ref{pro: L1, L2 v}. We then find that,
\begin{eqnarray}
  |\mathrm{I}_{b;1}| &\leq& |\mathrm{I}_{b;1;1}| + |\mathrm{I}_{b;1;2}|
  \\
  &\leq & C\rho^{\frac{2}{3} -\frac{\epsilon}{3}}\langle \xi_\lambda, \mathbb{H}_0\xi_\lambda\rangle + C\rho^{\frac{5}{6} -\frac{\epsilon}{3}}\|\widetilde{\mathbb{Q}}_1^{\frac{1}{2}}\xi_\lambda\|\|\mathcal{N}_>^{\frac{1}{2}}\xi_\lambda\| + C\rho^{1+\frac{1}{3} -\frac{\epsilon}{3}}\langle \xi_\lambda, \mathcal{N}\xi_\lambda\rangle\nonumber
\end{eqnarray}

We now estimate $\mathrm{I}_{b;2}$. From Lemma \ref{lem: bound b phi}, we have
\begin{eqnarray} 
  |\mathrm{I}_{b;2}| &\leq& \sum_{j=1}^3\int dxdz^\prime \, |\omega^r_\uparrow(x;z)| \|a_\uparrow(\partial_j u^r_x)\xi_\lambda\|\|b_\downarrow(\partial_j\varphi_x)\|\|b_\downarrow(\varphi_z)\|\|a_\uparrow(u^r_z)\xi_\lambda\|
  \\
  &\leq& C\rho^{\frac{1}{2} + \frac{1}{3}}\sum_{j=1}^3\int\, dxdz^\prime \, |\omega^r_\uparrow(x;z)| \|a_\uparrow(\partial_j u^r_x)\xi_\lambda\|\|a_\uparrow(u^r_z)\xi_\lambda\| \nonumber
  \\
  &\leq& C\rho^{\frac{5}{6}-\frac{\epsilon}{3}}\|\mathbb{H}_0^{\frac{1}{2}}\xi_\lambda\|\|\mathcal{N}_>^{\frac{1}{2}}\xi_\lambda\|, \nonumber
\end{eqnarray}
where the last inequality follows from \eqref{eq: est grad u kin}. 
We now work with $\mathrm{I}_c$ in \eqref{eq: Ic}, i.e., 
  \begin{equation}
    \mathrm{I}_c = - \sum_{j=1}^3\int dydz^\prime\,\omega^r_\downarrow(y;z^\prime)\langle \xi_\lambda, a^\ast_\downarrow(u^r_y)\widetilde{b}_\uparrow^\ast(\partial_j\varphi_y) b_\uparrow(\varphi_{z^\prime}) a_\downarrow(u^r_{z^\prime})\xi_\lambda\rangle,\nonumber
  \end{equation}
  where we recall that the notation $\widetilde{b}_\uparrow^\ast(\partial_j\varphi_y)$ is for the following operator (see Lemma \ref{lem: bound b phi}):
  \[
    \widetilde{b}_\uparrow^\ast((\partial_j\varphi)_y) = \int dx\, \partial_j\varphi(x-y)a_\uparrow^\ast(u^r_x)a_\uparrow^\ast(\partial_j \overline{v}^r_x).
  \]
Then, from Proposition \ref{pro: L1, L2 v}, Lemma \ref{lem: bound b phi} and \eqref{eq: est N>}, we have
  \begin{equation}
    |\mathrm{I}_c| \leq \int dydz^\prime\,|\omega^r_\downarrow(y;z^\prime)|\|\widetilde{b}_\uparrow^\ast((\partial_j\varphi)_y)\|\|b_\uparrow(\varphi_{z^\prime})\| \|\|a_\downarrow(u^r_y)\xi_\lambda\|\|a_\downarrow(u^r_{z^\prime})\xi_\lambda\|\leq C\rho^{\frac{7}{6}-\frac{\epsilon}{3}} \langle\xi_\lambda, \mathcal{N}_> \xi_\lambda\rangle.
  \end{equation}
We now estimate all the contributions coming from the second line in the right hand side of \eqref{eq: comm B A}. To start, we rewrite the commutator as 
\begin{eqnarray}\label{eq: cmm A B second line}
   &&- a_\uparrow(\overline{v}^r_z)a_\downarrow(\overline{v}^r_{z^\prime})[a_\uparrow^\ast(u^r_x) a_\downarrow^\ast(u^r_y), a_\uparrow(u^r_z)a_\downarrow(u^r_{z^\prime})]a^\ast_\downarrow(\overline{v}^r_y)a^\ast_\uparrow(\partial_j\overline{v}^r_x) 
  \\
  &&\qquad=  - a^\ast_\downarrow(\overline{v}^r_y)a^\ast_\uparrow(\partial_j\overline{v}^r_x) [a_\uparrow^\ast(u^r_x) a_\downarrow^\ast(u^r_y), a_\uparrow(u^r_z)a_\downarrow(u^r_{z^\prime})] a_\uparrow(\overline{v}^r_z)a_\downarrow(\overline{v}^r_{z^\prime}) \nonumber
  \\
  && \qquad \quad - [a_\uparrow^\ast(u^r_x) a_\downarrow^\ast(u^r_y), a_\uparrow(u^r_z)a_\downarrow(u^r_{z^\prime})][a_\uparrow(\overline{v}^r_z)a_\downarrow(\overline{v}^r_{z^\prime}) ,  a^\ast_\downarrow(\overline{v}^r_y)a^\ast_\uparrow(\partial_j\overline{v}^r_x)]\nonumber
\end{eqnarray}
Moreover, we have 
\begin{eqnarray}\label{eq: 1}
  [a_\uparrow^\ast(u^r_x) a_\downarrow^\ast(u^r_y), a_\uparrow(u^r_z)a_\downarrow(u^r_{z^\prime})] &=& (u^r)^2_\uparrow(z;x) a_\downarrow(u^r_{z^\prime})a^\ast_\downarrow(u^r_y) - (u^r)^2_\downarrow(z^\prime;y)a^\ast_\uparrow(u^r_x)a_\uparrow(u^r_z)
  \\
  &=&  (u^r)^2_\downarrow(z^\prime;y)(u^r)^2_\uparrow(z;x) - (u^r)^2_\uparrow(z;x) a^\ast_\downarrow(u^r_y)a_\downarrow(u^r_{z^\prime}) \nonumber
  \\
  &&-(u^r)^2_\downarrow(z^\prime;y)a^\ast_\uparrow(u^r_x)a_\uparrow(u^r_z), \nonumber
\end{eqnarray}
where $(u^r)^2(\cdot\,;\cdot)$ is defined as in Remark \ref{rem: u2 and tilde omega}. 
Correspondingly, we have two types of error terms to estimate (we take into account them for each $j=1,2,3$ fixed):
\begin{equation}
  \mathrm{II}_a = -\int dxdydzdz^\prime \partial_j\varphi(x-y)\varphi(z-z^\prime)(u^r)^2_\downarrow(z^\prime;y)(u^r)^2_\uparrow(z;x) \langle \xi_\lambda,  a^\ast_\downarrow(\overline{v}^r_y)a^\ast_\uparrow(\partial_j\overline{v}^r_x)a_\uparrow(\overline{v}^r_z)a_\downarrow(\overline{v}^r_{z^\prime}) \xi_\lambda\rangle,
  \end{equation}
\begin{equation}
  \mathrm{II}_b =  \int dxdydzdz^\prime \partial_j\varphi(x-y)\varphi(z-z^\prime)(u^r)^2_\downarrow(z^\prime;y) \langle \xi_\lambda,  a^\ast_\downarrow(\overline{v}^r_y)a^\ast_\uparrow(\partial_j\overline{v}^r_x)a^\ast_\uparrow(u^r_x) a_\uparrow(u^r_z) a_\uparrow(\overline{v}^r_z)a_\downarrow(\overline{v}^r_{z^\prime}) \xi_\lambda\rangle.
\end{equation}
From Proposition \ref{pro: L1, L2 v} and Lemma \ref{lem: bound phi}, we get
\begin{eqnarray}
  |\mathrm{II}_a | \hspace{-0.3cm}&\leq &\hspace{-0.3cm} \int dxdydzdz^\prime |\partial_j\varphi(x-y)||\varphi(z-z^\prime)||(u^r)^2_\downarrow(z^\prime;y)||(u^r)^2_\uparrow(z;x)| \|a^\ast_\uparrow(\partial_j\overline{v}^r_x)\|\|a_\uparrow(\overline{v}^r_z)\|\|a_\downarrow(\overline{v}^r_y)\xi_\lambda\|\|a_\downarrow(\overline{v}^r_{z^\prime}) \xi_\lambda\|\nonumber
  \\
  &\leq& C\rho^{1+\frac{1}{3}}\|\nabla\varphi\|_1 \|\varphi\|_\infty \langle \xi_\lambda, \mathcal{N}\xi_\lambda\rangle \leq C\rho\langle\xi_\lambda, \mathcal{N}\xi_\lambda\rangle,
\end{eqnarray}
where we also used that $\|a_\uparrow^\ast(\partial_j\overline{v}_x)\|_2\leq C\rho^{1/2 + 1/3}$ and that $\|(u^r)^2_{x,\sigma}\|_1\leq C$. Similarly, using Proposition \ref{pro: L1, L2 v} and \eqref{eq: N> lambda}, we get 
\begin{equation}
  |\mathrm{II}_b| \leq C\rho^{2+\frac{1}{3}} \|\nabla\varphi\|_1\|\varphi\|_1 \langle \xi_\lambda, \mathcal{N}_> \xi_\lambda\rangle  \leq C\rho^{\frac{4}{3}}\langle \xi_\lambda, \mathcal{N}_> \xi_\lambda\rangle \leq C\rho^{\frac{2}{3}}\langle \xi_\lambda, \mathbb{H}_0 \xi_\lambda\rangle,
\end{equation}
We now compute the second line in \eqref{eq: cmm A B second line}, recall \eqref{eq: 1} and
\begin{multline}\label{eq: comm 4vs}
 [a_\downarrow^\ast(\overline{v}^r_y)a_\uparrow^\ast(\partial_j\overline{v}^r_x),a_\uparrow(\overline{v}^r_z)a_\downarrow(\overline{v}^r_{z^\prime})] 
 \\
 = - \partial_j \omega^r_\uparrow(x;z)\omega^r_\downarrow (y;z^\prime) + \partial_j\omega^r_\uparrow(x;z)a^\ast_\downarrow(\overline{v}^r_y) a_\downarrow (\overline{v}^r_{z^\prime})  + \omega^r_\downarrow(y;z^\prime) a^\ast_\uparrow(\partial_j\overline{v}^r_x) a_\uparrow(\overline{v}^r_z).
\end{multline} 
We then estimate all the possible error terms coming from the product of \eqref{eq: 1} with \eqref{eq: comm 4vs} (for each $j=1,2,3$). We start by taking into account:
\begin{equation}
  \mathrm{III}_a = -\int dxdydzdz^\prime \partial_j\varphi(x-y)\varphi(z-z^\prime)(u^r)^2_\downarrow(z^\prime;y)(u^r)^2_\uparrow(z;x)\partial_j\omega^r_\uparrow(x;z) \langle \xi_\lambda,a^\ast_\downarrow(\overline{v}^r_y) a_\downarrow (\overline{v}^r_{z^\prime})\xi_\lambda\rangle.\nonumber
\end{equation} 
Form Proposition \ref{pro: L1, L2 v} and Lemma \ref{lem: bound phi}, we have (using also that $\|u^r_{x,\sigma}\|_1 \leq C$ and $\|\partial_j \omega^r\|_\infty\leq C\rho^{4/3}$):
\begin{eqnarray}
  |\mathrm{III}_a | &\leq& \int dxdydzdz^\prime |\partial_j\varphi(x-y)||\varphi(z-z^\prime)||u^r_\downarrow(z^\prime;y)||u^r_\uparrow(z;x)||\partial_j\omega^r_\uparrow(z;x)| \|a_\downarrow(\overline{v}^r_y)\xi_\lambda\|\|a_\downarrow (\overline{v}^r_{z^\prime})\xi_\lambda\|\nonumber
  \\
  &\leq& C\rho^{\frac{4}{3}}\|\varphi\|_\infty  \|\nabla\varphi\|_1 \langle \xi_\lambda, \mathcal{N}\xi_\lambda\rangle \leq C\rho\langle \xi_\lambda, \mathcal{N}\xi_\lambda\rangle.
  \end{eqnarray}
Another error term is 
\begin{equation}
 \mathrm{III}_b = -\int dxdydzdz^\prime \partial_j\varphi(x-y)\varphi(z-z^\prime)(u^r)^2_\downarrow(z^\prime;y)(u^r)^2_\uparrow(z,x)\omega^r_\downarrow(y;z^\prime) \langle \xi_\lambda, a^\ast_\uparrow(\partial_j\overline{v}^r_x) a_\uparrow(\overline{v}^r_z)\xi_\lambda\rangle.
\end{equation}
Similarly as before, using that $\|\omega^r\|_\infty\leq C\rho$, we have 
\begin{equation}
  |\mathrm{III}_b| \leq C\rho^{1+\frac{1}{3}}\|\varphi\|_\infty\|\nabla\varphi\|_1\|\langle \xi_\lambda, \mathcal{N}\xi_\lambda\rangle \leq C\rho\langle\xi_\lambda, \mathcal{N}\xi_\lambda\rangle,
\end{equation}
where we used that $\int dx\, \|a(\partial_j\overline{v}^r_x)\xi_\lambda\|^2 \leq C\rho^{1/3}\langle \xi_\lambda, \mathcal{N}\xi_\lambda\rangle$. We now estimate
\begin{equation}
 \mathrm{III}_c =  -\int dxdydzdz^\prime \partial_j\varphi(x-y)\varphi(z-z^\prime) \partial_j \omega^r_\uparrow(x;z)\omega^r_\downarrow (y;z^\prime) (u^r)^2_\uparrow(z;x) \langle\xi_\lambda, a^\ast_\downarrow(u^r_y)a_\downarrow(u^r_{z^\prime})\xi_\lambda\rangle. 
\end{equation}
We have 
\begin{equation}
  |\mathrm{III}_c| \leq C\rho^{2+ \frac{1}{3}}\|\varphi\|_1\|\nabla\varphi\|_1 \langle \xi_\lambda, \mathcal{N}_> \xi_\lambda\rangle \leq C\rho^{2+ \frac{1}{3} -\frac{1}{3} -\frac{2}{3}}\langle \xi_\lambda, \mathcal{N}_> \xi_\lambda\rangle  \leq C\rho^{\frac{2}{3}}\langle \xi_\lambda, \mathbb{H}_0 \xi_\lambda\rangle.
\end{equation}
where we used again \eqref{eq: N> lambda}, Lemma \ref{lem: bound phi} and Proposition \ref{pro: L1, L2 v}. The next type of error term is
\begin{equation}
 \mathrm{III}_d = \int dxdydzdz^\prime \partial_j\varphi(x-y)\varphi(z-z^\prime)(u^r)^2_\uparrow(z;x)\omega^r_\uparrow(y;z^\prime) \langle \xi_\lambda, a^\ast_\downarrow(u^r_y) a^\ast_\uparrow(\partial_j\overline{v}^r_x) a_\uparrow(\overline{v}^r_{z})a_\downarrow (u^r_{z^\prime}) \xi_\lambda\rangle, 
\end{equation}
can be estimated as $\mathrm{III}_c$ replacing the estimate $\|\partial_j\omega^r\|_\infty\leq C\rho^{1+1/3}$ with $\|a^\ast_\uparrow(\partial_j \overline{v}^r_x)a_\uparrow(\overline{v}^r_z)\|\leq C\rho^{1+1/3}$ to get
\begin{equation}
  |\mathrm{III}_d| \leq C\rho^{\frac{2}{3}}\langle\xi_\lambda, \mathbb{H}_0\xi_\lambda\rangle .
\end{equation}
Next we bound
\begin{equation}
\mathrm{III}_e = \int dxdydzdz^\prime \partial_j\varphi(x-y)\varphi(z-z^\prime)(u^r)^2_\uparrow(z;x)\partial_j\omega^r_\uparrow(x;z) \langle \xi_\lambda, a^\ast_\downarrow(u^r_y) a^\ast_\uparrow(\overline{v}^r_y)  a_\downarrow(\overline{v}^r_{z^\prime})a_\downarrow (u^r_{z^\prime})\xi_\lambda\rangle.
\end{equation}
From Proposition \ref{pro: L1, L2 v} and Lemma \ref{lem: bound phi}, similarly as before, we get  
\begin{equation}
  |\mathrm{III}_e| \leq C\rho^{\frac{2}{3}}\langle\xi_\lambda, \mathbb{H}_0\xi_\lambda\rangle .
\end{equation}
The last term we have to take into account is the constant one, i.e., neglecting the spin,
\begin{equation}\label{eq: term IIIf H0}
  \mathrm{III}_f = \int dxdydzdz^\prime \partial_j\varphi(x-y)\varphi(z-z^\prime)(u^r)^2_\downarrow(z^\prime;y)(u^r)^2_\uparrow(z;x)\partial_j \omega^r_\uparrow(x;z)\omega^r_\downarrow (y;z^\prime).
\end{equation} 
The idea is to estimate $\mathrm{III}_f$ replacing $(u^r)^2_\uparrow(z;x)(u^r)^2_\downarrow(z^\prime;y)$ with the Dirac delta distributions up to errors. To do that we write $(u^r)_\uparrow^2(z;x) = \delta^r_\uparrow(z;x) - \nu^r_\uparrow(z;x)$, where $\hat{\delta}^r_\uparrow(k)$ and $\hat{\nu}^r_\uparrow(k)$ are such that
\begin{equation}\label{eq: def delta nu}
  \hat{\delta}^r_\uparrow(k) = \begin{cases} 1 &\mbox{for }\,\, |k|\leq \rho^{-\beta} \\ 0 &\mbox{for }\,\, |k| \geq 2\rho^{-\beta},  \end{cases}\qquad \hat{\nu}^r_\uparrow(k) = \begin{cases} 1 &\mbox{for }\,\, |k| \leq 2k_F^\sigma \\ 0 &\mbox{for }\,\, |k| \geq 3k_F^\sigma,  \end{cases}
\end{equation}
with smooth interpolation between $1$ and $0$.
Moreover, $\nu_\uparrow(z;x) \equiv \nu_\uparrow(z-x)$ and $\delta^r_\uparrow(z;y) \equiv \delta^r_\uparrow(z-y)$
One can  prove that $\|\delta^r_{x,\sigma}\|_1, \|\nu^r_{x,\sigma}\|_1 \leq C$ (the proof can be done as the one of Proposition \ref{pro: L1, L2 v}, see \cite{FGHP} for more details). Doing the same for $(u^r)^2_\downarrow(z^\prime;y)$, we decompose $\mathrm{III}_f$ as sum of different terms. The first one is of the following type
\begin{equation}
  \mathrm{III}_{f;1} = -\int dxdydzdz^\prime \partial_j\varphi(x-y)\varphi(z-z^\prime)\delta^r_\downarrow(z^\prime;y)\nu^r_\uparrow(z;x)\partial_j \omega^r_\uparrow(x;z)\omega_\downarrow^r (y;z^\prime).
\end{equation}
Using Lemma \ref{lem: bound phi} together with the fact that $\|\delta^r_{x,\sigma}\|_1 \leq C$ and $\|\nu^r\|_\infty \leq C\rho$, we have
\begin{eqnarray}
  |\mathrm{III}_{f;1}| &\leq& C\rho^{\frac{7}{3} +1}  \int dxdydzdz^\prime |\partial_j\varphi(x-y)| |\varphi(z-z^\prime)||\delta^r_\downarrow(z^\prime;y)|\nonumber \\
  &\leq& CL^3\rho^{\frac{7}{3} + 1}\|\nabla\varphi\|_1 \|\varphi\|_1  \leq CL^3\rho^{\frac{7}{3}}\nonumber.
\end{eqnarray}
 Another error term is given by 
\begin{equation}
  \mathrm{III}_{f;2} = \int dxdydzdz^\prime \partial_j\varphi(x-y)\varphi(z-z^\prime)\nu^r_\downarrow(z^\prime;y)\nu^r_\uparrow(z;x)\partial_j \omega^r_\uparrow(x;z)\omega^r_\downarrow (y;z^\prime).
\end{equation}
Exactly as for $\mathrm{III}_{f;1}$, using that $\|\nu^r_{x,\sigma}\|_1 \leq C$, we get $|\mathrm{III}_{f;2}| \leq CL^3\rho^{\frac{7}{3}}$. The last possible term we have to take into account is
\begin{equation}
  \mathrm{III}_{f;3} = \int\, dxdydzdz^\prime \partial_j\varphi(x-y)\varphi(z-z^\prime)\delta^r_\downarrow(z^\prime;y)\delta^r_\uparrow(z;x)\partial_j \omega^r_\uparrow(z;x)\omega^r_\downarrow (z^\prime;y).
\end{equation}
We are going to prove that also $\mathrm{III}_{f;3}$ is $\mathcal{O}(L^3\rho^{7/3})$. To do that we replace each $\delta^r$ in $\mathrm{III}_{f;3}$ with the periodic Dirac delta distribution. More precisely, we write $\delta^r = \delta - \delta^>$, where $\hat{\delta}^>$ is supported for $|k| \geq \rho^{-\beta}$ and we bound all the possible terms. We start by taking into account
\begin{eqnarray}
  \mathrm{III}_{f;3;1} &=& \int dxdydzdz^\prime \partial_j\varphi(x-y)\varphi(z-z^\prime)\delta(z^\prime-y)\delta(z-x)\partial_j \omega^r_\uparrow(z;x)\omega^r_\downarrow (z^\prime;y)
  \\
  &=& \int\, dxdy\, \partial_j\varphi(x-y)\varphi(x-y)\partial_j\omega^r_\uparrow(x;x)\omega^r_\downarrow (y;y). 
\end{eqnarray}
We now notice that $\partial_j\omega^r_\uparrow(x;x)\omega^r_\downarrow(y;y)$ are fixed and independent of $x,y$. Thus, we take into account $\int dxdy\partial_j\varphi(x-y)\varphi(x-y)$. First, the integral is bounded. Moreover, from 
\begin{equation}
  \int_{\Lambda_L \times\Lambda_L}\partial_j\varphi(x-y)\varphi(x-y) = \sum_{p\in\frac{2\pi}{L}\mathbb{Z}^3}-ip_j \hat{\varphi}(p)\hat{\varphi}(-p),
\end{equation}
it follows that $\mathrm{III}_{f;3;1}$ is vanishing. The last two type of terms to bound are:
\begin{equation}
  \mathrm{III}_{f;3;2} = -\int dxdydzdz^\prime \partial_j\varphi(x-y)\varphi(z-z^\prime)\delta^>_\downarrow(z^\prime;y)\delta_\uparrow(z-x)\partial_j \omega^r_\uparrow(x;z)\omega^r_\downarrow (y;z^\prime),
\end{equation}
and
\begin{equation}
  \mathrm{III}_{f;3;3} = \int\, dxdydzdz^\prime \partial_j\varphi(x-y)\varphi(z-z^\prime)\delta^>_\downarrow(z^\prime;y)\delta^>_\uparrow(z;x)\partial_j \omega^r_\uparrow(x;z)\omega^r_\downarrow (y;z^\prime).
\end{equation}
We first bound $\mathrm{III}_{f;3;2}$, we have 
\begin{eqnarray}
  \mathrm{III}_{f;3;2} &=& -\int dxdydz^\prime \partial_j\varphi(x-y)\varphi(x-z^\prime)\delta^>_\downarrow(z^\prime;y)\partial_j \omega^r_\uparrow(x;x)\omega^r_\downarrow (y;z^\prime)
  \\
  &=& \frac{1}{L^6} \sum_{k_1,k_2,k_3} (k_3-k_1)_j\hat{\varphi}(k_3 - k_1)\hat{\varphi}(k_1 - k_3) \hat{\delta}^>(k_1)((k_2)_j \hat{\omega}^r(k_2))\hat{\omega}^r(k_3),\nonumber
\end{eqnarray}
where $(k_1 -k_3)_j$ denotes the $j-th$ component of the vector $k_1-k_3$
Note that above and in what follows we write $\sum_k$ in place of $\sum_{k\in (2\pi/L)\mathbb{Z}^3}$. We now proceed very similar to \cite[Appendix C]{FGHP}.
Thus, we only underline that using the support properties of $\hat{\delta}^>(k_1)$ and $\hat{\omega}(k_3)$, we have that $|k_1 -k_3| \geq C \rho^{-\beta}$. Then, we can use Lemma \ref{lem: decay phi} to bound 
\[
  |\hat{\varphi}(k_3 - k_1)| \leq  C_n\rho^{n\beta}.
\]
Moreover, using the decay properties of $\hat{\varphi}(k)$ (see Lemma \ref{lem: decay phi}) together with the support properties of $\hat{\omega}^r(k_2)$ and $\hat{\omega}^r(k_3)$, we can bound the series above and we get
\begin{equation*}
  |\mathrm{III}_{f;3;2}| \leq CL^3\rho^{\frac{7}{3} + n\beta}
\end{equation*}
Analogously, for $\mathrm{III}_{f;3;3}$, we can write 
\[
  \mathrm{III}_{f;3;3} = -\frac{1}{L^6}\sum_{k_1,k_2,k_3} (k_1)_j\hat{\varphi}(k_1)\varphi(-k_1) \hat{\delta}^>(k_1 + k_2)\hat{\delta}^>(k_1-k_3) ((k_2)_j \hat{\omega}(k_2))\hat{\omega}(k_3).
\]
Proceeding as for $\mathrm{III}_{f;3;3}$, one can prove that $|\mathrm{III}_{f;3;3}| \leq CL^3\rho^{\frac{7}{3} + n\beta}$. Putting all the estimates together, we get \eqref{eq: est err kin}.
\end{proof}
\subsection{Bounds for $\mathbb{Q}_1$ on interpolating states}
We now propagate the estimate for the operator $\mathbb{Q}_{1}$. Recall that 
\begin{equation*}
  \mathbb{Q}_{1} = \frac{1}{2}\sum_{\sigma,\sigma^\prime}\int\, dxdy\, V(x-y)  a^\ast_\sigma(u_x)a^\ast_{\sigma^\prime}(u_y)a_{\sigma^\prime}(u_y)a_\sigma(u_x) + \mathrm{h.c.},
\end{equation*}
\begin{equation*}
  \widetilde{\mathbb{Q}}_{1} = \frac{1}{2}\sum_{\sigma \neq \sigma^\prime}\int\, dxdy\, V(x-y)  a^\ast_\sigma(u_x)a^\ast_{\sigma^\prime}(u_y)a_{\sigma^\prime}(u_y)a_\sigma(u_x) + \mathrm{h.c.},
\end{equation*}
\begin{proposition}[Propagation estimate for $\mathbb{Q}_{1}, \widetilde{\mathbb{Q}}_1$]\label{pro: Q1}
Let $\lambda\in [0,1]$. Let $\xi_\lambda$ be a state as in \eqref{eq: def xi lambda}. Under the same assumptions of Theorem \ref{thm: optimal up bd} and Theorem \ref{thm: lower bound}, it holds  
\begin{equation}
    \frac{d}{d\lambda}\langle \xi_\lambda, \mathbb{Q}_{1}\xi_\lambda\rangle =  -\langle\xi_\lambda, \mathbb{T}_2\xi_\lambda\rangle + \mathcal{E}_{\mathbb{Q}_{1}}, \quad  \frac{d}{d\lambda}\langle \xi_\lambda, \widetilde{\mathbb{Q}}_{1}\xi_\lambda\rangle =  -\langle\xi_\lambda, \mathbb{T}_2\xi_\lambda\rangle + \mathcal{E}_{\widetilde{\mathbb{Q}}_{1}},
\end{equation}  
where 
\begin{equation}\label{eq: def T2}
  \mathbb{T}_2 := -\int\, dxdy\, V(x-y)\varphi(x-y)a_\uparrow(u_x)a_\uparrow(\overline{v}^r_x)a_\downarrow(u_y)a_\downarrow(\overline{v}^r_y) + \mathrm{h.c.},
\end{equation}
and
\begin{equation} 
  |\mathcal{E}_{\mathbb{Q}_{1}}| \leq  CL^{\frac{3}{2}}\rho^{\frac{4}{3}}\|\mathbb{Q}_1^{\frac{1}{2}}\xi_\lambda\| +  C\rho^{\frac{1}{2}}\|\mathbb{Q}^{\frac{1}{2}}_{1}\xi_\lambda\|\|\mathbb{H}_0^{\frac{1}{2}}\xi_\lambda\| + CL^{\frac{3}{2}}\rho^{\frac{4}{3}}\|\widetilde{\mathbb{Q}}^{\frac{1}{2}}_{1}\xi_\lambda\|,
\end{equation}
\begin{equation} 
  |\mathcal{E}_{\widetilde{\mathbb{Q}}_{1}}| \leq  C\rho^{\frac{1}{2}}\|\widetilde{\mathbb{Q}}^{\frac{1}{2}}_{1}\xi_\lambda\|\|\mathbb{H}_0^{\frac{1}{2}}\xi_\lambda\| + CL^{\frac{3}{2}}\rho^{\frac{4}{3}}\|\widetilde{\mathbb{Q}}^{\frac{1}{2}}_{1}\xi_\lambda\|.
\end{equation}
\end{proposition}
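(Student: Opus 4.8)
The plan is to compute $\frac{d}{d\lambda}\langle\xi_\lambda,\mathbb{Q}_1\xi_\lambda\rangle = -\langle\xi_\lambda,[\mathbb{Q}_1,B]\xi_\lambda\rangle+\mathrm{c.c.}$ by expanding $[\mathbb{Q}_1,B]$ in normal order, isolating the ``main'' piece $\mathbb{T}_2$ and estimating the remainder $\mathcal{E}_{\mathbb{Q}_1}$. First I would write $\mathbb{Q}_1$ and $B$ in configuration space and commute: the commutator $[a^\ast_\sigma(u_x)a^\ast_{\sigma'}(u_y)a_{\sigma'}(u_y)a_\sigma(u_x),\,a_\uparrow(u^r_z)a_\uparrow(\overline v^r_z)a_\downarrow(u^r_{z'})a_\downarrow(\overline v^r_{z'})]$ is driven by the anticommutators $\{a_\sigma(u_x),a^\ast_\uparrow(u^r_z)\}=(u\,u^r)_\sigma(x;z)\delta_{\sigma\uparrow}$ etc. Each contraction removes one $a$–$a^\ast$ pair, producing (i) a ``fully contracted'' term in which both $u$-pairs of $\mathbb{Q}_1$ contract against the two $u^r,\overline v^r$ pairs of $B$ — this is where $\mathbb{T}_2$ comes from after recognizing $(u\,u^r)_\sigma\simeq$ identity on the relevant momenta and using $\int V(x-y)\varphi(x-y)$ — plus (ii) partially contracted terms of the schematic forms $V\cdot\varphi\cdot(\text{kernel})\cdot a^\ast a^\ast a^\ast a^\ast a a$ or $a^\ast a^\ast a^\ast a$, i.e. cubic/quartic error operators weighted by $V(x-y)$, $\varphi(z-z')$ and an $\omega^r$-type or $(u\,u^r)$-type kernel.

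The heart of the estimate is bounding those error operators. I would pair each factor so as to produce $\|\mathbb{Q}_1^{1/2}\xi_\lambda\|$ (from an $a_\sigma(u_x)a_{\sigma'}(u_y)$ acting on $\xi_\lambda$ against $V(x-y)$), $\|\mathbb{H}_0^{1/2}\xi_\lambda\|$ (from $a(\partial\overline v^r)$ factors, using $\int\|a(\partial_j u^r_x)\xi\|^2\le C\langle\xi,\mathbb{H}_0\xi\rangle$ as in \eqref{eq: est grad u kin}), or a volume factor $L^{3/2}$ times a bare power of $\rho$ (when an $a(u^r_z)$ or $a(\overline v^r)$ is bounded in operator norm via Proposition \ref{pro: L1, L2 v}). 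The operator-norm bounds $\|b_\sigma(\varphi_z)\|\le C\rho^{1/3}$, $\|b_\sigma(\partial_j\varphi_z)\|\le C\rho^{1/2}$ and $\|\widetilde b_\sigma(\varphi_z)\|\le C\rho^{2/3}$ from Lemma \ref{lem: bound b phi}, together with $\|v^r_{x,\sigma}\|_2\le C\rho^{1/2}$, $\|\omega^r_{x,\sigma}\|_1\le C\rho^{-\epsilon/3}$, $\|u^r_{x,\sigma}\|_1\le C$, and the $L^1$-norms of $\varphi,\nabla\varphi,\Delta\varphi$ from Lemma \ref{lem: bound phi}, are exactly what converts the spatial integrals into the claimed powers of $\rho$; the $CL^{3/2}\rho^{4/3}\|\mathbb{Q}_1^{1/2}\xi_\lambda\|$ and $CL^{3/2}\rho^{4/3}\|\widetilde{\mathbb{Q}}_1^{1/2}\xi_\lambda\|$ terms come from pieces with one uncontracted volume integration bounded by $L^{3/2}$ and a $b(\varphi)$ or $\widetilde b(\varphi)$ factor, while $C\rho^{1/2}\|\mathbb{Q}_1^{1/2}\xi_\lambda\|\|\mathbb{H}_0^{1/2}\xi_\lambda\|$ comes from the piece carrying a $b(\partial_j\varphi)$ factor. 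For $\widetilde{\mathbb{Q}}_1$ the computation is identical except that the spin sum runs over $\sigma\neq\sigma'$, which kills the $\sigma=\sigma'$ self-contraction that produced the first $CL^{3/2}\rho^{4/3}\|\mathbb{Q}_1^{1/2}\xi_\lambda\|$ term; hence $\mathcal{E}_{\widetilde{\mathbb{Q}}_1}$ has one fewer term. Crucially, the leading contracted term $\mathbb{T}_2$ is the same in both cases because it only involves an $\uparrow$-pair contracted with an $\uparrow$-pair and a $\downarrow$-pair with a $\downarrow$-pair, i.e. it already has $\sigma\neq\sigma'$ structure built in.

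The main obstacle I anticipate is the bookkeeping in extracting $\mathbb{T}_2$ cleanly: one must check that replacing the product kernel $(u\,u^r)_\sigma(x;z)$ arising from the double contraction by $\delta(x-z)$ (so that the $x,z$ integrations collapse and one recognizes $\int V(x-y)\varphi(x-y)\,a_\uparrow(u_x)a_\uparrow(\overline v^r_x)a_\downarrow(u_y)a_\downarrow(\overline v^r_y)$) costs only errors already absorbed into the listed bounds — this uses $u u^r = u^r + (\text{small})$ in the sense that $\hat u(k)\hat u^r(k)=\hat u^r(k)$ on the support of $\hat u^r$, so in fact the collapse is exact and $a_\sigma(u_x)$ vs $a_\sigma(u^r_x)$ mismatches must be handled by the $\alpha$/$\delta^>$ splitting $\hat u=\hat\alpha+\hat u^r+\hat\delta^>$ as in the proof of Lemma \ref{lem: error kinetic energy}, with the $\delta^>$ contributions deferred to Appendix \ref{app: cut-off} and the $\alpha$ contributions (supported on $k_F\le|k|\le 3k_F$, with $\|\alpha_{x,\sigma}\|_2\le C\rho^{1/2}$) absorbed into the stated error. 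Once that is organized, each of the finitely many remaining error terms is estimated by Cauchy–Schwarz exactly as in the model computations of Lemma \ref{lem: error kinetic energy}, and collecting the worst powers of $\rho$ in each structural class yields the two displayed bounds.
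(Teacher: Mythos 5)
Your overall plan matches the paper's proof: compute $\frac{d}{d\lambda}\langle\xi_\lambda,\mathbb{Q}_1\xi_\lambda\rangle = -\langle\xi_\lambda,[\mathbb{Q}_1,B]\xi_\lambda\rangle+\mathrm{c.c.}$, reduce the commutator to $[a^\ast_\sigma(u_x)a^\ast_{\sigma'}(u_y),a_\uparrow(u^r_z)a_\downarrow(u^r_{z'})]$ (the $a(u)$'s and $a(\overline v^r)$'s commute through since $u\,\overline v^r=0$), peel off a fully contracted main term carrying the kernel $u^r_\uparrow(z;x)u^r_\downarrow(z';y)$, and replace that kernel by a product of (regularized, then true) Dirac deltas to produce $\mathbb{T}_2$, while the partially contracted pieces go into $\mathcal{E}_{\mathbb{Q}_1}$. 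The paper indeed runs exactly this program, with the $\delta^r/\nu^r$ decomposition for the kernel replacement and the $b_\sigma(\varphi_z)$ bound from Lemma \ref{lem: bound b phi} as the main workhorse, so you have the right skeleton.

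There are two concrete slips worth flagging. First, you attribute the error $C\rho^{1/2}\|\mathbb{Q}_1^{1/2}\xi_\lambda\|\|\mathbb{H}_0^{1/2}\xi_\lambda\|$ to ``the piece carrying a $b(\partial_j\varphi)$ factor.'' But no derivative of $\varphi$ appears in $[\mathbb{Q}_1,B]$ --- the kernels are $V(x-y)$ and $\varphi(z-z')$ with no integration by parts, so $b(\partial_j\varphi)$ never arises here (that operator is needed in the $\mathbb{H}_0$ propagation, Lemma \ref{lem: error kinetic energy}, where $[\mathbb{H}_0,B]$ produces a $k\hat\varphi(k)\cdot(s-s')$ kernel). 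In the $\mathbb{Q}_1$ propagation the $\|\mathbb{H}_0^{1/2}\xi_\lambda\|$ factor comes from splitting the surviving $a^\ast_\sigma(u_x)$ into $a^\ast_\sigma(u^<_x)+a^\ast_\sigma(u^>_x)$: the $u^>$ piece (momenta $|k|\geq 2k_F$) gives $\int\|a(u^>_x)\xi_\lambda\|^2\,dx\leq\langle\xi_\lambda,\mathcal{N}_>\xi_\lambda\rangle\leq C\rho^{-2/3}\langle\xi_\lambda,\mathbb{H}_0\xi_\lambda\rangle$, and combining $\rho^{1/3}$ from $\|b(\varphi_z)\|$, $\rho^{1/2}$ from $\|a(\overline v^r_z)\|$, and $\rho^{-1/3}$ from $\mathcal{N}_>\lesssim\rho^{-2/3}\mathbb{H}_0$ yields the $\rho^{1/2}$ prefactor. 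Second, your explanation of why $\mathcal{E}_{\widetilde{\mathbb{Q}}_1}$ has one fewer term is not what happens: restricting the spin sum to $\sigma\neq\sigma'$ does not kill a self-contraction, it simply changes $\|a_\uparrow(u_y)a_\sigma(u_x)\xi_\lambda\|$ from a $\mathbb{Q}_1$-form to a $\widetilde{\mathbb{Q}}_1$-form in the cubic error estimates, so the first and third terms of the $\mathbb{Q}_1$ bound both become $CL^{3/2}\rho^{4/3}\|\widetilde{\mathbb{Q}}_1^{1/2}\xi_\lambda\|$ and merge. Both are execution-level corrections, not structural gaps: once you use the $u^</u^>$ splitting (rather than the $\alpha/\delta^>$ one, which is the decomposition used in the $\widetilde{\mathbb{Q}}_4$ and $\mathbb{H}_0$ arguments) and the $\mathcal{N}_>\lesssim\rho^{-2/3}\mathbb{H}_0$ bound, the stated estimates fall out.
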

\begin{proof} We only do the proof for $\mathbb{Q}_1$, the one for $\widetilde{\mathbb{Q}}_1$ can be done in the same way.
  As in the previous proposition, we have
  \begin{equation} \label{eq: der lambda Q1}
    \frac{d}{d\lambda}\langle \xi_\lambda, \mathbb{Q}_{1} \xi_\lambda\rangle = -\langle \xi_\lambda, [\mathbb{Q}_{1}, B]\xi_\lambda\rangle + \mathrm{c.c.}
  \end{equation}
  Moreover,
  \begin{multline*} 
    \hspace{-0.3cm}[\mathbb{Q}_{1},B] = \frac{1}{2}\sum_{\sigma,\sigma^\prime} \int dxdydzdz^\prime\, V(x-y)\varphi(z-z^\prime)
    \\
    \cdot [a^\ast_\sigma(u_x)a^\ast_{\sigma^\prime}(u_y)a_{\sigma^\prime}a(u_y)a_{\sigma}(u_x), a_\uparrow(u^r_z)a_\uparrow(\overline{v}^r_z)a_\downarrow(u^r_{z^\prime})a_\downarrow(\overline{v}^r_{z^\prime})].
    \end{multline*}
    It holds
    \begin{multline*}
      [a^\ast_\sigma(u_x)a^\ast_{\sigma^\prime}(u_y)a_{\sigma^\prime}(u_y)a_{\sigma}(u_x), a_\uparrow(u^r_z)a_\uparrow(\overline{v}^r_z)a_\downarrow(u^r_{z^\prime})a_\downarrow(\overline{v}^r_{z^\prime})]
      \\
      = [a^\ast_\sigma(u_x)a^\ast_{\sigma^\prime}(u_y), a_\uparrow(u^r_z)a_\downarrow(u^r_{z^\prime})]a_\uparrow(\overline{v}^r_z)a_\downarrow(\overline{v}^r_{z^\prime})a_{\sigma^\prime}(u_y)a_\sigma(u_x).
    \end{multline*}
  We then compute 
  \begin{eqnarray} 
    [a^\ast_\sigma(u_x)a^\ast_{\sigma^\prime}(u_y), a_\uparrow(u^r_z)a_\downarrow(u^r_{z^\prime})] &=& a^\ast_\sigma(u_x) \left(\delta_{\sigma^\prime,\uparrow} u^r_{\sigma^\prime}(z;y)a_\downarrow(u^r_{z^\prime}) - \delta_{\sigma^\prime,\downarrow}u^r_{\sigma^\prime}(z^\prime;y)a_\uparrow(u^r_z) \right)
    \\
    && + \left(\delta_{\sigma,\uparrow}u^r_\sigma(z;x)a_\downarrow(u^r_{z^\prime}) - \delta_{\sigma,\downarrow}u^r_{\sigma}(z^\prime;x)a_\uparrow(u^r_z)\right) a^\ast_{\sigma^\prime}(u_y)\nonumber 
    \\
    &=& \delta_{\sigma^\prime,\uparrow} u^r_{\sigma^\prime}(z;y)a^\ast_\sigma(u_x)a_\downarrow(u^r_{z^\prime}) - \delta_{\sigma^\prime,\downarrow}u^r_{\sigma^\prime}(z^\prime;y)a^\ast_\sigma(u_x)a_\uparrow(u^r_z)\nonumber 
    \\
    && -\delta_{\sigma,\uparrow}u^r_\sigma(z;x) a^\ast_{\sigma^\prime}(u_y)a_\downarrow(u^r_{z^\prime}) + \delta_{\sigma,\downarrow}u^r_{\sigma}(z^\prime;x) a^\ast_{\sigma^\prime}(u_y)a_\uparrow(u^r_z)\nonumber 
    \\
    &&+ \delta_{\sigma,\uparrow}\delta_{\sigma^\prime,\downarrow} u^r_\sigma(z;x)u^r_{\sigma^\prime}(z^\prime;y) - \delta_{\sigma,\downarrow}\delta_{\sigma^\prime,\uparrow} u^r_\sigma(z^\prime;x)u^r_{\sigma^\prime}(z;y)\nonumber.
  \end{eqnarray}
The terms in the first two lines in the right side  above can be estimated in the same way. We take into account only the first one, i.e.,
\begin{multline} 
  \mathrm{I} = \frac{1}{2}\sum_\sigma\int dxdydzdz^\prime\, V(x-y)\varphi(z-z^\prime) u^r_\uparrow(z;y) \langle \xi_\lambda, a^\ast_\sigma(u_x)a_\downarrow(u^r_{z^\prime}) a_\uparrow(\overline{v}^r_z)a_\downarrow(\overline{v}^r_{z^\prime})a_{\uparrow}(u_y)a_\sigma(u_x)\xi_\lambda\rangle
  \\
  = - \sum_\sigma\int dxdydz\, V(x-y)u^r_\uparrow(z;y) \langle\xi_\lambda, a_\sigma^\ast(u_x)b_\downarrow(\varphi_z)a_\uparrow(\overline{v}^r_z) a_\uparrow(u_y)a_\sigma(u_x)\xi_\lambda,\rangle
\end{multline}
where to get the last equality we used the definition of the operator $b_\downarrow(\varphi_z)$ as in \eqref{eq: def b phi}.
We now write $a^\ast_\sigma(u_x) = a^\ast_\sigma(u^<_x) + a^\ast_\sigma(u^>_x)$, where $\hat{u}^<_\sigma(k)$ is supported for $k\notin\mathcal{B}_F^\sigma$ and $|k| < 2k_F^\sigma$ and $\hat{u}^>_\sigma(k)$ is supported for $ |k| \geq 2k_F^\sigma$. Correspondingly we rewrite the term $\mathrm{I}$ as $\mathrm{I} = \mathrm{I}_< + \mathrm{I}_>$. We start with estimating $\mathrm{I}_<$. From Proposition \ref{pro: L1, L2 v} we can bound $\|u^r_{x,\sigma}\|\leq C$ and from  Lemma \ref{lem: bound b phi}, we get 
\begin{eqnarray}\label{eq: est I< Q1}
  |\mathrm{I}_<| &\leq& C\int dxdydz\, V(x-y) |u^r_\uparrow(z;y)|\|a_\sigma^\ast(u^<_x)\|\|b_\downarrow(\varphi_z)\| \|a_\uparrow(\overline{v}^r_z)\|\|a_\uparrow(u_y)a_\sigma(u_x)\xi_\lambda\|\nonumber
  \\
  &\leq& CL^{\frac{3}{2}} \rho^{1+\frac{1}{3}} \|\mathbb{Q}_{1}^{\frac{1}{2}}\xi_\lambda\|,
\end{eqnarray}
where we also used that $\|a_\downarrow(u^<_x)\|\leq C\rho^{1/2}$ and  $\|a_\uparrow(\overline{v}^r_z)\| \leq C\rho^{1/2}$.
We now take into account the operator $I_>$. Using again Proposition \ref{pro: L1, L2 v} and  Lemma \ref{lem: bound b phi}, we get
\begin{eqnarray} \label{eq: est I> Q1}
  |\mathrm{I}_>| &\leq& C\rho^{\frac{1}{3}}\int dxdydz\, V(x-y)|u^r_\uparrow(z;y)| \|a_\uparrow(\overline{v}^r_z)\|\|a_\sigma(u_x^>)\xi_\lambda\| \|a_\uparrow(u_y)a_\sigma(u_x)\xi_\lambda\|\nonumber
  \\
  &\leq& C\rho^{\frac{1}{2} +\frac{1}{3}}\int dxdydz\, V(x-y)|u^r_\uparrow(z;y)| \|a_\sigma(u_x^>)\xi_\lambda\| \|a_\uparrow(u_y)a_\sigma(u_x)\xi_\lambda\|\nonumber 
  \\
  &\leq& C\rho^{\frac{1}{2} +\frac{1}{3}}\|\mathbb{Q}^{\frac{1}{2}}_{1}\xi_\lambda\|\|\mathcal{N}^{\frac{1}{2}}_>\xi_\lambda\| \leq C\rho^{\frac{1}{2}}\|\mathbb{Q}^{\frac{1}{2}}_{1}\xi_\lambda\|\|\mathbb{H}_0^{\frac{1}{2}}\xi_\lambda\|, 
\end{eqnarray}
where we used also that $\|\mathcal{N}_>^{\frac{1}{2}}\xi_\lambda\| \leq C\rho^{-1/3}\|\mathbb{H}_0^{1/2}\xi_\lambda\|$.
The leading term in \eqref{eq: der lambda Q1} is equal to 
\begin{equation} 
  \mathrm{I}_{\mathrm{main}} = -\int dxdydzdz^\prime\, V(x-y)\varphi(z-z^\prime)u^r_\uparrow(z;x)u^r_\downarrow(z^\prime;y) a_\uparrow(u_x) a_\uparrow(\overline{v}^r_z)a_\downarrow(u_y) a_\downarrow(\overline{v}^r_{z^\prime}).
\end{equation}
As in \cite[Proposition 5.5]{FGHP}, we now replace the product $u^r_\uparrow(z;x)u^r_\downarrow(z^\prime;y)$ in $\mathrm{I}_{\mathrm{main}}$ with $\delta_\uparrow^r(z;x)\delta_\downarrow^r(z^\prime;y)$. To do that we rewrite
\begin{equation} 
  u^r_\uparrow(z;x) = \delta^r_\uparrow(z;x) -\nu^r_\uparrow(z;x), \qquad u^r_\downarrow(z^\prime;y) = \delta^r_\downarrow(z^\prime;y) -\nu_\downarrow(z^\prime;y),
\end{equation}
where $\delta^r$ and $\nu^r$ are as in \eqref{eq: def delta nu}. We then have
\begin{equation} 
  \mathrm{I}_{\mathrm{main}} = -\int dxdydzdz^\prime V(x-y)\varphi(z-z^\prime)\delta^r_\uparrow(z;x)\delta^r_\downarrow(z^\prime;y) a_\uparrow(u_x)a_\uparrow(\overline{v}^r_z)a_\downarrow(u_y)a_\downarrow(\overline{v}^r_{z^\prime}) + \mathcal{E}_{\mathrm{I}_{\mathrm{main}}},
\end{equation}
where in $\mathcal{E}_{\mathrm{I}_{\mathrm{main}}}$ there are two types of terms.  To estimate them, one can proceed as in \cite[Proposition 5.5]{FGHP}, we omit the details. Using the bound for $\|\varphi\|_1$ proved in Lemma \ref{lem: bound phi}, one gets
 \begin{equation}
  |\mathcal{E}_{\mathrm{main}}|\leq CL^{\frac{3}{2}}\rho^{\frac{4}{3}}\|\widetilde{\mathbb{Q}}_{1}^{\frac{1}{2}}\xi_\lambda\|.
 \end{equation}
Moreover, proceeding similarly as in \cite[Lemma 5.6]{FGHP} (using the decay estimates proved in Lemma \ref{lem: decay phi}) and the support properties of $\varphi(z-z^\prime)$ together with the assumptions on the potential (see Assumption \ref{asu: potential V}), we can replace the regularized $\delta^r$ with the periodic Dirac delta distribution up to a small error, i.e., 
\begin{multline}
 -\int dxdydzdz^\prime V(x-y)\varphi(z-z^\prime)\delta^r_\uparrow(z;x)\delta^r_\downarrow(z^\prime;y) a_\uparrow(\overline{v}^r_z)a_\uparrow(u_x)a_\downarrow(\overline{v}^r_{z^\prime})a_\downarrow(u_y)
 \\
 = -\int dxdy\, V(x-y)\varphi(x-y)a_\uparrow(\overline{v}^r_x)a_\uparrow(u_x)a_\downarrow(\overline{v}^r_{y})a_\downarrow(u_y) + \hat{\mathcal{E}}_{\mathbb{Q}_1},
 \end{multline}
 where, for all $n\geq 4$, taking $V\in C^k$ with $k$ large enough,
 \begin{equation}\label{eq: est I main Q1}
  | \hat{\mathcal{E}}_{\mathbb{Q}_{1}}| \leq C_n\rho^{n\beta(n-3)}\left(CL^3\rho^2 + \langle \xi_\lambda, \widetilde{\mathbb{Q}}_{1}\xi_\lambda\rangle\right).
 \end{equation}
 Combining the estimates in \eqref{eq: est I< Q1}, \eqref{eq: est I> Q1} and \eqref{eq: est I main Q1} and taking $n$ large enough, we are done.
\end{proof}

\subsection{Bounds for $\widetilde{\mathbb{Q}}_4$, $\widetilde{\mathbb{Q}}_4^r$ on interpolating states}\label{sec: bounds Q4}
We now prove some propagation estimates for the following operator
\begin{equation}\label{eq: def tilde Q4}
	\widetilde{\mathbb{Q}}_4 = \frac{1}{2}\sum_{\sigma\neq\sigma^\prime} \int dxdy\, V(x-y) a^\ast_\sigma(u_x)a^\ast_\sigma(u_y)a^\ast_{\sigma^\prime}(\overline{v}_y)a^\ast_{\sigma^\prime}(\overline{v}_x) + \mathrm{h.c.}
\end{equation}
We also take into account the regularized version of it, i.e., we define
\begin{equation}
	\widetilde{\mathbb{Q}}^r_4 = \frac{1}{2}\sum_{\sigma\neq\sigma^\prime} \int dxdy\, V(x-y) a^\ast_\sigma(u^r_x)a^\ast_\sigma(u_y^r)a^\ast_{\sigma^\prime}(\overline{v}_y^r)a^\ast_{\sigma^\prime}(\overline{v}_x^r) + \mathrm{h.c.}
\end{equation}
\begin{proposition}[Propagation estimate for $\mathbb{Q}_4, \widetilde{\mathbb{Q}}^r_4$]\label{pro: Q4} Let $\lambda\in [0,1]$ and $0< \epsilon \leq 1$. Let $\xi_\lambda:= T^\ast_\lambda R^\ast \psi$ be a state as in \eqref{eq: def xi lambda}. Under the same assumptions of Theorem \ref{thm: optimal up bd} and Theorem \ref{thm: lower bound}, it holds  
 \begin{equation} 
      \frac{d}{d\lambda}\langle \xi_\lambda, \widetilde{\mathbb{Q}}_4^r\xi_\lambda\rangle = 2\rho_\uparrow \rho_\downarrow\int dxdy V(x-y)\varphi(x-y) + \mathcal{E}_{\widetilde{\mathbb{Q}}^r_4}(\xi_\lambda), 
    \end{equation}
    \begin{equation} 
      \frac{d}{d\lambda}\langle \xi_\lambda, \widetilde{\mathbb{Q}}_4\xi_\lambda\rangle = 2\rho_\uparrow \rho_\downarrow\int dxdy V(x-y)\varphi(x-y) + \mathcal{E}_{\widetilde{\mathbb{Q}}_4}(\xi_\lambda), 
    \end{equation}
    where,  
    \begin{equation} 
  |\mathcal{E}_{\widetilde{\mathbb{Q}}_4}(\xi_\lambda)| \leq CL^3\rho^{\frac{7}{3}} + CL^3\rho^{2+\frac{\epsilon}{3}} +  C\rho^{\frac{5}{6} -\frac{\epsilon}{3}}\|\widetilde{\mathbb{Q}}_1^{\frac{1}{2}}\xi_\lambda\|\|\mathcal{N}_>^{\frac{1}{2}}\xi_\lambda\| +C\rho\langle \xi_\lambda, \mathcal{N}\xi_\lambda\rangle ,
    \end{equation}
    \begin{equation}\label{eq: err Q tilde 4 r}
      |\mathcal{E}_{\widetilde{\mathbb{Q}}^r_4}(\xi_\lambda)| \leq CL^3\rho^{\frac{7}{3}} + CL^3\rho^{2+\frac{\epsilon}{3}} +  C\rho^{\frac{5}{6} -\frac{\epsilon}{3}}\|\widetilde{\mathbb{Q}}_1^{\frac{1}{2}}\xi_\lambda\|\|\mathcal{N}_>^{\frac{1}{2}}\xi_\lambda\| +C\rho\langle \xi_\lambda, \mathcal{N}\xi_\lambda\rangle.
    \end{equation}
\end{proposition}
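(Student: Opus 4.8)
The plan is to compute $\frac{d}{d\lambda}\langle \xi_\lambda, \widetilde{\mathbb{Q}}_4^r \xi_\lambda\rangle = -\langle \xi_\lambda, [\widetilde{\mathbb{Q}}_4^r, B]\xi_\lambda\rangle + \mathrm{c.c.}$ and to identify the leading constant term. First I would expand the commutator $[\widetilde{\mathbb{Q}}_4^r, B]$ using the CAR. The operator $\widetilde{\mathbb{Q}}_4^r$ is of the form $a^\ast a^\ast a^\ast a^\ast$ with two $u^r$'s and two $\overline{v}^r$'s, while $B$ has two $u^r$'s and two $\overline{v}^r$'s in annihilation form and carries the kernel $\varphi$. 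The commutator of an eight-operator product splits into contractions: pairing the annihilation operators in $B$ against the creation operators in $\widetilde{\mathbb{Q}}_4^r$. The fully contracted term — where all four annihilation operators in $B$ contract with the four creation operators in $\widetilde{\mathbb{Q}}_4^r$ — produces a scalar proportional to $\int dxdy\, V(x-y)\varphi(x-y)$ times the diagonal values of the kernels, which are $\overline{v}^r_\sigma(x;x) = \rho_\sigma + O(\rho^{1+\epsilon/3})$ up to cut-off corrections and similarly for $(u^r)^2$ evaluated at coinciding points. Careful bookkeeping of the spin sum $\sigma\neq\sigma^\prime$ and the combinatorial factor (four ways to start the contraction chain, $\tfrac12$ prefactor, plus the $\mathrm{h.c.}$) should reproduce $2\rho_\uparrow\rho_\downarrow \int dxdy\, V(x-y)\varphi(x-y)$; the deviation of the diagonal kernels from $\rho_\sigma$ and the replacement of $(u^r)^2$ and $\overline{v}^r$ diagonals by $\rho_\sigma$ generates a term controlled by $L^3\rho^{2+\epsilon/3}$ using the bounds in Proposition \ref{pro: L1, L2 v}, and possibly $L^3\rho^{7/3}$ from the momenta near $k_F$ removed by the $\rho^\alpha$ cut-off with $\alpha = 1/3+\epsilon/3$.

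Next I would organize the remaining (non-fully-contracted) terms of $[\widetilde{\mathbb{Q}}_4^r, B]$ into the error $\mathcal{E}_{\widetilde{\mathbb{Q}}_4^r}(\xi_\lambda)$. These come in two groups. The terms with exactly one contraction leave a six-operator expression $a^\ast a^\ast b(\varphi) a\, a$ or similar; I would use the operator bound $\|b_\sigma(\varphi_z)\|\leq C\rho^{1/3}$ from Lemma \ref{lem: bound b phi}, combined with $\|a_\sigma(\overline{v}^r_x)\|\leq C\rho^{1/2}$, $\|u^r_{x,\sigma}\|_1\leq C$, $\|\omega^r_{x,\sigma}\|_1 \leq C\rho^{-\epsilon/3}$ from Proposition \ref{pro: L1, L2 v}, and then pair the two remaining explicit fermionic operators with $\xi_\lambda$ via Cauchy–Schwarz, recognizing the quadratic forms $\|\widetilde{\mathbb{Q}}_1^{1/2}\xi_\lambda\|$ (for the $a(u)a(u)$ pairs against the interaction $V$) and $\|\mathcal{N}^{1/2}_>\xi_\lambda\|$ or $\|\mathcal{N}^{1/2}\xi_\lambda\|$ (for the $a(u^r)$ pairs, using Remark \ref{rem: N_>}). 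This is what yields the $C\rho^{5/6-\epsilon/3}\|\widetilde{\mathbb{Q}}_1^{1/2}\xi_\lambda\|\|\mathcal{N}_>^{1/2}\xi_\lambda\|$ contribution, with the power $5/6 = 1/3 + 1/2$ coming from $\rho^{1/3}$ (the $b(\varphi)$ bound) times $\rho^{1/2}$ (a $\overline{v}^r$ factor). The terms with two or more contractions but not fully contracted are handled similarly, leading to the $C\rho\langle\xi_\lambda,\mathcal{N}\xi_\lambda\rangle$ piece after using $\|\varphi\|_1 \leq C\rho^{-2/3}$ and $\|\varphi\|_\infty \leq C$ together with $\|\omega^r_{x,\sigma}\|_\infty \leq C\rho$.

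For the unregularized version $\widetilde{\mathbb{Q}}_4$ I would repeat the computation with $u,\overline{v}$ in place of $u^r,\overline{v}^r$ in the outer operators (but keeping $u^r,\overline{v}^r$ from $B$), splitting $\hat{u}_\sigma(k) = \hat\alpha_\sigma(k) + \hat{u}^r_\sigma(k) + \hat\delta^>_\sigma(k)$ as in the proof of Lemma \ref{lem: error kinetic energy}: the $u^r$ part reproduces $\widetilde{\mathbb{Q}}_4^r$ up to the errors already bounded, the $\alpha$ part (momenta between $k_F$ and $3k_F$, so still $\|\alpha_{x,\sigma}\|_2 \leq C\rho^{1/2}$) contributes terms of the same size bounded by $C\rho\langle\xi_\lambda,\mathcal{N}\xi_\lambda\rangle$, and the $\delta^>$ part (momenta $\geq \rho^{-\beta}$) is made $o(\rho^{7/3})$ using the decay estimates of Lemma \ref{lem: decay phi} exactly as in Appendix \ref{app: cut-off}, hence absorbed into $CL^3\rho^{7/3}$. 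The main obstacle will be the careful combinatorial and diagonal-value analysis of the fully contracted term: one must track the exact coefficient so that it matches $2\rho_\uparrow\rho_\downarrow\int V\varphi$ (not, say, $4\rho_\uparrow\rho_\downarrow\int V\varphi$), and one must show that replacing the kernel diagonals $\overline{v}^r_\sigma(x;x)$, $(u^r)^2_\sigma(x;x)$ by $\rho_\sigma$ — and correcting for the gap $k_F^\sigma - \rho^\alpha_\sigma \leq |k| < k_F^\sigma$ removed from $\hat{v}^r$ — costs only $L^3\rho^{2+\epsilon/3}$; this is where the choice $\alpha = 1/3+\epsilon/3$ and the smoothing scales of $\hat{u}^r,\hat{v}^r$ enter, and it is the step where the $\epsilon$-dependence of the final bound originates.
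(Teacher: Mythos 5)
Your proposal follows essentially the same route as the paper's proof: differentiate via the commutator $[\widetilde{\mathbb{Q}}_4^r,B]$, isolate the fully anti-normal-ordered (constant) contraction, reduce it to $2\rho^r_\uparrow\rho^r_\downarrow\int V\varphi$ by replacing $(u^r)^2$ kernels with Dirac deltas and then correct $\rho^r_\sigma\to\rho_\sigma$ at cost $L^3\rho^{2+\epsilon/3}$, and estimate the remaining normal-ordered pieces via Lemma~\ref{lem: bound b phi}, Proposition~\ref{pro: L1, L2 v}, Cauchy–Schwarz against $\widetilde{\mathbb{Q}}_1^{1/2}$ and $\mathcal{N}_>^{1/2}$, and the split $\hat u=\hat\alpha+\hat u^r+\hat\delta^>$ exactly as in the paper. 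The only small slip is that the relevant diagonal kernel in the constant term is $\omega^r_\sigma(x;x)=\rho^r_\sigma$ (not $\overline v^r_\sigma(x;x)$), and the $L^3\rho^{7/3}$ piece of the constant-term error comes from the $(u^r)^2\to\delta$ replacement (the $\nu^r$ and $\delta^>$ corrections) rather than directly from the $\rho^\alpha$ shell removed near $k_F$; neither affects the conclusion.
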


\begin{proof} 
  We only discuss the proof for $\widetilde{\mathbb{Q}}_4^r$, the one for the operator $\widetilde{\mathbb{Q}}_4$ is simpler. To start, we have
  \begin{equation} 
    \frac{d}{d\lambda}\langle \xi_\lambda, \widetilde{\mathbb{Q}}^r_4\xi_\lambda\rangle = -\langle \xi_\lambda, [\widetilde{\mathbb{Q}}^r_4, B]\xi_\lambda\rangle + \mathrm{h.c.}
  \end{equation}
  Moreover, 
  \begin{multline} 
    [\widetilde{\mathbb{Q}}^r_4,B]
    \\ = \frac{1}{2}\sum_{\sigma\neq \sigma^\prime} \int dxdydzdz^\prime\, V(x-y)\varphi(z-z^\prime)[a^\ast_{\sigma}(u_x^r)a^\ast_{\sigma^\prime}(u_y^r)a^\ast_{\sigma^\prime}(\overline{v}_y^r)a^\ast_\sigma(\overline{v}_x^r), a_\uparrow(u^r_z)a_\uparrow(\overline{v}^r_z)a_\downarrow(u^r_{z^\prime})a_\downarrow(\overline{v}^r_{z^\prime})].
  \end{multline}
  We can rewrite the commutator as 
  \begin{eqnarray} \label{eq: comm Q4}
    &&[a^\ast_{\sigma}(u_x^r)a^\ast_{\sigma^\prime}(u_y^r)a^\ast_{\sigma^\prime}(\overline{v}_y^r)a^\ast_\sigma(\overline{v}_x^r), a_\uparrow(u^r_z)a_\uparrow(\overline{v}^r_z)a_\downarrow(u^r_{z^\prime})a_\downarrow(\overline{v}^r_{z^\prime})]
    \\
    && =-a^\ast_{\sigma}(u_x^r)a^\ast_{\sigma^\prime}(u_y^r)[a^\ast_{\sigma^\prime}(\overline{v}_y^r)a^\ast_\sigma(\overline{v}_x^r), a_\uparrow(\overline{v}^r_z)a_\downarrow(\overline{v}^r_{z^\prime})]a_\uparrow(u^r_z)a_\downarrow(u^r_{z^\prime})\nonumber
    \\
    &&\quad - a_\uparrow(\overline{v}^r_z)a_\downarrow(\overline{v}^r_{z^\prime}) [a^\ast_{\sigma}(u_x^r)a^\ast_{\sigma^\prime}(u_y^r), a_\uparrow(u^r_z)a_\downarrow(u^r_{z^\prime})]a^\ast_{\sigma^\prime}(\overline{v}_y^r)a^\ast_\sigma(\overline{v}_x^r)\nonumber
    \\
    &&= -a^\ast_{\sigma}(u_x^r)a^\ast_{\sigma^\prime}(u_y^r)[a^\ast_{\sigma^\prime}(\overline{v}_y^r)a^\ast_\sigma(\overline{v}_x^r), a_\uparrow(\overline{v}^r_z)a_\downarrow(\overline{v}^r_{z^\prime})]a_\uparrow(u^r_z)a_\downarrow(u^r_{z^\prime})\nonumber
    \\
    &&\quad - a^\ast_{\sigma^\prime}(\overline{v}_y^r)a^\ast_\sigma(\overline{v}_x^r)[a^\ast_{\sigma}(u_x^r)a^\ast_{\sigma^\prime}(u_y^r), a_\uparrow(u^r_z)a_\downarrow(u^r_{z^\prime})]a_\uparrow(\overline{v}^r_z)a_\downarrow(\overline{v}^r_{z^\prime})\nonumber
    \\
    &&\quad -  [a^\ast_{\sigma}(u_x^r)a^\ast_{\sigma^\prime}(u_y^r), a_\uparrow(u^r_z)a_\downarrow(u^r_{z^\prime})][a_\uparrow(\overline{v}^r_z)a_\downarrow(\overline{v}^r_{z^\prime}),a^\ast_{\sigma^\prime}(\overline{v}_y^r)a^\ast_\sigma(\overline{v}_x^r)]\nonumber.
  \end{eqnarray}
The terms which are going to give the main contributions are those which are completely anti-normal ordered. We take into account the first line in the right hand side above and we compute
\begin{eqnarray}\label{eq: comm 4v Q4}
  [a^\ast_{\sigma^\prime}(\overline{v}^r_y)a^\ast_\sigma(\overline{v}^r_x), a_\uparrow(\overline{v}^r_z)a_\downarrow(\overline{v}^r_{z^\prime})] &=& \delta_{\sigma,\uparrow}\omega^r_\sigma(x;z)a^\ast_{\sigma^\prime}(\overline{v}_y)a_\downarrow(\overline{v}^r_{z^\prime}) - \delta_{\sigma, \downarrow} \omega^r_\sigma(x;z^\prime)a^\ast_{\sigma^\prime}(\overline{v}_y)a_\uparrow(\overline{v}^r_z)\nonumber 
  \\
  && + \delta_{\sigma^\prime, \uparrow} \omega^r_{\sigma^\prime}(y;z) a_\downarrow(\overline{v}^r_{z^\prime})a^\ast_{\sigma}(\overline{v}_x) - \delta_{\sigma^\prime, \downarrow} \omega^r_{\sigma^\prime}(y;z^\prime)a_\uparrow(\overline{v}^r_{z})a^\ast_\sigma(\overline{v}_x).\nonumber
\end{eqnarray}
All of the corresponding terms can be estimated in the same way, it is not necessary to put them in normal order. We bound for instance 
\begin{equation} \label{eq: term I Q4}
  \mathrm{I}:=  \int dxdydzdz^\prime\, V(x-y)\varphi(z-z^\prime)\omega^r_\uparrow(x;z)\langle \xi_\lambda, a^\ast_\uparrow(u^r_x)a^\ast_{\downarrow}(u^r_y) a^\ast_{\downarrow}(\overline{v}^r_y)a_\downarrow (\overline{v}^r_{z^\prime})a_\downarrow(u^r_{z^\prime})a_\uparrow(u^r_{z})\xi_\lambda\rangle.
\end{equation}
Following the same ideas as in \cite[Proposition 6.1]{FGHP}, we first replace $u^r_x$, $u^r_y$ with $u_x$, $u_y$. 
We can then rewrite $\mathrm{I}$ as: 
\begin{equation}
  \mathrm{I} = \int dxdydz\, V(x-y)\omega^r_\uparrow(x;z) \langle \xi_\lambda, a_\uparrow^\ast(u_x)a_\downarrow^\ast(u_y) a^\ast_\downarrow(\overline{v}^r_y)b_\downarrow(\varphi_z)a_\uparrow(u^r_{z})\xi_\lambda\rangle + \widetilde{\mathrm{I}} \equiv \mathrm{I}_1 + \widetilde{\mathrm{I}},
\end{equation}
where $\widetilde{\mathrm{I}}$ is an error term. We start by estimating $\mathrm{I}_1$. 
From Proposition \ref{pro: L1, L2 v} (to bound $\|\overline{v}^r_{y,\sigma}\|\leq C\rho^{1/2}$ and $\|\omega^r_{x,\sigma}\|_1\leq C\rho^{-\epsilon/3}$) and Lemma \ref{lem: bound b phi} (to control $\|b_\sigma(\varphi_z)\|\leq C\rho^{1/3}$), we get
\begin{eqnarray}
  |\mathrm{I}_1| &\leq&  \int dxdydz\, |V(x-y)||\omega^r_\uparrow(x;z) | \| a^\ast_\downarrow(\overline{v}^r_y)\|\|b_\downarrow(\varphi_z)\|a_\uparrow(u_x)a_\downarrow(u_y)\xi_\lambda\|\|a_\uparrow(u^r_{z})\xi_\lambda\|\nonumber 
  \\
  &\leq& C\rho^{\frac{1}{2} +\frac{1}{3}-\frac{\epsilon}{3}}\|\widetilde{\mathbb{Q}}_{1}^{\frac{1}{2}}\xi_\lambda\|\|\mathcal{N}^{\frac{1}{2}}_>\xi_\lambda\|.
\end{eqnarray}
We now take into account the error term $\widetilde{\mathrm{I}}$. As we already did in Lemma \ref{lem: error kinetic energy}, we write $\hat{u}_\sigma(k) = \hat{u}^r_\sigma(k) + \hat{\alpha}_\sigma(k) + \hat{\delta}^<_\sigma(k)$, where $\hat{\alpha}_\sigma(k)$ is supported for $k\notin\mathcal{B}_F^\sigma$ and $|k| \leq 3k_F^\sigma$ and $\hat{\delta}^>_\sigma(k)$ is supported for $|k| \geq \rho^{-\beta}$. The estimate of this error term is similar as the corresponding one in \cite[Appendix C.3]{FGHP}. More precisely, as discussed in \cite[Appendix C.3]{FGHP}, when at least one between $a^\ast(u_x^r)$, $a^\ast(u_y^r)$ is replaced by either $a^\ast (\delta^>_x)$ or $a^\ast (\delta^>_y)$, the corresponding contribution can be proven to be smaller than any power of $\rho^\beta$ (see also the proof of \cite[Proposition 6.1]{FGHP} and the discussion for the estimate of the term $\mathrm{III}_f$ in Lemma \ref{lem: error kinetic energy}). We then discuss the case in which in $\widetilde{\mathrm{I}}$ at least one between $a^\ast(u_x^r)$, $a^\ast(u_y^r)$ is replaced by either $a^\ast(\alpha_x)$ or $a^\ast(\alpha_y)$. We consider for instance: 
\begin{equation}
  \widetilde{\mathrm{I}}_a := \int dxdydz\, V(x-y)\omega^r_\uparrow(x;z) \langle \xi_\lambda, a_\uparrow^\ast(\alpha_x)a_\downarrow^\ast(u_y) a^\ast_\downarrow(\overline{v}^r_y)b_\downarrow(\varphi_z)a_\uparrow(u^r_{z})\xi_\lambda\rangle.
\end{equation}
Then, we can bound
\begin{eqnarray}
  |\widetilde{\mathrm{I}}_a| &\leq& \int dxdydz\, V(x-y)|\omega^r_\uparrow(x;z)| \|a_\uparrow^\ast(\alpha_x)\| \|a^\ast_\downarrow(\overline{v}^r_y)\|\|b_\downarrow(\varphi_z)\|\|a_\downarrow^\ast(u_y)\xi_\lambda\|\|a_\uparrow(u^r_{z})\xi_\lambda\|
  \\
&\leq& C\rho^{1+\frac{1}{3} -\frac{\epsilon}{3}}\langle \xi_\lambda, \mathcal{N}\xi_\lambda\rangle.
\end{eqnarray}
where we used again Proposition \ref{pro: L1, L2 v}, Lemma \ref{lem: bound b phi} together with $\|a_\uparrow^\ast(\alpha_x)\|\leq \|\alpha_x\|_2 \leq C\rho^{1/2}$. The estimate for the remaining error terms in $\widetilde{\mathrm{I}}$ is the same, we omit the details. Note that the estimate of the term $\widetilde{\mathrm{I}}$ is necessary only when we propagate $\widetilde{\mathbb{Q}}_4$.

We now look to the second line and to the third line in the ride side of \eqref{eq: comm Q4}, from there we extract the leading term plus errors. To bound all the other error terms one can calculate 
\begin{eqnarray}
  [a_\uparrow^\ast(u^r_x) a_\downarrow^\ast(u^r_y), a_\uparrow(u^r_z)a_\downarrow(u^r_{z^\prime})] &=& (u^r)^2_\uparrow(z;x) a_\downarrow(u^r_{z^\prime})a^\ast_\downarrow(u^r_y) - (u^r)^2_\downarrow(z^\prime;y)a^\ast_\uparrow(u^r_x)a_\uparrow(u^r_z)
  \\
  &=&  (u^r)^2_\downarrow(z^\prime;y)(u^r)^2_\uparrow(z;x) - (u^r)^2_\uparrow(z;x) a^\ast_\downarrow(u^r_y)a_\downarrow(u^r_{z^\prime}) \nonumber
  \\
  &&-(u^r)^2_\downarrow(z^\prime;y)a^\ast_\uparrow(u^r_x)a_\uparrow(u^r_z), \nonumber
\end{eqnarray}
and use \eqref{eq: comm 4v Q4}. In particular, putting all of these error terms in normal order, we get a constant term $\mathrm{I}_{\mathrm{main}}$ plus an error $\mathcal{E}$. The error can be bounded by proceeding as in \cite[Proposition 6.1]{FGHP} and using the estimate proved in Proposition \ref{pro: L1, L2 v} and in Lemma \ref{lem: bound phi}. We have
\[
  |\mathcal{E}| \leq  C\rho\langle \xi_\lambda, \mathcal{N}\xi_\lambda\rangle.
\]
In particular, the main contribution coming from the last line of \eqref{eq: comm Q4} is the constant one, given by 
\begin{equation} \label{eq: I main Q4}
  \mathrm{I}_{\mathrm{main}}  = 2 \int dxdydzdz^\prime V(x-y) \varphi(z-z^\prime) (u^r)^2_\uparrow(z;x)(u^r)^2_\downarrow(z^\prime;y)\omega_\uparrow^r(x;z)\omega^r_\downarrow(y;z^\prime),
\end{equation}
where we also used that $\varphi(x) = \varphi(-x)$. 
All together, we find
\begin{equation} 
  \frac{d}{d\lambda}\langle \xi_\lambda, \widetilde{\mathbb{Q}}^r_4\xi_\lambda\rangle = 2 \int dxdydzdz^\prime V(x-y) \varphi(z-z^\prime) (u^r)^2_\uparrow(z;x)(u^r)^2_\downarrow(z^\prime;y)\omega_\uparrow^r(x;z)\omega^r_\downarrow(y;z^\prime)  + \mathrm{Err}_{\widetilde{\mathbb{Q}}^r_4}(\xi_\lambda), 
\end{equation}
with, for some $0<\epsilon\leq 1$,
\begin{equation} 
  |\mathrm{Err}_{\widetilde{\mathbb{Q}}^r_4}(\xi_\lambda)| \leq  C\rho^{\frac{5}{6} -\frac{\epsilon}{3}}\|\widetilde{\mathbb{Q}}_1^{\frac{1}{2}}\xi_\lambda\|\|\mathcal{N}_>^{\frac{1}{2}}\xi_\lambda\| +C\rho\langle \xi_\lambda, \mathcal{N}\xi_\lambda\rangle
\end{equation}
\\ 
\underline{Leading constant contribution}
We now analyze the constant term in \eqref{eq: I main Q4}. The idea is first to replace each $u^r$ with the periodic Dirac delta distribution. To do so, one can write, similarly as in Lemma \ref{lem: error kinetic energy} (see the estimate for the term in \eqref{eq: term IIIf H0}), $u^r = \delta^r - \nu^r$ (see \eqref{eq: def delta nu}). One then gets 
\begin{multline}\label{eq: uu - dd + er}
  2 \int dxdydzdz^\prime V(x-y) \varphi(z-z^\prime) (u^r)^2_\uparrow(z;x)(u^r)^2_\downarrow(z^\prime;y)\omega_\uparrow^r(x;z)\omega^r_\downarrow(y;z^\prime) 
  \\
  = 2 \int dxdydzdz^\prime V(x-y) \varphi(z-z^\prime) \delta^r_\uparrow(z;x)\delta^r_\downarrow(z^\prime;y)\omega_\uparrow^r(x;z)\omega^r_\downarrow(y;z^\prime) + \mathcal{E}.
\end{multline}
Using the fact that $\|\delta^r\|_1\leq C$, the support properties of $\hat{\omega}^r$ and $\hat{\nu}^r$ together with Lemma \ref{lem: bound phi}, one can prove that $\mathcal{E} = \mathcal{O}(L^3\rho^{7/3})$. We omit the details since the calculations can be performed as in \cite[Appendix C.4]{FGHP}(see also the estimate of the constant term  in Lemma \ref{lem: error kinetic energy}).
The idea then is then to replace $\delta^r(x;z)\delta^r(y;z^\prime)$ with $\delta(x-z)\delta(y-z^\prime)$. To do that one has to use the regularity of the interaction potential (see Assumption \ref{asu: potential V}) together with the properties of $\varphi$ stated in Lemma \ref{lem: bound phi} and Lemma \ref{lem: decay phi} and the bounds proved in Proposition \ref{pro: L1, L2 v}. More precisely, these terms are $o(\rho^{7/3})$ thanks to the regularity of the interaction potential and the decay properties of $\varphi$ (see Lemma \ref{lem: decay phi}), we omit the details for this replacement.  Thus, proceeding as in \cite[Appendix C.4]{FGHP}, one can prove that  
\begin{multline}
2 \int\, dxdydzdz^\prime V(x-y) \varphi(z-z^\prime) (u^r)^2_\uparrow(z;x)(u^r)^2_\downarrow(z^\prime;y)\omega_\uparrow^r(x;z)\omega^r_\downarrow(y;z^\prime
)
\\ = 2\rho_\uparrow^r\rho_\downarrow ^r\int\, dxdy\, V(x-y)\varphi(x-y) + CL^3\rho^{\frac{7}{3}},
\end{multline}
where $\rho_\uparrow^r = \omega_\uparrow^r (x;x)$ and $\rho_\downarrow^r = \omega_\downarrow^r (y;y)$.
The second step is to remove the regularization from the densities. As a consequence of the regularization we put on $\overline{v}^r$, we have
\begin{equation}
  |\rho_\sigma - \rho^r_\sigma| \leq C\rho^{1+\frac{\varepsilon}{3}}.
\end{equation}
It then follows that 
\begin{equation}
  2\rho_\uparrow^r\rho_\downarrow ^r\int dxdy\, V(x-y)\varphi(x-y) = 2\rho_\uparrow\rho_\downarrow \int dxdy\, V(x-y)\varphi(x-y) + CL^3\rho^{2+\frac{\epsilon}{3}},
\end{equation}
In conclusion, we proved that
 \begin{equation} 
  \frac{d}{d\lambda}\langle \xi_\lambda, \widetilde{\mathbb{Q}}^r_4\xi_\lambda\rangle = 2\rho_\uparrow\rho_\downarrow \int dxdy\, V(x-y)\varphi(x-y) + \mathcal{E}_{\widetilde{\mathbb{Q}}^r_4}(\xi_\lambda), 
\end{equation}
with, for some $0<\delta<1$,
\begin{equation} 
  |\mathcal{E}_{\widetilde{\mathbb{Q}}^r_4}(\xi_\lambda)| \leq CL^3\rho^{\frac{7}{3}}   +  CL^3\rho^{2+\frac{\epsilon}{3}} +C\rho^{\frac{5}{6} -\frac{\epsilon}{3}}\|\widetilde{\mathbb{Q}}_1^{\frac{1}{2}}\xi_\lambda\|\|\mathcal{N}_>^{\frac{1}{2}}\xi_\lambda\| +C\rho\langle \xi_\lambda, \mathcal{N}\xi_\lambda\rangle
\end{equation}
Note that the analysis for the constant term which can be extracted propagating $\mathbb{Q}_4$ can be done in the same way.
\end{proof}
\subsection{Regularization of $\mathbb{T}_2$ and $\mathbb{Q}_4$}
In order to use the scattering equation satisfied by $\varphi_\infty$, we need to regularize both $\mathbb{T}_2$ and $\mathbb{Q}_4$, to combine these operators with $\mathbb{T}_1$. We then define
\begin{equation}\label{eq: def T2r}
  \mathbb{T}_2^r = -\int dxdy\, V(x-y)\varphi(x-y) a_\uparrow(u_x^r)a_\uparrow(\overline{v}^r_x)a_\downarrow(u_y^r) a_\downarrow(\overline{v}^r_y)+ \mathrm{h.c.},
\end{equation}
and we recall that 
\begin{equation}
  \widetilde{\mathbb{Q}}_{4}^r = \int dxdy\, V(x-y) a_\uparrow(u_x^r)a_\uparrow (\overline{v}^r_x)a_\downarrow (u_y^r)  a_\downarrow(\overline{v}^r_y) + \mathrm{h.c.}
\end{equation}
\begin{proposition}\label{pro: reg} Let $\lambda\in [0,1]$. Let $\xi_\lambda$ be a state as in \eqref{eq: def xi lambda}. Under the same assumptions of Theorem \ref{thm: optimal up bd} and Theorem \ref{thm: lower bound}, it holds  
\begin{equation}\label{eq: T2reg up bd}
  |\langle \xi_\lambda, (\mathbb{T}_2 - \mathbb{T}_2^r)\xi_\lambda\rangle| \leq CL^{\frac{3}{2}}\rho^{\frac{3}{2}}\|\mathcal{N}^{\frac{1}{2}}\xi_\lambda\|. 
\end{equation}
Moreover, let $\eta >0$ and let 
\begin{equation}\label{eq: def tilde N}
  \widetilde{\mathcal{N}}:= \sum_{\sigma}\sum_{|k| \geq k_F + \rho^{\frac{1}{3} +\frac{\eta}{3}}} \hat{a}_{k,\sigma}\hat{a}_{k,\sigma}.
\end{equation}
It holds
\begin{equation}\label{eq: reg lw bd T2}
  |\langle \xi_\lambda, (\mathbb{T}_2 - \mathbb{T}_2^r) \xi_\lambda \rangle | \leq CL^{\frac{3}{2}}\rho^{\frac{3}{2}}\|\mathcal{N}_>^{\frac{1}{2}}\xi_\lambda\| + CL^{\frac{3}{2}}\rho^{\frac{3}{2} +\frac{\eta}{6}}\|\mathcal{N}^{\frac{1}{2}}\xi_\lambda\| + CL^{\frac{3}{2}}\rho^{\frac{3}{2}}\|\widetilde{\mathcal{N}}^{\frac{1}{2}}\xi_\lambda\|.
\end{equation} 
and
\begin{equation}\label{eq: reg lw bd Q4}
  |\langle \xi_\lambda, (\widetilde{\mathbb{Q}}_4 - \widetilde{\mathbb{Q}}_4^r) \xi_\lambda \rangle | \leq CL^3\rho^{2+\frac{\epsilon}{3}}  + \delta\langle \xi_\lambda, \widetilde{\mathbb{Q}}_1 \xi_\lambda\rangle +CL^{\frac{3}{2}}\rho^{\frac{3}{2}}\|\mathcal{N}_>^{\frac{1}{2}}\xi_\lambda\| + CL^{\frac{3}{2}}\rho^{\frac{3}{2} +\frac{\eta}{6}}\|\mathcal{N}^{\frac{1}{2}}\xi_\lambda\| + CL^{\frac{3}{2}}\rho^{\frac{3}{2}}\|\widetilde{\mathcal{N}}^{\frac{1}{2}}\xi_\lambda\|.
\end{equation}
\end{proposition}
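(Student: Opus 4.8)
\emph{The plan.} I would expand the unregularized operators appearing in $\mathbb{T}_2$ and $\widetilde{\mathbb{Q}}_4$ around their regularized versions and control each correction term separately. Writing $\hat u=\hat u^r+\hat\alpha+\hat\delta^{>}$ and $\hat v=\hat v^r+\hat\beta$, the ``near Fermi surface'' corrections $\hat\alpha,\hat\beta$ are supported in the shells $\{k_F<|k|\le 3k_F\}$ and $\{k_F-\rho^{\frac13+\frac{\epsilon}{3}}<|k|\le k_F\}$, while $\hat\delta^{>}$ lives in $\{|k|\ge\rho^{-\beta}\}$; counting lattice points gives $\|\alpha_x\|_2\le C\rho^{1/2}$ and $\|\beta_x\|_2\le C\rho^{1/2+\epsilon/6}$. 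I will use these together with $\|v^r_x\|_2\le C\rho^{1/2}$ (Proposition~\ref{pro: L1, L2 v}), $\int\|a_\sigma(u^r_x)\psi\|^2\le C\langle\psi,\mathcal{N}_>\psi\rangle$ (Remark~\ref{rem: N_>}), the identity $\int\|a_\sigma(u_x)\psi\|^2=\sum_{|k|>k_F}\|\hat a_{k,\sigma}\psi\|^2\le\langle\psi,\mathcal{N}\psi\rangle$ for the merely distributional $u$, and $0\le\hat u^r\le\hat u$ pointwise. The ultraviolet terms are negligible: any leg carrying $\hat\delta^{>}$ forces the smooth, compactly supported kernel ($V\varphi$ or $V$) to spend its Fourier decay on the scale $\rho^{-\beta}$, so — as in \cite[Appendix C]{FGHP} — these contributions are $o(L^3\rho^n)$ for every $n$, in particular $O(L^3\rho^{7/3})$, and I discard them below.

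\emph{The term $\mathbb{T}_2-\mathbb{T}_2^r$.} Telescoping the two $u$-legs and then expanding each undifferentiated $u$ as $u^r+\alpha+\delta^{>}$, every surviving monomial has exactly one leg that is $L^2$-singular (an undifferentiated $u$, or $u^r$, or — in the monomials built from two $\alpha$'s — a designated $\alpha$) and three legs of $v^r$- or $\alpha$-type. The rule is: apply the $L^2$-singular leg directly to $\xi_\lambda$ — it is the only leg never turned into a creation operator — and let each of the remaining three legs contribute a factor $\|\cdot\|_2\le C\rho^{1/2}$; a single Cauchy–Schwarz in $(x,y)$, absorbed by $\|V\varphi\|_{L^1}\le C$ (recall $V\varphi\ge0$ is bounded, Lemma~\ref{lem: bound phi}, with compact support), then yields $CL^{3/2}\rho^{3/2}$ times the appropriate $\|\,\cdot^{1/2}\xi_\lambda\|$: with $\mathcal{N}_>$ when the singular leg is $u^r$ and with $\mathcal{N}$ when it is an undifferentiated $u$. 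This is already the upper-bound statement \eqref{eq: T2reg up bd}. For \eqref{eq: reg lw bd T2} the $\alpha$-leg produced by expanding the undifferentiated $u$ is further split at radius $k_F+\rho^{\frac13+\frac{\eta}{3}}$ into $\alpha_{\mathrm{rest}}+\alpha_{\mathrm{thin}}$: applying $\alpha_{\mathrm{rest}}$ to $\xi_\lambda$ produces $\widetilde{\mathcal{N}}^{1/2}$, whereas for $\alpha_{\mathrm{thin}}$ I instead apply one of the $v^r$-legs to $\xi_\lambda$ (giving $\mathcal{N}^{1/2}$) and keep $\alpha_{\mathrm{thin}}$ among the three $\rho^{1/2}$-factors, using $\|\alpha_{\mathrm{thin},x}\|_2\le C\rho^{1/2+\eta/6}$ to gain $\rho^{\eta/6}$. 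Collecting the three possibilities gives \eqref{eq: reg lw bd T2}.

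\emph{The term $\widetilde{\mathbb{Q}}_4-\widetilde{\mathbb{Q}}_4^r$.} I would split the difference through the operator in which only the $\overline v$-legs are regularized. In the $\overline v$-correction part both $u$-legs are still the distributional $u$, so neither can be turned into an $L^2$-factor; I therefore pair $a_{\sigma'}(u_y)a_\sigma(u_x)\xi_\lambda$ and use the $\widetilde{\mathbb{Q}}_1$-structure $\int\!\!\int V(x-y)\|a_{\sigma'}(u_y)a_\sigma(u_x)\xi_\lambda\|^2\le\langle\xi_\lambda,\widetilde{\mathbb{Q}}_1\xi_\lambda\rangle$; combined with the two remaining ($\beta$- and $\overline v$-type) $L^2$-factors, $\|V\|_{L^1}\le C$, and the arithmetic–geometric inequality, and observing that the $\beta$-factor supplies a $\rho^{\epsilon/6}$, this gives $\delta\langle\xi_\lambda,\widetilde{\mathbb{Q}}_1\xi_\lambda\rangle+CL^3\rho^{2+\epsilon/3}$. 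In the $u$-correction part the $\overline v$-legs are already regularized and each monomial again has a single $L^2$-singular leg together with three $v^r$/$\alpha$-type legs; these are treated exactly as for $\mathbb{T}_2$ — the singular leg onto $\xi_\lambda$ (yielding $\mathcal{N}_>^{1/2}$, or $\widetilde{\mathcal{N}}^{1/2}$, or the $\rho^{\eta/6}$-gained $\mathcal{N}^{1/2}$ after the thin/rest split of $\alpha$), the other three as $\rho^{1/2}$-factors, and $\|V\|_{L^1}\le C$ for the $(x,y)$-integral. Collecting all contributions, together with the $O(L^3\rho^{7/3})$ ultraviolet leftover, gives \eqref{eq: reg lw bd Q4}.

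\emph{The main obstacle.} The delicate point — and the reason \eqref{eq: reg lw bd T2}–\eqref{eq: reg lw bd Q4} are not fully optimal, hence the sub-optimality of Theorem~\ref{thm: lower bound} — is the contribution of momenta at distance $O(\rho^{1/3})$ from the Fermi surface, i.e.\ the shells carrying $\hat\alpha$ and $\hat\beta$: unlike in the ultraviolet regime, $\mathbb{H}_0$ has no spectral gap there, so those legs cannot be made genuinely small uniformly in $\xi_\lambda$. One is forced either to pay $\rho^{\eta/6}$ on a thin sub-shell or to carry $\widetilde{\mathcal{N}}$, which is not controllable at order $\rho^{7/3}$ against $\mathbb{H}_0$ uniformly in the volume; in the upper bound this costs nothing because the trial state makes $\mathcal{N}$ itself small enough for \eqref{eq: T2reg up bd} to suffice. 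A minor technical nuisance to keep track of throughout is that the unregularized $u$ is only an operator-valued distribution ($\hat u=\mathbf{1}_{\{|k|>k_F\}}\notin L^2$): any leg carrying it must always be either folded into the $\widetilde{\mathbb{Q}}_1$-pairing or applied directly to $\xi_\lambda$, never left standing as an $L^2$-factor.
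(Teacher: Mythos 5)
Your proposal is correct and follows essentially the same route as the paper: expand $\hat u = \hat u^r + \hat\alpha + \hat\delta^>$ (and, for $\widetilde{\mathbb{Q}}_4$, additionally $\hat v = \hat v^r + \hat\beta$), discard the UV pieces using the Fourier decay of $V\varphi$ and $V$ on the scale $\rho^{-\beta}$, bound the remaining monomials with the ``one singular leg applied to $\xi_\lambda$, three $\rho^{1/2}$-type operator-norm legs'' pattern to get $\mathcal{N}^{1/2}$/$\mathcal{N}_>^{1/2}$, then for the lower-bound version split $\alpha = \alpha_{\mathrm{thin}} + \alpha_{\mathrm{rest}}$ at radius $k_F + \rho^{1/3+\eta/3}$ to trade $\widetilde{\mathcal{N}}^{1/2}$ against the gain $\rho^{\eta/6}$, and for the $v$-replacement in $\widetilde{\mathbb{Q}}_4$ pair the two distributional $u$-legs via $\widetilde{\mathbb{Q}}_1$ and close with Young's inequality. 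This is exactly the decomposition and estimation strategy the paper employs (delegating several explicit computations to the corresponding appendix of the reference), and the small freedoms you exercise — e.g.\ which of the interchangeable $\rho^{1/2}$-type legs is applied to $\xi_\lambda$ in the thin-shell case — do not change the argument.
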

\begin{remark}
The estimate in \eqref{eq: T2reg up bd} is sufficient to prove the optimal upper bound stated in Theorem \ref{thm: optimal up bd}. The ones in \eqref{eq: reg lw bd T2} and in \eqref{eq: reg lw bd Q4} instead are useful for the proof of Theorem \ref{thm: lower bound}.
We stress that the proof of \eqref{eq: T2reg up bd} can be done  as in \cite[Appendix C.2]{FGHP} using the decay estimates stated in Lemma \ref{lem: decay phi}. Below we mainly focus on the bounds \eqref{eq: reg lw bd T2} and \eqref{eq: reg lw bd Q4}.
\end{remark}
\begin{proof}
  We start by proving the bound for $\mathbb{T}_2 - \mathbb{T}_2^r$. As we already did, we can write
  \[
    \hat{u}_\sigma(k) = \hat{u}^r_\sigma(k) + \hat{\alpha}_\sigma(k) + \hat{\delta}^>_\sigma(k),
   \]
   with $\hat{\alpha}_\sigma(k)$ supported for $k\notin\mathcal{B}_F^\sigma$ and $|k| \leq 3k_F^\sigma$, and $\hat{\delta}^>_\sigma(k)$ supported for $|k| \geq \rho^{-\beta}$.
  In particular, proceeding similarly as in \cite[Appendix C.2]{FGHP}, one can prove that 
  \begin{eqnarray}
    \langle \xi_\lambda, (\mathbb{T}_2 - \mathbb{T}_2^r)\xi_\lambda\rangle &=& \int dxdy\, V(x-y)\varphi(x-y)\langle \xi_\lambda, a_\uparrow(\overline{v}^r_x)a_\uparrow(\alpha_x)a_\downarrow(\overline{v}^r_y)a_\downarrow(u^r_y)\xi_\lambda\rangle 
    \\
    && + \int dxdy\, V(x-y)\varphi(x-y)\langle \xi_\lambda, a_\uparrow(\overline{v}^r_x)a_\uparrow(\alpha_x)a_\downarrow(\overline{v}^r_y)a_\downarrow(\alpha_y)\xi_\lambda\rangle + \mathcal{E}\nonumber,
  \end{eqnarray}
  with, for any $n\in\mathbb{N}$ large enough,
   \begin{equation}\label{eq: est E reg lw bd}|\mathcal{E}| \leq C_n L^3\rho^{\beta n + \frac{13}{12}}.\end{equation}
   Note that in $\mathcal{E}$ we collect all the error terms in which there is at least one between $a_\uparrow(\delta^>_x)$ and $a_\downarrow(\delta^>_y)$.
   We underline once more that, as in \cite[Appendix C.2]{FGHP}, the bound above is a consequence of the decay estimates for $\varphi$ (see Lemma \ref{lem: decay phi}) and of the regularity of the interaction potential.
  We then estimate
  \begin{equation}
    \mathrm{I} := \int dxdy\, V(x-y)\varphi(x-y)\langle \xi_\lambda, a_\uparrow(\overline{v}^r_x)a_\uparrow(\alpha_x)a_\downarrow(\overline{v}^r_y)a_\downarrow(u^r_y)\xi_\lambda\rangle ,
  \end{equation}
  \begin{equation}
    \mathrm{II} := \int dxdy\, V(x-y)\varphi(x-y)\langle \xi_\lambda, a_\uparrow(\overline{v}^r_x)a_\uparrow(\alpha_x)a_\downarrow(\overline{v}^r_y)a_\downarrow(\alpha_y)\xi_\lambda\rangle.
  \end{equation}
In particular, using that $\|\alpha_{x,\sigma}\|_2 \leq C\rho^{1/2}$ and Lemma \ref{lem: bound phi}, we have 
\begin{equation}\label{eq: est I and II}
  |\mathrm{I}| \leq CL^{\frac{3}{{}2}}\|\varphi\|_\infty\|V\|_1\rho^{\frac{3}{2}}\|\mathcal{N}_>^{\frac{1}{2}}\xi_\lambda\|, \qquad |\mathrm{II}| \leq CL^{\frac{3}{{}2}}\|\varphi\|_\infty\|V\|_1\rho^{\frac{3}{2}}\|\mathcal{N}^{\frac{1}{2}}\xi_\lambda\|.
\end{equation}
Notice that the estimates above are sufficient to prove \eqref{eq: T2reg up bd}. 
We now prove the estimate \eqref{eq: reg lw bd T2}. In particular, we rewrite $\alpha_y$ in $\mathrm{II}$ as $\alpha_{y}^< + \alpha_y^>$, where $\hat{\alpha}^<_\sigma(k)$ is supported for $k_F^\sigma < |k| < k_F^\sigma + \rho^{1/3 + \eta/3}$ (for some $\eta>0$ to be chosen later) and $\hat{\alpha}^>_\sigma(k)$ is supported for $k_F + \rho^{1/3 + \eta/3} \leq |k| \leq 3k_F$. Correspondingly, we rewrite $\mathrm{II} = \mathrm{II}_< + \mathrm{II}_>$. Being $\|\alpha^<_{y,\sigma}\|_2 \leq C\rho^{1/2 + \eta/6}$, we can bound the term $\mathrm{II}_<$ as 
\begin{equation}\label{eq: est II< lw bd reg}
  |\mathrm{II}_< | \leq CL^{\frac{3}{2}}\rho^{\frac{3}{2} + \frac{\eta}{6}}\|\mathcal{N}^{\frac{1}{2}}\xi_\lambda\|. 
\end{equation} 
We now take into account $\mathrm{II}_>$. Being the momenta in $\hat{\alpha}^>(k)$ be such that $|k| \geq k_F + \rho^{1/3 + \eta/3}$, we get 
\begin{equation}\label{eq: bound alpha y}
  \int dy\, \|a_\downarrow(\alpha_y^>)\xi_\lambda\|^2 \leq \sum_{|k| \geq k_F + \rho^{\frac{1}{3} +\frac{\eta}{3}}}\langle \xi_\lambda,\hat{a}^\ast_{k, \downarrow} \hat{a}_{k,\downarrow}\xi_\lambda\rangle \leq \langle \xi_\lambda, \widetilde{\mathcal{N}}\xi_\lambda\rangle,
\end{equation}
according to the definition in \eqref{eq: def tilde N}. We can then estimate $\mathrm{II}_>$ as follows
\begin{equation}
  |\mathrm{II}_>| \leq\int dxdy\, V(x-y)\varphi(x-y) \|a_\uparrow(\overline{v}^r_x)\|\|a_\uparrow(\alpha_x)\|\|a_\downarrow(\overline{v}^r_y)\|\|a_\downarrow(\alpha_y)\xi_\lambda\|\leq CL^{\frac{3}{2}}\rho^{\frac{3}{2}}\|\widetilde{\mathcal{N}}^{\frac{1}{2}}\xi_\lambda\|.
\end{equation}
Thus, we get 
\begin{equation}\label{eq: est II lw bd reg}
  |\mathrm{II}| \leq CL^{\frac{3}{2}}\rho^{\frac{3}{2} +\frac{\eta}{6}}\|\mathcal{N}^{\frac{1}{2}}\xi_\lambda\| + CL^{\frac{3}{2}}\rho^{\frac{3}{2}}\|\widetilde{\mathcal{N}}^{\frac{1}{2}}\xi_\lambda\|.
\end{equation}
Combining the estimates in \eqref{eq: est E reg lw bd}, the one in \eqref{eq: est I and II} for the term $\mathrm{I}$ and \eqref{eq: est II lw bd reg} for the contribution $\mathrm{II}$, we get, taking $n \in \mathbb{N}$ large enough, 
\begin{equation}
  |\langle \xi_\lambda, (\mathbb{T}_2 - \mathbb{T}_2^r)\xi_\lambda\rangle | \leq CL^{\frac{3}{2}}\rho^{\frac{3}{2}}\|\mathcal{N}_>^{\frac{1}{2}}\xi_\lambda\| + CL^{\frac{3}{2}}\rho^{\frac{3}{2} +\frac{\eta}{6}}\|\mathcal{N}^{\frac{1}{2}}\xi_\lambda\| + CL^{\frac{3}{2}}\rho^{\frac{3}{2}}\|\widetilde{\mathcal{N}}^{\frac{1}{2}}\xi_\lambda\|,
\end{equation}
which concludes the proof of \eqref{eq: reg lw bd T2}.
The bound for $\langle \xi_\lambda, (\widetilde{\mathbb{Q}}^r_4 - \widetilde{\mathbb{Q}}_4)\xi_\lambda\rangle$ can be proven similarly. In particular, the proof for the estimate in \eqref{eq: reg lw bd Q4} can be done similarly as above. Note that the contribution $\langle \xi_\lambda, \mathbb{Q}_1\xi_\lambda\rangle + CL^3\rho^{2+\epsilon/3}$ in the right hand side of \eqref{eq: reg lw bd Q4} comes from the replacement of $a_\uparrow(\overline{v}_x) a_\downarrow(\overline{v}_y)$ with $a_\uparrow(\overline{v}_x^r) a_\downarrow(\overline{v}_y^r)$. We omit the details, see \cite[Appendix C.2]{FGHP}.
\end{proof}
\subsection{Scattering equation cancellation}
We can now use the equation satisfied by $\varphi_\infty$ (see \eqref{eq: equation for tilde phi}) to get a cancellation up to small error terms.
\begin{proposition}[Scattering equation cancellation]\label{pro: scatt canc} Let $\lambda\in [0,1]$. Let $\xi_\lambda$ be a state as in \eqref{eq: def xi lambda}. Under the same assumptions of Theorem \ref{thm: optimal up bd} and Theorem \ref{thm: lower bound}, it holds
\begin{equation}
  |\langle\xi_\lambda, (\mathbb{T}_1 + \mathbb{T}_2^r + \widetilde{\mathbb{Q}}_4^r)\xi_\lambda \rangle|\leq CL^{\frac{3}{2}}\rho^{\frac{3}{2}}\|\mathcal{N}_>^{\frac{1}{2}}\xi_\lambda\|.
\end{equation}
\end{proposition}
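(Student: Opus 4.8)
The plan is to combine $\mathbb{T}_1$, $\mathbb{T}_2^r$ and $\widetilde{\mathbb{Q}}_4^r$ into a single operator whose scalar kernel is the left‑hand side of the scattering equation \eqref{eq: equation for tilde phi}, and then to cash in on the resulting renormalized interaction being small. First I would note that, after the regularizations set up in Section~\ref{sec: propagation estimates}, these three operators all carry the same operator content $a_\uparrow(u^r_x)a_\uparrow(\overline{v}^r_x)a_\downarrow(u^r_y)a_\downarrow(\overline{v}^r_y)+\mathrm{h.c.}$ and differ only through their kernels $2\Delta\varphi(x-y)$, $-V(x-y)\varphi(x-y)$ and $V(x-y)$ (see \eqref{eq: def T1}, \eqref{eq: def T2r} and the expression for $\widetilde{\mathbb{Q}}_4^r$ recalled just before Proposition~\ref{pro: reg}). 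Hence their sum has kernel $2\Delta\varphi + V(1-\varphi)$, and for $L$ large enough — so that $\varphi=\varphi_\infty$, $V=V_\infty$ and $\mathcal{E}_{\{\varphi_0, \chi_{\sqrt[3]{\rho}}\}}=\mathcal{E}^\infty_{\{\varphi_0, \chi_{\sqrt[3]{\rho}}\}}$ on $\Lambda_L$ — the periodization of \eqref{eq: equation for tilde phi} turns this kernel into $\mathcal{E}_{\{\varphi_0, \chi_{\sqrt[3]{\rho}}\}}(x-y)$. Using $a_\downarrow(u^r_y)a_\downarrow(\overline{v}^r_y)=b_{y,\downarrow}$ and the notation of Lemma~\ref{lem: b error scatt}, and that $b_{x,\uparrow}:=a_\uparrow(u^r_x)a_\uparrow(\overline{v}^r_x)$ commutes with all $\downarrow$‑operators, I would then rewrite
\begin{equation*}
  \mathbb{T}_1 + \mathbb{T}_2^r + \widetilde{\mathbb{Q}}_4^r = \int dx\, b_{x,\uparrow}\, b_\downarrow\big((\mathcal{E}_{\{\varphi_0, \chi_{\sqrt[3]{\rho}}\}})_x\big) + \mathrm{h.c.}
\end{equation*}

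Next I would estimate the expectation on $\xi_\lambda$. Since $\xi_\lambda$ is normalized ($T_\lambda$, $R$ unitary) and the $\mathrm{h.c.}$ term yields the complex conjugate, it suffices to bound $\langle\xi_\lambda, A\xi_\lambda\rangle$ with $A:=\int dx\, b_{x,\uparrow}\, b_\downarrow((\mathcal{E}_{\{\varphi_0, \chi_{\sqrt[3]{\rho}}\}})_x)$. I would move $b_{x,\uparrow}$ past the commuting factor, apply Cauchy--Schwarz in the Fock inner product, bound $\|b_\downarrow((\mathcal{E}_{\{\varphi_0, \chi_{\sqrt[3]{\rho}}\}})_x)\|\leq C\rho$ by Lemma~\ref{lem: b error scatt}, and anticommute $a_\uparrow(u^r_x)a_\uparrow(\overline{v}^r_x)=-a_\uparrow(\overline{v}^r_x)a_\uparrow(u^r_x)$ together with $\|a_\uparrow(\overline{v}^r_x)\|\leq\|\overline{v}^r_x\|_2\leq C\rho^{1/2}$ (Proposition~\ref{pro: L1, L2 v}) to reach $|\langle\xi_\lambda, A\xi_\lambda\rangle|\leq C\rho^{3/2}\int dx\, \|a_\uparrow(u^r_x)\xi_\lambda\|$. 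A final Cauchy--Schwarz in $x$ produces a factor $L^{3/2}$, and the bound $\int dx\,\|a_\uparrow(u^r_x)\xi_\lambda\|^2\leq C\langle\xi_\lambda,\mathcal{N}_>\xi_\lambda\rangle$ from Remark~\ref{rem: N_>} then gives $|\langle\xi_\lambda, A\xi_\lambda\rangle|\leq CL^{3/2}\rho^{3/2}\|\mathcal{N}_>^{1/2}\xi_\lambda\|$, which is the claim.

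The hard part here is not analytic but a matter of careful bookkeeping in the opening identity: one must verify that $\mathbb{T}_2^r$ and $\widetilde{\mathbb{Q}}_4^r$ have indeed been reduced, after regularization, to exactly the operator content $a_\uparrow(u^r_x)a_\uparrow(\overline{v}^r_x)a_\downarrow(u^r_y)a_\downarrow(\overline{v}^r_y)+\mathrm{h.c.}$ with precisely the kernels $-V\varphi$ and $V$ — numerical coefficients and the symmetrization of $V$ in $x,y$ included — so that they cancel $\mathbb{T}_1$ modulo $\mathcal{E}_{\{\varphi_0, \chi_{\sqrt[3]{\rho}}\}}$ with no remainder. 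Everything downstream of the first display is a one‑line application of Lemma~\ref{lem: b error scatt}, Proposition~\ref{pro: L1, L2 v} and Remark~\ref{rem: N_>}. I would also stress that this is exactly the step where the favorable power $\rho^{3/2}$ is born: it rests on the smallness $\|b_\sigma((\mathcal{E}_{\{\varphi_0, \chi_{\sqrt[3]{\rho}}\}})_x)\|\leq C\rho$ of the renormalized interaction — to be contrasted with $\|b_\sigma(V_x)\|\leq C\rho^{1/3}$ — which is precisely what motivated introducing the localized scattering solution $\varphi_\infty$.
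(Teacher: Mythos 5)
Your proof is correct and follows essentially the same route as the paper: combine the three operators into one with kernel $2\Delta\varphi + V(1-\varphi)$, invoke the scattering equation to replace that kernel with $\mathcal{E}_{\{\varphi_0,\chi_{\sqrt[3]\rho}\}}$, and then extract $\rho$ from Lemma~\ref{lem: b error scatt}, $\rho^{1/2}$ from $\|a_\sigma(\overline v^r_x)\|$, and $L^{3/2}\|\mathcal{N}_>^{1/2}\xi_\lambda\|$ from the remaining $a_\sigma(u^r_x)\xi_\lambda$ via Cauchy--Schwarz. The only cosmetic difference is that you run the final operator-norm bounds on the $\uparrow$ factor and leave $a_\uparrow(u^r_x)\xi_\lambda$ for the $\mathcal{N}_>$ bound, whereas the paper applies the corresponding bounds to the $\downarrow$ factor; the argument and the resulting estimate are identical.
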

\begin{proof} We have 
\begin{equation}
  \langle\xi_\lambda, (\mathbb{T}_1 + \mathbb{T}_2^r + \widetilde{\mathbb{Q}}_4^r)\xi_\lambda \rangle 
  =  \int dxdy  (2\Delta\varphi + V(1-\varphi)) \langle  \xi_\lambda, a_\uparrow(u^r_x)a_\uparrow(\overline{v}^r_x)a_\downarrow(u^r_y)a_\downarrow(\overline{v}^r_y)\xi_\lambda\rangle + \mathrm{c.c.}
\end{equation}
Similarly as in Lemma \ref{lem: error kinetic energy} we now use the scattering equation. More precisely, being all the quantities we are integrating periodic and using that for each $x\in \Lambda_L$, for $L$ large enough, $\Delta\varphi(x)= \Delta\varphi_\infty(x)$, $V(x) = V_\infty(x)$, $\varphi(x) = \varphi_\infty(x)$ and $\mathcal{E}_{\varphi_0, \chi_{\sqrt[3]\rho}}(x) = \mathcal{E}^\infty_{\varphi_0, \chi_{\sqrt[3]\rho}}(x)$, we get 
\begin{multline*}
  \langle\xi_\lambda, (\mathbb{T}_1 + \mathbb{T}_2^r + \widetilde{\mathbb{Q}}_4^r)\xi_\lambda \rangle 
  \\
  = \int dxdy \, \mathcal{E}_{\varphi_0, \chi_{\sqrt[3]\rho}}(x-y) \langle  \xi_\lambda, a_\uparrow(u^r_x)a_\uparrow(\overline{v}^r_x)a_\downarrow(u^r_y)a_\downarrow(\overline{v}^r_y)\xi_\lambda\rangle + \mathrm{c.c.} 
\end{multline*}
We can now write that
\begin{multline*}
 \int dxdy  \,\mathcal{E}_{\varphi_0, \chi_{\sqrt[3]\rho}}(x-y) \langle  \xi_\lambda, a_\uparrow(u^r_x)a_\uparrow(\overline{v}^r_x)a_\downarrow(u^r_y)a_\downarrow(\overline{v}^r_y)\xi_\lambda\rangle 
  \\
   = \int dy \langle \xi_\lambda, b_\uparrow((\mathcal{E}_{\varphi_0, \chi_{\sqrt[3]\rho}})_y) a_\downarrow(u^r_y)a_\downarrow(\overline{v}^r_y)\xi_\lambda\rangle,
\end{multline*}
where we used the notations introduced in Lemma \ref{lem: b error scatt}.
We then get, using Proposition \ref{pro: L1, L2 v} and Lemma \ref{lem: b error scatt}
\begin{equation}
  | \langle\xi_\lambda, (\mathbb{T}_1 + \mathbb{T}_2^r + \widetilde{\mathbb{Q}}_4^r)\xi_\lambda \rangle| \leq \int dy\, \|b_\uparrow((\mathcal{E}_{\varphi_0, \chi_{\sqrt[3]\rho}})_y)\|\|a_\downarrow(\overline{v}^r_y)\|\|a_\downarrow(\overline{v}^r_y)\xi_\lambda\| \leq  CL^{\frac{3}{2}}\rho^{\frac{3}{2}}\|\mathcal{N}_>^{\frac{1}{2}}\xi_\lambda\|,
\end{equation}
and the result holds.
\end{proof}
\subsection{Propagation of the estimates: conclusions}
In this section we prove some propagation estimates, which allow us to close all the estimates in Proposition \ref{pro: H0}, Lemma \ref{lem: error kinetic energy}, Proposition \ref{pro: Q1}, Proposition \ref{pro: Q4} and Proposition \ref{pro: reg}.
\begin{proposition}\label{pro: propagation est}
  Let $\lambda\in [0,1]$. Let $\xi_\lambda$ be as in \eqref{eq: def xi lambda}. Under the same assumptions of Theorem \ref{thm: optimal up bd} and Theorem \ref{thm: lower bound}, it holds
  \begin{equation}\label{eq: propagation H0 and Q1}
   \langle\xi_\lambda,\mathbb{H}_0\xi_\lambda\rangle \leq CL^3\rho^2, \qquad  \langle\xi_\lambda,\mathbb{Q}_{1}\xi_\lambda\rangle \leq CL^3\rho^2, \qquad \langle\xi_\lambda,\widetilde{\mathbb{Q}}_{1}\xi_\lambda\rangle \leq CL^3\rho^2.
  \end{equation}
\end{proposition}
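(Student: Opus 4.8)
The plan is a Gr\"onwall bootstrap along the flow $\lambda\mapsto\xi_\lambda$: starting from the a priori bounds at $\lambda=0$ furnished by Lemma \ref{lem: a priori} (recall $\xi_0=R^\ast\psi$) and propagating them to all $\lambda\in[0,1]$. First I would work with the nonnegative quantity $g(\lambda):=\langle\xi_\lambda,(\mathbb{H}_0+\widetilde{\mathbb{Q}}_1)\xi_\lambda\rangle$. By Proposition \ref{pro: H0} and Proposition \ref{pro: Q1},
\[
  \frac{d}{d\lambda}g(\lambda)=-\langle\xi_\lambda,(\mathbb{T}_1+\mathbb{T}_2)\xi_\lambda\rangle+\mathcal{E}_{\mathbb{H}_0}(\xi_\lambda)+\mathcal{E}_{\widetilde{\mathbb{Q}}_1}(\xi_\lambda),
\]
and the first task is to turn the uncontrolled leading term $\langle\xi_\lambda,(\mathbb{T}_1+\mathbb{T}_2)\xi_\lambda\rangle$ into something harmless. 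This is where the scattering equation is used: writing $\mathbb{T}_2=\mathbb{T}_2^r+(\mathbb{T}_2-\mathbb{T}_2^r)$ and $\mathbb{T}_1+\mathbb{T}_2^r=-\widetilde{\mathbb{Q}}_4^r+(\mathbb{T}_1+\mathbb{T}_2^r+\widetilde{\mathbb{Q}}_4^r)$, Proposition \ref{pro: reg} (estimate \eqref{eq: T2reg up bd}) and Proposition \ref{pro: scatt canc} give $-\langle\xi_\lambda,(\mathbb{T}_1+\mathbb{T}_2)\xi_\lambda\rangle=\langle\xi_\lambda,\widetilde{\mathbb{Q}}_4^r\xi_\lambda\rangle+\mathcal{E}'$ with $|\mathcal{E}'|\leq\delta\langle\xi_\lambda,\mathbb{H}_0\xi_\lambda\rangle+CL^3\rho^2$, where one uses $\|\mathcal{N}_>^{1/2}\xi_\lambda\|\leq C\rho^{-1/3}\|\mathbb{H}_0^{1/2}\xi_\lambda\|$ from \eqref{eq: N> lambda} and $\langle\xi_\lambda,\mathcal{N}\xi_\lambda\rangle\leq CL^3\rho^{7/6}$ from the non-optimal bound \eqref{eq: non optimal bound N}.

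The surviving term $\langle\xi_\lambda,\widetilde{\mathbb{Q}}_4^r\xi_\lambda\rangle$ is controlled via \emph{its own} propagation equation: by Proposition \ref{pro: Q4},
\[
  \langle\xi_\lambda,\widetilde{\mathbb{Q}}_4^r\xi_\lambda\rangle=\langle\xi_0,\widetilde{\mathbb{Q}}_4^r\xi_0\rangle+\lambda\,2\rho_\uparrow\rho_\downarrow\int_{\Lambda_L}dxdy\,V(x-y)\varphi(x-y)+\int_0^\lambda\mathcal{E}_{\widetilde{\mathbb{Q}}_4^r}(\xi_{\lambda'})\,d\lambda',
\]
where the middle term equals $\lambda\,2\rho_\uparrow\rho_\downarrow L^3(\hat V(0)-8\pi a)=O(L^3\rho^2)$, the error integral is $\leq CL^3\rho^2+C\rho^{1/2-\epsilon/3}\sup_{[0,1]}g$ (again by \eqref{eq: N> lambda}, \eqref{eq: non optimal bound N}), and $\langle\xi_0,\widetilde{\mathbb{Q}}_4^r\xi_0\rangle=\langle R^\ast\psi,\widetilde{\mathbb{Q}}_4^r R^\ast\psi\rangle=O(L^3\rho^2)$ by the a priori estimates (the analogue of the bounds in \cite{FGHP}). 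Plugging everything back, and using that $\mathcal{E}_{\mathbb{H}_0}$ (Lemma \ref{lem: error kinetic energy}) and $\mathcal{E}_{\widetilde{\mathbb{Q}}_1}$ (Proposition \ref{pro: Q1}) are bounded by $CL^3\rho^2$ plus small multiples of $\langle\xi_1,\mathbb{H}_0\xi_1\rangle$ and of $g$, one arrives at an inequality of the form
\[
  g(\lambda)\leq g(0)+CL^3\rho^2+\delta\langle\xi_1,\mathbb{H}_0\xi_1\rangle+C(\rho^{1/2-\epsilon/3}+\delta)\sup_{\lambda'\in[0,1]}g(\lambda').
\]
Taking the supremum over $\lambda\in[0,1]$, absorbing $\delta\langle\xi_1,\mathbb{H}_0\xi_1\rangle\leq\delta\sup g$ together with the last term into the left-hand side (legitimate for $\rho$ small enough and $\epsilon$ fixed small), and invoking $g(0)\leq CL^3\rho^2$ from Lemma \ref{lem: a priori}, one gets $\sup_{[0,1]}g\leq CL^3\rho^2$. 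Since $\mathbb{H}_0\geq0$ and $\widetilde{\mathbb{Q}}_1\geq0$ this yields the first and third bounds in \eqref{eq: propagation H0 and Q1}.

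For $\mathbb{Q}_1$ I would exploit that $\mathbb{Q}_1$ and $\widetilde{\mathbb{Q}}_1$ share the \emph{same} dangerous leading term $-\mathbb{T}_2$ in their propagation equations (Proposition \ref{pro: Q1}), so that
\[
  \langle\xi_\lambda,\mathbb{Q}_1\xi_\lambda\rangle-\langle\xi_\lambda,\widetilde{\mathbb{Q}}_1\xi_\lambda\rangle=\langle\xi_0,\mathbb{Q}_1\xi_0\rangle-\langle\xi_0,\widetilde{\mathbb{Q}}_1\xi_0\rangle+\int_0^\lambda\big(\mathcal{E}_{\mathbb{Q}_1}(\xi_{\lambda'})-\mathcal{E}_{\widetilde{\mathbb{Q}}_1}(\xi_{\lambda'})\big)\,d\lambda'.
\]
Now that $\langle\xi_\lambda,\mathbb{H}_0\xi_\lambda\rangle$ and $\langle\xi_\lambda,\widetilde{\mathbb{Q}}_1\xi_\lambda\rangle$ are known to be $O(L^3\rho^2)$, the bounds of Proposition \ref{pro: Q1} give $|\mathcal{E}_{\mathbb{Q}_1}(\xi_\lambda)|\leq\delta\langle\xi_\lambda,\mathbb{Q}_1\xi_\lambda\rangle+CL^3\rho^2$ and $|\mathcal{E}_{\widetilde{\mathbb{Q}}_1}(\xi_\lambda)|\leq CL^3\rho^2$; combined with $\langle\xi_0,\mathbb{Q}_1\xi_0\rangle\leq CL^3\rho^2$ and $\langle\xi_\lambda,\widetilde{\mathbb{Q}}_1\xi_\lambda\rangle\leq CL^3\rho^2$, Gr\"onwall's lemma gives $\langle\xi_\lambda,\mathbb{Q}_1\xi_\lambda\rangle\leq CL^3\rho^2$, which is the second bound in \eqref{eq: propagation H0 and Q1}.

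The main obstacle is the handling of $\langle\xi_\lambda,\widetilde{\mathbb{Q}}_4^r\xi_\lambda\rangle$: a crude operator-norm estimate of $\widetilde{\mathbb{Q}}_4^r$ loses powers of $\rho$ (because $u^r$ has a large $L^2$-norm), so one genuinely must route through the propagation equation of Proposition \ref{pro: Q4}, whose leading contribution is the \emph{explicit} constant $2\rho_\uparrow\rho_\downarrow\int_{\Lambda_L}V\varphi$ of the correct order $L^3\rho^2$, and one must then verify that both the initial datum $\langle R^\ast\psi,\widetilde{\mathbb{Q}}_4^r R^\ast\psi\rangle$ and all the remainders $\mathcal{E}_{\widetilde{\mathbb{Q}}_4^r}$ are of order $L^3\rho^2$ (or smaller, or absorbable into $g$). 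A secondary subtlety is that $\mathcal{E}_{\mathbb{H}_0}$ in Lemma \ref{lem: error kinetic energy} is bounded in terms of $\langle\xi_1,\mathbb{H}_0\xi_1\rangle$ rather than $\langle\xi_\lambda,\mathbb{H}_0\xi_\lambda\rangle$, which is why the Gr\"onwall argument must be set up on $\sup_{\lambda\in[0,1]}g$ with the endpoint term absorbed — harmless since its coefficient is small for $\rho$ small.
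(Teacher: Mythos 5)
Your argument is correct in outline, but it is more circuitous than the paper's proof and reaches for stronger lemmas than are needed. The paper works directly with $f(\lambda):=\langle\xi_\lambda,(\mathbb{H}_0+\mathbb{Q}_1)\xi_\lambda\rangle$ (your third bound then follows trivially from $\widetilde{\mathbb{Q}}_1\leq\mathbb{Q}_1$, with no separate argument for $\mathbb{Q}_1$ needed), and it deliberately does \emph{not} invoke Lemma \ref{lem: error kinetic energy}: instead it uses the crude one-line estimate
$|\mathcal{E}_{\mathbb{H}_0}(\xi_\lambda)|\leq CL^{3/2}\rho^{4/3}\|\mathcal{N}_>^{1/2}\xi_\lambda\|\leq CL^3\rho^2+C\langle\xi_\lambda,\mathbb{H}_0\xi_\lambda\rangle$,
which closes in terms of $\xi_\lambda$ itself, so that Gr\"onwall applies without the supremum absorption you require to handle the $\delta\langle\xi_1,\mathbb{H}_0\xi_1\rangle$ term produced by Lemma \ref{lem: error kinetic energy}. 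Your treatment of $\langle\xi_\lambda,\widetilde{\mathbb{Q}}_4^r\xi_\lambda\rangle$ via its own propagation (Proposition \ref{pro: Q4}) is a genuine detour: it still needs a Cauchy--Schwarz bound of the form $|\langle\xi_0,\widetilde{\mathbb{Q}}_4^r\xi_0\rangle|\lesssim L^3\rho^2+\langle\xi_0,\mathbb{Q}_1\xi_0\rangle$ at the endpoint (not supplied by Lemma \ref{lem: a priori} as stated), and once you admit that Cauchy--Schwarz estimate you might as well apply it at arbitrary $\lambda$ and skip the integration entirely — which is exactly what the paper does, writing $|\langle\xi_\lambda,\widetilde{\mathbb{Q}}_4^r\xi_\lambda\rangle|\leq|\langle\xi_\lambda,(\widetilde{\mathbb{Q}}_4^r-\widetilde{\mathbb{Q}}_4)\xi_\lambda\rangle|+|\langle\xi_\lambda,\widetilde{\mathbb{Q}}_4\xi_\lambda\rangle|\leq CL^3\rho^2+C\langle\xi_\lambda,\mathbb{Q}_1\xi_\lambda\rangle$ using \eqref{eq: reg lw bd Q4} and a Cauchy--Schwarz estimate as in \cite[Proposition 3.3]{FGHP}. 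Also, in your bound for $\mathbb{T}_2-\mathbb{T}_2^r$ you refer to $\|\mathcal{N}_>^{1/2}\xi_\lambda\|$, but \eqref{eq: T2reg up bd} is stated in terms of $\|\mathcal{N}^{1/2}\xi_\lambda\|$; the non-optimal bound \eqref{eq: non optimal bound N} then yields $CL^3\rho^{2+1/12}$ directly, with no need to pass through $\mathbb{H}_0$. In short: your supremum bootstrap and the propagation of $\widetilde{\mathbb{Q}}_4^r$ are both logically sound, but they buy nothing here — the point of the proposition is merely a non-sharp, $\rho^2$-level a priori bound, and for that the paper's lighter toolbox (non-optimal $\mathcal{E}_{\mathbb{H}_0}$ estimate, direct Cauchy--Schwarz on $\widetilde{\mathbb{Q}}_4^r$, plain Gr\"onwall on $\mathbb{H}_0+\mathbb{Q}_1$) is both shorter and avoids any dependence on the harder, sharp lemmas of Section 5.
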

\begin{proof}
  The bound for $\widetilde{\mathbb{Q}}_1$  in \eqref{eq: propagation H0 and Q1} follow directly from the fact that $\widetilde{\mathbb{Q}}_{1} \leq \mathbb{Q}_{1}$. We then prove the first two bounds. The proof follows the ideas in \cite[Proposition 5.8]{FGHP}. We recall here the main steps.
  We have, by Proposition \ref{pro: H0} and Proposition \ref{pro: Q1},  
  \begin{eqnarray}
    \frac{d}{d\lambda}\langle\xi_\lambda, (\mathbb{H}_0 + \mathbb{Q}_{1})\xi_\lambda\rangle  &=& - \langle\xi_\lambda, (\mathbb{T}_1 + \mathbb{T}_2)\xi_\lambda\rangle + \mathcal{E}_{\mathbb{H}_0}(\xi_\lambda) + \mathcal{E}_{\mathbb{Q}_{1}}(\xi_\lambda)
    \\
    &=& - \langle\xi_\lambda, (\mathbb{T}_1 + \mathbb{T}_2^r)\xi_\lambda\rangle + \mathcal{E}_{\mathbb{H}_0}(\xi_\lambda)+ \langle\xi_\lambda (\mathbb{T}_2^r - \mathbb{T}_2)\xi_\lambda\rangle + \mathcal{E}_{\mathbb{Q}_{1}}(\xi_\lambda).\nonumber
      \end{eqnarray}
    From Proposition \ref{pro: reg} together with the non-optimal bound in \eqref{eq: non optimal bound N}, we get
    \begin{equation}
        |\langle \xi_\lambda, (\mathbb{T}_2^r  - \mathbb{T}_2)\xi_\lambda\rangle | \leq  CL^{\frac{3}{2}}\rho^{\frac{3}{2}}\|\mathcal{N}^{\frac{1}{2}}\xi_\lambda\| \leq C L^3\rho^{2+\frac{1}{12}}.
\end{equation} 
 Moreover, we also know from Proposition \ref{pro: H0}
      \begin{equation}
        \mathcal{E}_{\mathbb{H}_0}(\xi_\lambda)= 2\sum_{j=1}^3\int dxdy\, \partial_j \varphi(x-y) (a_\uparrow(u^r_x)a_\uparrow(\partial_j\overline{v}^r_x)a_\downarrow(u^r_y) a_\downarrow(\overline{v}^r_y) - a_\uparrow(u^r_x)a_\downarrow(\overline{v}^r_x)a_\downarrow(u^r_y) a_\downarrow(\partial_j\overline{v}^r_y)) + \mathrm{h.c.}\nonumber
      \end{equation}
     We can then prove the following non optimal bound
      \begin{equation} \label{eq: err kin}
        |\mathcal{E}_{\mathbb{H}_0}(\xi_\lambda)| \leq  C\sum_{j=1}^3\int dx\, |\langle \xi_\lambda, b_\downarrow( (\partial_j\varphi)_x)a_\uparrow(\partial_j\overline{v}^r_x) a_\uparrow(u^r_x)\xi_\lambda\rangle| \leq CL^{\frac{3}{2}}\rho^{\frac{4}{3}}\|\mathcal{N}_>^{\frac{1}{2}} \xi_\lambda\| \leq CL^3\rho^2 + C\langle \xi_\lambda, \mathbb{H}_0\xi_\lambda\rangle,
      \end{equation}
      where we used the bound from Proposition \ref{lem: bound N>}.
      Note that the bound above is not optimal but is enough to conclude the proof.
Moreover, from Proposition \ref{pro: Q1}, we get
      \begin{eqnarray}\label{eq: err Q1}
        |\mathcal{E}_{\mathbb{Q}_{1}}(\xi_\lambda)| &\leq& C\rho^{\frac{1}{2}}\|\mathbb{Q}^{\frac{1}{2}}_{1}\xi_\lambda\|\|\mathbb{H}_0^{\frac{1}{2}}\xi_\lambda\| + CL^{\frac{3}{2}}\rho^{\frac{4}{3}}\|\mathbb{Q}^{\frac{1}{2}}_{1}\xi_\lambda\|\nonumber
        \\
        &\leq& CL^3\rho^{\frac{8}{3}} + C\rho^{\frac{1}{2}}\langle \xi_\lambda, (\mathbb{H}_0 + \mathbb{Q}_{1})\xi_\lambda\rangle. 
      \end{eqnarray}
      Using the cancellation via to the scattering equation (see Proposition \ref{pro: scatt canc}), we have using the non-optimal estimate $\langle \xi_\lambda, \mathcal{N}\xi_\lambda\rangle \leq CL^3\rho^{7/6}$,
      \begin{equation}\label{eq: main prop}
        \langle \xi_\lambda,(\mathbb{T}_1 + \mathbb{T}_2^r)\xi_\lambda \rangle = \langle \xi_\lambda,(\mathbb{T}_1 + \mathbb{T}_2^r + \widetilde{\mathbb{Q}}_4^r)\xi_\lambda \rangle - \langle \xi_\lambda,\widetilde{\mathbb{Q}}_4^r\xi_\lambda \rangle = - \langle \xi_\lambda,\widetilde{\mathbb{Q}}_4^r\xi_\lambda \rangle + \mathcal{O}(L^3\rho^{2+\frac{1}{12}}),
      \end{equation}
      where we used the notations introduced in Section \ref{sec: bounds Q4}.  
      It is easy to prove that $|\langle \xi_\lambda, \widetilde{\mathbb{Q}}_4\xi_\lambda\rangle| \leq CL^3\rho^2$. Indeed, from Proposition \ref{pro: reg} (see the bound in \eqref{eq: reg lw bd Q4}), we have (using the non optimal bound $\langle \xi_\lambda, \mathcal{N}\xi_\lambda\rangle \leq CL^3\rho^{7/6}$), 
      \begin{eqnarray}
        |\langle\xi_\lambda, \widetilde{\mathbb{Q}}_4^r\xi_\lambda\rangle| &\leq& |\langle\xi_\lambda, (\widetilde{\mathbb{Q}}_4^r - \widetilde{\mathbb{Q}}_4)\xi_\lambda\rangle| + |\langle\xi_\lambda, \widetilde{\mathbb{Q}}_4\xi_\lambda\rangle|\nonumber
        \\
        &\leq& CL^3\rho^{2} + C\langle \xi_\lambda, \mathbb{Q}_1 \xi_\lambda\rangle + |\langle\xi_\lambda, \widetilde{\mathbb{Q}}_4\xi_\lambda\rangle|\nonumber
        \\
        &\leq& CL^3\rho^2 +  C\langle \xi_\lambda, \mathbb{Q}_1 \xi_\lambda\rangle,\nonumber
        \end{eqnarray}
      where the last inequality can be proved  via Cauchy-Schwarz as in \cite[Proposition 3.3]{FGHP}.
Combining all the bounds together, we find
\begin{equation} 
\frac{d}{d\lambda}\langle \xi_\lambda, (\mathbb{H}_0 + \mathbb{Q}_{1})\xi_\lambda\rangle \leq CL^3\rho^{2} + \langle\xi_\lambda(\mathbb{H}_0 +\mathbb{Q}_{1})\xi_\lambda\rangle.
\end{equation}
From the bounds in Lemma \ref{lem: a priori}, i.e., 
\begin{equation}
  \langle\xi_0, \mathbb{H}_0 \xi_0\rangle \equiv  \langle R^\ast\psi, \mathbb{H}_0 R^\ast\psi \rangle \leq CL^3\rho^2,\qquad \langle\xi_0, \mathbb{Q}_{1} \xi_0\rangle \equiv  \langle R^\ast\psi, \mathbb{Q}_{1} R^\ast\psi \rangle \leq CL^3\rho^2,
\end{equation}
and using Gr\"onwall's Lemma, we conclude that
\begin{equation}
   |\langle\xi_\lambda,\mathbb{H}_0\xi_\lambda\rangle| \leq CL^3\rho^2, \quad  |\langle\xi_\lambda,\mathbb{Q}_{1}\xi_\lambda\rangle| \leq CL^3\rho^2.
\end{equation}
\end{proof}
%

\section{{An optimal upper bound on the ground state energy density}}\label{sec: up bd}
In this section we conclude the proof of the upper bound. As a trial state we take $\psi = RT\Omega$, with $R$ being the particle-hole transformation introduced in Definition \ref{def: ferm bog} and $T$ being the almost bosonic Bogoliubov transformation in Definition \ref{def: transf T}. On can easily prove (see \cite[Section 7]{FGHP}) that $\psi$ is an $N-$particle state with $N_\uparrow$ particles with spin $\uparrow$ and $N_\downarrow$ particles with spin $\downarrow$. Moreover, being $R$ and $T$ unitary operators, we also have $\|\psi\| = \|\Omega\|=1$. In the sequel we are going to use the following bound for the number operator (which is a consequence of \eqref{eq: est N}):
\begin{equation}\label{eq: est N up bd}
  \langle\xi_\lambda, \mathcal{N}\xi_\lambda\rangle \leq CL^3 \rho^{\frac{5}{3}}, \qquad \xi_\lambda := T_{1-\lambda}\Omega.
\end{equation}
We then can write 
\begin{eqnarray}
  E_L(N_\uparrow, N_\downarrow) &\leq& E_{HF}(\omega) + \langle T\Omega, \mathbb{H}_0 T\Omega\rangle + \langle T\Omega, \mathbb{X}T\Omega\rangle + \langle T\Omega, \mathbb{Q} T\Omega\rangle
  \\
  &\leq& E_{\mathrm{HF}}(\omega) + \langle T\Omega, (\mathbb{H}_0 + \mathbb{Q}_{1} + \mathbb{Q}_4) T\Omega\rangle + \mathcal{E}_1,
\end{eqnarray}
where 
\begin{equation}
  \mathcal{E}_1 := \langle T\Omega, (\mathbb{X} + \mathbb{Q}_2 + \mathbb{Q}_3)T\Omega\rangle.
\end{equation}
As in \cite[Section 7]{FGHP}, one can prove that $\langle T\Omega, \mathbb{Q}_3 T\Omega \rangle = 0$ (this is due to the spin effect). Moreover, via Proposition \ref{pro: X,Q2} together with the bound in \eqref{eq: est N up bd}, we know that 
\begin{equation}
  |\mathcal{E}_1| \leq C\rho\langle T\Omega, \mathcal{N}T\Omega\rangle \leq CL^3\rho^{\frac{8}{3}}.
\end{equation}
Before proceeding further, we also notice that  
\begin{equation}
  \langle T\Omega, \mathbb{Q}_4T\Omega\rangle = \langle T\Omega, \widetilde{\mathbb{Q}}_4 T\Omega\rangle + \langle  T\Omega, \widehat{\mathbb{Q}}_4 T\Omega\rangle = \langle T\Omega, \widetilde{\mathbb{Q}}_4 T\Omega\rangle, 
\end{equation}
where $\widehat{\mathbb{Q}}_4$ is the contribution in $\mathbb{Q}_4$ coming from aligned spin and $\widetilde{\mathbb{Q}}_4$ is as in \eqref{eq: def tilde Q4}. The equality above follows from the fact that the transformation $T$ is commuting with the spin operator $S= \sum_\sigma \sigma\mathcal{N}_\sigma$, see \cite[Section 7]{FGHP} for more details.
We then have
\begin{equation}
  E_L(N_\uparrow, N_\downarrow) \leq E_{\mathrm{HF}}(\omega) + \langle T\Omega, (\mathbb{H}_0 + \mathbb{Q}_{1} + \widetilde{\mathbb{Q}}_4)T\Omega\rangle + CL^3\rho^{\frac{8}{3}}.
\end{equation}
We can now write
\begin{equation} \label{eq: 1st step duhamel}
\langle T\Omega, (\mathbb{H}_0 + \mathbb{Q}_{1} + \widetilde{\mathbb{Q}}_4)T\Omega \rangle = \int_0^1 d\lambda\, \langle \xi_\lambda, (\mathbb{T}_1 + \mathbb{T}_2)\xi \lambda \rangle + \langle T\Omega, \widetilde{\mathbb{Q}}_4 T\Omega \rangle +\mathcal{E}_2.
\end{equation}
where $\mathbb{T}_1$ and $\mathbb{T}_2$ are as in \eqref{eq: def T1} and \eqref{eq: def T2}, respectively, i.e., 
\begin{equation} 
  \mathbb{T}_1 = 2\int dxdy\, \Delta\varphi(x-y) a_\uparrow(u^r_x)a_\uparrow(\overline{v}^r_x)a_\downarrow(u^r_y) a_\downarrow(\overline{v}^r_y) + \mathrm{h.c.}
\end{equation}
\begin{equation}
  \mathbb{T}_2 = -\int dxdy\, V(x-y)\varphi(x-y) a_\uparrow(\overline{v}^r_z)a_\uparrow(u_x)a_\downarrow(\overline{v}^r_{z^\prime})a_\downarrow(u_y) + \mathrm{h.c.}
\end{equation}
Moreover, from Proposition \ref{pro: H0}, Lemma \ref{lem: error kinetic energy} and Proposition \ref{pro: Q1}, we can bound 
\begin{equation}\label{eq: est E2 up bd}
  |\mathcal{E}_2|\leq CL^3\rho^{\frac{7}{3}} + CL^3\rho^{\frac{7}{3} + \frac{1}{3} - \frac{\epsilon}{3}},
\end{equation}
where we used that $\xi_1 = \Omega$ together with the propagation estimates proved in Proposition \ref{pro: propagation est} and the estimate for the number operator in \eqref{eq: est N up bd}.
We now write 
\begin{equation}
  \langle \xi_\lambda, (\mathbb{T}_1 + \mathbb{T}_2) \xi_\lambda\rangle = \langle \xi_\lambda, (\mathbb{T}_2 - \mathbb{T}_2^r) \xi_\lambda\rangle + \langle \xi_\lambda, (\mathbb{T}_1 + \mathbb{T}^r_2 + \widetilde{\mathbb{Q}}_4^r)\xi_\lambda \rangle - \langle \xi_\lambda, \widetilde{\mathbb{Q}}_4^r\xi_\lambda\rangle, 
\end{equation}
From Proposition \ref{pro: reg}, we can bound 
\begin{equation}
  |\langle\xi_\lambda, (\mathbb{T}_2 - \mathbb{T}_2^r) \xi_\lambda\rangle | \leq CL^{\frac{3}{2}}\rho^{\frac{3}{2}}\|\mathcal{N}^{\frac{1}{2}}\xi_\lambda\| \leq CL^3\rho^{\frac{7}{3}},
\end{equation}
where we also used the bound for the number operator in \eqref{eq: est N up bd} and from Proposition \ref{pro: scatt canc} we get
\begin{equation}
  |\langle \xi_\lambda, (\mathbb{T}_1 + \mathbb{T}^r_2 + \widetilde{\mathbb{Q}}_4^r)\xi_\lambda \rangle| \leq CL^3\rho^{\frac{7}{3}},
\end{equation}
where we used again \eqref{eq: est N up bd}.
 We then have
\begin{multline}
  \langle T\Omega, \widetilde{\mathbb{Q}}_4T\Omega\rangle + \int_0^1 \, d\lambda \langle\xi_\lambda, (\mathbb{T}_1 + \mathbb{T}_2)\xi_\lambda\rangle
  \\
  =\langle T\Omega, \widetilde{\mathbb{Q}}_4 T\Omega\rangle - \int_0^1\langle \xi_{\lambda}, \widetilde{\mathbb{Q}}_4^r \xi_{\lambda}\rangle + \mathcal{E}_3
  \\
  = -\int_0^1 d\lambda\, \frac{d}{d\lambda}\, \langle \xi_\lambda, \widetilde{\mathbb{Q}}_4\xi_\lambda\rangle + \int_0^1 d\lambda\int_\lambda^1 d\lambda^\prime\, \frac{d}{d\lambda^\prime}\langle \xi_{\lambda^\prime},\widetilde{\mathbb{Q}}_4^r \xi_{\lambda^\prime}\rangle + \mathcal{E}_3,
\end{multline}
with $|\mathcal{E}_3| \leq CL^3 \rho^{7/3}$.
From Proposition \ref{pro: Q4}, we find 
\begin{multline}\label{eq: Q4 upper bd}
  -\int_0^1 d\lambda\, \frac{d}{d\lambda}\, \langle \xi_\lambda, \widetilde{\mathbb{Q}}_4\xi_\lambda\rangle + \int_0^1d\lambda\int_\lambda^1 d\lambda^\prime\, \frac{d}{d\lambda^\prime}\langle \xi_{\lambda^\prime},\widetilde{\mathbb{Q}}_4^r\xi_{\lambda^\prime}\rangle
  \\
  = 
- \rho_\uparrow\rho_\downarrow \int dxdy \, V(x-y)\varphi(x-y)  + \mathcal{E}_4,
\end{multline}
with 
\begin{equation}
  |\mathcal{E}_4| \leq CL^3\rho^{\frac{7}{3}} + CL^3\rho^{\frac{7}{3} + \frac{1}{3} -\frac{\epsilon}{3}} + CL^3\rho^{2+\frac{\epsilon}{3}},
\end{equation}
where we used again Proposition \ref{pro: propagation est} and the bound for the number operator in  \eqref{eq: est N up bd}.
Thus, putting all the estimates together, we find that
\begin{eqnarray}
  E_{L}(N_\uparrow, N_\downarrow) &\leq& E_{\mathrm{HF}}(\omega) -\rho_\uparrow\rho_\downarrow\int\, dxdy\, V(x-y)\varphi(x-y)  \nonumber
  \\
  && + CL^3\rho^{\frac{7}{3}} + CL^3\rho^{2+\frac{\epsilon}{3}} + CL^3\rho^{\frac{7}{3} + \frac{1}{3} -\frac{\epsilon}{3}}\nonumber
  \\
  &\leq& E_{\mathrm{HF}}(\omega) -\rho_\uparrow\rho_\downarrow\int dxdy\, V(x-y)\varphi(x-y) + CL^3\rho^{\frac{7}{3}}, 
\end{eqnarray}
where the last step follows after an optimization over $\epsilon$, which yields $\epsilon = 1$. Moreover, for the constant term, using the periodicity of $V$ and $\varphi$ and that for each $x\in \Lambda_L$, for $L$ large enough, $V(x) = V_\infty(x)$, $\varphi(x) = \varphi_\infty(x)$, we have
\begin{equation}
  \int_{\Lambda_L \times \Lambda_L}\, dxdy\, V(x-y)\varphi(x-y) = L^3\int_{\mathbb{R}^3} dx V_\infty (x)\varphi_\infty(x) = L^3\int_{\mathbb{R}^3}dx\, V_\infty (x)\varphi_0(x),
\end{equation}
where we also used that $V_\infty$ has compact support (see Assumption \ref{asu: potential V}) and that $\varphi_\infty = \varphi_0$ in the support of the interaction potential (recall that we denote by $\varphi_0$ the solution of the zero energy scattering equation in $\mathbb{R}^3$, i.e., \eqref{eq: def phi not app}).
We can conclude that, for $L$ large enough,
\begin{equation}
 e_L(\rho_\uparrow, \rho_\downarrow) = \frac{E_{L}(N_\uparrow, N_\downarrow)}{L^3} \leq \frac{3}{5}(6\pi^2)^{\frac{2}{3}}(\rho_\uparrow^{\frac{5}{3}} + \rho_\downarrow^{\frac{5}{3}}) + 8\pi a \rho_\uparrow \rho_\downarrow+ C\rho^{\frac{7}{3}},
 \end{equation}
 where we also used the expression for $E_{\mathrm{HF}}(\omega)/L^3$ in \eqref{eq: HF unit volume} and the fact that (see Lemma \ref{lem: bound phi0}) 
\[
   \int_{\mathbb{R}^3} V_\infty(x) (1-\varphi_0)(x) = 8\pi a
\]
This concludes the proof of the upper bound.
\section{An improved lower bound on the ground state energy density}\label{sec: lw bd}
In this section we prove Theorem \ref{thm: lower bound}. In what follows, $\psi$ will be an approximate ground state, in the sense of Definition \ref{def: approx gs}. Before discussing the proof of Theorem \ref{thm: lower bound}, we will
\begin{itemize}
  \item Improve the bounds for some error terms (Section \ref{sec: improv N lw bd}).
  \item Estimate the cubic term $\widetilde{\mathbb{Q}}_3$ (Section \ref{sec: Q3}).
\end{itemize}
\subsection{Improved bounds for the lower bound}\label{sec: improv N lw bd}
In this section we take into account the errors coming from Lemma \ref{lem: error kinetic energy}, Proposition \ref{pro: Q4}, Proposition \ref{pro: reg} and Proposition \ref{pro: scatt canc} and we will improve them making use of Proposition \ref{pro: impr est N}, Proposition \ref{lem: bound N>} and Corollary \ref{cor: N tilde}. In what follows $0<\delta<1$ denotes a small constant whose value can change from line to line.

From Lemma \ref{lem: error kinetic energy}, we can bound 
\begin{eqnarray*}
  |\mathcal{E}_{\mathbb{H}_0}(\xi_\lambda)| &\leq& \delta\langle\xi_1, \mathbb{H}_0\xi_1\rangle + CL^3\rho^{\frac{7}{3}}  + C\rho^{\frac{7}{6} -\frac{\epsilon}{3}}\langle \xi_\lambda, \mathcal{N}_> \xi_\lambda\rangle + C\rho^{\frac{2}{3}-\frac{\epsilon}{3}}\langle \xi_\lambda, \mathbb{H}_0 \xi_\lambda\rangle  + C\rho\langle \xi_\lambda, \mathcal{N}\xi_\lambda\rangle
  \\
  && + C\rho^{\frac{5}{6} -\frac{\epsilon}{3}}\|\mathcal{N}_>^{\frac{1}{2}}\xi_\lambda\|\big(\|\widetilde{\mathbb{Q}}_1^{\frac{1}{2}}\xi_\lambda\| + \|\mathbb{H}_0^{\frac{1}{2}}\xi_\lambda\|\big).
\end{eqnarray*}
From Proposition \ref{pro: impr est N}, using the propagation estimates proved in Proposition \ref{pro: propagation est} and proceeding as in \cite[Proposition 5.9]{FGHP}, we get 
\begin{equation}\label{eq: improv N lw bd}
  \langle \xi_\lambda, \mathcal{N}\xi_\lambda\rangle \leq CL^{\frac{3}{2}}\rho^{\frac{1}{6}}\|\mathbb{H}_0^{\frac{1}{2}}\xi_1\| + CL^3\rho^{\frac{5}{3}}. 
\end{equation}
As a consequence, there exists $0<\delta<1$ such that 
\begin{equation}
  C\rho\langle \xi_\lambda, \mathcal{N}\xi_\lambda\rangle\leq \delta\langle \xi_1, \mathbb{H}_0 \xi_1\rangle + CL^3\rho^{\frac{7}{3}}.
\end{equation}
Moreover, from Proposition \ref{lem: bound N>}, we know that 
\begin{equation}
  C\rho^{\frac{7}{6} -\frac{\epsilon}{3}}\langle \xi_\lambda, \mathcal{N}_> \xi_\lambda\rangle \leq C\rho^{\frac{7}{6} -\frac{2}{3} -\frac{\epsilon}{3}}\langle \xi_1,\mathbb{H}_0 \xi_1\rangle + CL^3\rho^{\frac{7}{3} +\frac{1}{2} -\frac{\epsilon}{2}} \leq C\rho^{\frac{1}{2} -\frac{\epsilon}{3}}\langle \xi_1,\mathbb{H}_0 \xi_1\rangle + CL^3\rho^{\frac{7}{3} +\frac{1}{2} -\frac{\epsilon}{3}},
\end{equation}
and, similarly, for $0<\delta <1$, we have
\begin{eqnarray}
  C\rho^{\frac{5}{6} -\frac{\epsilon}{3}}\|\mathcal{N}_>^{\frac{1}{2}}\xi_\lambda\|\big(\|\widetilde{\mathbb{Q}}_1^{\frac{1}{2}}\xi_\lambda\| + \|\mathbb{H}_0^{\frac{1}{2}}\xi_\lambda\|\big)&\leq& \big(CL^{\frac{3}{2}}\rho^{\frac{5}{3} -\frac{\epsilon}{3}} + C\rho^{\frac{5}{6} -\frac{1}{3} -\frac{\epsilon}{3}}\|\mathbb{H}_0^{\frac{1}{2}}\xi_1\|\big)(\|\widetilde{\mathbb{Q}}_1^{\frac{1}{2}}\xi_\lambda\| + \|\mathbb{H}_0^{\frac{1}{2}}\xi_\lambda\|\big)\nonumber
  \\
  &\leq& CL^3\rho^{\frac{7}{3} + \frac{1}{3} -\frac{\epsilon}{3}} + CL^3\rho^{\frac{7}{3} + \frac{2}{3} -\frac{2\epsilon}{3}} + \delta\langle \xi_1, \mathbb{H}_0 \xi_1\rangle ,
\end{eqnarray}
where in the last inequality we used Cauchy-Schwarz inequality together with Proposition \ref{pro: propagation est}. All together, there exists $0 <\delta <1$, such that, for $0< \epsilon \leq 1$:
\begin{equation}\label{eq: conclusive H0 lw bd}
 |\mathcal{E}_{\mathbb{H}_0}(\xi_\lambda)| \leq CL^3\rho^{\frac{7}{3}} + CL^3 \rho^{\frac{7}{3} +\frac{1}{3} -\frac{\epsilon}{3}} + \delta \langle\xi_1, \mathbb{H}_0 \xi_1 \rangle.
\end{equation}
From Proposition \ref{pro: Q4}, we have 
 \begin{equation}
 |\mathcal{E}_{\widetilde{\mathbb{Q}}_4}(\xi_\lambda)|,  |\mathcal{E}_{\widetilde{\mathbb{Q}}^r_4}(\xi_\lambda)|\leq CL^3\rho^{\frac{7}{3}} + CL^3\rho^{2+\frac{\epsilon}{3}} +  C\rho^{\frac{5}{6} -\frac{\epsilon}{3}}\|\widetilde{\mathbb{Q}}_1^{\frac{1}{2}}\xi_\lambda\|\|\mathcal{N}_>^{\frac{1}{2}}\xi_\lambda\| +C\rho\langle \xi_\lambda, \mathcal{N}\xi_\lambda\rangle
\end{equation}
Proceeding similar as above, we get, for $0<\epsilon \leq 1$:
\begin{equation}\label{eq: conclusive Q4 lw bd}
  |\mathcal{E}_{\widetilde{\mathbb{Q}}_4}(\xi_\lambda)|, |\mathcal{E}_{\widetilde{\mathbb{Q}}^r_4}(\xi_\lambda)|\leq CL^3\rho^{\frac{7}{3}} + CL^3 \rho^{\frac{7}{3} +\frac{1}{3} -\frac{\epsilon}{3}} + CL^3 \rho^{2+\frac{\epsilon}{3}} + \delta \langle\xi_1, \mathbb{H}_0 \xi_1 \rangle,
\end{equation}
We now take into account the errors we due with the regularizations, i.e., the ones estimated in Proposition \ref{pro: reg}. We have 
\begin{equation*}
  |\langle \xi_\lambda, (\mathbb{T}_2 - \mathbb{T}_2^r) \xi_\lambda \rangle | \leq CL^{\frac{3}{2}}\rho^{\frac{3}{2}}\|\mathcal{N}_>^{\frac{1}{2}}\xi_\lambda\| + CL^{\frac{3}{2}}\rho^{\frac{3}{2} +\frac{\eta}{6}}\|\mathcal{N}^{\frac{1}{2}}\xi_\lambda\| + CL^{\frac{3}{2}}\rho^{\frac{3}{2}}\|\widetilde{\mathcal{N}}^{\frac{1}{2}}\xi_\lambda\|.
\end{equation*} 
From Proposition \ref{lem: bound N>}, we get 
\[
  CL^{\frac{3}{2}}\rho^{\frac{3}{2}}\|\mathcal{N}_>^{\frac{1}{2}}\xi_\lambda\| \leq CL^3\rho^{\frac{7}{3}} +\delta \langle \xi_1,\mathbb{H}_0 \xi_1 \rangle,
\]
for some $0<\delta<1$.
Moreover, using \eqref{eq: improv N lw bd} together with Young's inequality, we find 
\begin{equation*}
 CL^{\frac{3}{2}}\rho^{\frac{3}{2} +\frac{\eta}{6}}\|\mathcal{N}^{\frac{1}{2}}\xi_\lambda\|\leq  CL^3\rho^{2+ \frac{2\eta}{9} + \frac{1}{9}} + \delta\langle \xi_1, \mathbb{H}_0 \xi_1\rangle, 
\end{equation*} 
for some $0<\delta<1$. To estimate the error term with respect to $\widetilde{\mathcal{N}}$, we recall that
\[
  \widetilde{\mathcal{N}}= \sum_{\sigma}\sum_{|k| \geq k_F + \rho^{\frac{1}{3} +\frac{\eta}{3}}} \hat{a}_{k,\sigma}\hat{a}_{k,\sigma}.
\]
Thus, via Gronwall's lemma (see Corollary \ref{cor: N tilde}) one gets
\begin{equation}
  \langle \xi_\lambda, \widetilde{\mathcal{N}}\xi_\lambda\rangle \leq \langle \xi_1, \widetilde{\mathcal{N}}\xi_1\rangle + CL^3\rho^{\frac{5}{3}}\leq C\rho^{-\frac{2}{3} -\frac{\eta}{3}}\langle \xi_1,\mathbb{H}_0 \xi_1\rangle + CL^3\rho^{\frac{5}{3}}
\end{equation}
We can then estimate 
\[
   CL^{\frac{3}{2}}\rho^{\frac{3}{2}}\|\widetilde{\mathcal{N}}^{\frac{1}{2}}\xi_\lambda\|\leq  CL^3\rho^{\frac{7}{3}} + CL^{\frac{3}{2}}\rho^{\frac{3}{2} -\frac{1}{3}-\frac{\eta}{6}}\|\mathbb{H}_0^{\frac{1}{2}}\xi_1\| \leq CL^3\rho^{\frac{7}{3} -\frac{\eta}{3}} + \delta \langle\xi_1, \mathbb{H}_0 \xi_1\rangle,
\]
for some $0<\delta<1$.
Thus, an optimization over $\eta$ ($\eta = 2/5$) yields
\begin{equation}\label{eq: final reg T2 lw bd}
  |\langle \xi_\lambda, (\mathbb{T}_2 - \mathbb{T}_2^r) \xi_\lambda \rangle |  \leq CL^3\rho^{2+ \frac{1}{5}}  +  \delta\langle \xi_1, \mathbb{H}_0 \xi_1\rangle,
\end{equation}
for some $0<\delta<1$. Similary, one can also prove that 
\begin{equation}\label{eq: concl Q4 reg lw bd}
   |\langle \xi_\lambda, (\mathbb{Q}_4 - \mathbb{Q}_4^r) \xi_\lambda \rangle |  \leq CL^3\rho^{2+\frac{\epsilon}{3}} + CL^3\rho^{2+ \frac{1}{5}}  +  \delta\langle \xi_1, \mathbb{H}_0 \xi_1\rangle + \delta \langle \xi_\lambda,\widetilde{\mathbb{Q}}_1 \xi_\lambda\rangle.
\end{equation}

To conclude, we can bound, for some $0<\delta<1$, the error in Proposition \ref{pro: scatt canc} due to the scattering equation cancellation as 
\begin{equation}\label{eq: concl scatt canc lw bd}
  |\langle\xi_\lambda, (\mathbb{T}_1 + \mathbb{T}_2^r + \widetilde{\mathbb{Q}}_4^r)\xi_\lambda \rangle|\leq CL^{\frac{3}{2}}\rho^{\frac{3}{2}}\|\mathcal{N}_>^{\frac{1}{2}}\xi_\lambda\| \leq CL^3\rho^{\frac{7}{3}} +\delta \langle \xi_1,\mathbb{H}_0 \xi_1 \rangle,
\end{equation}
where we used again \eqref{eq: improv N lw bd}.

\subsection{Estimate for $\widetilde{\mathbb{Q}}_3$}\label{sec: Q3}
In this section, we estimate $\langle R^\ast \psi, \widetilde{\mathbb{Q}}_3 R^\ast \psi\rangle$. Note that differently than in \cite{FGHP} where $\langle R^\ast \psi, \widetilde{\mathbb{Q}}_3 R^\ast\rangle$ was estimated via a priori bounds (see \cite[Proposition 3.1]{FGHP}, here a refined analysis is required.
We start by recalling that 
\begin{eqnarray}\label{eq: def Q3}
  \widetilde{\mathbb{Q}}_3 &=& -\sum_{\sigma\neq\sigma^\prime} \int\, dxdy\, V(x-y)\left(a^\ast_\sigma(u_x) a^\ast_{\sigma^\prime}(u_y) a^\ast_{\sigma}(\overline{v}_x)a_{\sigma^\prime}(u_y) - a^\ast_\sigma(u_x) a^\ast_{\sigma^\prime}(\overline{v}_y) a^\ast_{\sigma}(\overline{v}_x)a_{\sigma^\prime}(\overline{v}_y)\right) + \mathrm{h.c.} \nonumber
  \\
  &\equiv& \widetilde{\mathbb{Q}}_{3;1} + \widetilde{\mathbb{Q}}_{3;2},\nonumber
\end{eqnarray}
where 
\begin{equation}
  \widetilde{\mathbb{Q}}_{3;1} = -\sum_{\sigma\neq\sigma^\prime} \int\, dxdy\, V(x-y) a^\ast_\sigma(u_x) a^\ast_{\sigma^\prime}(u_y) a^\ast_{\sigma}(\overline{v}_x)a_{\sigma^\prime}(u_y) + \mathrm{h.c.},
\end{equation}
and
\begin{equation}
  \widetilde{\mathbb{Q}}_{3;2} = \sum_{\sigma\neq\sigma^\prime} \int\, dxdy\, V(x-y) a^\ast_\sigma(u_x) a^\ast_{\sigma^\prime}(\overline{v}_y) a^\ast_{\sigma}(\overline{v}_x)a_{\sigma^\prime}(\overline{v}_y) + \mathrm{h.c.}
\end{equation}
Before proceeding further, we split $\widetilde{\mathbb{Q}}_{3;1}$ it in two parts, according to the support of the momenta in $a_{\sigma^\prime}(u_y)$. In particular, we write $a_{\sigma^\prime}(u_y) = a_{\sigma^\prime}(u^<_y) + a_{\sigma^\prime}(u_y^>)$, where $\hat{u}^<_\sigma(k)$ and $\hat{u}^>_\sigma(k)$ are such that 
\[
  \hat{u}^<_\sigma(k) = \begin{cases} 1, &k\notin\mathcal{B}_F^\sigma, \, |k| < 2k_F^\sigma, \\ 0, &|k| \geq 2k_F^\sigma. \end{cases}, \qquad \hat{u}^>_\sigma(k) = \begin{cases} 0, &k\notin\mathcal{B}_F^\sigma, \, |k| < 2k_F^\sigma, \\ 1, &|k| \geq 2k_F^\sigma.\end{cases}
\]

Correspondingly, we write $\widetilde{\mathbb{Q}}_{3;1} = \widetilde{\mathbb{Q}}_{3;1}^< + \widetilde{\mathbb{Q}}_{3;1}^>$, i.e., 
\begin{equation}\label{eq: Q 31 <}
  \widetilde{\mathbb{Q}}_{3;1}^< := \sum_{\sigma\neq\sigma^\prime}\int\, dxdy\, V(x-y) \langle R^\ast \psi, a^\ast_\sigma(u_x) a^\ast_{\sigma^\prime}(u_y) a^\ast_\sigma(\overline{v}_x) a_{\sigma^\prime}(u_y^<)R^\ast\psi\rangle, 
\end{equation}

\begin{equation}\label{eq: Q 31 >}
  \widetilde{\mathbb{Q}}_{3;1}^> := \sum_{\sigma\neq\sigma^\prime}\int\, dxdy\, V(x-y) \langle R^\ast \psi, a^\ast_\sigma(u_x) a^\ast_{\sigma^\prime}(u_y) a^\ast_\sigma(\overline{v}_x) a_{\sigma^\prime}(u_y^>)R^\ast\psi\rangle.
\end{equation}
\begin{proposition}[A first bound for $\widetilde{\mathbb{Q}}_3$]\label{pro: Q3 first}
Let $\psi$ be an approximate ground state. Under the same assumptions as in Theorem \ref{thm: lower bound}, it holds
\begin{equation}
  \langle R^\ast\psi, \widetilde{\mathbb{Q}}_3 R^\ast \psi\rangle = \langle R^\ast\psi, \widetilde{\mathbb{Q}}_{3;1}^< R^\ast\psi\rangle + \mathcal{E}_{1;\widetilde{\mathbb{Q}}_3}(\psi),
\end{equation}
where $\widetilde{\mathbb{Q}}_{3;1}^<$ is as in \eqref{eq: Q 31 <} and
\begin{equation}
  |\mathcal{E}_{1;\widetilde{\mathbb{Q}}_3}(\psi)| \leq CL^3 \rho^{\frac{7}{3}} + \delta\langle T^\ast R^\ast\psi, \mathbb{H}_0 T^\ast R^\ast\psi \rangle,
\end{equation}
for some $0< \delta <1$.
\end{proposition}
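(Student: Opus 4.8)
The plan is to bound $\langle R^\ast\psi,\widetilde{\mathbb{Q}}_3 R^\ast\psi\rangle$ by showing that every piece of $\widetilde{\mathbb{Q}}_3$ except $\widetilde{\mathbb{Q}}_{3;1}^<$ can be absorbed into $\delta\langle \xi_1,\mathbb{H}_0\xi_1\rangle + CL^3\rho^{7/3}$ (recall $\xi_1 = T^\ast R^\ast\psi$), combining Cauchy--Schwarz in the two spatial variables with the $L^1,L^2,L^\infty$ bounds for $u,v,u^r,v^r$ from Proposition \ref{pro: L1, L2 v}, the bounds for $\omega^r$, and the key observation that $a_\sigma(u^>_y)$ and $a_\sigma(u^r_x)$ produce factors controlled by $\mathcal{N}_>$, hence by $\rho^{-2/3}\mathbb{H}_0$ via \eqref{eq: N> wrt H0}. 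First I would split $\widetilde{\mathbb{Q}}_3 = \widetilde{\mathbb{Q}}_{3;1} + \widetilde{\mathbb{Q}}_{3;2}$ and then $\widetilde{\mathbb{Q}}_{3;1} = \widetilde{\mathbb{Q}}_{3;1}^< + \widetilde{\mathbb{Q}}_{3;1}^>$ as in the statement, so that three terms remain to be estimated: $\widetilde{\mathbb{Q}}_{3;1}^>$, $\widetilde{\mathbb{Q}}_{3;2}$, and the conjugate (h.c.) contributions.

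For $\widetilde{\mathbb{Q}}_{3;1}^>$, the point is that the annihilation operator $a_{\sigma^\prime}(u^>_y)$ only sees momenta $|k|\geq 2k_F$, so $\int dy\,\|a_{\sigma^\prime}(u^>_y)R^\ast\psi\|^2 \leq \langle R^\ast\psi,\mathcal{N}_> R^\ast\psi\rangle$. Writing the expectation as $\int dxdy\, V(x-y)\langle a_\sigma(u_x)a_\sigma(\overline v_x) a_{\sigma^\prime}^\ast(u_y) R^\ast\psi,\, a_{\sigma^\prime}(u^>_y) R^\ast\psi\rangle$ (after moving one operator across), I would bound $\|a_\sigma(\overline v_x)\|\leq C\rho^{1/2}$, use $\|V\|_1\leq C$ and the Cauchy--Schwarz inequality in the remaining $y$-integral to produce a factor $\|\mathcal{N}_>^{1/2}R^\ast\psi\|$ against $\|\mathbb{Q}_1^{1/2}\|$-type or $\mathcal{N}$-type factors, then translate $R^\ast\psi$ and $T^\ast R^\ast\psi$ via $\mathcal{N}\leq C\mathcal{N}$ invariance is not available, so instead use the a priori bound $\langle R^\ast\psi,\mathcal{N} R^\ast\psi\rangle\leq CL^3\rho^{7/6}$ and $\langle R^\ast\psi,\mathcal{N}_> R^\ast\psi\rangle\leq C\rho^{-2/3}\langle R^\ast\psi,\mathbb{H}_0 R^\ast\psi\rangle\leq CL^3\rho^{4/3}$ from Lemma \ref{lem: a priori}; the resulting power is $L^3\rho^{1/2+4/3-2/3+\cdots}$, which I would check comes out $\leq CL^3\rho^{7/3}$, possibly after also spending part of the factor on $\delta\langle\xi_1,\mathbb{H}_0\xi_1\rangle$ via Young's inequality and the relation $\langle R^\ast\psi,\mathbb{H}_0 R^\ast\psi\rangle \le \langle\xi_1,\mathbb{H}_0\xi_1\rangle + (\text{propagated errors})$. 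For $\widetilde{\mathbb{Q}}_{3;2}$, all four operators carry either $u$ with low momentum or $\overline v$, so there is no $\mathcal{N}_>$ gain; here I would instead exploit that there are \emph{three} creation operators of the form $u,\overline v,\overline v$ and one annihilation $a_{\sigma^\prime}(\overline v_y)$, bound the product of the fixed kernels by pulling out $V(x-y)$, using $\|\overline v_x\|_2\le C\rho^{1/2}$ three times and $\int dy\,\|a_{\sigma^\prime}(\overline v_y)R^\ast\psi\|^2 \le \rho\langle R^\ast\psi,\mathcal N R^\ast\psi\rangle$, arriving at $CL^{3/2}\rho^{3/2}\cdot\|\mathcal N^{1/2}R^\ast\psi\|\le CL^3\rho^{3/2}\rho^{1/2}\rho^{7/12}$, again a high enough power; if it is not quite $\rho^{7/3}$ I would refine by a $\delta\langle\xi_1,\mathbb{H}_0\xi_1\rangle$ split as in \cite[Section 5]{FGHP}.

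The main obstacle I anticipate is \emph{not} the algebra of commutators but getting the powers of $\rho$ to close at exactly $7/3$ (equivalently $\rho^2$ in the error to the correlation energy would not suffice — we need $\rho^{7/3}$ up to the positive $\mathbb{H}_0$ term), since $\widetilde{\mathbb{Q}}_3$ is cubic in the $a^\ast a$ sense and a naive a priori estimate only gives $\rho^2$ (as in \cite[Proposition 3.1]{FGHP}). The improvement must come from simultaneously using (i) the $\mathcal{N}_>$-gain in $\widetilde{\mathbb{Q}}_{3;1}^>$, (ii) the propagated bound $\langle\xi_\lambda,\mathbb{H}_0\xi_\lambda\rangle\le CL^3\rho^2$ from Proposition \ref{pro: propagation est} together with $\langle\xi_\lambda,\mathcal N\xi_\lambda\rangle\le CL^3\rho^{7/6}$, and (iii) allowing a fraction $\delta\langle\xi_1,\mathbb{H}_0\xi_1\rangle$ to be carried along (which is harmless since in the lower bound $\langle\xi_1,\mathbb{H}_0\xi_1\rangle\ge 0$ and gets reabsorbed). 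Concretely, after all splittings the residual term $\widetilde{\mathbb{Q}}_{3;1}^<$ is kept untouched to be estimated in a subsequent proposition, and every other contribution must be shown to be $\le CL^3\rho^{7/3}+\delta\langle\xi_1,\mathbb{H}_0\xi_1\rangle$; verifying the exponent bookkeeping in each of the (four or so) error terms, with the correct powers $\rho^{1/2}$ from each $\|v\|_2$, $\rho^{-2/3}$ from $\mathcal N_>\le\rho^{-2/3}\mathbb H_0$, and the $L^{3/2}$ volume factors from $\int dy$, is the delicate part.
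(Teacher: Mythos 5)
Your decomposition is exactly the one the paper uses, and the Cauchy--Schwarz structure for $\widetilde{\mathbb{Q}}_{3;1}^>$ (giving $\rho^{1/2}\|\widetilde{\mathbb{Q}}_1^{1/2}R^\ast\psi\|\,\|\mathcal{N}_>^{1/2}R^\ast\psi\|$) and for $\widetilde{\mathbb{Q}}_{3;2}$ (giving $\rho\,\langle R^\ast\psi,\mathcal{N}R^\ast\psi\rangle$) is correct. The problem is entirely in the final substitution for $\mathcal N$ and $\mathcal{N}_>$ on $R^\ast\psi$. If, as you propose, one plugs in the \emph{a priori} bounds $\langle R^\ast\psi,\mathcal{N}R^\ast\psi\rangle\leq CL^3\rho^{7/6}$ and $\langle R^\ast\psi,\mathcal{N}_>R^\ast\psi\rangle\leq C\rho^{-2/3}\langle R^\ast\psi,\mathbb{H}_0 R^\ast\psi\rangle\leq CL^3\rho^{4/3}$, both terms land at $CL^3\rho^{13/6}$: for instance $\rho^{1/2}\cdot(L^{3/2}\rho)\cdot(L^{3/2}\rho^{2/3})=L^3\rho^{13/6}$, and $\rho\cdot L^3\rho^{7/6}=L^3\rho^{13/6}$. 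That is short of $\rho^{7/3}$ by $\rho^{1/6}$, so your computation does not close. (Also, note $\int dy\,\|a_{\sigma'}(\overline v_y)R^\ast\psi\|^2\leq\langle R^\ast\psi,\mathcal N R^\ast\psi\rangle$, without the extra $\rho$ you wrote.)

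The missing ingredient is to use the \emph{propagated} number-operator bounds rather than the a priori ones, namely \eqref{eq: improv N lw bd}, i.e.\ $\langle R^\ast\psi,\mathcal{N}R^\ast\psi\rangle\leq CL^{3/2}\rho^{1/6}\|\mathbb{H}_0^{1/2}\xi_1\|+CL^3\rho^{5/3}$, and \eqref{eq: est N>}, i.e.\ $\langle R^\ast\psi,\mathcal{N}_>R^\ast\psi\rangle\leq C\rho^{-2/3}\langle\xi_1,\mathbb{H}_0\xi_1\rangle+CL^3\rho^{5/3}$, which come from the Gr\"onwall arguments in Propositions \ref{pro: impr est N} and \ref{lem: bound N>}. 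These have the specific structure $(\text{small power of }\rho)\cdot\|\mathbb{H}_0^{1/2}\xi_1\|+L^3\rho^{5/3}$, and it is exactly this structure, combined with the a priori $\|\widetilde{\mathbb{Q}}_1^{1/2}R^\ast\psi\|\leq CL^{3/2}\rho$ and Young's inequality, that produces the claimed $CL^3\rho^{7/3}+\delta\langle\xi_1,\mathbb{H}_0\xi_1\rangle$. Your last paragraph gestures at trading against $\delta\langle\xi_1,\mathbb{H}_0\xi_1\rangle$, but the inequality you invoke (``$\langle R^\ast\psi,\mathbb{H}_0 R^\ast\psi\rangle\leq\langle\xi_1,\mathbb{H}_0\xi_1\rangle+\cdots$'') is not the relevant one; the correct step is to replace $\mathcal{N}$, $\mathcal{N}_>$ at $\lambda=0$ by their propagated expressions in terms of $\mathbb{H}_0$ at $\lambda=1$. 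Without that replacement the proof, as written, does not give the stated error.
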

\begin{proof}

We estimate $\widetilde{\mathbb{Q}}_{3;1}^>$ and $\widetilde{\mathbb{Q}}_{3;2}$ separately. We first take into account $\widetilde{\mathbb{Q}}_{3;2}$. Using that $\|a^\ast_{\sigma^\prime}(\overline{v}_y)a_{\sigma}^\ast(\overline{v}_x)\| \leq C\rho$, we can bound 
\begin{equation}
  |\langle R^\ast \psi, \widetilde{\mathbb{Q}}_{3;2} R^\ast\psi\rangle | \leq C \rho \|V\|_1 \langle R^\ast\psi, \mathcal{N} R^\ast\psi\rangle \leq CL^3\rho^{\frac{7}{3}} + \delta\langle \xi_1, \mathbb{H}_0 \xi_1 \rangle,
\end{equation}
where we used the estimate in \eqref{eq: improv N lw bd}. 
We now take into account $\widetilde{\mathbb{Q}}_{3;1}^>$. We have
\begin{eqnarray}
  |\langle R^\ast \psi, \widetilde{\mathbb{Q}}_{3;1}^> R^\ast\psi \rangle| &\leq& \sum_{\sigma\neq\sigma^\prime}\int\, dxdy\, V(x-y)|\langle R^\ast \psi, a_{\sigma}^\ast(u_x)a^\ast_{\sigma^\prime}(u_y)a^\ast_{\sigma}(\overline{v}_x)a_{\sigma^\prime}(u_y^>)R^\ast\psi\rangle|
\\
  &\leq& C\rho^{\frac{1}{2}} \|\widetilde{\mathbb{Q}}_{1}^{\frac{1}{2}}R^\ast\psi\| \|V\|_1^{\frac{1}{2}}\|\mathcal{N}_>^{\frac{1}{2}}R^\ast \psi\|.\nonumber
\end{eqnarray}
Using now \eqref{eq: est N>}, i.e., 
\[
  \langle R^\ast\psi, \mathcal{N}_> R^\ast \psi\rangle \leq C\rho^{-\frac{2}{3}}\langle \xi_1,\mathbb{H}_0 \xi_1\rangle + CL^3\rho^{\frac{5}{3}}, 
\]
and the a priori estimate for $\langle R^\ast \psi, \widetilde{\mathbb{Q}}_1R^\ast\psi\rangle$ stated in Lemma \ref{lem: a priori}, we can bound 
\begin{eqnarray}
|\langle R^\ast \psi, \widetilde{\mathbb{Q}}_{3;1}^> R^\ast\psi \rangle| &\leq& CL^{\frac{3}{2}}\rho^{\frac{4}{3}}\|\widetilde{\mathbb{Q}}_1^{\frac{1}{2}}R^\ast\psi\| + C\rho^{\frac{1}{2} -\frac{1}{3}}\|\widetilde{\mathbb{Q}}_1^{\frac{1}{2}} R^\ast\psi\|\|\mathbb{H}_0^{\frac{1}{2}}T^\ast R^\ast\psi\|,\nonumber
\\
&\leq& CL^3\rho^{\frac{7}{3}} +\delta\langle T^\ast R^\ast\psi, \mathbb{H}_0 T^\ast R^\ast\psi \rangle.
\end{eqnarray}
This concludes the proof.
\end{proof}
\begin{proposition}[Propagation of $\widetilde{\mathbb{Q}}_{3;1}^<$] \label{pro: Q3<} Let $\psi$ be an approximate ground state. Let $0 <\epsilon \leq 1$. Under the same assumptions as in Theorem \ref{thm: lower bound}, it holds
\begin{equation}\label{eq: est Q31<}
  |\langle R^\ast \psi, \widetilde{\mathbb{Q}}_{3;1}^< R^\ast \psi\rangle | \leq CL^3\rho^{\frac{7}{3}} + CL^3\rho^{\frac{7}{3} + \frac{1}{3} -\frac{\epsilon}{3}} + \delta\langle T^\ast R^\ast \psi, (\mathbb{H}_0 + \widetilde{\mathbb{Q}}_{1}) T^\ast R^\ast \psi \rangle,
\end{equation}
for some $0<\delta<1$.
\end{proposition}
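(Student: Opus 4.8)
The plan is to propagate the estimate for $\widetilde{\mathbb{Q}}_{3;1}^<$ via a Grönwall-type argument along the flow $\lambda\mapsto \xi_\lambda = T^\ast_\lambda R^\ast\psi$, exactly in the spirit of Proposition \ref{pro: H0}, Proposition \ref{pro: Q1} and Proposition \ref{pro: Q4}. First I would write $\langle R^\ast\psi,\widetilde{\mathbb{Q}}_{3;1}^< R^\ast\psi\rangle = \langle\xi_1,\widetilde{\mathbb{Q}}_{3;1}^<\xi_1\rangle$ and express it through its value at $\lambda=0$ (which is $\langle\xi_0,\widetilde{\mathbb{Q}}_{3;1}^<\xi_0\rangle = \langle R^\ast\psi,\widetilde{\mathbb{Q}}_{3;1}^< R^\ast\psi\rangle$ — so instead one should integrate \emph{from} $\lambda=0$, using the a priori bound of Lemma \ref{lem: a priori} on $\xi_0$ as the starting point). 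Concretely: $\langle\xi_0,\widetilde{\mathbb{Q}}_{3;1}^<\xi_0\rangle$ is controlled using $\|a_\sigma(\overline v_x)\|\le C\rho^{1/2}$, $\|a_{\sigma^\prime}(u^<_y)\|\le C\rho^{1/2}$ and Cauchy--Schwarz against $\widetilde{\mathbb{Q}}_1^{1/2}$ and $\mathcal{N}^{1/2}$, giving something like $CL^{3/2}\rho^{3/2}\|\widetilde{\mathbb{Q}}_1^{1/2}\xi_0\|$, already of acceptable size by Lemma \ref{lem: a priori} and \eqref{eq: improv N lw bd}. Then $\langle\xi_1,\widetilde{\mathbb{Q}}_{3;1}^<\xi_1\rangle - \langle\xi_0,\widetilde{\mathbb{Q}}_{3;1}^<\xi_0\rangle = -\int_0^1 d\lambda\,\langle\xi_\lambda,[\widetilde{\mathbb{Q}}_{3;1}^<,B-B^\ast]\xi_\lambda\rangle$, and the whole work is in bounding this commutator term.

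The key computational step is to expand $[\widetilde{\mathbb{Q}}_{3;1}^<, B]$ using the CAR. Since $\widetilde{\mathbb{Q}}_{3;1}^<$ is a product $a^\ast_\sigma(u_x)a^\ast_{\sigma^\prime}(u_y)a^\ast_\sigma(\overline v_x)a_{\sigma^\prime}(u^<_y)$ and $B$ is a quartic monomial $a_\uparrow(u^r_z)a_\uparrow(\overline v^r_z)a_\downarrow(u^r_{z^\prime})a_\downarrow(\overline v^r_{z^\prime})$, the commutator produces terms in which one contraction $u^r(\cdot\,;\cdot)$, $\omega^r(\cdot\,;\cdot)$, or $\widetilde\omega^r(\cdot\,;\cdot)$ is generated (from pairing a creation operator in $\widetilde{\mathbb{Q}}_{3;1}^<$ with an annihilation operator in $B$, or the single annihilation $a_{\sigma^\prime}(u^<_y)$ with a creation in $B$ — but here $B$ has no creation operators, so only the first type appears, with a leftover six-operator expression). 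The strategy I would follow is the same bookkeeping as in Lemma \ref{lem: error kinetic energy}: group the $z,z^\prime$ integrations together with one $\varphi$ factor into a $b_\sigma(\varphi_z)$ (or $\widetilde b_\sigma(\varphi_z)$) operator, apply the operator-norm bounds of Lemma \ref{lem: bound b phi} ($\|b_\sigma(\varphi_z)\|\le C\rho^{1/3}$ etc.), bound the remaining $v$-type factors by $C\rho^{1/2}$ and the $L^1$ norms of kernels by Proposition \ref{pro: L1, L2 v} ($\|\omega^r_{x,\sigma}\|_1\le C\rho^{-\epsilon/3}$, $\|u^r_{x,\sigma}\|_1\le C$), and absorb the remaining operators via Cauchy--Schwarz into $\mathcal{N}_>^{1/2}$, $\widetilde{\mathbb{Q}}_1^{1/2}$, or $\mathbb{H}_0^{1/2}$. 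The $u^<$-cutoff is what lets us localize the $y$-momentum to $|k|\le 2k_F$ so that a factor $a_{\sigma^\prime}(u^<_y)$ costs only $\rho^{1/2}$ (not $\rho^{-3\beta/2}$), which is essential for the power counting; and the constant (vacuum-expectation) term generated by double contraction must be analyzed by passing to Fourier space and using the decay estimates of Lemma \ref{lem: decay phi}, just as for $\mathrm{III}_f$ in Lemma \ref{lem: error kinetic energy}, to see it is $\mathcal{O}(L^3\rho^{7/3})$.

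After collecting all the commutator contributions, the bound on $\partial_\lambda\langle\xi_\lambda,\widetilde{\mathbb{Q}}_{3;1}^<\xi_\lambda\rangle$ will be of the schematic form $CL^3\rho^{7/3} + CL^3\rho^{7/3+1/3-\epsilon/3} + \rho^{\text{small positive}}\langle\xi_\lambda,(\mathbb{H}_0+\widetilde{\mathbb{Q}}_1+\mathcal{N}_>)\xi_\lambda\rangle + C\rho\langle\xi_\lambda,\mathcal{N}\xi_\lambda\rangle$, where each operator expectation along the flow is then replaced, via Proposition \ref{pro: propagation est}, Proposition \ref{lem: bound N>}, \eqref{eq: improv N lw bd} and the improved estimates collected in Section \ref{sec: improv N lw bd}, by $CL^3\rho^2$ plus a small multiple of $\langle\xi_1,\mathbb{H}_0\xi_1\rangle$. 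Integrating over $\lambda\in[0,1]$ and combining with the bound on $\langle\xi_0,\widetilde{\mathbb{Q}}_{3;1}^<\xi_0\rangle$ yields \eqref{eq: est Q31<}. The main obstacle I anticipate is not any single estimate but the sheer number of distinct error terms produced by the commutator (there are more operators to contract than in $\mathbb{Q}_1$ or $\mathbb{Q}_4$, since $\widetilde{\mathbb{Q}}_{3;1}^<$ is genuinely a cubic-type term with three creation operators), together with the care needed to keep one $b_\sigma(\varphi_z)$-type grouping available in \emph{every} term so that Lemma \ref{lem: bound b phi} rather than the cruder $\|B\|$-bound can be used — this is precisely the ``more careful analysis'' alluded to in the introduction, and the place where a naive a priori estimate (as in \cite{FGHP}) would lose the optimal power of $\rho$.
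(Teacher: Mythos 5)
Your proposal runs into a structural problem at the very first step that would make the whole argument circular, and a power-counting error makes the claimed boundary estimate fail.

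You write $\langle R^\ast\psi,\widetilde{\mathbb{Q}}_{3;1}^< R^\ast\psi\rangle = \langle\xi_1,\widetilde{\mathbb{Q}}_{3;1}^<\xi_1\rangle$, which is wrong since $\xi_1 = T^\ast R^\ast\psi$ while $R^\ast\psi = \xi_0$; you then correct yourself, but your corrected plan is to bound $\langle\xi_0,\widetilde{\mathbb{Q}}_{3;1}^<\xi_0\rangle$ directly by Cauchy--Schwarz and \emph{also} integrate the commutator. But $\langle\xi_0,\widetilde{\mathbb{Q}}_{3;1}^<\xi_0\rangle$ \emph{is} the quantity the proposition is asking you to bound --- if a direct estimate on it sufficed, no propagation would be needed, and if it doesn't, computing the commutator does not help you bound the $\lambda=0$ endpoint. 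The correct logic (and the one the paper uses) is the Duhamel identity
\begin{equation*}
  \langle\xi_0,\widetilde{\mathbb{Q}}_{3;1}^<\xi_0\rangle
  = \langle\xi_1,\widetilde{\mathbb{Q}}_{3;1}^<\xi_1\rangle
  - \int_0^1 d\lambda\,\frac{d}{d\lambda}\langle\xi_\lambda,\widetilde{\mathbb{Q}}_{3;1}^<\xi_\lambda\rangle
  = \langle\xi_1,\widetilde{\mathbb{Q}}_{3;1}^<\xi_1\rangle
  + \int_0^1 d\lambda\,\langle\xi_\lambda,[\widetilde{\mathbb{Q}}_{3;1}^<,B-B^\ast]\xi_\lambda\rangle,
\end{equation*}
and the \emph{boundary term is evaluated at $\lambda=1$}, not $\lambda=0$. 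This is essential, because the conclusion of the proposition allows a term $\delta\langle\xi_1,(\mathbb{H}_0+\widetilde{\mathbb{Q}}_1)\xi_1\rangle$ on the right-hand side --- \emph{specifically at $\lambda=1$} --- which is later killed by the manifest positivity of $\mathbb{H}_0+\widetilde{\mathbb{Q}}_1$ in the conjugated Hamiltonian. The paper's bound on $\langle\xi_1,\widetilde{\mathbb{Q}}_{3;1}^<\xi_1\rangle$ is $C\rho^{1/2}\|\widetilde{\mathbb{Q}}_1^{1/2}\xi_1\|\|\mathcal{N}^{1/2}\xi_1\|\leq\delta\langle\xi_1,\widetilde{\mathbb{Q}}_1\xi_1\rangle + C\rho\langle\xi_1,\mathcal{N}\xi_1\rangle$, and the second term is then $\leq\delta\langle\xi_1,\mathbb{H}_0\xi_1\rangle+CL^3\rho^{7/3}$ by \eqref{eq: improv N lw bd}, so the $\widetilde{\mathbb{Q}}_1$ piece can be left \emph{unresolved}. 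In contrast, if you bound the $\lambda=0$ endpoint, you get $\delta\langle\xi_0,\widetilde{\mathbb{Q}}_1\xi_0\rangle$, which is not allowed in the conclusion and has to be resolved by Lemma \ref{lem: a priori} to $CL^3\rho^2$ --- a strictly worse error than the claimed $\rho^{7/3}$. Your quoted bound $CL^{3/2}\rho^{3/2}\|\widetilde{\mathbb{Q}}_1^{1/2}\xi_0\|$ also does not match the power counting (the two prefactors $\|a^\ast(\overline v_x)\|\leq C\rho^{1/2}$ and $\|a(u^<_y)\|\leq C\rho^{1/2}$ give $\rho$, not $\rho^{3/2}$), and even with the a priori bounds it evaluates to $CL^3\rho^{25/12}$, which is larger than $CL^3\rho^{7/3}$.

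A second point worth flagging: your commutator bookkeeping addresses only $[\widetilde{\mathbb{Q}}_{3;1}^<,B]$, noting that ``$B$ has no creation operators.'' But the full commutator is with $B-B^\ast$, and the key structural simplification the paper exploits is that $[\widetilde{\mathbb{Q}}_{3;1}^<,B^\ast]=0$ identically, because the only annihilation operator in $\widetilde{\mathbb{Q}}_{3;1}^<$, namely $a_{\sigma'}(u^<_y)$, has momentum support disjoint from both $\hat u^r$ (which vanishes for $|k|\leq 2k_F$) and $\hat v^r$ (supported in $\mathcal{B}_F$), so all its anticommutators with the creation operators in $B^\ast$ vanish. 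This is precisely why the cut-off at $2k_F$ in $\hat u^<$ is chosen the way it is. Once that is noticed, the remaining terms from $[\widetilde{\mathbb{Q}}_{3;1}^<,B]$ are handled by the $b_\sigma(\varphi_z)$-norm bounds and the $L^1$ bounds on $u^r,\omega^r$ as you sketch; the constant-type term (paper's $\mathrm{II}_\mathrm{B}$) actually vanishes by momentum conservation rather than requiring the Fourier-decay argument of the $\mathrm{III}_f$-type.
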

\begin{proof}
  To prove the estimate in \eqref{eq: est Q31<} we need to use the almost bosonic Bogoliubov transformation introduced in Definition \ref{def: transf T}. We have 
  \begin{eqnarray}
    \langle R^\ast\psi, \widetilde{\mathbb{Q}}_{3;1}^< R^\ast \psi\rangle &=& \langle T^\ast R^\ast\psi, \widetilde{\mathbb{Q}}_{3;1}^< T^\ast R^\ast\psi\rangle - \int_0^1\, d\lambda\, \frac{d}{d\lambda}\langle \xi_\lambda, \widetilde{\mathbb{Q}}_{3;1}^< \xi_\lambda\rangle
    \\
    &=&\langle T^\ast R^\ast\psi, \widetilde{\mathbb{Q}}_{3;1}^< T^\ast R^\ast\psi\rangle + \int_0^1\, d\lambda\,\langle \xi_\lambda, [\widetilde{\mathbb{Q}}_{3;1}^<, B - B^\ast] \xi_\lambda\rangle,\nonumber
  \end{eqnarray}
  where we recall the notation $\xi_\lambda = T^\ast_\lambda R^\ast \psi$. We now estimate the first contribution in the right side above. We have, for some $0<\delta<1$
  \begin{eqnarray}
    \langle T^\ast R^\ast\psi, \widetilde{\mathbb{Q}}_{3;1}^< T^\ast R^\ast\psi\rangle &\leq& C\rho^{\frac{1}{2}}\sum_{\sigma\neq\sigma^\prime}\int\, dxdy\, V(x-y)\|a_\sigma(u_x)a_{\sigma^\prime}(u_y)T^\ast R^\ast \psi\|\|a_{\sigma^\prime}(u^<_y) T^\ast R^\ast \psi\| \nonumber
    \\
    &\leq& C\rho^{\frac{1}{2}}\|\widetilde{\mathbb{Q}}_{1}^{\frac{1}{2}}T^\ast R^\ast \psi\|\|\mathcal{N}^{\frac{1}{2}}T^\ast R^\ast \psi\| \nonumber
    \\
    &\leq& \delta \langle T^\ast R^\ast \psi, \mathbb{Q}_{1} T^\ast R^\ast \psi\rangle + C\rho\langle T^\ast R^\ast \psi, \mathcal{N} T^\ast R^\ast \psi \rangle \nonumber
    \\
    &\leq& \delta\langle T^\ast R^\ast \psi, (\mathbb{H}_0 + \widetilde{\mathbb{Q}}_{1}) T^\ast R^\ast \psi \rangle + CL^3\rho^{\frac{7}{3}},\nonumber
  \end{eqnarray}
  where in the last inequality we used the estimate in \eqref{eq: est N}. We now take into account the commutator. We have 
  \begin{multline}\label{eq: two parts comm Q3}
    [\widetilde{\mathbb{Q}}_{3;1}^<, B-B^\ast]
    \\
         = \int\, dxdydzdz^\prime\, V(x-y)\varphi(z-z^\prime) [a^\ast_\sigma(u_x) a^\ast_{\sigma^\prime}(u_y) a^\ast_\sigma(\overline{v}_x) a_{\sigma^\prime}(u_y^<), a_\uparrow(u^r_z)a_\uparrow(\overline{v}^r_z)a_\downarrow(u^r_{z^\prime})a_\downarrow(\overline{v}^r_{z^\prime})]
    \\
     -\int\, dxdydzdz^\prime\, V(x-y)\varphi(z-z^\prime) [a^\ast_\sigma(u_x) a^\ast_{\sigma^\prime}(u_y) a^\ast_\sigma(\overline{v}_x) a_{\sigma^\prime}(u_y^<), a^\ast_\downarrow(\overline{v}^r_{z^\prime}) a_\downarrow^\ast(u^r_{z^\prime}) a^\ast_\uparrow(\overline{v}^r_z)a_\uparrow^\ast(u^r_z)].
  \end{multline} 
  We first take into account the second commutator in the right side above:
\begin{multline*}
  [a^\ast_\sigma(u_x) a^\ast_{\sigma^\prime}(u_y) a^\ast_\sigma(\overline{v}_x) a_{\sigma^\prime}(u_y^<), a^\ast_\downarrow(\overline{v}^r_{z^\prime}) a_\downarrow^\ast(u^r_{z^\prime}) a^\ast_\uparrow(\overline{v}^r_z)a_\uparrow^\ast(u^r_z)] 
  \\
  = -a^\ast_\sigma(u_x) a^\ast_{\sigma^\prime}(u_y) a^\ast_\sigma(\overline{v}_x)a^\ast_\downarrow(\overline{v}^r_{z^\prime}) a^\ast_\uparrow(\overline{v}^r_z)  [a_{\sigma^\prime}(u_y^<), a_\downarrow^\ast(u^r_{z^\prime}) a_\uparrow^\ast(u^r_z)].
\end{multline*}
This commutator is vanishing since:
\begin{equation}
    [a_{\sigma^\prime}(u_y^<), a_\downarrow^\ast(u^r_{z^\prime}) a_\uparrow^\ast(u^r_z)] =  0,
\end{equation}
which is due to the fact that 
\[
\hat{u}^<_\sigma(k) = \begin{cases} 1, &k\notin\mathcal{B}_F^\sigma, \, |k| < 2k_F^\sigma, \\ 0, &|k| \geq 2k_F^\sigma. \end{cases}, \qquad \hat{u}^{r}_{\sigma}(k) = \begin{cases} 0, &\mbox{for}\quad |k| \leq 2k_F^\sigma, \\ 
    1, &\mbox{for}\quad  3k_F^\sigma \leq |k| \leq \rho^{-\beta}_\sigma,
    \\
    0,  &\mbox{for}\quad |k|\geq 2\rho_\sigma^{-\beta},\end{cases}
\] 
All the error terms are then coming form the first line in the right side in \eqref{eq: two parts comm Q3}. We then calculate:
  \begin{multline}\label{eq: 1 part comm Q3}
    [a^\ast_\sigma(u_x) a^\ast_{\sigma^\prime}(u_y) a^\ast_\sigma(\overline{v}_x) a_{\sigma^\prime}(u_y^<), a_\uparrow(u^r_z)a_\uparrow(\overline{v}^r_z)a_\downarrow(u^r_{z^\prime})a_\downarrow(\overline{v}^r_{z^\prime})] 
    \\
    =  -[a^\ast_\sigma(u_x) a^\ast_{\sigma^\prime}(u_y) a^\ast_\sigma(\overline{v}_x), a_\uparrow(u^r_z)a_\downarrow(u^r_{z^\prime})a_\uparrow(\overline{v}^r_z)a_\downarrow(\overline{v}^r_{z^\prime})]a_{\sigma^\prime}(u_y^<)
    \\
    = -[a^\ast_\sigma(u_x) a^\ast_{\sigma^\prime}(u_y) , a_\uparrow(u^r_z)a_\downarrow(u^r_{z^\prime})]a_\uparrow(\overline{v}^r_z)a_\downarrow(\overline{v}^r_{z^\prime})a^\ast_\sigma(\overline{v}_x) a_{\sigma^\prime}(u_y^<)
    \\
    - a^\ast_\sigma(u_x) a^\ast_{\sigma^\prime}(u_y)a_\uparrow(u^r_z)a_\downarrow(u^r_{z^\prime})[a_\sigma^\ast(\overline{v}_x), a_\uparrow(\overline{v}^r_z)a_\downarrow(\overline{v}^r_{z^\prime})]a_{\sigma^\prime}(u_y^<).
  \end{multline}
  We now calculate, for $\sigma \neq \sigma^\prime$,
  \begin{eqnarray}\label{eq: Q3 comm calcul}
    [a^\ast_\sigma(u_x) a^\ast_{\sigma^\prime}(u_y) , a_\uparrow(u^r_z)a_\downarrow(u^r_{z^\prime})]
    &=& u^r_\uparrow(z;x)u^r_\downarrow(z^\prime;y) - u^r_\downarrow(z^\prime, x) u^r_\uparrow(z;y) \nonumber
    \\
    && - u^r_\uparrow(z;x)a^\ast_\downarrow(u_y)a_\downarrow(u^r_{z^\prime}) + u^r_\downarrow(z^\prime;x) a^\ast_\uparrow(u_y)a_\uparrow(u^r_z) \nonumber
    \\
    && + u^r_\uparrow(z;y) a_\downarrow^\ast(u_x)a_\downarrow(u^r_{z^\prime}) - u^r_\downarrow(z^\prime;y) a_\uparrow^\ast(u_x)a_\uparrow(u^r_{z}).
  \end{eqnarray}
  The commutator above gives rise to two different error terms. The first one is, for instance,  
\begin{eqnarray}
  \mathrm{I} &=& \int\, dxdydzdz^\prime\, V(x-y)\varphi(z - z^\prime) u^r_\uparrow(z;x) \langle \xi_\lambda, a^\ast_\downarrow(u_y) a_\downarrow(u^r_{z^\prime})a_\uparrow(\overline{v}^r_z)a_\downarrow(\overline{v}^r_{z^\prime})a^\ast_\uparrow(\overline{v}^r_x)a_\downarrow(u^<_y)\xi_\lambda\rangle\nonumber
  \\
  &=& -\int\, dxdydz\, V(x-y)u^r_\uparrow(z;x)\langle \xi_\lambda, a^\ast_\downarrow(u_y)b_\downarrow(\varphi_{z})a_\uparrow(\overline{v}^r_z)a^\ast_\uparrow(\overline{v}^r_x)a_\downarrow(u^<_y)\xi_\lambda\rangle.
\end{eqnarray}
We then get, using Proposition \ref{pro: L1, L2 v} (to bound $\|u^r_{x,\sigma}\|_1 \leq C$, $\|\overline{v}^r_{x,\sigma}\|_2 \leq C\rho^{1/2}$) and Lemma \ref{lem: bound b phi} (to estimate $\|b_\sigma(\varphi_{z})\|\leq C\rho^{1/3}$), that
\begin{eqnarray}
  |\mathrm{I}| &\leq& \int\, dxdydz\, |V(x-y)||u^r_\uparrow(z;x)|\|b_\downarrow(\varphi_{z})\|\|a_\uparrow(\overline{v}^r_z)\|\|a^\ast_\uparrow(\overline{v}^r_x)\|a_\downarrow(u_y)\xi_\lambda\|\|a_\downarrow(u^<_y)\xi_\lambda\| \nonumber\\
  &\leq&  C\rho^{1+\frac{1}{3}}\langle \xi_\lambda, \mathcal{N}\xi_\lambda\rangle \leq CL^3\rho^3 + \delta\langle \xi_1, \mathbb{H}_0 \xi_1 \rangle,
\end{eqnarray}
where we used also the estimate in \eqref{eq: improv N lw bd}. The other type of error is coming from the first two terms in the right hand side of \eqref{eq: Q3 comm calcul}. They can be estimated in the same way, we take into account: 
  \begin{eqnarray}
    \mathrm{II} &=& -\int\, dxdydzdz^\prime\, V(x-y)\varphi(z-z^\prime) u^r_\uparrow(z;x)u^r_\downarrow(z^\prime;y) \langle \xi_\lambda, a_\uparrow(\overline{v}^r_z) a_\downarrow(\overline{v}^r_{z^\prime}) a^\ast_\uparrow(\overline{v}_x)a_\downarrow(u^<_y)\xi_\lambda\rangle
    \\
    &=& + \int\, dxdydzdz^\prime\, V(x-y)\varphi(z-z^\prime) u^r_\uparrow(z;x)u^r_\downarrow(z^\prime;y)\langle \xi_\lambda, a^\ast_\uparrow(\overline{v}_x)a_\uparrow(\overline{v}^r_z)a_\downarrow(\overline{v}^r_{z^\prime})a_\downarrow(u^<_y)\xi_\lambda\rangle\nonumber
    \\
    && -\int\, dxdydzdz^\prime\, V(x-y)\varphi(z-z^\prime) u^r_\uparrow(z;x)u^r_\downarrow(z^\prime;y)\widetilde{\omega}^r_\uparrow(x;z) \langle \xi_\lambda, a_\downarrow(\overline{v}^r_{z^\prime}) a_\downarrow(u^<_y)\xi_\lambda\rangle\equiv \mathrm{II}_{\mathrm{A}} + \mathrm{II}_{\mathrm{B}}\nonumber,
  \end{eqnarray}
  where $\widetilde{\omega}^r$ is defined as in Remark \ref{rem: u2 and tilde omega}.
  We now take into account $\mathrm{II}_{\mathrm{A}}$, we have
  \begin{eqnarray}
    |\mathrm{II}_{\mathrm{A}}| &\leq& C\rho \int\, dxdydzdz^\prime\, V(x-y)|\varphi(z-z^\prime)||u^r_\uparrow(z;x)||u^r_\downarrow(z^\prime;y)|\|\|a_\uparrow(\overline{v}_x)\xi_\lambda\|\|a_\downarrow(u^<_y)\xi_\lambda\|
    \\
    &\leq& C \rho  \langle \xi_\lambda, \mathcal{N}\xi_\lambda\rangle \leq CL^3\rho^{\frac{7}{3}} + \delta\langle \xi_1, \mathbb{H}_0 \xi_1\rangle, \nonumber
  \end{eqnarray}
where we used the estimate in\eqref{eq: improv N lw bd} together with the fact that $\|u^r_{x,\sigma}\|_1 , \|\varphi\|_\infty \leq C$ (see Proposition \ref{pro: L1, L2 v} and Lemma \ref{lem: bound phi}). We now take into account the term $\mathrm{II}_{\mathrm{B}}$. In what follows, we prove that $\mathrm{II}_{\mathrm{B}}$ vanishes due to the momentum conservation. We have 
\begin{equation}
  \mathrm{II}_{\mathrm{B}} = \int\, dxdydzdz^\prime\, V(x-y)\varphi(z-z^\prime) u^r_\uparrow(z;x)u^r_\downarrow(z^\prime;y)\widetilde{\omega}^r_\uparrow(x;z) \langle \xi_\lambda, a_\downarrow(\overline{v}^r_{z^\prime}) a_\downarrow(u^<_y)\xi_\lambda\rangle.
\end{equation}
We can indeed rewrite it in momentum space and we get that 
\begin{eqnarray}
  \mathrm{II}_{\mathrm{B}} \hspace{-0.3cm}&=&\hspace{-0.3cm} \int\, dxdydzdz^\prime\, \frac{1}{(L^3)^6}\sum_{p,q}\hat{V}(p)e^{ip\cdot (x-y)}\hat{\varphi}(q) e^{iq\cdot(z-z^\prime)}\nonumber
  \\
  &&\cdot\sum_{k_1, k_2, k_3} \hat{u}^r_\uparrow(k_1) \hat{u}_\downarrow(k_2)\hat{\widetilde{\omega}}^r(k_3)e^{i(k_1-k_3)\cdot(z-x)}e^{ik_2\cdot(z^\prime - y)}\cdot\nonumber
  \\
  &&\cdot \int\, dtdt^\prime\, \sum_{k_4, k_5} \hat{v}^r(k_4) \hat{u}^<(k_5) e^{-ik_4\cdot(z^\prime + t)} e^{-ik_5\cdot(t^\prime -y)} \, \langle\xi_\lambda, a_{t,\downarrow}a_{t^\prime; \downarrow}\xi_\lambda\rangle.
\end{eqnarray}
Taking the integrals in $x,y,z,z^\prime$, imply that  $k_4 = k_5$. This is impossible since $k_4\in\mathcal{B}_F^\downarrow$ and $k_5\notin\mathcal{B}_F^\downarrow$. 
We now take into account the error terms coming from the second line in the right hand side of \eqref{eq: 1 part comm Q3}. To begin we calculate: 
\begin{equation}
  [a_\sigma^\ast(\overline{v}_x) , a_\uparrow(\overline{v}^r_z)a_\downarrow(\overline{v}^r_{z^\prime})] = \widetilde{\omega}_\uparrow(x;z) a_\downarrow(\overline{v}^r_{z^\prime}) - \widetilde{\omega}^r_\downarrow(x;z^\prime)a_\uparrow(\overline{v}^r_z).
\end{equation}
The corresponding error term is then of the following form: 
\begin{equation}
  \mathrm{III} = -\int\, dxdydzdz^\prime\, V(x-y)\varphi(z-z^\prime)\widetilde{\omega}^r_\uparrow(z;x)\langle \xi_\lambda, a^\ast_\uparrow(u_x)a^\ast_\downarrow(u_y) a_\uparrow(u^r_z)a_\downarrow(u^r_{z^\prime}) a_\downarrow(\overline{v}^r_{z^\prime}) a_\downarrow(u^<_y)\xi_\lambda\rangle.
\end{equation}
Using that creation/annihilation operators with different spin commute, we can write 
\begin{eqnarray}
  |\mathrm{III}| &\leq& \int\, dxdydz\, V(x-y)|\widetilde{\omega}^r(z;x)| \|a_\uparrow(u_x)a_\downarrow(u_y)\xi_\lambda\|\|b_\downarrow(\varphi_z)\|\|a_\downarrow(u^<_y)a_\uparrow(u^r_z)\xi_\lambda\|
  \\
  &\leq &C\rho^{\frac{1}{2} + \frac{1}{3}}\int\, dxdydz\, V(x-y)|\widetilde{\omega}^r(z;x)|\|a_\uparrow(u_x)a_\downarrow(u_y)\xi_\lambda\|\|a_\uparrow(u^r_z)\xi_\lambda\|
  \\
  &\leq& C\rho^{\frac{1}{2} + \frac{1}{3}-\frac{\epsilon}{3}} \|\widetilde{\mathbb{Q}}_{1}^{\frac{1}{2}}\xi_\lambda\|\|\mathcal{N}_>^{\frac{1}{2}}\xi_\lambda\| \leq CL^{\frac{3}{2}}\rho^{\frac{3}{2} + \frac{1}{3} - \frac{\epsilon}{3}}\|\mathcal{N}_>^{\frac{1}{2}}\xi_\lambda\|,
\end{eqnarray}
where we used that $\|a_\downarrow(u^<_y)\|\leq \|u^<_y\|_2\leq C\rho^{1/2}$, together with $\|\widetilde{\omega}^{r}_{x,\sigma}\|_1 \leq C\rho^{-\epsilon/3}$ (see Remark \ref{rem: u2 and tilde omega}), and the propagation estimates for $\widetilde{\mathbb{Q}}_{1}$. Using now Proposition \ref{lem: bound N>}, we find 
\begin{equation}
  |\mathrm{III}| \leq CL^3\rho^{\frac{3}{2} + \frac{5}{6} + \frac{1}{3} - \frac{\epsilon}{3}} + CL^{\frac{3}{2}}\rho^{\frac{3}{2} + \frac{1}{3} - \frac{\epsilon}{3} - \frac{1}{3}}\|\mathbb{H}_0^{\frac{1}{2}}\xi_1\| \leq CL^3\rho^{\frac{7}{3} + \frac{1}{3} - \frac{\epsilon}{3}} + CL^3\rho^{3-\frac{2}{3}\epsilon} + \delta\langle \xi_1, \mathbb{H}_0 \xi_1 \rangle.
\end{equation}
Putting all the estimates together, we then proved that for some $0<\delta<1$ and $0< \epsilon \leq 1$, we have
\begin{equation}
  |\langle R^\ast\psi, \widetilde{\mathbb{Q}}_{3;1}^< R^\ast \psi\rangle | \leq CL^3\rho^{\frac{7}{3}}+ CL^3\rho^{\frac{7}{3} + \frac{1}{3} -\frac{\epsilon}{3}} + \delta\langle T^\ast R^\ast \psi, (\mathbb{H}_0 + \widetilde{\mathbb{Q}}_{1})T^\ast R^\ast\psi\rangle.
\end{equation}
\end{proof}
\begin{proposition}[Final estimate $\widetilde{\mathbb{Q}}_3$] \label{pro: Q3 final} Let $0<\delta<1$, $0 < \epsilon \leq 1$, and let $\psi$ be an approximate ground state, in the sense of Definition \ref{def: approx gs}. Let $\widetilde{\mathbb{Q}}_3$ be as in \eqref{eq: def Q3}. Under the same assumptions of Theorem \ref{thm: lower bound}, it holds 
\begin{equation}
  |\langle R^\ast\psi, \widetilde{\mathbb{Q}}_3 R^\ast\psi\rangle | \leq CL^3\rho^{\frac{7}{3}}+ CL^3\rho^{\frac{7}{3} + \frac{1}{3} -\frac{\epsilon}{3}} + \delta\langle T^\ast R^\ast \psi, (\mathbb{H}_0 + \widetilde{\mathbb{Q}}_{1}) T^\ast R^\ast \psi\rangle.
\end{equation} 
\begin{proof}
  The proof directly follows combing Proposition \ref{pro: Q3 first} together with Proposition \ref{pro: Q3<}.
\end{proof}
\end{proposition}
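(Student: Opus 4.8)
The plan is to reduce the estimate of $\langle R^\ast\psi, \widetilde{\mathbb{Q}}_3 R^\ast\psi\rangle$ to the two auxiliary bounds already isolated, Proposition \ref{pro: Q3 first} and Proposition \ref{pro: Q3<}, so that the final statement follows by simply chaining them. First I would recall the decomposition $\widetilde{\mathbb{Q}}_3 = \widetilde{\mathbb{Q}}_{3;1} + \widetilde{\mathbb{Q}}_{3;2}$ from \eqref{eq: def Q3} and, inside $\widetilde{\mathbb{Q}}_{3;1}$, the further splitting $a_{\sigma^\prime}(u_y) = a_{\sigma^\prime}(u^<_y) + a_{\sigma^\prime}(u^>_y)$ according to whether the momentum is below or above $2k_F$, giving $\widetilde{\mathbb{Q}}_{3;1} = \widetilde{\mathbb{Q}}_{3;1}^< + \widetilde{\mathbb{Q}}_{3;1}^>$. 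The term $\widetilde{\mathbb{Q}}_{3;2}$ carries two $\overline v$-operators, so $\|a^\ast_{\sigma^\prime}(\overline v_y)a^\ast_\sigma(\overline v_x)\| \leq C\rho$, and together with $\|V\|_1 \leq C$ and the improved estimate \eqref{eq: improv N lw bd} for $\langle R^\ast\psi,\mathcal N R^\ast\psi\rangle$ this contribution is $\leq CL^3\rho^{7/3} + \delta\langle\xi_1,\mathbb H_0\xi_1\rangle$. Likewise $\widetilde{\mathbb{Q}}_{3;1}^>$ is estimated by pairing the two annihilation operators acting on $\psi$ against $\widetilde{\mathbb Q}_1^{1/2}$ and using that the momentum in $a_{\sigma^\prime}(u^>_y)$ is $\geq 2k_F$, so this operator is controlled by $\mathcal N_>^{1/2}$; then \eqref{eq: est N>} and the a priori bound for $\widetilde{\mathbb Q}_1$ from Lemma \ref{lem: a priori} close it. This is exactly the content of Proposition \ref{pro: Q3 first}, which reduces the problem to $\langle R^\ast\psi, \widetilde{\mathbb{Q}}_{3;1}^< R^\ast\psi\rangle$ up to the admissible error $CL^3\rho^{7/3} + \delta\langle T^\ast R^\ast\psi, \mathbb H_0 T^\ast R^\ast\psi\rangle$.

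The substantive step is the bound on $\widetilde{\mathbb{Q}}_{3;1}^<$, i.e.\ Proposition \ref{pro: Q3<}, which I would carry out by conjugating with the almost-bosonic transformation $T$: write $\langle R^\ast\psi, \widetilde{\mathbb{Q}}_{3;1}^< R^\ast\psi\rangle = \langle \xi_1, \widetilde{\mathbb{Q}}_{3;1}^< \xi_1\rangle + \int_0^1 d\lambda\, \langle \xi_\lambda, [\widetilde{\mathbb{Q}}_{3;1}^<, B - B^\ast]\xi_\lambda\rangle$. The boundary term at $\lambda = 1$ (where $\xi_1 = T^\ast R^\ast\psi$) is handled by Cauchy--Schwarz against $\widetilde{\mathbb Q}_1^{1/2}$ and $\mathcal N^{1/2}$, using \eqref{eq: est N}. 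For the commutator the key algebraic observation — which is what the new regularization provides — is that the part of $[\widetilde{\mathbb{Q}}_{3;1}^<, B^\ast]$ generated by contractions with the $\overline v^r$-operators vanishes, since $[a_{\sigma^\prime}(u^<_y), a^\ast_\downarrow(u^r_{z^\prime})a^\ast_\uparrow(u^r_z)] = 0$ because $\hat u^<_\sigma$ is supported for $|k| < 2k_F$ while $\hat u^r_\sigma$ is supported for $|k| \geq 2k_F$. The remaining contributions come from $[\widetilde{\mathbb{Q}}_{3;1}^<, B]$; after expanding $[a^\ast_\sigma(u_x)a^\ast_{\sigma^\prime}(u_y), a_\uparrow(u^r_z)a_\downarrow(u^r_{z^\prime})]$ and $[a^\ast_\sigma(\overline v_x), a_\uparrow(\overline v^r_z)a_\downarrow(\overline v^r_{z^\prime})]$, one isolates the $b_\sigma(\varphi_z)$-structure and applies the improved operator bound $\|b_\sigma(\varphi_z)\| \leq C\rho^{1/3}$ from Lemma \ref{lem: bound b phi}, the $L^1/L^2$ bounds of Proposition \ref{pro: L1, L2 v}, Lemma \ref{lem: bound phi}, \eqref{eq: improv N lw bd}, Proposition \ref{lem: bound N>} and \eqref{eq: est N>}. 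Several terms vanish by momentum conservation (for instance the one labelled $\mathrm{II}_{\mathrm B}$, where an integration forces $k_4 = k_5$ with $k_4 \in \mathcal B_F^\downarrow$ and $k_5 \notin \mathcal B_F^\downarrow$), and the rest add up to $\leq CL^3\rho^{7/3} + CL^3\rho^{7/3 + 1/3 - \epsilon/3} + \delta\langle \xi_1, (\mathbb H_0 + \widetilde{\mathbb Q}_1)\xi_1\rangle$.

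Combining Proposition \ref{pro: Q3 first} with Proposition \ref{pro: Q3<} then yields the claimed inequality, after absorbing the various $\delta\langle \xi_1, (\mathbb H_0 + \widetilde{\mathbb Q}_1)\xi_1\rangle$ terms into a single such contribution (relabelling $\delta$). I expect the main obstacle to be precisely the $\widetilde{\mathbb{Q}}_{3;1}^<$ analysis: because $\widetilde{\mathbb{Q}}_3$ is a cubic (odd) operator, the naive a priori estimates used in \cite{FGHP} only give an error of order $\rho^2$, not $\rho^{7/3}$, and gaining the extra half-power requires both the vanishing of the $\overline v^r$-contraction commutator — a genuine consequence of the new cut-off $\hat u^r$ supported for $|k| \geq 2k_F$ — and the sharpened bound on $b_\sigma(\varphi_z)$ from Lemma \ref{lem: bound b phi}. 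Consequently the bookkeeping of which error terms must be controlled by $\mathcal N_>$, by $\mathcal N$, or by $\mathbb H_0$, and at which power of $\rho$ (and how the powers of $\rho^{\epsilon/3}$ from $\|\omega^r_{x,\sigma}\|_1$ and $\|\widetilde\omega^r_{x,\sigma}\|_1$ enter), is the delicate part of the argument.
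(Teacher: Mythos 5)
Your proposal is correct and takes essentially the same route as the paper, which simply cites Propositions \ref{pro: Q3 first} and \ref{pro: Q3<} and triangle-inequalities them together (relabelling $\delta$). Your reconstruction of the content of those two auxiliary propositions — including the key vanishing of $[a_{\sigma^\prime}(u^<_y), a^\ast_\downarrow(u^r_{z^\prime})a^\ast_\uparrow(u^r_z)]$ from the disjoint momentum supports, and the momentum-conservation cancellation of the term $\mathrm{II}_{\mathrm B}$ — matches the paper's argument.
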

\subsection{Proof of Theorem \ref{thm: lower bound}}
To prove Theorem \ref{thm: lower bound}, we follow the same strategy of the proof of Theorem \ref{thm: optimal up bd}. We summarize the main steps. In the following $\psi$ denotes an approximate ground state, in the sense of Definition \ref{def: approx gs}. Recall that in our notations $\xi_\lambda = T^\ast_\lambda R^\ast \psi$.
Similarly as in the upper bound, we start by using Proposition \ref{pro: X,Q2} to write
\begin{equation}
  E_L(N_\uparrow, N_\downarrow)\geq E_{\mathrm{HF}}(\omega) + \langle \xi_0, (\mathbb{H}_0 + \widetilde{\mathbb{Q}}_1 + \widetilde{\mathbb{Q}}_4)\xi_0\rangle + \langle \xi_0, \widetilde{\mathbb{Q}}_3 \xi_0\rangle + \mathcal{E}_1(\psi),
\end{equation}
with 
\begin{equation}
  |\mathcal{E}_1(\psi)| \leq C\rho\langle \xi_0, \mathcal{N}\xi_0\rangle. 
\end{equation}
 Being $\xi_0 = R^\ast\psi$ and using the bound in \eqref{eq: improv N lw bd}, we get, for some $0<\delta <1$, 
\begin{equation}
  |\mathcal{E}_1(\psi)| \leq CL^3\rho^{\frac{7}{3}} + \delta\langle \xi_1, \mathbb{H}_0 \xi_1 \rangle. 
\end{equation}
Moreover, to bound $\langle \xi_0, \widetilde{\mathbb{Q}}_3 \xi_0 \rangle = \langle R^\ast\psi, \widetilde{\mathbb{Q}}_3 R^\ast \psi \rangle$, we can use Proposition \ref{pro: Q3 final}:
\begin{equation}
  |\langle R^\ast\psi, \widetilde{\mathbb{Q}}_3 R^\ast \psi\rangle | \leq CL^3\rho^{\frac{7}{3}} + CL^3\rho^{\frac{7}{3} +\frac{1}{3} -\frac{\epsilon}{3}} + \delta\langle \xi_1, (\mathbb{H}_0 + \widetilde{\mathbb{Q}}_1) \xi_1\rangle.
\end{equation}
As in the upper bound, we then extract the constant term from $\langle \xi_0, (\mathbb{H}_0 + \widetilde{\mathbb{Q}}_1 + \widetilde{\mathbb{Q}}_4)\xi_0\rangle$. More precisely, proceeding similarly as in \eqref{eq: 1st step duhamel}, we have 
\begin{eqnarray}
\langle \xi_0, (\mathbb{H}_0 + \widetilde{\mathbb{Q}}_1) \xi_0 \rangle &=& \langle\xi_1, (\mathbb{H}_0 + \widetilde{\mathbb{Q}}_1)\xi_1\rangle - \int_0^1 d\lambda\, \frac{d}{d\lambda}\langle \xi_\lambda, (\mathbb{H}_0 + \widetilde{\mathbb{Q}}_1)\xi_\lambda\rangle 
\\
&=& \langle\xi_1, (\mathbb{H}_0 + \widetilde{\mathbb{Q}}_1)\xi_1\rangle + \int_0^1 d\lambda\,\langle \xi_\lambda, (\mathbb{T}_1 + \mathbb{T}_2)\xi_\lambda\rangle + \mathcal{E}_2, \nonumber
\end{eqnarray}
where $\mathcal{E}_2$ can be bounded using \eqref{eq: conclusive H0 lw bd}, Proposition \ref{pro: Q1} and the propagation of the estimates discussed in Proposition \ref{pro: propagation est}, as follows:
\begin{equation}
  |\mathcal{E}_2| \leq CL^3\rho^{\frac{7}{3}} + CL^3\rho^{\frac{7}{3} +\frac{1}{3} -\frac{\epsilon}{3}} + \delta\langle\xi_1, \mathbb{H}_0 \xi_1\rangle,
\end{equation}
for some $0<\delta<1$. Now we write
\[
  \langle \xi_\lambda (\mathbb{T}_1 + \mathbb{T}_2)\xi_\lambda \rangle = \langle \xi_\lambda (\mathbb{T}_1 + \mathbb{T}_2^r)\xi_\lambda \rangle + \mathcal{E}_3,
\]
where $\mathcal{E}_3$ can be bounded as in \eqref{eq: final reg T2 lw bd}, i.e., 
\[
  |\mathcal{E}_3| = |\langle \xi_\lambda, (\mathbb{T}_2 - \mathbb{T}_2^r)\xi_\lambda \rangle| \leq CL^3\rho^{2+\frac{1}{5}} + \delta\langle \xi_1, \mathbb{H}_0 \xi_1\rangle.
\]
Thus, we get 
\begin{eqnarray}
  \langle \xi_0, (\mathbb{H}_0 + \widetilde{\mathbb{Q}}_1)\xi_0\rangle  = \langle\xi_1, (\mathbb{H}_0 + \widetilde{\mathbb{Q}}_1)\xi_1\rangle - \int_0^1 d\lambda \langle \xi_\lambda, \widetilde{\mathbb{Q}}_4^r \xi_\lambda \rangle + \mathcal{E}_2 +\mathcal{E}_3 + \mathcal{E}_4,
\end{eqnarray}
with 
\[
  |\mathcal{E}_4| = |\langle \xi_\lambda, (\mathbb{T}_1 + \mathbb{T}_2^r + \widetilde{\mathbb{Q}}_4^r)\xi_\lambda| \leq CL^3\rho^{\frac{7}{3}} + \delta\langle \xi_1, \mathbb{H}_0 \xi_1\rangle, 
\]
where we used \eqref{eq: concl scatt canc lw bd}.
All together we find that 
\begin{eqnarray}
  \langle \xi_0, (\mathbb{H}_0 + \widetilde{\mathbb{Q}}_1 + \widetilde{\mathbb{Q}}_4) \xi_0 \rangle\hspace{-0.2cm} &=& \hspace{-0.2cm}\langle\xi_1, (\mathbb{H}_0 + \widetilde{\mathbb{Q}}_1)\xi_1\rangle + \langle \xi_0, \widetilde{\mathbb{Q}}_4 \xi_0 \rangle -  \int_0^1 d\lambda \langle \xi_\lambda, \widetilde{\mathbb{Q}}_4^r \xi_\lambda \rangle + \mathcal{E}_2 + \mathcal{E}_3 + \mathcal{E}_{4}\nonumber
  \\
  \hspace{-0.2cm} &=& \hspace{-0.2cm} \langle\xi_1, (\mathbb{H}_0 + \widetilde{\mathbb{Q}}_1)\xi_1\rangle + \langle\xi_1, (\widetilde{\mathbb{Q}}_4 - \widetilde{\mathbb{Q}}_4^r)\xi_1\rangle\nonumber
  \\
  &&\hspace{-0.2cm} -\int_0^1 \frac{d}{d\lambda}\langle \xi_\lambda, \widetilde{\mathbb{Q}}_4\xi_\lambda\rangle + \int_0^1 d\lambda\int_\lambda^1 d\lambda^\prime\frac{d}{d\lambda^\prime}\langle \xi_{\lambda^\prime}, \widetilde{\mathbb{Q}}_4^r\xi_{\lambda^\prime}\rangle + \mathcal{E}_2 + \mathcal{E}_3 + \mathcal{E}_4\nonumber 
  \\
  \hspace{-0.2cm} &=& \hspace{-0.2cm} \langle\xi_1, (\mathbb{H}_0 + \widetilde{\mathbb{Q}}_1)\xi_1\rangle  -\int_0^1 \frac{d}{d\lambda}\langle \xi_\lambda, \widetilde{\mathbb{Q}}_4\xi_\lambda\rangle + \int_0^1 d\lambda\int_\lambda^1 d\lambda^\prime\frac{d}{d\lambda^\prime}\langle \xi_{\lambda^\prime}, \widetilde{\mathbb{Q}}_4^r\xi_{\lambda^\prime}\rangle + \mathcal{E}_{\mathrm{tot}},\nonumber
\end{eqnarray}
where we can estimated $\langle\xi_1, (\widetilde{\mathbb{Q}}_4 - \widetilde{\mathbb{Q}}_4^r)\xi_1\rangle$  as in \eqref{eq: concl Q4 reg lw bd}, therefore $\mathcal{E}_{\mathrm{tot}}$ is such that 
\[
  |\mathcal{E}_{\mathrm{tot}}| \leq CL^3\rho^{2+\frac{1}{5}} + CL^3 \rho^{2+\frac{\epsilon}{3}} + CL^3\rho^{\frac{7}{3} +\frac{1}{3} -\frac{\epsilon}{3}} + \delta\langle \xi_1, (\mathbb{H}_0 + \widetilde{\mathbb{Q}}_1)\xi_1\rangle, 
\]
for some $0<\delta<1$. From the estimate in \eqref{eq: conclusive Q4 lw bd} and  proceeding as in Section \ref{sec: up bd} (see \eqref{eq: Q4 upper bd}), we get that there exists $0<\widetilde{\delta}<1$ such that 
\begin{equation}
  E_{L}(N_\uparrow, N_\downarrow) \geq E_{\mathrm{HF}} + (1-\widetilde{\delta})\langle \xi_1, (\mathbb{H}_0 + \widetilde{\mathbb{Q}}_1)\xi_1\rangle - \rho_\uparrow\rho_\downarrow \int dxdy V(x-y)\varphi(x-y) + \widetilde{\mathcal{E}}_{\mathrm{tot}},
\end{equation}
where, 
\begin{equation}
  |\widetilde{\mathcal{E}}_{\mathrm{tot}}| \leq CL^3 \rho^{2+\frac{1}{5}} + CL^3\rho^{2+\frac{\epsilon}{3}} + CL^3\rho^{\frac{7}{3} +\frac{1}{3} - \frac{\epsilon}{3}} \leq CL^3\rho^{2+\frac{1}{5}}.
\end{equation}

The last inequality above follows after an optimization over $\epsilon = 1$. Proceeding now as in Section \ref{sec: up bd} and using the positivity of $\mathbb{H}_0$, $\widetilde{\mathbb{Q}}_1$, we then get for $L$ large enough,
\begin{equation}
  e_L(\rho_\uparrow, \rho_\downarrow) \geq \frac{3}{5}(6\pi^2)^{\frac{2}{3}}(\rho_\uparrow^{\frac{5}{3}} + \rho_\downarrow^{\frac{5}{3}}) + 8\pi a\rho_\uparrow \rho_\downarrow - C\rho^{2+\frac{1}{5}}.
\end{equation}

%

%
%
%
%
\appendix
\section{Properties of the scattering solution}\label{app: scattering}
Recall that the $\varphi$ appearing in the almost bosonic Bogoliubov transformation $T$ is defined as the periodization of  $\varphi_\infty= \varphi_0 \chi_{\sqrt[3]\rho}$ in the box $\Lambda_L$, where $\varphi_0$ is the solution of the zero energy scattering equation in $\mathbb{R}^3$:
\begin{equation}\label{eq: def phi0}
  2\Delta\varphi_0 + V_\infty(1-\varphi_0) = 0, \qquad \mbox{in}\,\,\,\mathbb{R}^3, \quad \mbox{with}\,\,\, \varphi(x)\rightarrow 0\quad \mbox{as}\,\,\, |x|\rightarrow \infty. 
\end{equation}
The solution of equation \eqref{eq: def phi0} is very well known in the literature, see e.g. \cite{LSSY, NT}. Before discussing the proof of Lemma \ref{lem: bound phi} and Lemma \ref{lem: decay phi}, we recall some useful bounds for $\varphi_0$. 
\begin{lemma}[Bounds for $\varphi_0$]\label{lem: bound phi0}
Let $V_\infty$ as in Assumption \ref{asu: potential V} and such that $\mathrm{supp} V_\infty \subset \{x\in\mathbb{R}^3\, \vert\, |x| < R_0\}$. Let $\varphi_0$ as in \eqref{eq: def phi0}. The following holds
\begin{itemize}
  \item[(i)] For all $x\in \mathbb{R}^3$, there exists a constant $C>0$ such that 
  \begin{equation}\label{eq: bound phi0}
    0\leq \varphi_0(x) \leq \frac{C}{1+|x|}, \qquad \varphi_0(x) = \frac{a}{|x|} \quad \forall x\in \mathbb{R}^3\setminus (\mathrm{supp}\, V_\infty),\qquad 8\pi a = \int_{\mathbb{R}^3} dx\, V_\infty(x)(1-\varphi_0)(x)
  \end{equation}
  where $a$ is the scattering length associated with the potential $V_\infty$, and 
  \begin{equation}\label{eq: gradient of phi0}
    |\nabla\varphi_0(x)|\leq \frac{C}{1 + |x|^2}.
  \end{equation}
\item[(ii)]
   Moreover, there exists $C>0$ such that 
  \begin{equation}\label{eq: L1 norm lapl phi}
   |\Delta\varphi_0(x)|\leq C, \qquad  \|\Delta\varphi_0\|_1 \leq C.
  \end{equation}
\item[(iii)] For all $n\geq 2$, it follows that for all $x\in B_{R}(0)$, with $R > R_0$
\begin{equation}\label{eq: bound n der}
|D^\alpha\varphi_0(x)|\leq C_n, \qquad \mbox{with}\,\,\,\alpha = (\alpha_1, \alpha_2,\alpha_3)\in \mathbb{N}_0^3\,\,\,\mbox{and}\,\,\, |\alpha| = n.
\end{equation}
\end{itemize}
\end{lemma}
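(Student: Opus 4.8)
The plan is to reduce everything to the classical theory of the zero-energy scattering equation together with interior elliptic regularity, isolating the one genuine computation, namely the surface-integral identity behind the constant $8\pi a$. First I would pass to $f_0 := 1-\varphi_0$, for which \eqref{eq: def phi0} becomes $-2\Delta f_0 + V_\infty f_0 = 0$ on $\mathbb{R}^3$ with $f_0(x)\to 1$ as $|x|\to\infty$. This is the standard zero-energy scattering equation (mass $1/2$); as recalled e.g. in \cite{LSSY} (see also \cite{NT}), it admits a unique such solution, obtained variationally, with $0\le f_0\le 1$, radially symmetric (by uniqueness and the rotation invariance of $V_\infty$), and equal to $f_0(x) = 1 - a/|x|$ for $|x|\ge R_0$, where $a\ge 0$ is by definition the scattering length of $V_\infty$. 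Translating back, $0\le\varphi_0\le 1$ on $\mathbb{R}^3$, $\varphi_0(x) = a/|x|$ for $x\notin\mathrm{supp}\,V_\infty$, and the first bound in \eqref{eq: bound phi0} follows by writing $\varphi_0(x)\le 1\le (1+R_0)/(1+|x|)$ for $|x|\le R_0$ and $\varphi_0(x) = a/|x|\le a(1+R_0^{-1})/(1+|x|)$ for $|x|\ge R_0$. (Positivity can also be seen directly: $2\Delta\varphi_0 = -V_\infty(1-\varphi_0)\le 0$, so $\varphi_0$ is superharmonic and vanishes at infinity.)

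Next I would establish $8\pi a = \int_{\mathbb{R}^3} V_\infty(1-\varphi_0)$. Integrating \eqref{eq: def phi0} over $B_R(0)$ with $R>R_0$ and using the divergence theorem,
\begin{equation*}
  2\int_{\partial B_R} \partial_r\varphi_0\, d\sigma = -\int_{B_R} V_\infty(1-\varphi_0) = -\int_{\mathbb{R}^3} V_\infty(1-\varphi_0),
\end{equation*}
the last equality because $\mathrm{supp}\,V_\infty\subset B_{R_0}\subset B_R$. On $\partial B_R$ one has $\partial_r\varphi_0 = -a/R^2$ by the previous step, so the left side equals $2\cdot 4\pi R^2\cdot(-a/R^2) = -8\pi a$, giving the claimed identity.

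For the derivative bounds I would invoke interior elliptic regularity. Rewriting \eqref{eq: def phi0} as $\Delta\varphi_0 = -\tfrac12 V_\infty(1-\varphi_0)$, the right-hand side is bounded and supported in $B_{R_0}$; since $V_\infty\in C^\infty_c$ and $\varphi_0\in L^\infty$, a standard bootstrap ($\Delta\varphi_0\in L^\infty_{\mathrm{loc}}\Rightarrow \varphi_0\in W^{2,p}_{\mathrm{loc}}\subset C^{1,\alpha}_{\mathrm{loc}}\Rightarrow$ RHS $\in C^{0,\alpha}\Rightarrow\varphi_0\in C^{2,\alpha}_{\mathrm{loc}}\Rightarrow\cdots$) shows $\varphi_0\in C^\infty(\mathbb{R}^3)$, and interior Schauder estimates bound $\|\varphi_0\|_{C^n(B_R)}\le C_n\big(\|\varphi_0\|_{L^\infty(B_{R+1})} + \|V_\infty\|_{C^{n-2}(B_{R+1})}\big)<\infty$, which is \eqref{eq: bound n der}. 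For \eqref{eq: L1 norm lapl phi}: $|\Delta\varphi_0|\le \tfrac12\|V_\infty\|_\infty$ pointwise from $0\le 1-\varphi_0\le 1$, and $\Delta\varphi_0$ being supported in $B_{R_0}$ gives $\|\Delta\varphi_0\|_1 = \tfrac12\int V_\infty(1-\varphi_0) = 4\pi a<\infty$. For \eqref{eq: gradient of phi0}: for $|x|\ge R_0$, $\nabla\varphi_0(x) = -a\,x/|x|^3$, so $|\nabla\varphi_0(x)| = a/|x|^2\le C/(1+|x|^2)$; for $|x|\le 2R_0$ the $C^1$-bound from elliptic regularity on $B_{3R_0}$ gives $|\nabla\varphi_0(x)|\le C\le C'/(1+|x|^2)$; patching the two regions yields the claim.

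The only point requiring genuine care is the surface-integral computation and the bookkeeping of the factor $2$ in front of the Laplacian in the normalization $8\pi a$; the existence, uniqueness, positivity, and exterior form of $\varphi_0$ are classical and I would quote them from \cite{LSSY, NT} rather than reprove them, while the derivative estimates are routine interior elliptic regularity. No step presents a real obstacle.
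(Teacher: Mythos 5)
Your proposal is correct and follows essentially the same route as the paper, which simply cites \cite[Section 2.2]{NT} for part (i) and observes that (ii) and (iii) follow from the scattering equation and the regularity of $V_\infty$; you flesh out what the paper leaves implicit (the divergence-theorem computation behind $8\pi a = \int V_\infty(1-\varphi_0)$, the elliptic bootstrap, and the elementary patching of the exterior explicit form with interior Schauder bounds), but the underlying argument is identical.
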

\begin{proof}
  For the proof of \eqref{eq: bound phi0} and \eqref{eq: gradient of phi0} we refer to \cite[Section 2.2]{NT}. The bounds in \eqref{eq: L1 norm lapl phi} directly follow from the equation satisfied by $\varphi_0$ together with the regularity of the potential $V_\infty$. Moreover, the estimate for \eqref{eq: bound n der} follows from the bound for the case $n=1,2$ which are well-know together with the fact that $\varphi_0$ solves the scattering solution and that $V_\infty$ is a regular interaction potential.
\end{proof}

We now prove prove Lemma \ref{lem: bound phi}. Recall that 
\[
  \varphi(x) = \sum_{n\in\mathbb{Z}^3}\varphi_\infty(x+nL).
\]
\begin{proof}[Proof of Lemma \ref{lem: bound phi}] Taking $L$ large enough, from the definition of $\varphi$, we find that 
\[
  |\varphi(x)| \leq  |\varphi_\infty(x)| \leq |\varphi_0(x)||\chi_{\sqrt[3]\rho}(x)| \leq C,
\]
 where the last inequality follows from \eqref{eq: bound phi0} together with the fact that $|\chi_{\sqrt[3]\rho}(x)|\leq 1$.

 \noindent\underline{\textit{Bound for the $L^2$ norm of $\varphi$}}. From \eqref{eq: bound phi0}, we get, taking $L$ large enough:
\begin{equation}
  \|\varphi\|^2_{L^2(\Lambda_L)} \leq \int_{\mathbb{R}^3}dx\, |\varphi_\infty(x)|^2 \leq \int_{\mathbb{R}^3}dx\, |\varphi_0(x)|^2|\chi_{\sqrt[3]\rho}(x)|^2  \leq C\rho^{-\frac{1}{3}}.
\end{equation}

\noindent\underline{\textit{Bound for the $L^2$ norm of $\nabla\varphi$}}. From \eqref{eq: bound phi0}, we get, taking $L$ large enough:
\begin{eqnarray}
  \|\nabla\varphi\|^2_{L^2(\Lambda_L)} &\leq& \int_{\mathbb{R}^3}dx\, |\nabla\varphi_\infty(x)|^2 \leq C\int_{\mathbb{R}^3}dx\, |\nabla\varphi_0(x)|^2|\chi_{\sqrt[3]\rho}(x)|^2  + \int_{\mathbb{R}^3}dx\, |\varphi_0(x)|^2|\nabla\chi_{\sqrt[3]\rho}(x)|^2 \nonumber
  \\
  &\leq& C\int_{B_R(0)}|\nabla\varphi_0(x)|^2 + C\int_{B_R^c(0)}|\nabla\varphi_0(x)|^2|\chi_{\sqrt[3]\rho}(x)|^2 + \int_{\mathbb{R}^3}dx\, |\varphi_0(x)|^2|\nabla\chi_{\sqrt[3]\rho}(x)|^2,\nonumber
\end{eqnarray}
where $R>0$ is such that the interaction potential $V_\infty$ vanishes in $B_{R}^c(0):= \mathbb{R}^3\setminus B_R(0)$. Using then the properties of $\varphi_0$ stated in Lemma \ref{lem: bound phi0} together with the ones of $\chi_{\sqrt[3]\rho}$, we can conclude that $\|\nabla\varphi\|_{L^2(\Lambda_L)}\leq C$.\\

\noindent\underline{\textit{Estimate for the $L^1$ norm of $\varphi$}}. From \eqref{eq: bound phi0} and proceeding similar as above, we have for $L$ large enough, 
\begin{equation}
  \|\varphi\|_{L^1(\Lambda_L)}  \leq \int_{\mathbb{R}^3}dx\, |\varphi_\infty(x)| \leq \int_{\mathbb{R}^3}|\varphi_0(x)||\chi_{\sqrt[3]\rho}(x)| \leq C\rho^{-\frac{2}{3}}.
\end{equation}
\noindent\underline{\textit{Estimate for the $L^1$ norm of $\nabla\varphi$}}. We can write 
\begin{equation}
  \nabla\varphi_\infty(x) = \varphi_0(x)\nabla\chi_{\sqrt[3]\rho}(x) + \chi_{\sqrt[3]\rho}(x)\nabla\varphi_0(x).
\end{equation}
Using then the decay for $\varphi_0$ and $\nabla\varphi_0$ stated in Lemma \ref{lem: bound phi0}, we have, for $L$ large enough,
\begin{equation}\label{eq: L1 norm gradient phi}
  \int_{\Lambda_L} dx\, |\nabla\varphi(x)| \leq \int_{\mathbb{R}^3}dx\, |\nabla\varphi_\infty(x)| \leq C\int_{\mathbb{R}^3} |\varphi_0(x)\nabla\chi_{\sqrt[3]\rho}(x)| + C\int_{\mathbb{R}^3} |\chi_{\sqrt[3]\rho}(x)\nabla\varphi_0(x)| \leq C\rho^{-\frac{1}{3}},
\end{equation}
where we use also that $\|\chi_{\sqrt[3]\rho}\|_\infty\leq C$, $\|\nabla\chi_{\sqrt[3]\rho}\|_\infty\leq C\rho^{1/3}$.\\
\noindent\underline{\textit{Estimate for the $L^1$ norm of $\Delta\varphi$}}. We recall that $\varphi_\infty$ is such that 
\[
  2\Delta\varphi_\infty =  - V_\infty(1-\varphi_\infty) + \mathcal{E}_{\{\varphi_0, \chi_{\sqrt[3]\rho}\}},
\]
with 
\[
  \mathcal{E}_{\{\varphi_0, \chi_{\sqrt[3]\rho}\}} = -4\nabla\varphi_0 \nabla\chi_{\sqrt[3]\rho} - 2\varphi_0 \Delta\chi_{\sqrt[3]\rho}.
\]
It then follows, taking $L$ large enough and using the bounds for $\nabla\varphi_0$ and $\varphi_0$ stated in Lemma \ref{lem: bound phi0} together with the support properties of $\chi_{\sqrt[3]\rho}$, that
\[
  \|\Delta\varphi\|_{L^1(\Lambda_L)} \leq \|\Delta\varphi_\infty\|_1 \leq C\|V\|_1 + C\int_{\mathbb{R}^3} dx\, |\nabla\varphi_0(x)|\nabla\chi_{\sqrt[3]{\rho}}(x)| + C\int_{\mathbb{R}^3} dx\, |\varphi_0(x)||\Delta\chi_{\sqrt[3]{\rho}}(x)| \leq C,
\]
where we also used that $\|\nabla^n\chi_{\sqrt[3]\rho}\|_\infty\leq C\rho^{n/3}$.\\

\noindent\underline{\textit{Estimate for the $L^1$ norm of $D^2\varphi$ and $D^3\varphi$}.} The bound for $\|D^2\varphi\|_1$ easily follows from the properties of $\varphi_0$. We have taking $L$ large enough and $R>0$ large enough such that $\mathrm{supp}V \subset \{x\in \mathbb{R}^3\, \vert\, |x| \leq R_0\} \subset B_R(0)$
\begin{eqnarray}
  \int_{\Lambda_L}dx\, |D^2\varphi(x)| &\leq& \int_{\mathbb{R}^3}|D^2(\varphi_0\chi_{\sqrt[3]\rho})(x)| 
  \\
  &\leq& \int_{B_R(0)}|D^2(\varphi_0\chi_{\sqrt[3]\rho})(x)| + \int_{B_R^c(0)}|D^2(\varphi_0\chi_{\sqrt[3]\rho})(x)|\leq C + C|\log\rho|,
\end{eqnarray}
where we used the properties of $\chi_{\sqrt[3]\rho}$, \eqref{eq: bound n der} and the fact that in $B_R^c(0)$, $\varphi_0(x) = a/|x|$. The proof for $\|D^3\varphi\|_{L^1(\Lambda_L)}\leq C$ can be done in the same way, we omit the details.
\end{proof}
We now prove the decay estimates stated in Lemma \ref{lem: decay phi}.
\begin{proof}[Proof of Lemma \ref{lem: decay phi}]
  We first take into account the case $n=2$, we have that for each $p\in (2\pi/L)\mathbb{Z}^3$, $\hat{\varphi}(p) = \int_{\Lambda_L}\varphi(x) e^{ip\cdot x}$ is such that
  \begin{equation}\label{eq: est phi > k}
    |\hat{\varphi}(p)| \leq \|\varphi\|_1 \leq C\rho^{-\frac{2}{3}}.
  \end{equation}
  Moreover, from Lemma \ref{lem: bound phi0}, we get
  \[
    |p|^2 |\hat{\varphi}(p)| \leq C\|\Delta\varphi\|_1 \leq C.
  \]
  Thus we find 
  \[
    |\hat{\varphi}(p)| \leq C\frac{\rho^{-\frac{2}{3}}}{1 + \rho^{-\frac{2}{3}}|p|^2}.
  \]
More in general, for all $n\in\mathbb{N}$, $n>2$, we can proceed as follows. Taking $L$ large enough, we have
\begin{equation}
  |p|^{n} |\hat\varphi(p)| \leq C\int_{\mathbb{R}^3}dx\, |D^n(\varphi_0\chi_{\sqrt[3]\rho})(x)| \leq C_n + C\rho^{\frac{n-2}{3}}\leq C_n,
\end{equation}
where we used the fact that in the support of $\nabla^n\chi_{\sqrt[3]\rho}$, $\varphi_0(x) = a/|x|$ together with the bounds $\|\varphi\|_1 \leq C\rho^{-\frac{2}{3}}$ and $\|\nabla^n\chi_{\sqrt[3]\rho}\|_\infty\leq C\rho^{n/3}$ and the estimate in \eqref{eq: bound n der}.
We then get that for all $n\in\mathbb{N}$, $n\geq 2$:
\begin{equation}
    |\hat{\varphi}(p)| \leq C_n\frac{\rho^{-\frac{2}{3}}}{(1 + \rho^{-\frac{2}{3}}|p|^n)},
  \end{equation}
  which concludes the proof of \eqref{eq: est decay phi>}.
  \end{proof}
\section{Estimate of part of the terms in \eqref{eq: a+b cut-off}}\label{app: cut-off}
In this section we take into account part of the terms in \eqref{eq: a+b cut-off}. More precisely, we study the terms in $\mathrm{B}$ in \eqref{eq: a+b cut-off} for which at least one between $a_\uparrow (u^r_x)$ and $a_\downarrow(u^r_y)$ is replaced by $a_\uparrow(\delta^>_x)$ or $a_\downarrow(\delta^>_y)$. In the case in which we have $a_\downarrow(\delta^>_y)$, consider for instance 
\begin{equation}
  \mathrm{I} :=\int dxdydz\, V(1-\varphi)(x-y) \omega^r_\uparrow(x;z) \langle a_\uparrow (\overline{v}^r_y) a_\uparrow(u_x)a_\downarrow(\delta^>_y)\xi_\lambda, b_\downarrow(\varphi_z)a_\uparrow(u^r_z)\xi_\lambda,\rangle
\end{equation} 
we can use that for any $\xi\in\mathcal{F}$ and for $L$ large enough,
\begin{equation}\label{eq: decay Ib1 1}
  \left\|\int\, dy \, a_\downarrow(\delta^>_y)V(1 -\varphi) (x-y) a_\downarrow(\overline{v}^r_y)\xi\right\| \leq C_n \rho^{\beta n}\|\xi\|, \qquad \mbox{for}\,\,\,\mbox{any}\,\,\, n\in\mathbb{N}\,\,\,\mbox{large}\,\,\,\mbox{enough}.
\end{equation}
Using \eqref{eq: decay Ib1 1}, we can conclude that
\begin{equation}
  |\mathrm{I}| \leq C\rho^{\frac{1}{3} + \beta n}\int dxdz |\omega^r_\uparrow(x;z)|\|a_\uparrow(u_x)\xi_\lambda\|\|a_\uparrow(u^r_z)\|\leq C\rho^{\beta n + \frac{1}{3} -\frac{\epsilon}{3}}\langle \xi_\lambda, \mathcal{N}\xi_\lambda\rangle \leq C\rho^{\frac{3}{2}-\frac{\epsilon}{3} + \beta n }
\end{equation}
where we used the non optimal estimate for the number operator, i.e., $\langle \xi_\lambda, \mathcal{N}\xi_\lambda \rangle \leq CL^{3}\rho^{7/6}$.
For the proof of \eqref{eq: decay Ib1 1}, we refer to \cite[Eq. (C.24)]{FGHP}: it is a consequence of the decay estimates for $\widehat{V(1-\varphi)}(k)$ which follow from the regularity of the interaction potential and the properties of $\varphi$, see Lemma \ref{lem: decay phi}. As a consequence, taking $n\in\mathbb{N}$ large enough, all the terms in which $a_\downarrow(\delta^>_y)$ appear can be made smaller than $L^3\rho^{7/3}$. The case in which instead, there is no $a_\downarrow(\delta^>_y)$ but we do have $a_\uparrow(\delta^>_x)$ has to be treated differently. We can proceed still similarly as in \cite[Appendix C.2]{FGHP}, therefore here we only give the main ideas. All the terms in which we have $a_\uparrow(\delta^>_x)$ can be bounded in the same way. We take into account
\begin{equation}
  \mathrm{II} :=\int dxdydz\, V(1-\varphi)(x-y) \omega^r(x;z) \langle a_\uparrow (\overline{v}^r_y) a_\uparrow(\delta^>_x)a_\downarrow(u_y)\xi_\lambda, b_\downarrow(\varphi_z)a_\uparrow(u^r_z)\xi_\lambda,\rangle.
\end{equation}
We can rewrite $\mathrm{II}$ as follows 
\begin{equation}
  \mathrm{II} = -\int dy\,\langle a_\uparrow (\overline{v}^r_y) a_\downarrow(u_y)\xi_\lambda,\int dxdz V(1-\varphi)(x-y) \omega^r_\uparrow(x;z) a_\uparrow^\ast(\delta^>_x)b_\downarrow(\varphi_z)a_\uparrow(u^r_z)\xi_\lambda,\rangle.
\end{equation}
Moreover, 
\[
  \int dxdz V(1-\varphi)(x-y) \omega^r_\uparrow(x;z) a_\uparrow^\ast(\delta^>_x)b_\downarrow(\varphi_z)a_\uparrow(u^r_z) = \int dz\, a^\ast(A_{z,y})b_\downarrow(\varphi_z) a_\uparrow (u^r_z),
\]
where 
\[
  A_{z,y}(r) = \int dx\, \delta^>(r;x)V(1-\varphi)(x-y) \omega^r_\uparrow(x;z).
\]
It then follows that 
\begin{multline*}
  \left \|\int dxdz V(1-\varphi)(x-y) \omega^r_\uparrow(x;z) a_\uparrow^\ast(\delta^>_x)b_\downarrow(\varphi_z)a_\uparrow(u^r_z)\xi_\lambda\right\|\leq \int dz \|a^\ast(A_{z,y})\|\|b_\downarrow(\varphi_z) \|\|a_\uparrow (u^r_z)\|\|\xi_\lambda\|
  \\
  \leq C\rho^{\frac{1}{3} -\frac{3}{2}\beta}\int dz \|A_{z,y}\|_2,
\end{multline*}
where we used Lemma \ref{lem: bound b phi} and Proposition \ref{pro: L1, L2 v}.
Proceeding similarly as in \cite[Eq. (C.30) -- (C.35)]{FGHP}, i.e., using some decay estimates for $\widehat{V(1-\varphi)}(k)$ and bounds for the derivatives of $\hat{\omega}^r$ and $\hat{\delta}^>$, one can prove that\footnote{Note that the estimate in \eqref{eq: est inte dz norm A} depends on $\epsilon$, this is not the case in \cite[Eq. (C.35)]{FGHP}. This is due to the fact that here there is $\omega^r$ in the definition of the operator $A_{z,y}$ and the derivatives of $\hat{\omega}^r$ have a bound which depends on $\epsilon$. However, taking $n\in\mathbb{N}$ large enough, we can still prove the final result.}for $n\in\mathbb{N}$ large enough,
\begin{equation}\label{eq: est inte dz norm A}
  \int dz\, \|A_{z,y}\|_2 \leq C\rho^{\beta(n-3) -\frac{3}{2} -\epsilon}
\end{equation}
Then, we can bound 
\[
  |\mathrm{II}|\leq C\rho^{\beta(n-3) -\frac{7}{6} -\epsilon -\frac{3}{2}\beta}\int dy\, \|a_\uparrow (\overline{v}^r_y)a_\downarrow(u_y)\xi_\lambda\|\leq CL^{\frac{3}{2}} \rho^{\beta n - \frac{2}{3} -\epsilon -\frac{9}{2}\beta}\|\mathcal{N}^{\frac{1}{2}}\xi_\lambda\| \leq CL^3\rho^{\beta n + \frac{7}{12} -\frac{2}{3}-\epsilon -\frac{9}{2}\beta},
\]
where we used the non optimal estiamate $\|\mathcal{N}^{1/2}\xi_\lambda\|\leq CL^{3/2}\rho^{7/12}$. Taking then $n\in\mathbb{N}$ large enough, we can conclude that all the error terms containing at least one between $a_\uparrow(\delta^>_x)$ and $a_\downarrow(\delta^>_y)$, are negligible.

\end{document}